\tikzstyle{hedge}=[fill=white, draw=black, shape=rectangle, rounded corners=2mm, inner sep=0.2mm, outer sep=-2mm, scale=0.8, minimum height=8mm, minimum width=8mm, tikzit category=hypergraph]
\tikzstyle{hedge blue}=[hedge, fill={rgb,255: red,102; green,204; blue,255}, draw=black, shape=rectangle, tikzit category=hypergraph]
\tikzstyle{node}=[fill=black, draw=black, shape=circle, minimum size=1.5mm, inner sep=0mm, tikzit category=hypergraph]
\tikzstyle{red node}=[fill=red, draw=black, shape=circle, minimum size=1.5mm, inner sep=0mm, tikzit category=hypergraph]
\tikzstyle{node highlight}=[fill=black, draw=blue, thick, shape=circle, minimum size=1.5mm, inner sep=0mm, tikzit category=hypergraph]
\tikzstyle{red node highlight}=[fill=red, draw=blue, thick, shape=circle, minimum size=1.5mm, inner sep=0mm, tikzit category=hypergraph]
\tikzstyle{green hedge}=[hedge, fill=green, draw=black, shape=rectangle]
\tikzstyle{yellow hedge}=[hedge, fill=green, draw=black, shape=circle]
\tikzstyle{small box}=[fill=white, draw=black, shape=rectangle, minimum height=6mm, minimum width=6mm, tikzit category=string diagram]
\tikzstyle{vsmall box}=[fill=black, draw=black, shape=rectangle, minimum height=4mm, minimum width=1mm, tikzit category=string diagram, inner sep=0]
\tikzstyle{medium box}=[fill=white, draw=black, shape=rectangle, minimum height=11mm, minimum width=6mm, tikzit category=string diagram]
\tikzstyle{semilarge box}=[fill=white, draw=black, shape=rectangle, minimum height=16mm, minimum width=6mm, tikzit category=string diagram]
\tikzstyle{large box}=[fill=white, draw=black, shape=rectangle, minimum height=21mm, minimum width=6mm, tikzit category=string diagram]
\tikzstyle{black dot}=[fill=black, draw=black, shape=circle, minimum size=2mm, inner sep=0mm, tikzit category=string diagram]
\tikzstyle{white dot}=[fill=white, draw=black, shape=circle, minimum size=2mm, inner sep=0mm, tikzit category=string diagram]
\tikzstyle{red dot}=[fill=red, draw=black, shape=circle, minimum size=2mm, inner sep=0mm, tikzit category=string diagram]
\tikzstyle{wlabel}=[fill=none, draw=none, shape=rectangle, tikzit category=string diagram, font={\footnotesize}, inner sep=0pt, tikzit fill={rgb,255: red,102; green,204; blue,255}, tikzit draw={rgb,255: red,102; green,204; blue,255}, yshift=0.3mm]
\tikzstyle{BRchange}=[draw=black, shape=diamond, tikzit shape=circle, tikzit fill={rgb,255: red,96; green,0; blue,0}, diamond split part fill={black,red}, inner sep=-5mm, minimum width=2.7mm, minimum height=1.7mm]
\tikzstyle{RBchange}=[draw=black, shape=diamond, tikzit shape=circle, tikzit fill={rgb,255: red,165; green,0; blue,0}, diamond split part fill={red,black}, inner sep=0, minimum width=2.7mm, minimum height=1.7mm]
\tikzstyle{dummy}=[fill=none, draw=none, shape=circle, font={\small}, inner sep=1pt, tikzit draw=blue, tikzit fill=white]
\tikzstyle{node label}=[fill=none, draw=none, shape=rectangle, tikzit fill=cyan, tikzit draw=cyan, font={\scriptsize}, tikzit shape=circle, inner sep=0pt]
\tikzstyle{empty diag}=[fill=white, draw={rgb,255: red,165; green,165; blue,165}, shape=rectangle, minimum size=1.2 cm, dashed, thick]
\tikzstyle{green dot}=[fill=green, draw=green, shape=circle, minimum size=1.5mm, inner sep=0mm, tikzit category=hypergraph]
\tikzstyle{bubble}=[fill=white, draw=black, shape=rectangle, minimum size=20mm, rounded corners=3mm, tikzit category=hypergraph]
\tikzstyle{longer bubble}=[fill=white, draw=black, shape=rectangle, minimum width=32mm, rounded corners=3mm, tikzit category=hypergraph, minimum height=28mm]
\tikzstyle{dashed edge}=[-, dashed, very thick]
\tikzstyle{alt sort}=[-, dashed, dash pattern=on 2pt off 0.5pt, thick, draw=red]
\tikzstyle{diredge}=[->, >={Latex[length=1.5mm]}]
\tikzstyle{diredge highlight}=[->, >={Latex[length=1.5mm]}, draw=blue, thick]
\tikzstyle{boundary frame}=[-, draw={rgb,255: red,170; green,170; blue,255}, dashed, fill={rgb,255: red,238; green,238; blue,255}, thick, dash pattern=on 2pt off 0.5pt]
\tikzstyle{graph frame}=[-, draw={rgb,255: red,191; green,191; blue,191}, dashed, fill={rgb,255: red,238; green,238; blue,238}, thick, dash pattern=on 2pt off 0.5pt]
\tikzstyle{def sort}=[-]
\tikzstyle{component}=[-, draw=red, thick]
\tikzstyle{map edge}=[{|->}, >=latex, shorten <=0.5mm, shorten >=0.5mm]
\tikzstyle{hypergraph map edge}=[{|->}, draw=red, shorten <=1mm, shorten >=1mm]
\tikzstyle{cdedge}=[->]
\tikzstyle{big cdedge}=[->, very thick, >=latex]
\tikzstyle{pointer edge}=[->, draw=gray, thick]
\tikzstyle{new edge style 0}=[-, fill=none, draw=green]
\tikzstyle{blue pointer}=[draw=blue, ->]
\newenvironment{proof}{\textit{Proof:}}{\hfill$\square$}
\newtheorem{theorem}{Theorem}[section]
\newtheorem{definition}[theorem]{Definition}
\newtheorem{lemma}[theorem]{Lemma}
\newtheorem{remark}[theorem]{Remark}
\newtheorem{example}[theorem]{Example}
\newtheorem{exercise}[theorem]{Exercise}
\newcommand{\semic}{\mathbin{;}}
\title{String Diagrams for $\lambda$-calculi \\ and Functional Computation  }
\author{Dan R. Ghica}
\affil{University of Birmingham and Huawei Central Software Institute}
\author{Fabio Zanasi}
\affil{University College London}
\begin{document}

\frontmatter  
\maketitle

\begin{abstract}
This tutorial gives an advanced introduction to string diagrams and graph languages for higher-order computation.
The subject matter develops in a principled way, starting from the two dimensional syntax of key categorical concepts such as functors, adjunctions, and strictification, and leading up to Cartesian Closed Categories, the core mathematical model of the lambda calculus and of functional programming languages. 
This methodology inverts the usual approach of proceeding from syntax to a categorical interpretation, by rationally reconstructing a syntax from the categorical model.
The result is a graph syntax---more precisely, a \emph{hierarchical hypergraph syntax}---which in many ways is shown to be an improvement over the conventional linear term syntax.
The rest of the tutorial focuses on applications of interest to programming languages: operational semantics, general frameworks for type inference, and complex whole-program transformations such as closure conversion and automatic differentiation.
\end{abstract}

\keywords{string diagrams; monoidal closed category; cartesian closed category;  higher-order computation;  lambda calculus; operational semantics; program transformation; functional programming}


\copyrightauthor{Dan R. Ghica and Fabio Zanasi, 2023}

\newpage
\mbox{ }
\vspace{1ex}
\begin{center}
\includegraphics[scale=.75]{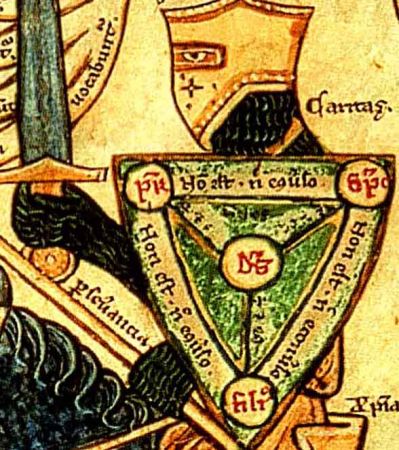}
\end{center}
\vspace{10ex}
\setcounter{tocdepth}{4} 
\mainmatter  

\section{Introduction}
\label{sec:intro}

The representation of syntax of programming languages, and even more so of natural languages, requires complex data structures. 
When writing programs, specifications of programs, or syntactic models of programming languages, such as operational semantics or abstract machines, it is customary to deploy syntax in its \emph{serialised form}: a linear sequence of symbols which we call \emph{text} or, to use more technical vocabulary, \emph{terms}.
Terms are, however, an awkward data structure for algorithmic purposes. 
Hence, compilers and other programming language tools never work on terms directly but immediately convert them, through processes known as \emph{lexing} and \emph{parsing}, into more useful data structures.
The most common, if not the most sophisticated, is called the \emph{abstract syntax tree} (AST). 

It is generally admitted that ASTs, although an improvement on terms, are far from the ideal data structure for representing syntax.
Various graph-based alternatives have been proposed, with some of them implemented in production compilers. 
Such data structures fall under the general label of \emph{intermediate representation} (IR). The design space for data structures to represent syntax is large.
Sound methodology requires that an IR is best derived in a principled way, from more general mathematical concepts, as informed by pragmatic considerations where needed.

This tutorial takes \emph{string diagrams} as the starting mathematical concept, and showcases their utility in representing and reasoning about programs. String diagrams originate in category theory, as a \emph{two-dimensional} (graphical) syntax for morphisms of monoidal categories. Even though in the literature string diagrams are often conflated with their concrete, combinatorial representations as graphs, we find it profitable to maintain the distinction and to insist on the former being syntax, albeit two-dimensional or planar. This allows us to isolate the IRs stemming from string diagrammatic syntax: these are called \emph{hierarchical hypergraphs}, and have uniquely helpful theoretical and algorithmic properties.

In Figure~\ref{fig:sdintro} we show, as a teaser, the kind of string diagram we will deal with in the sequel. 
Concretely, it represents the composition of currying and uncurrying in the $\lambda$ calculus. 
Without too many spoilers, in this diagram wires represent variables and boxes operations; boxes with round corners are thunks and their bound variables are wires attached to the frame. 
\begin{figure*}
\[
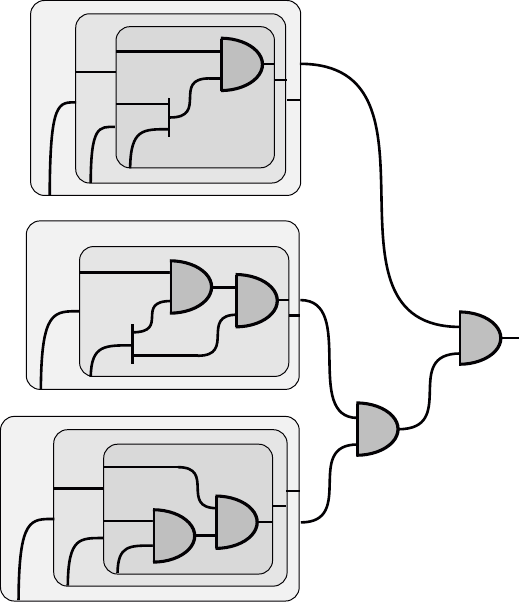
\]
\caption{An example of a string diagram}
\label{fig:sdintro}
\end{figure*}

Most mathematical syntax is one-dimensional, or linear, so one may reasonable wonder whether there are any advantages to two-dimensional string diagrammatic syntax. 
One way to look at the question is observing the relationship between syntax and equations holding on it. One-dimensional syntax can \emph{absorb} quotienting by associativity, thus reducing unpleasant use of parentheses. 
We need not distinguish between $a\cdot(b\cdot c)$ and $(a\cdot b)\cdot c$ when the operation $\cdot$ is associative. 
This is why we are entitled to simply write $a\cdot b\cdot c$. 
If $\cdot$ is a common operation we can even represent it by simple concatenation, so that
\[
abc = a(bc) = (ab)c.
\]
If the operation is simply concatenation it would make sense to represent the unit of the operation, should it exist, by empty space (\textvisiblespace), as empty space is the unit of concatenation, which would give the following unit rules: 
\[
a\text{\textvisiblespace}=\text{\textvisiblespace}a=a
\]
This is, of course, not common practice, but if used it would also allow unit rules to be absorbed into linear notation. 

Two dimensional notation can be useful in absorbing quotienting by further equations. 
Consider for instance fractions written two-dimensionally 
\[
a \div b = \frac a b.
\]
Written in planar form, certain unobvious linear equations are simply absorbed by the notation:
\[
(a\div c)(b\div d)=\frac a c \frac b d = \frac{ab}{cd}=(ab)\div (cd).
\]

This suggests a criterion to compare notations: 
\begin{quote}\em
A notation is better if it absorbs more equations, by which we mean that instead of using an explicit equation to identify two terms they become syntactically equal in the improved notation. 
\end{quote}

Planar notation has an additional advantage over linear notations in that it can make certain notational invariants obvious. 
Consider for example matrices, written most commonly in two dimensional syntax:
\[
M = 
\begin{pmatrix}
a & b & c\\
u & v & w\\
x & y & z
\end{pmatrix}.
\]
For formal reasoning (in proof assistants) or algorithmic processing (in computer libraries) two-dimensional syntax is not possible so matrices are linearised as vectors-of-vectors, with $M$ above written either as list of columns
\[
M = \bigl(
(a, u, x), (b, v, y), (c, w, z)
\bigr), 
\]
or as list of rows. 
The key invariant that all columns (or rows, respectively) have the same size is no longer enforced by the notation: in the planar notation it is impossible to write a matrix with mismatched rows or columns. 

This suggests an additional criterion for comparing notations:
\begin{quote}\em
A superior notation absorbs more invariants, so that malformed terms cannot be written as such. 
\end{quote}

The planar notation for matrices also absorbs equational properties, for instance the fact that the transpose is an involution. 

In the case of more complex languages, with variables and binding, moving from linear to planar syntax can absorb further such equational properties. 
Consider the following terms, where \emph{let} introduces and binds a variable:
\lstset{language=Haskell, basicstyle=\small, keywordstyle=\bfseries, identifierstyle=\ttfamily}
\begin{lstlisting}
  let x=a in let y=b in let z=c in x+y+z
  let u=a in let v=b in let z=c in u+v+z
  let v=b in let u=a in let z=c in u+v+z
\end{lstlisting}
These terms are all equivalent. 
The first two are equivalent because bound variables can be systematically renamed, namely $x$ to $u$ and $y$ to $v$. 
The second and third are equivalent because definitions that do not depend on each other can be swapped. 
Neither of these equivalences are obvious in the linear syntax. 
The first one is the well known $\alpha$ equivalence and the second one is the lesser known \emph{graph equivalence} of calculi with explicit substitution. 
The equivalence are not difficult to work with mathematically, but what makes them challenging is that they are supposed to be \emph{congruences}, i.e. terms need to be quotiented by these equivalences so that definitions apply to equivalence classes of terms rather than terms. 
On the other hand, as we shall see in the sequel, in a two dimensional graph syntax all these terms would correspond to the same graph, or rather to isomorphic graphs, namely:

\[
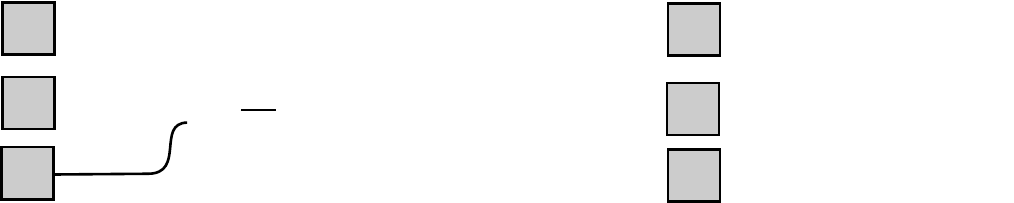
\]

In this representation, bound variables are graph edges, whereas constants and operations are boxes, so it should be intuitively clear that graph isomorphism absorbs both kinds of equivalences we mentioned above. 

To conclude this brief detour into two-dimensional syntax, hopefully it can be now accepted that moving from linear to planar has certain non-trivial advantages. 
However, moving from syntactic to combinatorial graph-like representations will have further, deeper advantages which we will explore in the rest of this tutorial. 
Yet, neither terms, nor diagrams, nor graphs completely supersede each other. 
They are all useful forms of syntax, fit for different purposes. 

\subsection{Syntactic trinitarianism}

The ultimate aim of this tutorial is to persuade the reader to accept a new and multifaceted vision of syntax which we (rather grandly) call \emph{syntactic trinitarianism}: terms, diagrams, graphs. 
They are all ultimately interchangeable representations of an abstract, Platonic if you like, ideal of syntax, each serving distinct roles:
\begin{description}
\item[Terms.] Despite the obvious appeal of graphical syntaxes, in particular to beginners, term syntax remains the workhorse of formal writing, be it programming, mathematics, or logic. 
The serial sequence of symbols, easily grouped by the brain into tokens, wrapping around the end of the page, decorated with whitespace and font variations remains the tried-and-tested repository for formal text. 
Our facility as humans with this representation arguably goes back to language itself evolving first as a spoken form of communication: a serial sequence of auditory symbols, decorated with pitch and tone, modulated by volume and speed. 
Programs ranging from student projects of tens of lines to software infrastructure such as operating systems or compilers of millions of lines of code can be written, understood (to some extent), and evolved using this format. 
For all its shortcomings term syntax is here to stay.
\item[Graphs.] Algorithms do not work very well on terms.
They are an awkward data structure. 
But graphs are efficient, allowing a far greater diversity of structure compared to the list-like terms. 
They can represent sharing and they can allow fast access from one region to the other by following edges. 
However, the more expressive structure comes at a conceptual cost. 
When modifying graphs, maintaining the right invariants is difficult. 
Writing algorithms and reasoning about properties of graphs is harder since they lack inductive structure. 
Even coming up with the right graph structure for a problem is difficult since the design space can be broad and not obviously constrained. 
\item[Diagrams.] The conceptual bridge between terms and graphs are diagrams. 
A new development, with roots in theoretical physics and category theory, string diagrams are ideally suited to this purpose. 
They are syntactic in the sense that they can be generated by grammars, but they are two-dimensional rather than linear, much like the standard notation for matrices is a two-dimensional syntax. 
The two-dimensional syntax, if written in a certain way, makes the graph representation intuitive and its formalisation unsurprising. 
\end{description}
The particular concept that bridges diagrams and graphs is that of \emph{foliation}, to which we dedicate Section~\ref{sec:foliations}. 
A foliation of a diagram is obtained by representing it first as a graph, then representing the graph back into a diagram of a particular form, which we can thing of as a quasi-normal form. 
These are not true normal forms because they are not unique, yet they are useful in greatly simplifying the structure of diagrams, ultimately allowing the formulation of inductive algorithms on essentially non-inductive data structures. 
These algorithms are not efficient, but they can be reasoned about using convenient inductive methods. 
Moreover, these algorithms, which are essentially executable specifications, can be further refined into graph-algorithms which are efficient. 

\begin{center}
\S
\end{center}

Our presentation assumes some basic understanding of the relevant categorical concepts, particularly in order to motivate the entire project. 
Some understanding of basics of programming language theory, such as operational semantics and abstract machines, as well as basic compiler developments should further enhance the appreciation of the material. 
Despite the theoretical underpinnings, the presentation here should lead to a practical understanding of hierarchical graphs as a realistic and useful data structure for representing syntax in compilers, interpreters, and program analysis tools.

The style of presentation is neither purely theoretical nor purely practical, but a kind of `technology transfer'. 
Readers with a background in category theory should be able to find genuine practical applications which hopefully will inform their work and allow them to converse and collaborate with compiler writers. 
Readers with a background in compiler development should find themselves alerted to the possibilities of using category theory not only as a semantic framework, not only as a type-theoretic framework, but also as a syntactic framework. 
Finally, this work is intended to be read as a tutorial rather than a comprehensive survey or a reference handbook. 
To minimise distractions we will save all references and discussions for special sub-sections. 

The structure of the presentation is linear and the reader is advised to proceed section-by-section.
We start with a general introduction to string diagrams in Section~\ref{sec:sdssmc} in which we pay attention to some constructions which are generally disregarded in the standard literature, in particular adjunctions and strictification, but are essential for dealing with closed monoidal structure. 
Building closely on this we introduce, in Section~\ref{sec:hsdcmc}, string diagrams for closed monoidal categories and Cartesian closed categories, with an immediate application to the syntax and equational theory of the $\lambda$ calculus. 
The connection between the two dimensional syntax of string diagrams and the graph rewriting intuitions are made precise in Section~\ref{sec:graphs}.
Section~\ref{sec:oslc} will turn to the operational semantics of the $\lambda$ calculus, pure but also extended with operations into more practical programming languages. 
Finally, in Section~\ref{sec:apps} we give three non-trivial applications of string diagrams: type inference, closure conversion, and reverse automated differentiation. 
They will arguably provide some evidence that string diagrams lead to clearer and more insightful versions of some known algorithms that, in their terms-based formulation, are considered sometimes exceedingly complicated.

\subsection{Further reading and related work}

We will only mention some related work in terms of syntax and intermediate representations. 
There is much more related work which we shall not mention here as it will be discussed in more detail in the following sections. 

The earliest and one of the most influential approaches to making syntax a data structure with better algorithmic properties is de Bruijn `nameless dummies', known widely as `de Bruijn indices'~\cite{de1972lambda}. 
In this notation variable names are replaced by natural numbers, assigned using a deterministic schema that takes into account the depth of the binder. 
This notation is widely used in formalisations of the $\lambda$ calculus, but it has the drawback that variables need to be re-indexed following applications. 
This is a symptom of it being non-compositional, as the integer assignment schema is global for the term. 

For representations of syntax two alternative approaches are significant. 
The first one has been promoted by Pitts, Gabbay, Urban, Cheney and others and is based on so-called `nominal techniques'~\cite{pitts2013nominal}.
The key goal here is to reconcile quotienting by alpha equivalence with recursive definitions and inductive proofs. 
This goal is achieved by introducing names as a data structure equipped with only two operations: one local, testing for equality, and one global, generating a fresh name. 
Because names are indistinguishable save for their identity they are dubbed `atoms'. 
An essential consequence of this is that nominal data structures are `equivariant', which means that two elements are indistinguishable if the underlying atoms are systematically changed. 

Graph-based representations are also implicitly equivariant, since graph isomorphism subsumes equivariance for the actual identity of nodes and edges. 
However, nominal techniques give a nice syntactic model of binding which does not appear in the graph representation, where it does not seem to be required. 
Instead of the standard structural induction or recursion that nominal representations support, the string-diagram representation offers a new approach, of induction or recursion on `foliations'. 

A different approach to syntax is `higher-order abstract syntax' (HOAS) \cite{pfenning1988higher}. 
It is a way of incorporating notions of binding in an abstract syntax tree when using a meta-language that itself supports binding. 
This is helpful for encoding programming languages into, for example, proof checkers, but it is not ideal as a data structure to be used, as is our aim, as an intermediate representation in a compiler. 
HOAS and de Bruijn notations complement each other, and can be used together~\cite{DBLP:conf/icfp/HickeyNYK06}.

The approach to intermediate representation in compilers is quite different from using de Bruijn or HOAS notations, which indicates a rift between theory and practice. 
There is also a sometimes manifested confusion between `intermediate representation' and `intermediate language', which we shall clarify before we proceed. 
The former is a (programming) language not intended for a human programmer. 
It is either an aggressive desugaring of the object language, as is the case for example with OCaml's FLambda intermediate language~\cite{ocaml}, or is an abstraction of the details of assembly, as is the case with LLVM~\cite{lattner2004llvm}. 
In both cases, the intermediate language has a much simplified syntax, which makes specifications less cumbersome. 

In contrast, an intermediate representation (IR) is a data structure, the simplest of which is the abstract syntax tree (AST). 
The AST is quite simple, for example missing any concept of direct sharing, and relying either on copying or on auxiliary data structures such as symbol tables. 
The graph representation unifies all this in a single, convenient data structure. 
Graph-based IRs have been used before with some success in commercial compilers \cite{DBLP:conf/irep/ClickP95}, sometimes under the name of sea-of-nodes representation \cite{paleczny2001java}, but they have rarely given the required importance. 
For instance, it is relatively common for intermediate languages to be formalised in mechanical proof assistants \cite{DBLP:conf/popl/ZhaoNMZ12} but we are unaware of efforts to mechanise the theory of graph-based IRs, as data structures. 
Although we do not present it here, an effort of formalising the graph representation of string diagrams is ongoing. 
Note that there exists an important distinction must be made between IRs and other graph-based representations which are used in compilers, such as control-flow graphs (CFGs) and single-static assignment (SSA), or register allocation algorithms. 

Finally, a special acknowledgement is required for `interaction nets', a diagrammatic representation of programming language which uses graph rewriting to give an operational semantics \cite{DBLP:conf/popl/Lafont90}. 
The only difference, which is nevertheless essential, between interaction nets and our string diagrams is that the former are flat and the latter are hierarchical. 
We will see later on why the flat structure is problematic when it comes to representing the binding structure common in programming languages. 

\newpage
\section{String Diagrams}
\label{sec:sdssmc}

We assume a familiarity of the reader with basic concepts of category theory and string diagrams, therefore in this first section we will move quickly through the material, emphasising only those aspects which are specific to the current presentation. 
For readers who require a more introductory presentation to category theory there are numerous tutorial presentations to start from, many of them aimed at computer scientists \cite{pierce1991basic}. 
Particularly relevant and useful are tutorial introductions to string diagrams, see the very recent \cite{PiedeleuZanasi2023} and the classic \cite{Selinger2011}.

\subsection{Categories and their graphical language}

\newcommand{\category}[1]{\mathcal { #1 } }
\newcommand{\id}{\mathrm {id} }
\newcommand{\defeq}{:=}

\begin{definition}[Category] \label{def:category}
A (small) \emph{category} $\category C$ consists of 
\begin{itemize}
\item a set of \emph{objects} $\mathit{obj}(\category C)$ 
\item a set of \emph{morphisms} $hom(\category C)$ such that for each morphism $f$ there exist unique objects $A$ and $B$, respectively called the \emph{source} and \emph{target} of $f$,  written $f:A\to B$. 
\item a \emph{composition} operation $\circ$ on morphisms such that
\begin{description}
\item[closure] for all morphisms if $f:A\to B$ and $g: B\to C$ then $g\circ f:A\to C$ is a morphism
\item[associativity] for all morphisms $f:A\to B$, $g: B\to C$, $h:C\to D$, 
\[ h\circ(g\circ f) = (h\circ g)\circ f \]
\item[identity] for all objects $A$ there exists a morphism $\id_A:A\to A$ such that
\[ \id_B\circ f = f \circ \id_A = f\]
for any $f:A\to B$. 
\end{description}
\end{itemize}

We may freely generate a category $\category C$ from a \emph{signature} $\Sigma = (\Sigma_0,\Sigma_1)$ consisting of a set $\Sigma_0$ of generating objects and a set $\Sigma_1$ of generating morphisms with sources and targets in $\Sigma_0$. First, $\Sigma$-terms are defined inductively as follows:
\begin{itemize}
\item All morphisms $f \colon A \to B$ in $\Sigma_1$ and identities $\id_A \colon A \to A$ are $\Sigma$-terms.
\item If $f \colon A \to B$, $g \colon B \to C$ are $\Sigma$-terms, then $g\circ f:A\to C$ is a $\Sigma$-term. 
\end{itemize}
The category $\category C$ freely generated by $(\Sigma_0,\Sigma_1)$ is defined as having objects $\mathit{obj}(\category C) = \Sigma_0$ and morphisms  $hom(\category C)$ the $\Sigma$-terms quotiented by associativity and identity equations as above. 
\end{definition}
Morphisms are sometimes called \emph{maps} or \emph{arrows}. 
The name of the morphism is sometimes stacked onto the arrow:
\[
A\stackrel f \to B \defeq f:A\to B.
\]
If $A,B\in \mathit{obj}(\category C)$ are fixed we write the set of all $f:A\to B$ as $hom_{\category C}(A,B)$. 
Composition is sometimes defined using a reverse-order operator
\[
f \semic g\defeq g\circ f, 
\]
as it follows more naturally the left-to-right order of the arrow itself, i.e.
\[
(A\stackrel f \to B)\semic (B\stackrel g\to C) = A\stackrel {f\semic g}\longrightarrow C.
\]
The notation above is often streamlined into 
\[
A\stackrel f \to B\stackrel g\to C = A\stackrel {f \semic g}\longrightarrow C,
\]
usually rendered as a so-called \emph{commutative diagram}:
\[
\xymatrix{
    A \ar[dr]_{f \semic g} \ar[r]^f & B \ar[d]^g \\
                     & C }
\]
\begin{exercise}
Render the equations for associativity and identity as commutative diagrams. 
\end{exercise}

Because composition is associative, it is acceptable to elide the brackets when writing up more complex compositions, for example
\[
(f \semic g) \semic (h \semic (i \semic j)) = f \semic g \semic h \semic i \semic j.
\]
More specifically, if using the unbracketed form $f \semic g \semic h \semic i \semic j$ it does not matter how we insert brackets, because all bracketings will result in expressions which are equal even though syntactically distinct. 
So the unbracketed expression represents an \emph{equivalence class} of syntactic terms quotiented by associativity. 
This is a very common syntactic trick in mathematics for associative operations that we used automatically, without reflecting on it. 
But the idea of using notational convention to quotient classes of syntactic objects (terms) by their semantics is an attractive one as it inexorably leads to simpler, more abstract, notation. 
In this situation we say that the equivalence is \emph{absorbed} by the notation. 
The question is: can we generalise this idea to other syntactic structures? 
Using string diagrams, to be introduced momentarily, the answer is a clear `\emph{yes}'.

\begin{definition}
The \emph{graphical language} of a category $\category C$ is 
\begin{description}
\item[Objects] $A$ are represented as labelled wires \[ 
\begingroup%
  \makeatletter%
  \providecommand\color[2][]{%
    \errmessage{(Inkscape) Color is used for the text in Inkscape, but the package 'color.sty' is not loaded}%
    \renewcommand\color[2][]{}%
  }%
  \providecommand\transparent[1]{%
    \errmessage{(Inkscape) Transparency is used (non-zero) for the text in Inkscape, but the package 'transparent.sty' is not loaded}%
    \renewcommand\transparent[1]{}%
  }%
  \providecommand\rotatebox[2]{#2}%
  \newcommand*\fsize{\dimexpr\f@size pt\relax}%
  \newcommand*\lineheight[1]{\fontsize{\fsize}{#1\fsize}\selectfont}%
  \ifx\svgwidth\undefined%
    \setlength{\unitlength}{19.90541255bp}%
    \ifx\svgscale\undefined%
      \relax%
    \else%
      \setlength{\unitlength}{\unitlength * \real{\svgscale}}%
    \fi%
  \else%
    \setlength{\unitlength}{\svgwidth}%
  \fi%
  \global\let\svgwidth\undefined%
  \global\let\svgscale\undefined%
  \makeatother%
  \begin{picture}(1,0.47113174)%
    \lineheight{1}%
    \setlength\tabcolsep{0pt}%
    \put(0,0){\includegraphics[width=\unitlength,page=1]{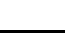}}%
    \put(0.24172691,0.1618568){\color[rgb]{0,0,0}\makebox(0,0)[lt]{\lineheight{1.25}\smash{\begin{tabular}[t]{l}$A$\end{tabular}}}}%
  \end{picture}%
\endgroup%
 \]
\item[Morphisms] $f:A\to B$ are represented as boxes \[ 
\begingroup%
  \makeatletter%
  \providecommand\color[2][]{%
    \errmessage{(Inkscape) Color is used for the text in Inkscape, but the package 'color.sty' is not loaded}%
    \renewcommand\color[2][]{}%
  }%
  \providecommand\transparent[1]{%
    \errmessage{(Inkscape) Transparency is used (non-zero) for the text in Inkscape, but the package 'transparent.sty' is not loaded}%
    \renewcommand\transparent[1]{}%
  }%
  \providecommand\rotatebox[2]{#2}%
  \newcommand*\fsize{\dimexpr\f@size pt\relax}%
  \newcommand*\lineheight[1]{\fontsize{\fsize}{#1\fsize}\selectfont}%
  \ifx\svgwidth\undefined%
    \setlength{\unitlength}{61.17108743bp}%
    \ifx\svgscale\undefined%
      \relax%
    \else%
      \setlength{\unitlength}{\unitlength * \real{\svgscale}}%
    \fi%
  \else%
    \setlength{\unitlength}{\svgwidth}%
  \fi%
  \global\let\svgwidth\undefined%
  \global\let\svgscale\undefined%
  \makeatother%
  \begin{picture}(1,0.27592633)%
    \lineheight{1}%
    \setlength\tabcolsep{0pt}%
    \put(0,0){\includegraphics[width=\unitlength,page=1]{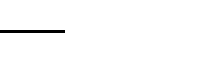}}%
    \put(0.07865928,0.17528583){\color[rgb]{0,0,0}\makebox(0,0)[lt]{\lineheight{1.25}\smash{\begin{tabular}[t]{l}$A$\end{tabular}}}}%
    \put(0,0){\includegraphics[width=\unitlength,page=2]{obj.pdf}}%
    \put(0.75299741,0.17528618){\color[rgb]{0,0,0}\makebox(0,0)[lt]{\lineheight{1.25}\smash{\begin{tabular}[t]{l}$B$\end{tabular}}}}%
    \put(0.4438372,0.10177356){\color[rgb]{0,0,0}\makebox(0,0)[lt]{\lineheight{1.25}\smash{\begin{tabular}[t]{l}$f$\end{tabular}}}}%
  \end{picture}%
\endgroup%
 \]
\item[Identities] $\id_A:A\to A$ are represented as labelled wires \[ \hfill  \]
\item[Composition] $f \semic g$ is represented as \[ 
\begingroup%
  \makeatletter%
  \providecommand\color[2][]{%
    \errmessage{(Inkscape) Color is used for the text in Inkscape, but the package 'color.sty' is not loaded}%
    \renewcommand\color[2][]{}%
  }%
  \providecommand\transparent[1]{%
    \errmessage{(Inkscape) Transparency is used (non-zero) for the text in Inkscape, but the package 'transparent.sty' is not loaded}%
    \renewcommand\transparent[1]{}%
  }%
  \providecommand\rotatebox[2]{#2}%
  \newcommand*\fsize{\dimexpr\f@size pt\relax}%
  \newcommand*\lineheight[1]{\fontsize{\fsize}{#1\fsize}\selectfont}%
  \ifx\svgwidth\undefined%
    \setlength{\unitlength}{102.53045916bp}%
    \ifx\svgscale\undefined%
      \relax%
    \else%
      \setlength{\unitlength}{\unitlength * \real{\svgscale}}%
    \fi%
  \else%
    \setlength{\unitlength}{\svgwidth}%
  \fi%
  \global\let\svgwidth\undefined%
  \global\let\svgscale\undefined%
  \makeatother%
  \begin{picture}(1,0.16462146)%
    \lineheight{1}%
    \setlength\tabcolsep{0pt}%
    \put(0,0){\includegraphics[width=\unitlength,page=1]{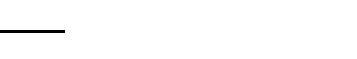}}%
    \put(0.04692921,0.10457794){\color[rgb]{0,0,0}\makebox(0,0)[lt]{\lineheight{1.25}\smash{\begin{tabular}[t]{l}$A$\end{tabular}}}}%
    \put(0,0){\includegraphics[width=\unitlength,page=2]{comp.pdf}}%
    \put(0.44924865,0.10457815){\color[rgb]{0,0,0}\makebox(0,0)[lt]{\lineheight{1.25}\smash{\begin{tabular}[t]{l}$B$\end{tabular}}}}%
    \put(0.2647994,0.06071951){\color[rgb]{0,0,0}\makebox(0,0)[lt]{\lineheight{1.25}\smash{\begin{tabular}[t]{l}$f$\end{tabular}}}}%
    \put(0,0){\includegraphics[width=\unitlength,page=3]{comp.pdf}}%
    \put(0.85156808,0.10457815){\color[rgb]{0,0,0}\makebox(0,0)[lt]{\lineheight{1.25}\smash{\begin{tabular}[t]{l}$C$\end{tabular}}}}%
    \put(0.66711883,0.06071951){\color[rgb]{0,0,0}\makebox(0,0)[lt]{\lineheight{1.25}\smash{\begin{tabular}[t]{l}$g$\end{tabular}}}}%
  \end{picture}%
\endgroup%
 \]
\end{description}
We call terms of the graphical language \emph{string diagrams}.
\end{definition}
The graphical language is a bracket-free syntax so it presumes associativity of composition by construction. 
As we shall see soon, it is possible to use the equivalent of brackets (boxes) in the graphical language for disambiguation. 
However, it remains our goal to avoid this as much as possible. 
Reducing the need for brackets or boxes is one of the hallmarks of good syntax. 
\begin{remark}
The identity equations are absorbed into the graphical language. 
Note the following ambiguity regarding this string diagram,
\[

\]
which could be interpreted as $id_A \semic f$ or $f \semic id_B$. 
However, 
\[
id_A \semic f=f \semic  id_B.
\]
Similarly, the string diagram
\[

\]
can be interpreted as $id_A$ or $id_A \semic id_A$ or $id_A \semic id_A \semic id_A$ and so on, all of which are equal. 
\end{remark}
The astute reader might rightfully complain that, actually
\begin{align*}
id_A &= \\
id_A \semic id_A &=
\begingroup%
  \makeatletter%
  \providecommand\color[2][]{%
    \errmessage{(Inkscape) Color is used for the text in Inkscape, but the package 'color.sty' is not loaded}%
    \renewcommand\color[2][]{}%
  }%
  \providecommand\transparent[1]{%
    \errmessage{(Inkscape) Transparency is used (non-zero) for the text in Inkscape, but the package 'transparent.sty' is not loaded}%
    \renewcommand\transparent[1]{}%
  }%
  \providecommand\rotatebox[2]{#2}%
  \newcommand*\fsize{\dimexpr\f@size pt\relax}%
  \newcommand*\lineheight[1]{\fontsize{\fsize}{#1\fsize}\selectfont}%
  \ifx\svgwidth\undefined%
    \setlength{\unitlength}{37.50000094bp}%
    \ifx\svgscale\undefined%
      \relax%
    \else%
      \setlength{\unitlength}{\unitlength * \real{\svgscale}}%
    \fi%
  \else%
    \setlength{\unitlength}{\svgwidth}%
  \fi%
  \global\let\svgwidth\undefined%
  \global\let\svgscale\undefined%
  \makeatother%
  \begin{picture}(1,0.25008198)%
    \lineheight{1}%
    \setlength\tabcolsep{0pt}%
    \put(0,0){\includegraphics[width=\unitlength,page=1]{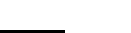}}%
    \put(0.1283113,0.08591544){\color[rgb]{0,0,0}\makebox(0,0)[lt]{\lineheight{1.25}\smash{\begin{tabular}[t]{l}$A$\end{tabular}}}}%
    \put(0,0){\includegraphics[width=\unitlength,page=2]{id2.pdf}}%
  \end{picture}%
\endgroup%
\\
id_A \semic id_A \semic  id_A &=
\begingroup%
  \makeatletter%
  \providecommand\color[2][]{%
    \errmessage{(Inkscape) Color is used for the text in Inkscape, but the package 'color.sty' is not loaded}%
    \renewcommand\color[2][]{}%
  }%
  \providecommand\transparent[1]{%
    \errmessage{(Inkscape) Transparency is used (non-zero) for the text in Inkscape, but the package 'transparent.sty' is not loaded}%
    \renewcommand\transparent[1]{}%
  }%
  \providecommand\rotatebox[2]{#2}%
  \newcommand*\fsize{\dimexpr\f@size pt\relax}%
  \newcommand*\lineheight[1]{\fontsize{\fsize}{#1\fsize}\selectfont}%
  \ifx\svgwidth\undefined%
    \setlength{\unitlength}{56.25bp}%
    \ifx\svgscale\undefined%
      \relax%
    \else%
      \setlength{\unitlength}{\unitlength * \real{\svgscale}}%
    \fi%
  \else%
    \setlength{\unitlength}{\svgwidth}%
  \fi%
  \global\let\svgwidth\undefined%
  \global\let\svgscale\undefined%
  \makeatother%
  \begin{picture}(1,0.16672132)%
    \lineheight{1}%
    \setlength\tabcolsep{0pt}%
    \put(0,0){\includegraphics[width=\unitlength,page=1]{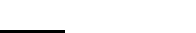}}%
    \put(0.08554087,0.05727696){\color[rgb]{0,0,0}\makebox(0,0)[lt]{\lineheight{1.25}\smash{\begin{tabular}[t]{l}$A$\end{tabular}}}}%
    \put(0,0){\includegraphics[width=\unitlength,page=2]{id3.pdf}}%
  \end{picture}%
\endgroup%
. 
\end{align*}
And that is a valid objection which can be raised against a sleight of hand we have performed. 
We did not specify how the string diagrams are to be \emph{represented} concretely. 
The answer is going to be, at this stage, informal. 
Formalising the graphical language as a combinatorial object is possible, and it will capture the intuition that they are graph-like objects with well defined (albeit hidden) interfaces to the right and to the left of the diagram, anchoring what appears like `dangling' wires. 
As graphs, these objects will be quotiented by an appropriately defined notion of isomorphism, which renders the length and other attributes of the wires irrelevant, focussing on connectivity only. 
This is why the three wires of increasing length above are all \emph{equal} in the interpretation, as they can be formalised as isomorphic graph-like structures. 
We consider these formal details mostly as a sanity check, in the sense that they precisely described the desired intuitions, so we postpone them until Section~\ref{sec:graphs}. 

\subsection{Functors and boxes}
This section will introduce some more basic definitions and notations, to be illustrated with interesting examples later on. 

\begin{definition}
Let $\category C$ and $\category D$ be two categories. 
We define a \emph{map} $F:\category C \to \category D$ between these categories as a map on objects together with a map on morphisms such that source and target objects are preserved:
\[
F(f:A\to B) = Ff:FA\to FB. 
\]
\end{definition}
\begin{definition}
In the graphical language maps $F:\category C \to \category D$ are represented as $F$-labelled boxes: 
\[
 F(f:A\to B) \quad := \quad \raisebox{-.4cm}{\hbox{
\begingroup%
  \makeatletter%
  \providecommand\color[2][]{%
    \errmessage{(Inkscape) Color is used for the text in Inkscape, but the package 'color.sty' is not loaded}%
    \renewcommand\color[2][]{}%
  }%
  \providecommand\transparent[1]{%
    \errmessage{(Inkscape) Transparency is used (non-zero) for the text in Inkscape, but the package 'transparent.sty' is not loaded}%
    \renewcommand\transparent[1]{}%
  }%
  \providecommand\rotatebox[2]{#2}%
  \newcommand*\fsize{\dimexpr\f@size pt\relax}%
  \newcommand*\lineheight[1]{\fontsize{\fsize}{#1\fsize}\selectfont}%
  \ifx\svgwidth\undefined%
    \setlength{\unitlength}{101.25bp}%
    \ifx\svgscale\undefined%
      \relax%
    \else%
      \setlength{\unitlength}{\unitlength * \real{\svgscale}}%
    \fi%
  \else%
    \setlength{\unitlength}{\svgwidth}%
  \fi%
  \global\let\svgwidth\undefined%
  \global\let\svgscale\undefined%
  \makeatother%
  \begin{picture}(1,0.26296333)%
    \lineheight{1}%
    \setlength\tabcolsep{0pt}%
    \put(0,0){\includegraphics[width=\unitlength,page=1]{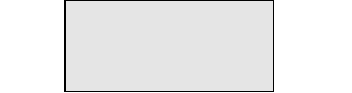}}%
    \put(0.23270788,0.17811712){\color[rgb]{0,0,0}\makebox(0,0)[lt]{\lineheight{1.25}\smash{\begin{tabular}[t]{l}$A$\end{tabular}}}}%
    \put(0,0){\includegraphics[width=\unitlength,page=2]{map.pdf}}%
    \put(0.64011525,0.17811734){\color[rgb]{0,0,0}\makebox(0,0)[lt]{\lineheight{1.25}\smash{\begin{tabular}[t]{l}$B$\end{tabular}}}}%
    \put(0.45333336,0.13370404){\color[rgb]{0,0,0}\makebox(0,0)[lt]{\lineheight{1.25}\smash{\begin{tabular}[t]{l}$f$\end{tabular}}}}%
    \put(0.19459861,0.01639286){\color[rgb]{0,0,0}\makebox(0,0)[lt]{\lineheight{1.25}\smash{\begin{tabular}[t]{l}$F$\end{tabular}}}}%
    \put(0,0){\includegraphics[width=\unitlength,page=3]{map.pdf}}%
    \put(-0.00432917,0.17811712){\color[rgb]{0,0,0}\makebox(0,0)[lt]{\lineheight{1.25}\smash{\begin{tabular}[t]{l}$FA$\end{tabular}}}}%
    \put(0.80378858,0.18075718){\color[rgb]{0,0,0}\makebox(0,0)[lt]{\lineheight{1.25}\smash{\begin{tabular}[t]{l}$FB$\end{tabular}}}}%
    \put(0,0){\includegraphics[width=\unitlength,page=4]{map.pdf}}%
  \end{picture}%
\endgroup%
}}
\]
\end{definition}
\begin{remark}
The \emph{box} serves the same role, in the two-dimensional landscape of string diagrams, that brackets serve in the one-dimensional world of terms. 
But, unlike a bracket, the box acts on the object: as the wire crosses the boundary of the box the objects on either side may be different. 
This can be emphasised by the use of further graphical conventions, such as use of distinct colours or shading.

Also note that wire labeled with $A$ ($B$ respectively) inside the box and $FA$ ($FB$ respectively) outside of the box does not `cross' the boundary since the type of the wire changes, so it is a different wire. 
This distinction is made clear in the concrete representation which we shall see later, but for now we take this as a given. 
This is why we can also draw the box in a way that make the distinctiveness of the two wires more clear: 
\[
\begingroup%
  \makeatletter%
  \providecommand\color[2][]{%
    \errmessage{(Inkscape) Color is used for the text in Inkscape, but the package 'color.sty' is not loaded}%
    \renewcommand\color[2][]{}%
  }%
  \providecommand\transparent[1]{%
    \errmessage{(Inkscape) Transparency is used (non-zero) for the text in Inkscape, but the package 'transparent.sty' is not loaded}%
    \renewcommand\transparent[1]{}%
  }%
  \providecommand\rotatebox[2]{#2}%
  \newcommand*\fsize{\dimexpr\f@size pt\relax}%
  \newcommand*\lineheight[1]{\fontsize{\fsize}{#1\fsize}\selectfont}%
  \ifx\svgwidth\undefined%
    \setlength{\unitlength}{100.91571507bp}%
    \ifx\svgscale\undefined%
      \relax%
    \else%
      \setlength{\unitlength}{\unitlength * \real{\svgscale}}%
    \fi%
  \else%
    \setlength{\unitlength}{\svgwidth}%
  \fi%
  \global\let\svgwidth\undefined%
  \global\let\svgscale\undefined%
  \makeatother%
  \begin{picture}(1,0.26383441)%
    \lineheight{1}%
    \setlength\tabcolsep{0pt}%
    \put(0,0){\includegraphics[width=\unitlength,page=1]{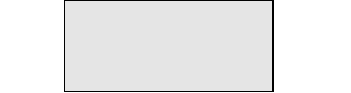}}%
    \put(0.23182248,0.17870693){\color[rgb]{0,0,0}\makebox(0,0)[lt]{\lineheight{1.25}\smash{\begin{tabular}[t]{l}$A$\end{tabular}}}}%
    \put(0,0){\includegraphics[width=\unitlength,page=2]{mapalign.pdf}}%
    \put(0.6405795,0.17870736){\color[rgb]{0,0,0}\makebox(0,0)[lt]{\lineheight{1.25}\smash{\begin{tabular}[t]{l}$B$\end{tabular}}}}%
    \put(0.45317889,0.13414672){\color[rgb]{0,0,0}\makebox(0,0)[lt]{\lineheight{1.25}\smash{\begin{tabular}[t]{l}$f$\end{tabular}}}}%
    \put(0.19358702,0.01644738){\color[rgb]{0,0,0}\makebox(0,0)[lt]{\lineheight{1.25}\smash{\begin{tabular}[t]{l}$F$\end{tabular}}}}%
    \put(0,0){\includegraphics[width=\unitlength,page=3]{mapalign.pdf}}%
    \put(-0.00599976,0.17870693){\color[rgb]{0,0,0}\makebox(0,0)[lt]{\lineheight{1.25}\smash{\begin{tabular}[t]{l}$FA$\end{tabular}}}}%
    \put(0.80479478,0.18135594){\color[rgb]{0,0,0}\makebox(0,0)[lt]{\lineheight{1.25}\smash{\begin{tabular}[t]{l}$FB$\end{tabular}}}}%
    \put(0,0){\includegraphics[width=\unitlength,page=4]{mapalign.pdf}}%
  \end{picture}%
\endgroup%

\]
In the semi-formal string-diagram notation will make this distinction wherever we think it might introduce ambiguities. 
\end{remark}
\begin{definition}[Functor]
A map $F:\category C \to \category D$ is a \emph{functor} if it preserves composition and identity, i.e.
\begin{align*}
F(id_A:A\to A) &= id_{FA}: FA\to FA \\
F(A\stackrel f\to B \stackrel g\to C) &= FA\stackrel {Ff}\to FB \stackrel {Fg}\to FC.
\end{align*}
\end{definition}
\begin{remark}
In the graphical language functoriality is expressed as the following equalities of string diagrams:
\[
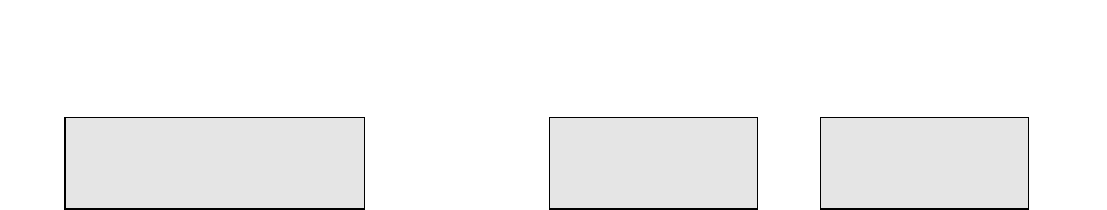
\]
\end{remark}
If clear from context we may elide some of the annotations in order to prevent clutter. 
For instance, if $\category C$ and $\category D$ are singletons (a single object) we can write the functoriality diagrams for some fixed map between them as:
\[
\begingroup%
  \makeatletter%
  \providecommand\color[2][]{%
    \errmessage{(Inkscape) Color is used for the text in Inkscape, but the package 'color.sty' is not loaded}%
    \renewcommand\color[2][]{}%
  }%
  \providecommand\transparent[1]{%
    \errmessage{(Inkscape) Transparency is used (non-zero) for the text in Inkscape, but the package 'transparent.sty' is not loaded}%
    \renewcommand\transparent[1]{}%
  }%
  \providecommand\rotatebox[2]{#2}%
  \newcommand*\fsize{\dimexpr\f@size pt\relax}%
  \newcommand*\lineheight[1]{\fontsize{\fsize}{#1\fsize}\selectfont}%
  \ifx\svgwidth\undefined%
    \setlength{\unitlength}{150.00000094bp}%
    \ifx\svgscale\undefined%
      \relax%
    \else%
      \setlength{\unitlength}{\unitlength * \real{\svgscale}}%
    \fi%
  \else%
    \setlength{\unitlength}{\svgwidth}%
  \fi%
  \global\let\svgwidth\undefined%
  \global\let\svgscale\undefined%
  \makeatother%
  \begin{picture}(1,0.37750004)%
    \lineheight{1}%
    \setlength\tabcolsep{0pt}%
    \put(0,0){\includegraphics[width=\unitlength,page=1]{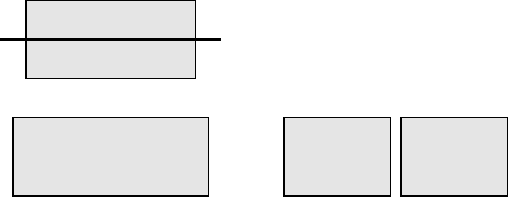}}%
    \put(0.4503167,0.28441414){\color[rgb]{0,0,0}\makebox(0,0)[lt]{\lineheight{1.25}\smash{\begin{tabular}[t]{l}$=$\end{tabular}}}}%
    \put(0,0){\includegraphics[width=\unitlength,page=2]{fidlite.pdf}}%
    \put(0.63099999,0.06524988){\color[rgb]{0,0,0}\makebox(0,0)[lt]{\lineheight{1.25}\smash{\begin{tabular}[t]{l}$f$\end{tabular}}}}%
    \put(0,0){\includegraphics[width=\unitlength,page=3]{fidlite.pdf}}%
    \put(0.86099963,0.06524988){\color[rgb]{0,0,0}\makebox(0,0)[lt]{\lineheight{1.25}\smash{\begin{tabular}[t]{l}$g$\end{tabular}}}}%
    \put(0.45201223,0.06110967){\color[rgb]{0,0,0}\makebox(0,0)[lt]{\lineheight{1.25}\smash{\begin{tabular}[t]{l}$=$\end{tabular}}}}%
    \put(0,0){\includegraphics[width=\unitlength,page=4]{fidlite.pdf}}%
    \put(0.10600017,0.06524988){\color[rgb]{0,0,0}\makebox(0,0)[lt]{\lineheight{1.25}\smash{\begin{tabular}[t]{l}$f$\end{tabular}}}}%
    \put(0,0){\includegraphics[width=\unitlength,page=5]{fidlite.pdf}}%
    \put(0.28100016,0.06525016){\color[rgb]{0,0,0}\makebox(0,0)[lt]{\lineheight{1.25}\smash{\begin{tabular}[t]{l}$g$\end{tabular}}}}%
  \end{picture}%
\endgroup%

\]
\begin{remark}
The notion of map of categories generalises in an obvious way to maps from two categories, i.e. $F:\category C_1\times \category C_2\to \category D$. 
The graphical language representation of the box for $F(f_1, f_2)$ now includes two `compartments' corresponding to the two arguments. 
\[
\begingroup%
  \makeatletter%
  \providecommand\color[2][]{%
    \errmessage{(Inkscape) Color is used for the text in Inkscape, but the package 'color.sty' is not loaded}%
    \renewcommand\color[2][]{}%
  }%
  \providecommand\transparent[1]{%
    \errmessage{(Inkscape) Transparency is used (non-zero) for the text in Inkscape, but the package 'transparent.sty' is not loaded}%
    \renewcommand\transparent[1]{}%
  }%
  \providecommand\rotatebox[2]{#2}%
  \newcommand*\fsize{\dimexpr\f@size pt\relax}%
  \newcommand*\lineheight[1]{\fontsize{\fsize}{#1\fsize}\selectfont}%
  \ifx\svgwidth\undefined%
    \setlength{\unitlength}{159.55982223bp}%
    \ifx\svgscale\undefined%
      \relax%
    \else%
      \setlength{\unitlength}{\unitlength * \real{\svgscale}}%
    \fi%
  \else%
    \setlength{\unitlength}{\svgwidth}%
  \fi%
  \global\let\svgwidth\undefined%
  \global\let\svgscale\undefined%
  \makeatother%
  \begin{picture}(1,0.3313806)%
    \lineheight{1}%
    \setlength\tabcolsep{0pt}%
    \put(0,0){\includegraphics[width=\unitlength,page=1]{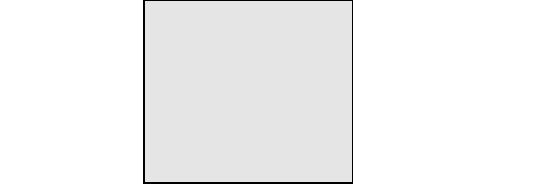}}%
    \put(0.29028661,0.10946738){\color[rgb]{0,0,0}\makebox(0,0)[lt]{\lineheight{1.25}\smash{\begin{tabular}[t]{l}$A_1$\end{tabular}}}}%
    \put(0,0){\includegraphics[width=\unitlength,page=2]{map2.pdf}}%
    \put(0.54881032,0.10946751){\color[rgb]{0,0,0}\makebox(0,0)[lt]{\lineheight{1.25}\smash{\begin{tabular}[t]{l}$B_1$\end{tabular}}}}%
    \put(0.43028633,0.084843){\color[rgb]{0,0,0}\makebox(0,0)[lt]{\lineheight{1.25}\smash{\begin{tabular}[t]{l}$f_1$\end{tabular}}}}%
    \put(0,0){\includegraphics[width=\unitlength,page=3]{map2.pdf}}%
    \put(0.26610406,0.01040223){\color[rgb]{0,0,0}\makebox(0,0)[lt]{\lineheight{1.25}\smash{\begin{tabular}[t]{l}$F$\end{tabular}}}}%
    \put(0,0){\includegraphics[width=\unitlength,page=4]{map2.pdf}}%
    \put(-0.00379463,0.18085024){\color[rgb]{0,0,0}\makebox(0,0)[lt]{\lineheight{1.25}\smash{\begin{tabular}[t]{l}$F(A_1,A_2)$\end{tabular}}}}%
    \put(0.65089406,0.18332397){\color[rgb]{0,0,0}\makebox(0,0)[lt]{\lineheight{1.25}\smash{\begin{tabular}[t]{l}$F(B_1,B_2)$\end{tabular}}}}%
    \put(0,0){\includegraphics[width=\unitlength,page=5]{map2.pdf}}%
    \put(0.29028661,0.25467065){\color[rgb]{0,0,0}\makebox(0,0)[lt]{\lineheight{1.25}\smash{\begin{tabular}[t]{l}$A_2$\end{tabular}}}}%
    \put(0,0){\includegraphics[width=\unitlength,page=6]{map2.pdf}}%
    \put(0.43028633,0.22648796){\color[rgb]{0,0,0}\makebox(0,0)[lt]{\lineheight{1.25}\smash{\begin{tabular}[t]{l}$f_2$\end{tabular}}}}%
    \put(0.53114561,0.25467065){\color[rgb]{0,0,0}\makebox(0,0)[lt]{\lineheight{1.25}\smash{\begin{tabular}[t]{l}$B_2$\end{tabular}}}}%
    \put(0,0){\includegraphics[width=\unitlength,page=7]{map2.pdf}}%
  \end{picture}%
\endgroup%

\]
The internal compartments are not drawn unless needed for disambiguation.
Maps of two categories which are also functors are called \emph{bifunctors}.
In the sequel, we will not use bifunctor boxes except briefly to introduce the definition of monoidal categories, after which we will not need them again. 
\end{remark}
Using different colours for the inside and outside of the box, indicating the two different categories, can also add an extra visual indicator helpful for error-checking, as we shall see in the next section. 
 
\subsubsection{Natural transformations}
Climbing further up the ladder of abstraction, natural transformations can be seen as \emph{maps between functors}. 
They are collections of morphisms indexed by objects in the source category, which are `uniform' over the target category. 
The uniformity property is defined below. 
\begin{definition}[Natural transformation]
If $F,G$ are functors from $\category C$ to $\category D$ then a \emph{natural transformation} $\eta:F\Rightarrow G$ between them is an object-indexed family of morphisms such that:
\begin{enumerate}
\item For each object $X$ in $\mathcal C$ there exists a morphism $\eta_X: FX\to GX$ in $\category D$. 
\item For each morphism $f:X\to Y$ in $\mathcal C$
\[
\xymatrix{
X\ar[d]_f & FX\ar[d]_{Ff} \ar[r]^{\eta_X} & GX\ar[d]_{Gf} \\
Y & FY\ar[r]_{\eta_Y} & GY
}
\]
\end{enumerate}
\end{definition}
The morphism $\eta_X$ is called the \emph{component of $\eta$ at $X$}.
It is usual, and convenient, to add the morphism $f:X\to Y$ to the commuting diagram as an annotation. 
If the components of a natural transformation are isomorphisms then the natural transformation is called a \emph{natural isomorphism}.
In this case, we write $\eta_X : FX \simeq GX.$

Natural transformations are, from the point of view of string diagrams, just collections of morphisms so no new graphical notation needs to be introduced. 
But the commuting diagram for components can be represented graphically as:
\[
\begingroup%
  \makeatletter%
  \providecommand\color[2][]{%
    \errmessage{(Inkscape) Color is used for the text in Inkscape, but the package 'color.sty' is not loaded}%
    \renewcommand\color[2][]{}%
  }%
  \providecommand\transparent[1]{%
    \errmessage{(Inkscape) Transparency is used (non-zero) for the text in Inkscape, but the package 'transparent.sty' is not loaded}%
    \renewcommand\transparent[1]{}%
  }%
  \providecommand\rotatebox[2]{#2}%
  \newcommand*\fsize{\dimexpr\f@size pt\relax}%
  \newcommand*\lineheight[1]{\fontsize{\fsize}{#1\fsize}\selectfont}%
  \ifx\svgwidth\undefined%
    \setlength{\unitlength}{239.99999244bp}%
    \ifx\svgscale\undefined%
      \relax%
    \else%
      \setlength{\unitlength}{\unitlength * \real{\svgscale}}%
    \fi%
  \else%
    \setlength{\unitlength}{\svgwidth}%
  \fi%
  \global\let\svgwidth\undefined%
  \global\let\svgscale\undefined%
  \makeatother%
  \begin{picture}(1,0.11135965)%
    \lineheight{1}%
    \setlength\tabcolsep{0pt}%
    \put(0,0){\includegraphics[width=\unitlength,page=1]{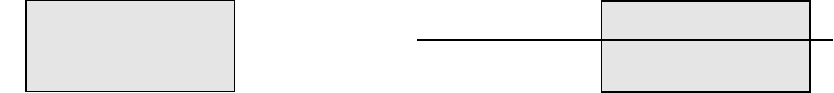}}%
    \put(0.05129871,0.07556516){\color[rgb]{0,0,0}\makebox(0,0)[lt]{\lineheight{1.25}\smash{\begin{tabular}[t]{l}$X$\end{tabular}}}}%
    \put(0,0){\includegraphics[width=\unitlength,page=2]{nat.pdf}}%
    \put(0.2231737,0.07556525){\color[rgb]{0,0,0}\makebox(0,0)[lt]{\lineheight{1.25}\smash{\begin{tabular}[t]{l}$Y$\end{tabular}}}}%
    \put(0.14437509,0.05682838){\color[rgb]{0,0,0}\makebox(0,0)[lt]{\lineheight{1.25}\smash{\begin{tabular}[t]{l}$f$\end{tabular}}}}%
    \put(0.03522133,0.00733773){\color[rgb]{0,0,0}\makebox(0,0)[lt]{\lineheight{1.25}\smash{\begin{tabular}[t]{l}$F$\end{tabular}}}}%
    \put(0,0){\includegraphics[width=\unitlength,page=3]{nat.pdf}}%
    \put(0.40472327,0.07355388){\color[rgb]{0,0,0}\makebox(0,0)[lt]{\lineheight{1.25}\smash{\begin{tabular}[t]{l}$GY$\end{tabular}}}}%
    \put(0.32323444,0.05659869){\color[rgb]{0,0,0}\makebox(0,0)[lt]{\lineheight{1.25}\smash{\begin{tabular}[t]{l}$\eta_Y$\end{tabular}}}}%
    \put(0.46082338,0.05571614){\color[rgb]{0,0,0}\makebox(0,0)[lt]{\lineheight{1.25}\smash{\begin{tabular}[t]{l}$=$\end{tabular}}}}%
    \put(0.74192354,0.07514317){\color[rgb]{0,0,0}\makebox(0,0)[lt]{\lineheight{1.25}\smash{\begin{tabular}[t]{l}$X$\end{tabular}}}}%
    \put(0,0){\includegraphics[width=\unitlength,page=4]{nat.pdf}}%
    \put(0.91379902,0.07514317){\color[rgb]{0,0,0}\makebox(0,0)[lt]{\lineheight{1.25}\smash{\begin{tabular}[t]{l}$Y$\end{tabular}}}}%
    \put(0.8287504,0.05640639){\color[rgb]{0,0,0}\makebox(0,0)[lt]{\lineheight{1.25}\smash{\begin{tabular}[t]{l}$f$\end{tabular}}}}%
    \put(0.72584621,0.00691565){\color[rgb]{0,0,0}\makebox(0,0)[lt]{\lineheight{1.25}\smash{\begin{tabular}[t]{l}$G$\end{tabular}}}}%
    \put(0,0){\includegraphics[width=\unitlength,page=5]{nat.pdf}}%
    \put(0.51119541,0.0722944){\color[rgb]{0,0,0}\makebox(0,0)[lt]{\lineheight{1.25}\smash{\begin{tabular}[t]{l}$FX$\end{tabular}}}}%
    \put(0.61698442,0.05617661){\color[rgb]{0,0,0}\makebox(0,0)[lt]{\lineheight{1.25}\smash{\begin{tabular}[t]{l}$\eta_X$\end{tabular}}}}%
  \end{picture}%
\endgroup%

\]
Above we left out all the information that can be inferred from the diagram context, to reduce clutter. 
To further reduce clutter we may indicate the functor using colour-coding or by decorating the border of the box, e.g.
\[
\begingroup%
  \makeatletter%
  \providecommand\color[2][]{%
    \errmessage{(Inkscape) Color is used for the text in Inkscape, but the package 'color.sty' is not loaded}%
    \renewcommand\color[2][]{}%
  }%
  \providecommand\transparent[1]{%
    \errmessage{(Inkscape) Transparency is used (non-zero) for the text in Inkscape, but the package 'transparent.sty' is not loaded}%
    \renewcommand\transparent[1]{}%
  }%
  \providecommand\rotatebox[2]{#2}%
  \newcommand*\fsize{\dimexpr\f@size pt\relax}%
  \newcommand*\lineheight[1]{\fontsize{\fsize}{#1\fsize}\selectfont}%
  \ifx\svgwidth\undefined%
    \setlength{\unitlength}{153.75000472bp}%
    \ifx\svgscale\undefined%
      \relax%
    \else%
      \setlength{\unitlength}{\unitlength * \real{\svgscale}}%
    \fi%
  \else%
    \setlength{\unitlength}{\svgwidth}%
  \fi%
  \global\let\svgwidth\undefined%
  \global\let\svgscale\undefined%
  \makeatother%
  \begin{picture}(1,0.14943401)%
    \lineheight{1}%
    \setlength\tabcolsep{0pt}%
    \put(0,0){\includegraphics[width=\unitlength,page=1]{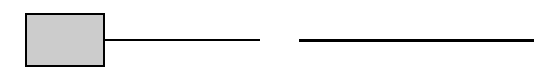}}%
    \put(0.10341485,0.06365587){\color[rgb]{0,0,0}\makebox(0,0)[lt]{\lineheight{1.25}\smash{\begin{tabular}[t]{l}$f$\end{tabular}}}}%
    \put(0,0){\includegraphics[width=\unitlength,page=2]{natlite.pdf}}%
    \put(0.30456107,0.06329732){\color[rgb]{0,0,0}\makebox(0,0)[lt]{\lineheight{1.25}\smash{\begin{tabular}[t]{l}$\eta$\end{tabular}}}}%
    \put(0.49982183,0.06191969){\color[rgb]{0,0,0}\makebox(0,0)[lt]{\lineheight{1.25}\smash{\begin{tabular}[t]{l}$=$\end{tabular}}}}%
    \put(0,0){\includegraphics[width=\unitlength,page=3]{natlite.pdf}}%
    \put(0.84975715,0.06431458){\color[rgb]{0,0,0}\makebox(0,0)[lt]{\lineheight{1.25}\smash{\begin{tabular}[t]{l}$f$\end{tabular}}}}%
    \put(0,0){\includegraphics[width=\unitlength,page=4]{natlite.pdf}}%
    \put(0.66553681,0.06329732){\color[rgb]{0,0,0}\makebox(0,0)[lt]{\lineheight{1.25}\smash{\begin{tabular}[t]{l}$\eta$\end{tabular}}}}%
  \end{picture}%
\endgroup%

\]
and removing more information that can be inferred from the context of the diagram ($X$ and $Y$ can be recovered from $f:X\to Y$ etc.)
\begin{remark}
The graphical notation, we can argue, is already starting to pay dividends by now. 
String diagrams more directly suggest that naturality is a generalisation of commutativity, using functoriality and indexing as a clever way to make the source and the target of the map $f$ match with those of a well chosen component of the natural transformation. 
Furthermore, as we shall see shortly, when calculating with string diagrams it will be easier to identify formula `\emph{redexes}', i.e. spots in a formula where an equation can be applied. 
\end{remark}

\subsubsection{Adjunctions} \label{sec:adjunctions}
Two functors that stand in a certain relation to each other are said to be \emph{adjoint}.
Adjoint functors are common in mathematics and computer science. 
We shall not present examples at this moment, just introduce definitions and properties, with examples to follow later. 

Adjunctions can be characterised in several ways, but we will emphasise the definition most suitable to a nice rendering in the graphical language of string diagrams, namely the so-called unit-counit adjunction.
\begin{definition}[Adjunction]\label{def:adj}
A \emph{unit-counit adjunction} between two categories $\category C$ and $\category D$ consists of two functors $F:\category D\to \category C$ and $G:\category C\to\category D$ and two natural transformations $\epsilon:F\circ G\Rightarrow \id_{\category C}$ and $\eta:\id_{\category D}\Rightarrow G\circ F$ respectively called the co-unit and the unit of the adjunction such that for each object $A$ in $\category C$ and $X$ in~$\category D$: 
\begin{gather*}
 F\eta_{X} \semic  \epsilon_{FX} = \id_{FX}\\
 \eta_{GA} \semic  G\epsilon_A = \id_{GA}.
\end{gather*}
\end{definition}
If two functors are adjoint, as defined above, we say that $F$ is the \emph{left adjoint} and $G$ the \emph{right adjoint}, written $F\dashv G$. 

Using string diagrams, the two diagram equations become quite suggestive in a graphical rendering:
\[
\begingroup%
  \makeatletter%
  \providecommand\color[2][]{%
    \errmessage{(Inkscape) Color is used for the text in Inkscape, but the package 'color.sty' is not loaded}%
    \renewcommand\color[2][]{}%
  }%
  \providecommand\transparent[1]{%
    \errmessage{(Inkscape) Transparency is used (non-zero) for the text in Inkscape, but the package 'transparent.sty' is not loaded}%
    \renewcommand\transparent[1]{}%
  }%
  \providecommand\rotatebox[2]{#2}%
  \newcommand*\fsize{\dimexpr\f@size pt\relax}%
  \newcommand*\lineheight[1]{\fontsize{\fsize}{#1\fsize}\selectfont}%
  \ifx\svgwidth\undefined%
    \setlength{\unitlength}{154.71701742bp}%
    \ifx\svgscale\undefined%
      \relax%
    \else%
      \setlength{\unitlength}{\unitlength * \real{\svgscale}}%
    \fi%
  \else%
    \setlength{\unitlength}{\svgwidth}%
  \fi%
  \global\let\svgwidth\undefined%
  \global\let\svgscale\undefined%
  \makeatother%
  \begin{picture}(1,0.41446643)%
    \lineheight{1}%
    \setlength\tabcolsep{0pt}%
    \put(0,0){\includegraphics[width=\unitlength,page=1]{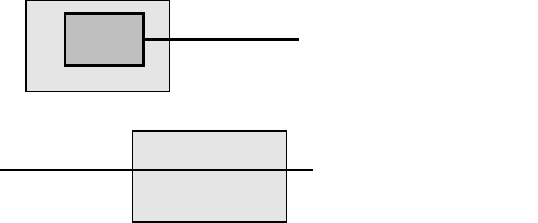}}%
    \put(0.15609157,0.32987648){\color[rgb]{0,0,0}\makebox(0,0)[lt]{\lineheight{1.25}\smash{\begin{tabular}[t]{l}$\eta_X$\end{tabular}}}}%
    \put(0.05463609,0.25310564){\color[rgb]{0,0,0}\makebox(0,0)[lt]{\lineheight{1.25}\smash{\begin{tabular}[t]{l}$F$\end{tabular}}}}%
    \put(0,0){\includegraphics[width=\unitlength,page=2]{adj.pdf}}%
    \put(0.66491154,0.35994495){\color[rgb]{0,0,0}\makebox(0,0)[lt]{\lineheight{1.25}\smash{\begin{tabular}[t]{l}$\id_{FX}$\end{tabular}}}}%
    \put(0.39476111,0.32952018){\color[rgb]{0,0,0}\makebox(0,0)[lt]{\lineheight{1.25}\smash{\begin{tabular}[t]{l}$\epsilon_{FX}$\end{tabular}}}}%
    \put(0.57910631,0.32815115){\color[rgb]{0,0,0}\makebox(0,0)[lt]{\lineheight{1.25}\smash{\begin{tabular}[t]{l}$=$\end{tabular}}}}%
    \put(0,0){\includegraphics[width=\unitlength,page=3]{adj.pdf}}%
    \put(0.38392766,0.08749854){\color[rgb]{0,0,0}\makebox(0,0)[lt]{\lineheight{1.25}\smash{\begin{tabular}[t]{l}$\epsilon_A$\end{tabular}}}}%
    \put(0.2533861,0.01072769){\color[rgb]{0,0,0}\makebox(0,0)[lt]{\lineheight{1.25}\smash{\begin{tabular}[t]{l}$G$\end{tabular}}}}%
    \put(0,0){\includegraphics[width=\unitlength,page=4]{adj.pdf}}%
    \put(0.07482248,0.08714223){\color[rgb]{0,0,0}\makebox(0,0)[lt]{\lineheight{1.25}\smash{\begin{tabular}[t]{l}$\eta_{GA}$\end{tabular}}}}%
    \put(0,0){\includegraphics[width=\unitlength,page=5]{adj.pdf}}%
    \put(0.69399692,0.11756673){\color[rgb]{0,0,0}\makebox(0,0)[lt]{\lineheight{1.25}\smash{\begin{tabular}[t]{l}$\id_{GA}$\end{tabular}}}}%
    \put(0.6081917,0.08577321){\color[rgb]{0,0,0}\makebox(0,0)[lt]{\lineheight{1.25}\smash{\begin{tabular}[t]{l}$=$\end{tabular}}}}%
    \put(0,0){\includegraphics[width=\unitlength,page=6]{adj.pdf}}%
  \end{picture}%
\endgroup%

\]
To make the diagrams even more suggestive we can represent the unit and the counit as opposite facing half-circles. 
To avoid ambiguity, if necessary, they can be annotated with the indexing object. 
We use different colours to indicate morphisms in the two categories, and colour the functor boxes in the same colour as their codomain. 
With these further conventions the two string diagram equations become simply:
\[
\begingroup%
  \makeatletter%
  \providecommand\color[2][]{%
    \errmessage{(Inkscape) Color is used for the text in Inkscape, but the package 'color.sty' is not loaded}%
    \renewcommand\color[2][]{}%
  }%
  \providecommand\transparent[1]{%
    \errmessage{(Inkscape) Transparency is used (non-zero) for the text in Inkscape, but the package 'transparent.sty' is not loaded}%
    \renewcommand\transparent[1]{}%
  }%
  \providecommand\rotatebox[2]{#2}%
  \newcommand*\fsize{\dimexpr\f@size pt\relax}%
  \newcommand*\lineheight[1]{\fontsize{\fsize}{#1\fsize}\selectfont}%
  \ifx\svgwidth\undefined%
    \setlength{\unitlength}{195.7499915bp}%
    \ifx\svgscale\undefined%
      \relax%
    \else%
      \setlength{\unitlength}{\unitlength * \real{\svgscale}}%
    \fi%
  \else%
    \setlength{\unitlength}{\svgwidth}%
  \fi%
  \global\let\svgwidth\undefined%
  \global\let\svgscale\undefined%
  \makeatother%
  \begin{picture}(1,0.11685828)%
    \lineheight{1}%
    \setlength\tabcolsep{0pt}%
    \put(0,0){\includegraphics[width=\unitlength,page=1]{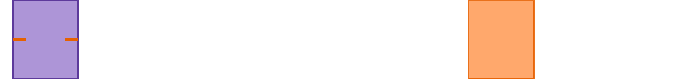}}%
    \put(0.22782928,0.04863635){\color[rgb]{0,0,0}\makebox(0,0)[lt]{\lineheight{1.25}\smash{\begin{tabular}[t]{l}$=$\end{tabular}}}}%
    \put(0,0){\includegraphics[width=\unitlength,page=2]{adjlite.pdf}}%
    \put(0.82169911,0.04863635){\color[rgb]{0,0,0}\makebox(0,0)[lt]{\lineheight{1.25}\smash{\begin{tabular}[t]{l}$=$\end{tabular}}}}%
    \put(0,0){\includegraphics[width=\unitlength,page=3]{adjlite.pdf}}%
    \put(0.0548623,0.04240336){\makebox(0,0)[lt]{\lineheight{1.25}\smash{\begin{tabular}[t]{l}$X$\end{tabular}}}}%
    \put(0.71423337,0.04240336){\makebox(0,0)[lt]{\lineheight{1.25}\smash{\begin{tabular}[t]{l}$A$\end{tabular}}}}%
  \end{picture}%
\endgroup%

\] 
The colours (purple and orange) are used to identify what category the identity belongs to, and also the two functors by tagging them with the colour of their codomain category. 
All elided information (object labels, etc.) can be unambiguously inferred. 

\paragraph{Homset adjunctions}

The unit/co-unit presentation of adjunctions is arguably the best when using string diagrams. 
On the basis of this definition we can also show the equivalence of the alternative homset formulation of adjunctions.
\begin{definition}[Adjunctions (homset)]\label{def:homset}
An adjunction between categories $\category C$ and $\category D$ consists of two functors $F:\category D\to \category C$ and $G:\category C\to\category D$ such that for all objects $A$ in $\category D$ and $V$ in $\category C$ there exists a family of bijections, natural in $A$ and $V$, between morphisms sets
\[
hom_{\category C}(FA, V)\cong hom_{\category D}(A, GV).
\]
\end{definition}
Naturality here means that there are natural isomorphisms between the pair of functors ${\mathcal {C}}(F-,V):{\mathcal {D}}\to \mathrm {Set} $ and ${\mathcal {D}}(-,GV):{\mathcal {D}}\to \mathrm {Set}$ for a fixed $V$ in $\mathcal {C}$, and also the pair of functors ${ {\mathcal {C}}(FA,-):{\mathcal {C}}\to \mathrm {Set} } $ and ${\mathcal {D}}(A,G-):{\mathcal {C}}\to \mathrm {Set} $ for a fixed $A$ in $\mathcal {D}$.

We can now show that Definition~\ref{def:homset} is implied by Definition~\ref{def:adj}. The two maps, which will turn out to be inverses, are:
\[
\begingroup%
  \makeatletter%
  \providecommand\color[2][]{%
    \errmessage{(Inkscape) Color is used for the text in Inkscape, but the package 'color.sty' is not loaded}%
    \renewcommand\color[2][]{}%
  }%
  \providecommand\transparent[1]{%
    \errmessage{(Inkscape) Transparency is used (non-zero) for the text in Inkscape, but the package 'transparent.sty' is not loaded}%
    \renewcommand\transparent[1]{}%
  }%
  \providecommand\rotatebox[2]{#2}%
  \newcommand*\fsize{\dimexpr\f@size pt\relax}%
  \newcommand*\lineheight[1]{\fontsize{\fsize}{#1\fsize}\selectfont}%
  \ifx\svgwidth\undefined%
    \setlength{\unitlength}{152.01269854bp}%
    \ifx\svgscale\undefined%
      \relax%
    \else%
      \setlength{\unitlength}{\unitlength * \real{\svgscale}}%
    \fi%
  \else%
    \setlength{\unitlength}{\svgwidth}%
  \fi%
  \global\let\svgwidth\undefined%
  \global\let\svgscale\undefined%
  \makeatother%
  \begin{picture}(1,0.34783288)%
    \lineheight{1}%
    \setlength\tabcolsep{0pt}%
    \put(0,0){\includegraphics[width=\unitlength,page=1]{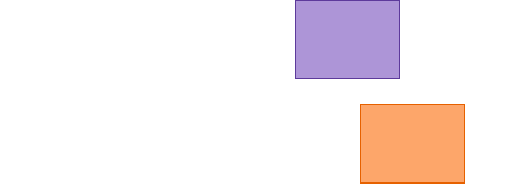}}%
    \put(0.35603844,0.26037656){\makebox(0,0)[lt]{\lineheight{1.25}\smash{\begin{tabular}[t]{l}$\stackrel T\mapsto$\end{tabular}}}}%
    \put(0,0){\includegraphics[width=\unitlength,page=2]{homset.pdf}}%
    \put(0.11954809,0.26014054){\makebox(0,0)[lt]{\lineheight{1.25}\smash{\begin{tabular}[t]{l}$f$\end{tabular}}}}%
    \put(0.01674795,0.29726197){\makebox(0,0)[lt]{\lineheight{1.25}\smash{\begin{tabular}[t]{l}$A$\end{tabular}}}}%
    \put(0.21409986,0.29726197){\makebox(0,0)[lt]{\lineheight{1.25}\smash{\begin{tabular}[t]{l}$GV$\end{tabular}}}}%
    \put(0,0){\includegraphics[width=\unitlength,page=3]{homset.pdf}}%
    \put(0.63759692,0.26014054){\makebox(0,0)[lt]{\lineheight{1.25}\smash{\begin{tabular}[t]{l}$f$\end{tabular}}}}%
    \put(0.4607898,0.29726197){\makebox(0,0)[lt]{\lineheight{1.25}\smash{\begin{tabular}[t]{l}$FA$\end{tabular}}}}%
    \put(0.8949639,0.29726197){\makebox(0,0)[lt]{\lineheight{1.25}\smash{\begin{tabular}[t]{l}$V$\end{tabular}}}}%
    \put(0.35603851,0.06302458){\makebox(0,0)[lt]{\lineheight{1.25}\smash{\begin{tabular}[t]{l}$\stackrel T\mapsto$\end{tabular}}}}%
    \put(0,0){\includegraphics[width=\unitlength,page=4]{homset.pdf}}%
    \put(0.11954809,0.06278827){\makebox(0,0)[lt]{\lineheight{1.25}\smash{\begin{tabular}[t]{l}$g$\end{tabular}}}}%
    \put(-0.00298727,0.09990984){\makebox(0,0)[lt]{\lineheight{1.25}\smash{\begin{tabular}[t]{l}$FA$\end{tabular}}}}%
    \put(0.21409986,0.09990984){\makebox(0,0)[lt]{\lineheight{1.25}\smash{\begin{tabular}[t]{l}$V$\end{tabular}}}}%
    \put(0,0){\includegraphics[width=\unitlength,page=5]{homset.pdf}}%
    \put(0.76094192,0.06278842){\makebox(0,0)[lt]{\lineheight{1.25}\smash{\begin{tabular}[t]{l}$g$\end{tabular}}}}%
    \put(0.48052501,0.09990984){\makebox(0,0)[lt]{\lineheight{1.25}\smash{\begin{tabular}[t]{l}$A$\end{tabular}}}}%
    \put(0.8949639,0.09990984){\makebox(0,0)[lt]{\lineheight{1.25}\smash{\begin{tabular}[t]{l}$GV$\end{tabular}}}}%
    \put(0,0){\includegraphics[width=\unitlength,page=6]{homset.pdf}}%
  \end{picture}%
\endgroup%

\]
where the orange box, decorated with a small outline mark, is the $G$ functor and the purple box, decorated with a small solid mark, the $F$ functor. 

This is how we show the first direction of these maps being inverses. 
\[
\begingroup%
  \makeatletter%
  \providecommand\color[2][]{%
    \errmessage{(Inkscape) Color is used for the text in Inkscape, but the package 'color.sty' is not loaded}%
    \renewcommand\color[2][]{}%
  }%
  \providecommand\transparent[1]{%
    \errmessage{(Inkscape) Transparency is used (non-zero) for the text in Inkscape, but the package 'transparent.sty' is not loaded}%
    \renewcommand\transparent[1]{}%
  }%
  \providecommand\rotatebox[2]{#2}%
  \newcommand*\fsize{\dimexpr\f@size pt\relax}%
  \newcommand*\lineheight[1]{\fontsize{\fsize}{#1\fsize}\selectfont}%
  \ifx\svgwidth\undefined%
    \setlength{\unitlength}{169.81206105bp}%
    \ifx\svgscale\undefined%
      \relax%
    \else%
      \setlength{\unitlength}{\unitlength * \real{\svgscale}}%
    \fi%
  \else%
    \setlength{\unitlength}{\svgwidth}%
  \fi%
  \global\let\svgwidth\undefined%
  \global\let\svgscale\undefined%
  \makeatother%
  \begin{picture}(1,0.70887207)%
    \lineheight{1}%
    \setlength\tabcolsep{0pt}%
    \put(0,0){\includegraphics[width=\unitlength,page=1]{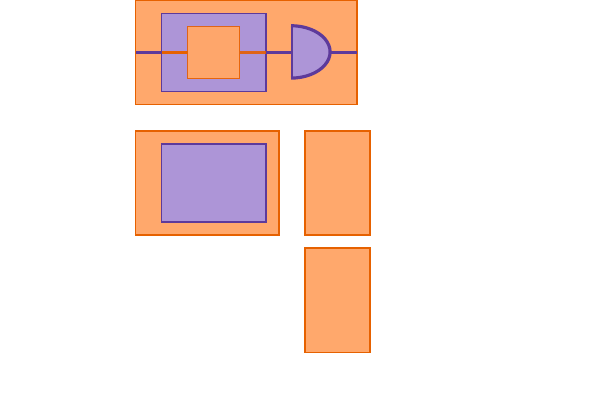}}%
    \put(0.34417925,0.60828776){\makebox(0,0)[lt]{\lineheight{1.25}\smash{\begin{tabular}[t]{l}$f$\end{tabular}}}}%
    \put(0,0){\includegraphics[width=\unitlength,page=2]{hom-inv.pdf}}%
    \put(-0.00267415,0.38589264){\makebox(0,0)[lt]{\lineheight{1.25}\smash{\begin{tabular}[t]{l}$=$\end{tabular}}}}%
    \put(0,0){\includegraphics[width=\unitlength,page=3]{hom-inv.pdf}}%
    \put(0.34417925,0.38745543){\makebox(0,0)[lt]{\lineheight{1.25}\smash{\begin{tabular}[t]{l}$f$\end{tabular}}}}%
    \put(0,0){\includegraphics[width=\unitlength,page=4]{hom-inv.pdf}}%
    \put(0.76006871,0.39860229){\makebox(0,0)[lt]{\lineheight{1.25}\smash{\begin{tabular}[t]{l}(functoriality)\end{tabular}}}}%
    \put(-0.00267402,0.18714351){\makebox(0,0)[lt]{\lineheight{1.25}\smash{\begin{tabular}[t]{l}$=$\end{tabular}}}}%
    \put(0,0){\includegraphics[width=\unitlength,page=5]{hom-inv.pdf}}%
    \put(0.14543011,0.18870579){\makebox(0,0)[lt]{\lineheight{1.25}\smash{\begin{tabular}[t]{l}$f$\end{tabular}}}}%
    \put(0,0){\includegraphics[width=\unitlength,page=6]{hom-inv.pdf}}%
    \put(0.75964551,0.19942982){\makebox(0,0)[lt]{\lineheight{1.25}\smash{\begin{tabular}[t]{l}(naturality)\end{tabular}}}}%
    \put(-0.00267402,0.03256076){\makebox(0,0)[lt]{\lineheight{1.25}\smash{\begin{tabular}[t]{l}$=$\end{tabular}}}}%
    \put(0,0){\includegraphics[width=\unitlength,page=7]{hom-inv.pdf}}%
    \put(0.14543011,0.03412317){\makebox(0,0)[lt]{\lineheight{1.25}\smash{\begin{tabular}[t]{l}$f$\end{tabular}}}}%
    \put(0.75943397,0.04577449){\makebox(0,0)[lt]{\lineheight{1.25}\smash{\begin{tabular}[t]{l}(unit-counit)\end{tabular}}}}%
  \end{picture}%
\endgroup%

\]

\begin{exercise}
Show the other direction of the property of homset functions being inverses. 
\end{exercise}
\begin{remark}
The reader is encouraged to repeat the same exercise in the conventional term language of category theory or using conventional commuting diagrams in order to experience the relative convenience of the diagrammatic notation. 
For calculations with functors (which are not endofunctors) the use of colour serve as an additional `type-checking' mechanism, which adds an extra layer of notational comfort.
\end{remark}

\subsection{Monoidal categories}

A bifunctor of particular importance in many areas of mathematics, physics, and computer science is the \emph{monoidal tensor}, $\otimes:\category C\times \category C\to \category C$ as it can be used to model the construction of aggregate data types out of simpler data types. 
The monoidal tensor is \emph{associative up to isomorphism}, indicating the fact that data can be aggregated in ways that are different yet retrievable one form another in a canonical way. 
This property holds both for \emph{product} types and for \emph{sum} types, both of which are instances of monoidal tensor. 
If the tensor is associative \emph{on the nose}, i.e. the isomorphisms are identities, it is said to be strict. 
Once we introduce strict monoidal tensors in a category, the graphical language of string diagrams truly starts to achieve its full potential. 

\begin{definition}[Monoidal category]
A \emph{monoidal category} is a category $\category C$ equipped with the following structure:
\begin{description}
\item[Tensor] A bifunctor $\otimes:\category C\times \category C\to \category C$ 
\item[Unit] An object $I$, also called \emph{identity}
\item[Associators] A family of object-indexed natural isomorphisms (\emph{associators})
\[ \alpha_{A,B,C} : A\otimes(B\otimes C)\simeq (A\otimes B)\otimes C \]
\item[Unitors] Two families of object-indexed natural isomorphisms (left and right \emph{unitors})
\begin{align*}
\lambda_A:I\otimes A\simeq A \\
\rho_A:A\otimes I \simeq A,
\end{align*}
\end{description}
such that the following coherence conditions hold for all objects $A,B,C$:

\begin{gather*}
\xymatrix{
A\otimes(B\otimes (C\otimes D)) \ar[r]^{\alpha_{A,B,C\otimes D}}\ar[d]_{\id_A\otimes\alpha_{B,C,D}} & (A\otimes B)\otimes (C\otimes D) \ar[r]^{\alpha_{A\otimes B,C,D}} & ((A\otimes B)\otimes C)\otimes D\\
A\otimes((B\otimes C)\otimes D) \ar[rr]_{\alpha_{A,B\otimes C,D}}                     & & (A\otimes (B\otimes C))\otimes D \ar[u]_{\alpha_{A,B,C}\otimes \id_D}
} \\
\xymatrix{
A\otimes(I\otimes B) \ar[rr]^{\alpha_{A,I,B}}\ar[dr]_{\id_A\otimes \lambda_B} & & (A\otimes I)\otimes B \ar[dl]^{\rho_A\otimes \id_B}\\
& A\otimes B
}
\end{gather*}
\end{definition}
The first coherence is usually called the \emph{pentagon diagram} and the second the \emph{triangle diagram}, due to their respective shapes. 

\begin{remark}
It is quite common to have composition higher priority than tensor, so $f\otimes g\circ h$ can only be bracketed as $f\otimes (g\circ h)$. 
However, for most of the examples here it is more convenient that the tensor should be stronger than composition, so 
\[
f\otimes g \semic h = (f\otimes g) \semic  h.
\]
\end{remark}

The naturality of the associator isomorphism means that the following string diagram equation holds, where the box represents the $\otimes$ bifunctor:
\[
\begingroup%
  \makeatletter%
  \providecommand\color[2][]{%
    \errmessage{(Inkscape) Color is used for the text in Inkscape, but the package 'color.sty' is not loaded}%
    \renewcommand\color[2][]{}%
  }%
  \providecommand\transparent[1]{%
    \errmessage{(Inkscape) Transparency is used (non-zero) for the text in Inkscape, but the package 'transparent.sty' is not loaded}%
    \renewcommand\transparent[1]{}%
  }%
  \providecommand\rotatebox[2]{#2}%
  \newcommand*\fsize{\dimexpr\f@size pt\relax}%
  \newcommand*\lineheight[1]{\fontsize{\fsize}{#1\fsize}\selectfont}%
  \ifx\svgwidth\undefined%
    \setlength{\unitlength}{174.7499726bp}%
    \ifx\svgscale\undefined%
      \relax%
    \else%
      \setlength{\unitlength}{\unitlength * \real{\svgscale}}%
    \fi%
  \else%
    \setlength{\unitlength}{\svgwidth}%
  \fi%
  \global\let\svgwidth\undefined%
  \global\let\svgscale\undefined%
  \makeatother%
  \begin{picture}(1,0.38841211)%
    \lineheight{1}%
    \setlength\tabcolsep{0pt}%
    \put(0,0){\includegraphics[width=\unitlength,page=1]{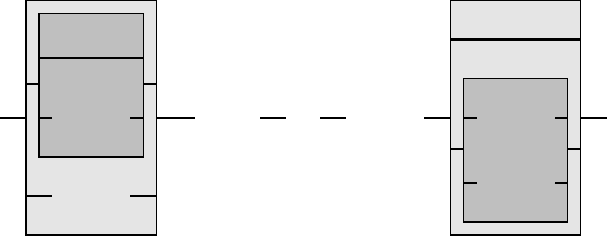}}%
    \put(0.48018953,0.1810452){\color[rgb]{0,0,0}\makebox(0,0)[lt]{\lineheight{1.25}\smash{\begin{tabular}[t]{l}$=$\end{tabular}}}}%
    \put(0,0){\includegraphics[width=\unitlength,page=2]{assoc.pdf}}%
    \put(0.1339056,0.18476387){\color[rgb]{0,0,0}\makebox(0,0)[lt]{\lineheight{1.25}\smash{\begin{tabular}[t]{l}$f_1$\end{tabular}}}}%
    \put(0,0){\includegraphics[width=\unitlength,page=3]{assoc.pdf}}%
    \put(0.1339056,0.29206019){\color[rgb]{0,0,0}\makebox(0,0)[lt]{\lineheight{1.25}\smash{\begin{tabular}[t]{l}$f_2$\end{tabular}}}}%
    \put(0,0){\includegraphics[width=\unitlength,page=4]{assoc.pdf}}%
    \put(0.1339056,0.0560086){\color[rgb]{0,0,0}\makebox(0,0)[lt]{\lineheight{1.25}\smash{\begin{tabular}[t]{l}$f_0$\end{tabular}}}}%
    \put(0,0){\includegraphics[width=\unitlength,page=5]{assoc.pdf}}%
    \put(0.34849793,0.18476399){\color[rgb]{0,0,0}\makebox(0,0)[lt]{\lineheight{1.25}\smash{\begin{tabular}[t]{l}$\alpha$\end{tabular}}}}%
    \put(0,0){\includegraphics[width=\unitlength,page=6]{assoc.pdf}}%
    \put(0.83347652,0.18476387){\color[rgb]{0,0,0}\makebox(0,0)[lt]{\lineheight{1.25}\smash{\begin{tabular}[t]{l}$f_1$\end{tabular}}}}%
    \put(0,0){\includegraphics[width=\unitlength,page=7]{assoc.pdf}}%
    \put(0.83347652,0.31351937){\color[rgb]{0,0,0}\makebox(0,0)[lt]{\lineheight{1.25}\smash{\begin{tabular}[t]{l}$f_2$\end{tabular}}}}%
    \put(0,0){\includegraphics[width=\unitlength,page=8]{assoc.pdf}}%
    \put(0.83347652,0.07746767){\color[rgb]{0,0,0}\makebox(0,0)[lt]{\lineheight{1.25}\smash{\begin{tabular}[t]{l}$f_0$\end{tabular}}}}%
    \put(0,0){\includegraphics[width=\unitlength,page=9]{assoc.pdf}}%
    \put(0.61888425,0.18476399){\color[rgb]{0,0,0}\makebox(0,0)[lt]{\lineheight{1.25}\smash{\begin{tabular}[t]{l}$\alpha$\end{tabular}}}}%
  \end{picture}%
\endgroup%

\]
The notation improves greatly if the tensor is \emph{strict}, that is the associators $\alpha$ are identities. 
In this case the equation becomes:
\[
\begingroup%
  \makeatletter%
  \providecommand\color[2][]{%
    \errmessage{(Inkscape) Color is used for the text in Inkscape, but the package 'color.sty' is not loaded}%
    \renewcommand\color[2][]{}%
  }%
  \providecommand\transparent[1]{%
    \errmessage{(Inkscape) Transparency is used (non-zero) for the text in Inkscape, but the package 'transparent.sty' is not loaded}%
    \renewcommand\transparent[1]{}%
  }%
  \providecommand\rotatebox[2]{#2}%
  \newcommand*\fsize{\dimexpr\f@size pt\relax}%
  \newcommand*\lineheight[1]{\fontsize{\fsize}{#1\fsize}\selectfont}%
  \ifx\svgwidth\undefined%
    \setlength{\unitlength}{129.74994425bp}%
    \ifx\svgscale\undefined%
      \relax%
    \else%
      \setlength{\unitlength}{\unitlength * \real{\svgscale}}%
    \fi%
  \else%
    \setlength{\unitlength}{\svgwidth}%
  \fi%
  \global\let\svgwidth\undefined%
  \global\let\svgscale\undefined%
  \makeatother%
  \begin{picture}(1,0.52312166)%
    \lineheight{1}%
    \setlength\tabcolsep{0pt}%
    \put(0,0){\includegraphics[width=\unitlength,page=1]{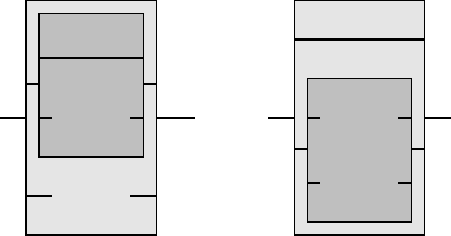}}%
    \put(0.47331883,0.24383551){\color[rgb]{0,0,0}\makebox(0,0)[lt]{\lineheight{1.25}\smash{\begin{tabular}[t]{l}$=$\end{tabular}}}}%
    \put(0,0){\includegraphics[width=\unitlength,page=2]{assocstr.pdf}}%
    \put(0.18034696,0.24884389){\color[rgb]{0,0,0}\makebox(0,0)[lt]{\lineheight{1.25}\smash{\begin{tabular}[t]{l}$f_1$\end{tabular}}}}%
    \put(0,0){\includegraphics[width=\unitlength,page=3]{assocstr.pdf}}%
    \put(0.18034696,0.39335284){\color[rgb]{0,0,0}\makebox(0,0)[lt]{\lineheight{1.25}\smash{\begin{tabular}[t]{l}$f_2$\end{tabular}}}}%
    \put(0,0){\includegraphics[width=\unitlength,page=4]{assocstr.pdf}}%
    \put(0.18034696,0.07543357){\color[rgb]{0,0,0}\makebox(0,0)[lt]{\lineheight{1.25}\smash{\begin{tabular}[t]{l}$f_0$\end{tabular}}}}%
    \put(0,0){\includegraphics[width=\unitlength,page=5]{assocstr.pdf}}%
    \put(0.77572259,0.24884389){\color[rgb]{0,0,0}\makebox(0,0)[lt]{\lineheight{1.25}\smash{\begin{tabular}[t]{l}$f_1$\end{tabular}}}}%
    \put(0,0){\includegraphics[width=\unitlength,page=6]{assocstr.pdf}}%
    \put(0.77572259,0.42225453){\color[rgb]{0,0,0}\makebox(0,0)[lt]{\lineheight{1.25}\smash{\begin{tabular}[t]{l}$f_2$\end{tabular}}}}%
    \put(0,0){\includegraphics[width=\unitlength,page=7]{assocstr.pdf}}%
    \put(0.77572259,0.1043351){\color[rgb]{0,0,0}\makebox(0,0)[lt]{\lineheight{1.25}\smash{\begin{tabular}[t]{l}$f_0$\end{tabular}}}}%
  \end{picture}%
\endgroup%

\]
\begin{remark}
We are now in a situation similar to that of the associativity of composition, in which 
\[
(f \semic g) \semic h=f \semic (g \semic h)=f \semic g \semic h.
\]
We recall that boxing is the two-dimensional counterpart of bracketing, so the diagram above expresses the property that the two possible boxings are equal. 
We are justified then to quotient out the boxes altogether and to absorb this equation into a new notational convention, representing the strictly associative tensor simply as:
\[
\begingroup%
  \makeatletter%
  \providecommand\color[2][]{%
    \errmessage{(Inkscape) Color is used for the text in Inkscape, but the package 'color.sty' is not loaded}%
    \renewcommand\color[2][]{}%
  }%
  \providecommand\transparent[1]{%
    \errmessage{(Inkscape) Transparency is used (non-zero) for the text in Inkscape, but the package 'transparent.sty' is not loaded}%
    \renewcommand\transparent[1]{}%
  }%
  \providecommand\rotatebox[2]{#2}%
  \newcommand*\fsize{\dimexpr\f@size pt\relax}%
  \newcommand*\lineheight[1]{\fontsize{\fsize}{#1\fsize}\selectfont}%
  \ifx\svgwidth\undefined%
    \setlength{\unitlength}{202.64273008bp}%
    \ifx\svgscale\undefined%
      \relax%
    \else%
      \setlength{\unitlength}{\unitlength * \real{\svgscale}}%
    \fi%
  \else%
    \setlength{\unitlength}{\svgwidth}%
  \fi%
  \global\let\svgwidth\undefined%
  \global\let\svgscale\undefined%
  \makeatother%
  \begin{picture}(1,0.33494913)%
    \lineheight{1}%
    \setlength\tabcolsep{0pt}%
    \put(0,0){\includegraphics[width=\unitlength,page=1]{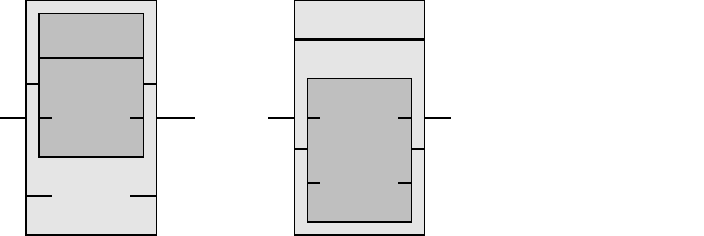}}%
    \put(0.30306092,0.15612523){\color[rgb]{0,0,0}\makebox(0,0)[lt]{\lineheight{1.25}\smash{\begin{tabular}[t]{l}$=$\end{tabular}}}}%
    \put(0,0){\includegraphics[width=\unitlength,page=2]{assocstrq.pdf}}%
    \put(0.1154742,0.15933204){\color[rgb]{0,0,0}\makebox(0,0)[lt]{\lineheight{1.25}\smash{\begin{tabular}[t]{l}$f_1$\end{tabular}}}}%
    \put(0,0){\includegraphics[width=\unitlength,page=3]{assocstrq.pdf}}%
    \put(0.1154742,0.25185956){\color[rgb]{0,0,0}\makebox(0,0)[lt]{\lineheight{1.25}\smash{\begin{tabular}[t]{l}$f_2$\end{tabular}}}}%
    \put(0,0){\includegraphics[width=\unitlength,page=4]{assocstrq.pdf}}%
    \put(0.1154742,0.0482993){\color[rgb]{0,0,0}\makebox(0,0)[lt]{\lineheight{1.25}\smash{\begin{tabular}[t]{l}$f_0$\end{tabular}}}}%
    \put(0,0){\includegraphics[width=\unitlength,page=5]{assocstrq.pdf}}%
    \put(0.49668677,0.15933204){\color[rgb]{0,0,0}\makebox(0,0)[lt]{\lineheight{1.25}\smash{\begin{tabular}[t]{l}$f_1$\end{tabular}}}}%
    \put(0,0){\includegraphics[width=\unitlength,page=6]{assocstrq.pdf}}%
    \put(0.49668677,0.270365){\color[rgb]{0,0,0}\makebox(0,0)[lt]{\lineheight{1.25}\smash{\begin{tabular}[t]{l}$f_2$\end{tabular}}}}%
    \put(0,0){\includegraphics[width=\unitlength,page=7]{assocstrq.pdf}}%
    \put(0.49668677,0.06680463){\color[rgb]{0,0,0}\makebox(0,0)[lt]{\lineheight{1.25}\smash{\begin{tabular}[t]{l}$f_0$\end{tabular}}}}%
    \put(0,0){\includegraphics[width=\unitlength,page=8]{assocstrq.pdf}}%
    \put(0.67493491,0.15588564){\color[rgb]{0,0,0}\makebox(0,0)[lt]{\lineheight{1.25}\smash{\begin{tabular}[t]{l}$=$\end{tabular}}}}%
    \put(0,0){\includegraphics[width=\unitlength,page=9]{assocstrq.pdf}}%
    \put(0.86856076,0.15909245){\color[rgb]{0,0,0}\makebox(0,0)[lt]{\lineheight{1.25}\smash{\begin{tabular}[t]{l}$f_1$\end{tabular}}}}%
    \put(0,0){\includegraphics[width=\unitlength,page=10]{assocstrq.pdf}}%
    \put(0.86856076,0.2701253){\color[rgb]{0,0,0}\makebox(0,0)[lt]{\lineheight{1.25}\smash{\begin{tabular}[t]{l}$f_2$\end{tabular}}}}%
    \put(0,0){\includegraphics[width=\unitlength,page=11]{assocstrq.pdf}}%
    \put(0.86856076,0.06656504){\color[rgb]{0,0,0}\makebox(0,0)[lt]{\lineheight{1.25}\smash{\begin{tabular}[t]{l}$f_0$\end{tabular}}}}%
  \end{picture}%
\endgroup%

\]
Removing boxes of functors which are associative on the nose is the two-dimensional version of removing brackets of associative operations in the one-dimensional notation for terms. 
\end{remark}
A similar situation arises in the case of the unitors, where the naturality conditions, diagrammatically, are:
\[
\begingroup%
  \makeatletter%
  \providecommand\color[2][]{%
    \errmessage{(Inkscape) Color is used for the text in Inkscape, but the package 'color.sty' is not loaded}%
    \renewcommand\color[2][]{}%
  }%
  \providecommand\transparent[1]{%
    \errmessage{(Inkscape) Transparency is used (non-zero) for the text in Inkscape, but the package 'transparent.sty' is not loaded}%
    \renewcommand\transparent[1]{}%
  }%
  \providecommand\rotatebox[2]{#2}%
  \newcommand*\fsize{\dimexpr\f@size pt\relax}%
  \newcommand*\lineheight[1]{\fontsize{\fsize}{#1\fsize}\selectfont}%
  \ifx\svgwidth\undefined%
    \setlength{\unitlength}{146.25bp}%
    \ifx\svgscale\undefined%
      \relax%
    \else%
      \setlength{\unitlength}{\unitlength * \real{\svgscale}}%
    \fi%
  \else%
    \setlength{\unitlength}{\svgwidth}%
  \fi%
  \global\let\svgwidth\undefined%
  \global\let\svgscale\undefined%
  \makeatother%
  \begin{picture}(1,0.6435897)%
    \lineheight{1}%
    \setlength\tabcolsep{0pt}%
    \put(0,0){\includegraphics[width=\unitlength,page=1]{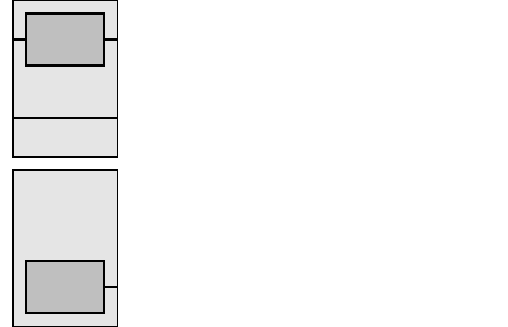}}%
    \put(0.12923093,0.06692305){\color[rgb]{0,0,0}\makebox(0,0)[lt]{\lineheight{1.25}\smash{\begin{tabular}[t]{l}$f$\end{tabular}}}}%
    \put(0,0){\includegraphics[width=\unitlength,page=2]{unitors.pdf}}%
    \put(0.33948725,0.14384628){\color[rgb]{0,0,0}\makebox(0,0)[lt]{\lineheight{1.25}\smash{\begin{tabular}[t]{l}$\rho$\end{tabular}}}}%
    \put(0,0){\includegraphics[width=\unitlength,page=3]{unitors.pdf}}%
    \put(0.66256418,0.14384628){\color[rgb]{0,0,0}\makebox(0,0)[lt]{\lineheight{1.25}\smash{\begin{tabular}[t]{l}$\rho$\end{tabular}}}}%
    \put(0.49684179,0.1394028){\color[rgb]{0,0,0}\makebox(0,0)[lt]{\lineheight{1.25}\smash{\begin{tabular}[t]{l}$=$\end{tabular}}}}%
    \put(0.12942908,0.25298986){\color[rgb]{0,0,0}\makebox(0,0)[lt]{\lineheight{1.25}\smash{\begin{tabular}[t]{l}$\id_I$\end{tabular}}}}%
    \put(0,0){\includegraphics[width=\unitlength,page=4]{unitors.pdf}}%
    \put(0.85230773,0.14384628){\color[rgb]{0,0,0}\makebox(0,0)[lt]{\lineheight{1.25}\smash{\begin{tabular}[t]{l}$f$\end{tabular}}}}%
    \put(0.12923085,0.56435906){\color[rgb]{0,0,0}\makebox(0,0)[lt]{\lineheight{1.25}\smash{\begin{tabular}[t]{l}$f$\end{tabular}}}}%
    \put(0,0){\includegraphics[width=\unitlength,page=5]{unitors.pdf}}%
    \put(0.33948717,0.47717956){\color[rgb]{0,0,0}\makebox(0,0)[lt]{\lineheight{1.25}\smash{\begin{tabular}[t]{l}$\lambda$\end{tabular}}}}%
    \put(0,0){\includegraphics[width=\unitlength,page=6]{unitors.pdf}}%
    \put(0.49684172,0.47273609){\color[rgb]{0,0,0}\makebox(0,0)[lt]{\lineheight{1.25}\smash{\begin{tabular}[t]{l}$=$\end{tabular}}}}%
    \put(0.12942901,0.43247684){\color[rgb]{0,0,0}\makebox(0,0)[lt]{\lineheight{1.25}\smash{\begin{tabular}[t]{l}$\id_I$\end{tabular}}}}%
    \put(0,0){\includegraphics[width=\unitlength,page=7]{unitors.pdf}}%
    \put(0.85230773,0.47717956){\color[rgb]{0,0,0}\makebox(0,0)[lt]{\lineheight{1.25}\smash{\begin{tabular}[t]{l}$f$\end{tabular}}}}%
    \put(0.66256404,0.47717956){\color[rgb]{0,0,0}\makebox(0,0)[lt]{\lineheight{1.25}\smash{\begin{tabular}[t]{l}$\lambda$\end{tabular}}}}%
  \end{picture}%
\endgroup%

\]
On the right-hand-side of the diagram the functor is the identity functor. 

In the strict case the unitors are also identities, so the two diagrams become:
\[
\begingroup%
  \makeatletter%
  \providecommand\color[2][]{%
    \errmessage{(Inkscape) Color is used for the text in Inkscape, but the package 'color.sty' is not loaded}%
    \renewcommand\color[2][]{}%
  }%
  \providecommand\transparent[1]{%
    \errmessage{(Inkscape) Transparency is used (non-zero) for the text in Inkscape, but the package 'transparent.sty' is not loaded}%
    \renewcommand\transparent[1]{}%
  }%
  \providecommand\rotatebox[2]{#2}%
  \newcommand*\fsize{\dimexpr\f@size pt\relax}%
  \newcommand*\lineheight[1]{\fontsize{\fsize}{#1\fsize}\selectfont}%
  \ifx\svgwidth\undefined%
    \setlength{\unitlength}{151.35096242bp}%
    \ifx\svgscale\undefined%
      \relax%
    \else%
      \setlength{\unitlength}{\unitlength * \real{\svgscale}}%
    \fi%
  \else%
    \setlength{\unitlength}{\svgwidth}%
  \fi%
  \global\let\svgwidth\undefined%
  \global\let\svgscale\undefined%
  \makeatother%
  \begin{picture}(1,0.26160479)%
    \lineheight{1}%
    \setlength\tabcolsep{0pt}%
    \put(0,0){\includegraphics[width=\unitlength,page=1]{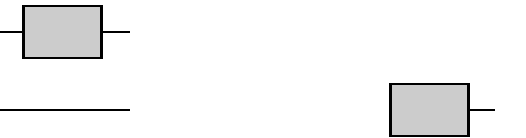}}%
    \put(0.80871641,0.04113353){\color[rgb]{0,0,0}\makebox(0,0)[lt]{\lineheight{1.25}\smash{\begin{tabular}[t]{l}$f$\end{tabular}}}}%
    \put(0,0){\includegraphics[width=\unitlength,page=2]{unitorsstr.pdf}}%
    \put(0.59163129,0.11241551){\color[rgb]{0,0,0}\makebox(0,0)[lt]{\lineheight{1.25}\smash{\begin{tabular}[t]{l}$=$\end{tabular}}}}%
    \put(0.80890802,0.22092935){\color[rgb]{0,0,0}\makebox(0,0)[lt]{\lineheight{1.25}\smash{\begin{tabular}[t]{l}$\id_I$\end{tabular}}}}%
    \put(0.1100093,0.18979477){\color[rgb]{0,0,0}\makebox(0,0)[lt]{\lineheight{1.25}\smash{\begin{tabular}[t]{l}$f$\end{tabular}}}}%
    \put(0,0){\includegraphics[width=\unitlength,page=3]{unitorsstr.pdf}}%
    \put(0.28683713,0.1111705){\color[rgb]{0,0,0}\makebox(0,0)[lt]{\lineheight{1.25}\smash{\begin{tabular}[t]{l}$=$\end{tabular}}}}%
    \put(0.12011152,0.07226811){\color[rgb]{0,0,0}\makebox(0,0)[lt]{\lineheight{1.25}\smash{\begin{tabular}[t]{l}$\id_I$\end{tabular}}}}%
    \put(0,0){\includegraphics[width=\unitlength,page=4]{unitorsstr.pdf}}%
    \put(0.43210808,0.11546436){\color[rgb]{0,0,0}\makebox(0,0)[lt]{\lineheight{1.25}\smash{\begin{tabular}[t]{l}$f$\end{tabular}}}}%
  \end{picture}%
\endgroup%

\]
which only makes sense if the unit identity is deleted altogether. 
\begin{remark}
Just like the identity morphism can be represented graphically in a way that makes it absorbed by composition, so the identity for the unit $\id_I$ can be absorbed into the tensor by representing it as just empty space.
\[
\begingroup%
  \makeatletter%
  \providecommand\color[2][]{%
    \errmessage{(Inkscape) Color is used for the text in Inkscape, but the package 'color.sty' is not loaded}%
    \renewcommand\color[2][]{}%
  }%
  \providecommand\transparent[1]{%
    \errmessage{(Inkscape) Transparency is used (non-zero) for the text in Inkscape, but the package 'transparent.sty' is not loaded}%
    \renewcommand\transparent[1]{}%
  }%
  \providecommand\rotatebox[2]{#2}%
  \newcommand*\fsize{\dimexpr\f@size pt\relax}%
  \newcommand*\lineheight[1]{\fontsize{\fsize}{#1\fsize}\selectfont}%
  \ifx\svgwidth\undefined%
    \setlength{\unitlength}{148.96519908bp}%
    \ifx\svgscale\undefined%
      \relax%
    \else%
      \setlength{\unitlength}{\unitlength * \real{\svgscale}}%
    \fi%
  \else%
    \setlength{\unitlength}{\svgwidth}%
  \fi%
  \global\let\svgwidth\undefined%
  \global\let\svgscale\undefined%
  \makeatother%
  \begin{picture}(1,0.25677948)%
    \lineheight{1}%
    \setlength\tabcolsep{0pt}%
    \put(0,0){\includegraphics[width=\unitlength,page=1]{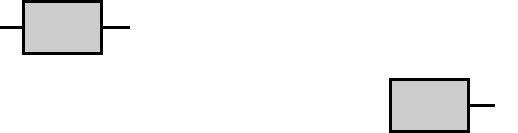}}%
    \put(0.82166847,0.04179231){\color[rgb]{0,0,0}\makebox(0,0)[lt]{\lineheight{1.25}\smash{\begin{tabular}[t]{l}$f$\end{tabular}}}}%
    \put(0,0){\includegraphics[width=\unitlength,page=2]{unitorsstr2.pdf}}%
    \put(0.6011066,0.11421591){\color[rgb]{0,0,0}\makebox(0,0)[lt]{\lineheight{1.25}\smash{\begin{tabular}[t]{l}$=$\end{tabular}}}}%
    \put(0.80584758,0.16203354){\color[rgb]{0,0,0}\makebox(0,0)[lt]{\lineheight{1.25}\smash{\begin{tabular}[t]{l}$\id_I$\end{tabular}}}}%
    \put(0.11177116,0.19283444){\color[rgb]{0,0,0}\makebox(0,0)[lt]{\lineheight{1.25}\smash{\begin{tabular}[t]{l}$f$\end{tabular}}}}%
    \put(0,0){\includegraphics[width=\unitlength,page=3]{unitorsstr2.pdf}}%
    \put(0.29143099,0.11295096){\color[rgb]{0,0,0}\makebox(0,0)[lt]{\lineheight{1.25}\smash{\begin{tabular}[t]{l}$=$\end{tabular}}}}%
    \put(0.10189622,0.05328657){\color[rgb]{0,0,0}\makebox(0,0)[lt]{\lineheight{1.25}\smash{\begin{tabular}[t]{l}$\id_I$\end{tabular}}}}%
    \put(0,0){\includegraphics[width=\unitlength,page=4]{unitorsstr2.pdf}}%
    \put(0.43902854,0.11731359){\color[rgb]{0,0,0}\makebox(0,0)[lt]{\lineheight{1.25}\smash{\begin{tabular}[t]{l}$f$\end{tabular}}}}%
    \put(0,0){\includegraphics[width=\unitlength,page=5]{unitorsstr2.pdf}}%
  \end{picture}%
\endgroup%

\]
In the above, the dotted box is simply an attempt to highlight the presence of empty space; it has no further meaning. 
\end{remark}

We have now defined the core diagrammatic notation of strict monoidal categories. 
To summarise, it is a syntax of labelled boxes and wires in which the boxes are the morphisms and the wires the objects. 
Morphisms are composed by connecting wires serially, while the monoidal tensor is a parallel stacking of diagrams. 
The identity of the unit of the tensor can be elided by representing it as empty space. 

In a strict monoidal category objects are essentially lists with the unit $I$ the empty list. 
We introduce notations to define lists and basic operations on lists, which are handy for working with morphisms in strict monoidal categories.
\begin{description}
\item[List of objects:] $[A_1, A_2, \ldots,A_k]$
\item[Empty: ] $[\ ]$
\item[Concatenation: ] $[A_1,\ldots,A_k]+\!\!+[B_1,\ldots,B_j]=[A_1, \ldots,A_k,B_1, \ldots,B_j]$.
\end{description}
This notation will come in handy occasionally. For instance, we may use it to freely obtain strict monoidal categories starting from a signature, similarly to the case of plain categories (Definition~\ref{def:category}).

\begin{definition}[Freely Generated Strict Monoidal Category] \label{def:freeMC} A \emph{monoidal signature} is a pair $\Sigma = (\Sigma_0,\Sigma_1)$ where $\Sigma_0$ is a class of generating objects and $\Sigma_1$ a pair of generating morphisms with sources and targets lists of objects in $\Sigma_0$. Monoidal $\Sigma$-terms are defined inductively as follows:
\begin{itemize}
\item All morphisms $f \colon L_1 \to L_2$ in $\Sigma_1$, the identity $id_{[]} \colon [] \to []$, and identities $\id_{[A]} \colon [A] \to [A]$ for all $A \in \Sigma_0$, are $\Sigma$-terms.
\item If $f \colon L_1 \to L_2$, $g \colon L_2 \to L_3$ are $\Sigma$-terms, then $f ; g \colon L_1 \to L_3$ is a $\Sigma$-term. 
\item If $f \colon L_1 \to L_2$, $g \colon L_3 \to L_4$ are $\Sigma$-terms, then $f \otimes g \colon (L_1 +\!\!+ L_3) \to (L_2 +\!\!+ L_4)$ is a $\Sigma$-term. 
\end{itemize}
The strict monoidal category $\category C$ freely generated by $(\Sigma_0,\Sigma_1)$ is defined as having objects $\mathit{obj}(\category C) = \Sigma_0^{\star}$ (i.e., lists of objects in $\Sigma_0$) and morphisms  $hom(\category C)$ the $\Sigma$-terms quotiented by the equations of strict monoidal categories, where $[]$ acts as the identity for $\otimes$, plus the following:
\[ \id_{L_1 +\!\!+ L_2} = \id_{L_1} \otimes \id_{L_2}. \] 
\end{definition}

A common situation is for a strict monoidal category to be freely generated by a signature $(\Sigma_0,\Sigma_1)$ where $\Sigma_0$ consists of a single object, denoted by $\bullet$. 
We call such strict monoidal categories \emph{single sorted}.
In this case the generated objects are 
\begin{align*}
\overline 0 = [\ ] &= I \\
\overline 1 = [\bullet] &= \bullet \\
\overline 2 = [\bullet, \bullet] &= \bullet \otimes \bullet \\
\overline 3 = [\bullet, \bullet, \bullet] &= (\bullet \otimes \bullet)\otimes \bullet = \bullet\otimes(\bullet\otimes\bullet) \\
&\vdots
\end{align*}
Every object in the strict monoidal category is fully characterised by its size so it can be written as $\overline n$ for $n$ some natural number. 
\begin{example}\label{ex:bool}
Suppose that a single-sorted strict monoidal category has
morphisms generated by $\mathsf t:\overline 0\to\overline 1, \mathsf f:\overline 0\to\overline 1$, and ${\land},{\lor}:\overline 2\to \overline 1$.
Since the category is constructed as strict, generators of type $\overline 2\to \overline 1$ are represented as having two wires on the left and one wire on the right.
Concretely, a term such as
\[
\mathsf t\otimes(\mathsf f\otimes \mathsf t) \semic ((\id\otimes \mathsf f \semic {\land})\otimes{\lor}) \semic {\lor}
\]
corresponds to the diagram:
\[
\begingroup%
  \makeatletter%
  \providecommand\color[2][]{%
    \errmessage{(Inkscape) Color is used for the text in Inkscape, but the package 'color.sty' is not loaded}%
    \renewcommand\color[2][]{}%
  }%
  \providecommand\transparent[1]{%
    \errmessage{(Inkscape) Transparency is used (non-zero) for the text in Inkscape, but the package 'transparent.sty' is not loaded}%
    \renewcommand\transparent[1]{}%
  }%
  \providecommand\rotatebox[2]{#2}%
  \newcommand*\fsize{\dimexpr\f@size pt\relax}%
  \newcommand*\lineheight[1]{\fontsize{\fsize}{#1\fsize}\selectfont}%
  \ifx\svgwidth\undefined%
    \setlength{\unitlength}{133.74048857bp}%
    \ifx\svgscale\undefined%
      \relax%
    \else%
      \setlength{\unitlength}{\unitlength * \real{\svgscale}}%
    \fi%
  \else%
    \setlength{\unitlength}{\svgwidth}%
  \fi%
  \global\let\svgwidth\undefined%
  \global\let\svgscale\undefined%
  \makeatother%
  \begin{picture}(1,0.51032601)%
    \lineheight{1}%
    \setlength\tabcolsep{0pt}%
    \put(0,0){\includegraphics[width=\unitlength,page=1]{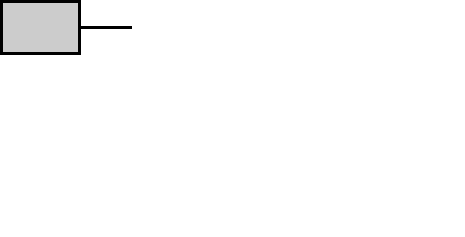}}%
    \put(0.06561654,0.43910183){\color[rgb]{0,0,0}\makebox(0,0)[lt]{\lineheight{1.25}\smash{\begin{tabular}[t]{l}$\mathsf  t$\end{tabular}}}}%
    \put(0,0){\includegraphics[width=\unitlength,page=2]{exred.pdf}}%
    \put(0.06561654,0.25964964){\color[rgb]{0,0,0}\makebox(0,0)[lt]{\lineheight{1.25}\smash{\begin{tabular}[t]{l}$\mathsf  f$\end{tabular}}}}%
    \put(0,0){\includegraphics[width=\unitlength,page=3]{exred.pdf}}%
    \put(0.06561662,0.03533409){\color[rgb]{0,0,0}\makebox(0,0)[lt]{\lineheight{1.25}\smash{\begin{tabular}[t]{l}$\mathsf  t$\end{tabular}}}}%
    \put(0,0){\includegraphics[width=\unitlength,page=4]{exred.pdf}}%
    \put(0.34601026,0.14749235){\color[rgb]{0,0,0}\makebox(0,0)[lt]{\lineheight{1.25}\smash{\begin{tabular}[t]{l}$\mathsf  f$\end{tabular}}}}%
    \put(0,0){\includegraphics[width=\unitlength,page=5]{exred.pdf}}%
    \put(0.34601026,0.41106218){\color[rgb]{0,0,0}\makebox(0,0)[lt]{\lineheight{1.25}\smash{\begin{tabular}[t]{l}$\lor$\end{tabular}}}}%
    \put(0,0){\includegraphics[width=\unitlength,page=6]{exred.pdf}}%
    \put(0.82267965,0.3830227){\color[rgb]{0,0,0}\makebox(0,0)[lt]{\lineheight{1.25}\smash{\begin{tabular}[t]{l}$\lor$\end{tabular}}}}%
    \put(0,0){\includegraphics[width=\unitlength,page=7]{exred.pdf}}%
    \put(0.57032525,0.1306683){\color[rgb]{0,0,0}\makebox(0,0)[lt]{\lineheight{1.25}\smash{\begin{tabular}[t]{l}$\land$\end{tabular}}}}%
  \end{picture}%
\endgroup%

\]
\end{example}

\begin{example}\label{ex:funcmon}
Some of the equations of strict monoidal categories become trivialised by the being fully absorbed into the graphical notation. 
For instance, consider the functoriality of the tensor:
\[
(f\otimes g) \semic (f'\otimes g') = (f \semic f')\otimes (g \semic g')
\]
for any $f,f',g,g'$ so that the compositions are well defined. 
This is trivial since both sides of the equation correspond to the diagram
\[
\begingroup%
  \makeatletter%
  \providecommand\color[2][]{%
    \errmessage{(Inkscape) Color is used for the text in Inkscape, but the package 'color.sty' is not loaded}%
    \renewcommand\color[2][]{}%
  }%
  \providecommand\transparent[1]{%
    \errmessage{(Inkscape) Transparency is used (non-zero) for the text in Inkscape, but the package 'transparent.sty' is not loaded}%
    \renewcommand\transparent[1]{}%
  }%
  \providecommand\rotatebox[2]{#2}%
  \newcommand*\fsize{\dimexpr\f@size pt\relax}%
  \newcommand*\lineheight[1]{\fontsize{\fsize}{#1\fsize}\selectfont}%
  \ifx\svgwidth\undefined%
    \setlength{\unitlength}{67.50000567bp}%
    \ifx\svgscale\undefined%
      \relax%
    \else%
      \setlength{\unitlength}{\unitlength * \real{\svgscale}}%
    \fi%
  \else%
    \setlength{\unitlength}{\svgwidth}%
  \fi%
  \global\let\svgwidth\undefined%
  \global\let\svgscale\undefined%
  \makeatother%
  \begin{picture}(1,0.51112862)%
    \lineheight{1}%
    \setlength\tabcolsep{0pt}%
    \put(0,0){\includegraphics[width=\unitlength,page=1]{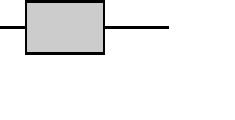}}%
    \put(0.23555559,0.37000896){\color[rgb]{0,0,0}\makebox(0,0)[lt]{\lineheight{1.25}\smash{\begin{tabular}[t]{l}$g$\end{tabular}}}}%
    \put(0,0){\includegraphics[width=\unitlength,page=2]{functens.pdf}}%
    \put(0.69111091,0.37000896){\color[rgb]{0,0,0}\makebox(0,0)[lt]{\lineheight{1.25}\smash{\begin{tabular}[t]{l}$g'$\end{tabular}}}}%
    \put(0,0){\includegraphics[width=\unitlength,page=3]{functens.pdf}}%
    \put(0.23555559,0.09223109){\color[rgb]{0,0,0}\makebox(0,0)[lt]{\lineheight{1.25}\smash{\begin{tabular}[t]{l}$f$\end{tabular}}}}%
    \put(0,0){\includegraphics[width=\unitlength,page=4]{functens.pdf}}%
    \put(0.69111107,0.09223109){\color[rgb]{0,0,0}\makebox(0,0)[lt]{\lineheight{1.25}\smash{\begin{tabular}[t]{l}$f'$\end{tabular}}}}%
  \end{picture}%
\endgroup%

\]
\end{example}

\begin{exercise}
Show that the following equations hold by constructing the string diagrams of all the expressions involved:
\begin{equation*}
(f \semic \id)\otimes(\id \semic g) = f\otimes g = (\id \semic f)\otimes(g \semic \id).
\end{equation*}
\end{exercise}

\begin{example}
We refer the reader back to Example~\ref{ex:bool}. 
Let us equip the category with equations consistent with the standard interpretation of Boolean logic, i.e. $t\otimes t \semic  \land = t$, $t\otimes f \semic \land = f$ etc.
The concrete term used in the example reduces to a single Boolean value. 
A possible step-by-step calculation in the one-dimensional (term) syntax involves repeated applications of naturality to re-associate the term so that the equation of boolean connectors can be applied. 
On the other hand, the two-dimensional (string diagram) makes these redexes obvious. 
The reader is invited to fill in the details, which are edifying. 
\end{example}

\subsection{Strictification and string diagrams}\label{sec:strictification}

The simplified representation of the functorial box for the tensor makes calculation with string diagram for strict monoidal categories particularly pleasant. 
However, many interesting categories are not strict. 
The motivating examples of data types in programming languages (product, sum) are not strict. 
In a programming language tuples $(a,(b,c))$ and $((a,b),c)$ and $(a,b,c)$ are never considered equal, even though canonical isomorphisms between them are straightforward. 
Tupling, when non-strict, is a convenient mechanism for abstraction. 
For instance, consider a type of headers ($H$), a type of payloads $P$, from which we can construct a type of messages $M=H\otimes P$, from which we can construct a type of error-correcting messages $E=M\otimes C$, where $C$ is a checksum applied to a message of type $M$. 
But we could have, on the other hand, a type of error-correcting payloads $E'=P\otimes C$ where the checksum is applied to the message $M$ only, from which we can construct messages $M'=H\otimes E'$. 
The first data-type is $(H\otimes P)\otimes E$, the second $H\otimes (P\otimes E)$, and they should not be taken as equal even if isomorphic. 

As our overall intent is to define string diagrams for programming languages, we can not usually assume the tensor to be strict. 
But without strictness, the functorial box for the tensor loses its simplifying properties and consequently the graphical notation loses its elegance. 
This may seem as an insurmountable obstacle, but there exists a principled way out using \emph{strictification}. 
As we shall see, this construction introduces a deliberately strict version of a monoidal tensor in a way that result in a new category which is equivalent to the original. 
The diagrammatic presentation should give a clearer account of what is going on. 

This construction
is interesting and powerful, and has been interpreted in various ways. 
The broadest, and rather misleading, common interpretation is that ``in a free monoidal category all diagrams made of unitors and associators commute.''

\begin{definition}
  \label{definition:catd}
  Given a monoidal category $(\category C, \otimes, I)$ we define
  $\overline{\category C}$ as the strict monoidal category freely generated by:
  \begin{itemize}
    \item Objects $\overline A$ for each object $A$ 
    \item Morphisms $\overline f:\overline A\to\overline B$ for each morphism $f:A\to B$ in $\category C$.
    \item De-strictifying generators:
    \begin{align*}
    \phi &: [\ ]\to \overline I & & 
\begingroup%
  \makeatletter%
  \providecommand\color[2][]{%
    \errmessage{(Inkscape) Color is used for the text in Inkscape, but the package 'color.sty' is not loaded}%
    \renewcommand\color[2][]{}%
  }%
  \providecommand\transparent[1]{%
    \errmessage{(Inkscape) Transparency is used (non-zero) for the text in Inkscape, but the package 'transparent.sty' is not loaded}%
    \renewcommand\transparent[1]{}%
  }%
  \providecommand\rotatebox[2]{#2}%
  \newcommand*\fsize{\dimexpr\f@size pt\relax}%
  \newcommand*\lineheight[1]{\fontsize{\fsize}{#1\fsize}\selectfont}%
  \ifx\svgwidth\undefined%
    \setlength{\unitlength}{24.88678849bp}%
    \ifx\svgscale\undefined%
      \relax%
    \else%
      \setlength{\unitlength}{\unitlength * \real{\svgscale}}%
    \fi%
  \else%
    \setlength{\unitlength}{\svgwidth}%
  \fi%
  \global\let\svgwidth\undefined%
  \global\let\svgscale\undefined%
  \makeatother%
  \begin{picture}(1,0.3268107)%
    \lineheight{1}%
    \setlength\tabcolsep{0pt}%
    \put(-0.02432899,0.06466803){\color[rgb]{0,0,0}\makebox(0,0)[lt]{\lineheight{1.25}\smash{\begin{tabular}[t]{l}$\phi$\end{tabular}}}}%
    \put(0,0){\includegraphics[width=\unitlength,page=1]{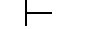}}%
  \end{picture}%
\endgroup%
\\
    \psi_{A, B} &: [\overline A, \overline B]\to \overline{A\otimes B} & & 
\begingroup%
  \makeatletter%
  \providecommand\color[2][]{%
    \errmessage{(Inkscape) Color is used for the text in Inkscape, but the package 'color.sty' is not loaded}%
    \renewcommand\color[2][]{}%
  }%
  \providecommand\transparent[1]{%
    \errmessage{(Inkscape) Transparency is used (non-zero) for the text in Inkscape, but the package 'transparent.sty' is not loaded}%
    \renewcommand\transparent[1]{}%
  }%
  \providecommand\rotatebox[2]{#2}%
  \newcommand*\fsize{\dimexpr\f@size pt\relax}%
  \newcommand*\lineheight[1]{\fontsize{\fsize}{#1\fsize}\selectfont}%
  \ifx\svgwidth\undefined%
    \setlength{\unitlength}{23.9922543bp}%
    \ifx\svgscale\undefined%
      \relax%
    \else%
      \setlength{\unitlength}{\unitlength * \real{\svgscale}}%
    \fi%
  \else%
    \setlength{\unitlength}{\svgwidth}%
  \fi%
  \global\let\svgwidth\undefined%
  \global\let\svgscale\undefined%
  \makeatother%
  \begin{picture}(1,0.93780266)%
    \lineheight{1}%
    \setlength\tabcolsep{0pt}%
    \put(-0.02523608,0.39778665){\color[rgb]{0,0,0}\makebox(0,0)[lt]{\lineheight{1.25}\smash{\begin{tabular}[t]{l}$\psi$\end{tabular}}}}%
    \put(0,0){\includegraphics[width=\unitlength,page=1]{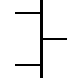}}%
  \end{picture}%
\endgroup%

    \end{align*}
    \item Strictifying generators:
    \begin{align*}
    \phi^* &: \overline I \to [\ ] & & 
\begingroup%
  \makeatletter%
  \providecommand\color[2][]{%
    \errmessage{(Inkscape) Color is used for the text in Inkscape, but the package 'color.sty' is not loaded}%
    \renewcommand\color[2][]{}%
  }%
  \providecommand\transparent[1]{%
    \errmessage{(Inkscape) Transparency is used (non-zero) for the text in Inkscape, but the package 'transparent.sty' is not loaded}%
    \renewcommand\transparent[1]{}%
  }%
  \providecommand\rotatebox[2]{#2}%
  \newcommand*\fsize{\dimexpr\f@size pt\relax}%
  \newcommand*\lineheight[1]{\fontsize{\fsize}{#1\fsize}\selectfont}%
  \ifx\svgwidth\undefined%
    \setlength{\unitlength}{46.38964859bp}%
    \ifx\svgscale\undefined%
      \relax%
    \else%
      \setlength{\unitlength}{\unitlength * \real{\svgscale}}%
    \fi%
  \else%
    \setlength{\unitlength}{\svgwidth}%
  \fi%
  \global\let\svgwidth\undefined%
  \global\let\svgscale\undefined%
  \makeatother%
  \begin{picture}(1,0.18464339)%
    \lineheight{1}%
    \setlength\tabcolsep{0pt}%
    \put(0.19971148,0.03469264){\color[rgb]{0,0,0}\makebox(0,0)[lt]{\lineheight{1.25}\smash{\begin{tabular}[t]{l}$\phi^*$\end{tabular}}}}%
    \put(0,0){\includegraphics[width=\unitlength,page=1]{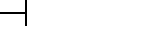}}%
  \end{picture}%
\endgroup%
\\
    \psi_{A,B}^* &: \overline{A\otimes B}\to [\overline A, \overline B] & & 
\begingroup%
  \makeatletter%
  \providecommand\color[2][]{%
    \errmessage{(Inkscape) Color is used for the text in Inkscape, but the package 'color.sty' is not loaded}%
    \renewcommand\color[2][]{}%
  }%
  \providecommand\transparent[1]{%
    \errmessage{(Inkscape) Transparency is used (non-zero) for the text in Inkscape, but the package 'transparent.sty' is not loaded}%
    \renewcommand\transparent[1]{}%
  }%
  \providecommand\rotatebox[2]{#2}%
  \newcommand*\fsize{\dimexpr\f@size pt\relax}%
  \newcommand*\lineheight[1]{\fontsize{\fsize}{#1\fsize}\selectfont}%
  \ifx\svgwidth\undefined%
    \setlength{\unitlength}{47.8077959bp}%
    \ifx\svgscale\undefined%
      \relax%
    \else%
      \setlength{\unitlength}{\unitlength * \real{\svgscale}}%
    \fi%
  \else%
    \setlength{\unitlength}{\svgwidth}%
  \fi%
  \global\let\svgwidth\undefined%
  \global\let\svgscale\undefined%
  \makeatother%
  \begin{picture}(1,0.47063454)%
    \lineheight{1}%
    \setlength\tabcolsep{0pt}%
    \put(0.24216185,0.2086704){\color[rgb]{0,0,0}\makebox(0,0)[lt]{\lineheight{1.25}\smash{\begin{tabular}[t]{l}$\psi^*$\end{tabular}}}}%
    \put(0,0){\includegraphics[width=\unitlength,page=1]{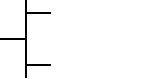}}%
  \end{picture}%
\endgroup%

    \end{align*}
   \end{itemize}
   subject to the equations in Figure~\ref{fig:strictification}.
\end{definition}
\begin{figure}
\[
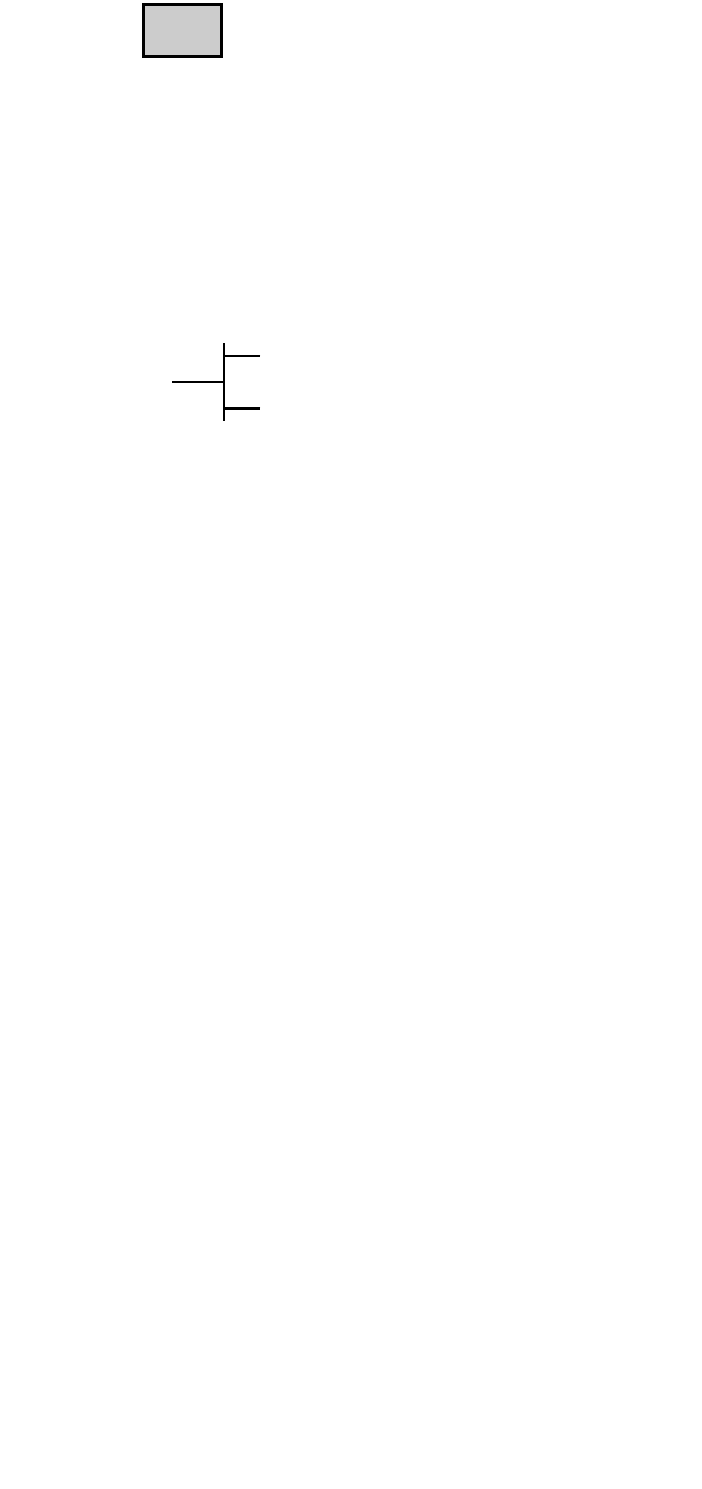
\]
\caption{Strictification equations (Definition~\ref{definition:catd})}
\label{fig:strictification}
\end{figure}
Since the category $\overline{\category C}$ is strict monoidal we are entitled to use the strict string diagrams without ambiguity. 
In Figure~\ref{fig:strictification} and elsewhere we omit the labels on the generator boxes since they can be unambiguously identified by shape and number and position of wires. 

This is a functorial construction, yielding a (monoidal) equivalence between
$\category C$ and $\overline{\category C}$.
The details involved in formulating the following precisely are outside the scope of this tutorial survey, so we will only state this property informally:
\begin{remark}\label{rem:strict}
The category $\overline{\category C}$ is essentially equivalent to $\category C$. 
Every construction in the former category is essentially the same as in the latter. 
\end{remark}
The second sentence above may seem puzzling, since the use of strictifiers and de-strictifiers appears to greatly complicate the diagrammatic language. 
We will show, with a simple example, that it is not the case, and shine a light, at the same time, on the implications of strictification as a syntactic device, allowing the rigorous use of string diagrams for non-strict tensors. 
\begin{example}
Recall example~\ref{ex:funcmon} in which an equation, namely
\[
(f\otimes g) \semic (f'\otimes g') = (f \semic f')\otimes (g \semic g'),
\]
is shown to hold by constructing the string diagram expression of both side and noticing that it is the same, when the tensor is strict. 
Let us conduct the same exercise in a non-strict diagram. 

The two diagrams are:
\[
\begingroup%
  \makeatletter%
  \providecommand\color[2][]{%
    \errmessage{(Inkscape) Color is used for the text in Inkscape, but the package 'color.sty' is not loaded}%
    \renewcommand\color[2][]{}%
  }%
  \providecommand\transparent[1]{%
    \errmessage{(Inkscape) Transparency is used (non-zero) for the text in Inkscape, but the package 'transparent.sty' is not loaded}%
    \renewcommand\transparent[1]{}%
  }%
  \providecommand\rotatebox[2]{#2}%
  \newcommand*\fsize{\dimexpr\f@size pt\relax}%
  \newcommand*\lineheight[1]{\fontsize{\fsize}{#1\fsize}\selectfont}%
  \ifx\svgwidth\undefined%
    \setlength{\unitlength}{218.14369978bp}%
    \ifx\svgscale\undefined%
      \relax%
    \else%
      \setlength{\unitlength}{\unitlength * \real{\svgscale}}%
    \fi%
  \else%
    \setlength{\unitlength}{\svgwidth}%
  \fi%
  \global\let\svgwidth\undefined%
  \global\let\svgscale\undefined%
  \makeatother%
  \begin{picture}(1,0.39882524)%
    \lineheight{1}%
    \setlength\tabcolsep{0pt}%
    \put(0,0){\includegraphics[width=\unitlength,page=1]{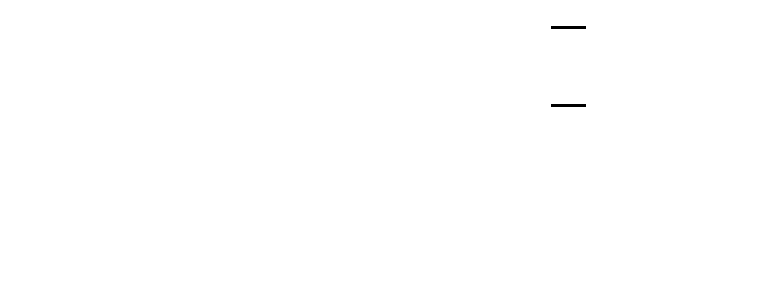}}%
    \put(-0.00277556,0.30565951){\color[rgb]{0,0,0}\makebox(0,0)[lt]{\lineheight{1.25}\smash{\begin{tabular}[t]{l}$(f\otimes g);(f'\otimes g') =$\end{tabular}}}}%
    \put(0,0){\includegraphics[width=\unitlength,page=2]{functensns.pdf}}%
    \put(0.49872506,0.35515871){\color[rgb]{0,0,0}\makebox(0,0)[lt]{\lineheight{1.25}\smash{\begin{tabular}[t]{l}$g$\end{tabular}}}}%
    \put(0,0){\includegraphics[width=\unitlength,page=3]{functensns.pdf}}%
    \put(0.49872506,0.25201569){\color[rgb]{0,0,0}\makebox(0,0)[lt]{\lineheight{1.25}\smash{\begin{tabular}[t]{l}$f$\end{tabular}}}}%
    \put(0,0){\includegraphics[width=\unitlength,page=4]{functensns.pdf}}%
    \put(0.80815383,0.35515881){\color[rgb]{0,0,0}\makebox(0,0)[lt]{\lineheight{1.25}\smash{\begin{tabular}[t]{l}$g'$\end{tabular}}}}%
    \put(0,0){\includegraphics[width=\unitlength,page=5]{functensns.pdf}}%
    \put(0.80815383,0.25201569){\color[rgb]{0,0,0}\makebox(0,0)[lt]{\lineheight{1.25}\smash{\begin{tabular}[t]{l}$f'$\end{tabular}}}}%
    \put(0,0){\includegraphics[width=\unitlength,page=6]{functensns.pdf}}%
    \put(0.00410075,0.082183){\color[rgb]{0,0,0}\makebox(0,0)[lt]{\lineheight{1.25}\smash{\begin{tabular}[t]{l}$(f;f')\otimes (g;g') =$\end{tabular}}}}%
    \put(0,0){\includegraphics[width=\unitlength,page=7]{functensns.pdf}}%
    \put(0.50560136,0.1316822){\color[rgb]{0,0,0}\makebox(0,0)[lt]{\lineheight{1.25}\smash{\begin{tabular}[t]{l}$g$\end{tabular}}}}%
    \put(0,0){\includegraphics[width=\unitlength,page=8]{functensns.pdf}}%
    \put(0.50560136,0.02853918){\color[rgb]{0,0,0}\makebox(0,0)[lt]{\lineheight{1.25}\smash{\begin{tabular}[t]{l}$f$\end{tabular}}}}%
    \put(0,0){\includegraphics[width=\unitlength,page=9]{functensns.pdf}}%
    \put(0.81503013,0.1316823){\color[rgb]{0,0,0}\makebox(0,0)[lt]{\lineheight{1.25}\smash{\begin{tabular}[t]{l}$g'$\end{tabular}}}}%
    \put(0,0){\includegraphics[width=\unitlength,page=10]{functensns.pdf}}%
    \put(0.81503013,0.02853918){\color[rgb]{0,0,0}\makebox(0,0)[lt]{\lineheight{1.25}\smash{\begin{tabular}[t]{l}$f'$\end{tabular}}}}%
    \put(0,0){\includegraphics[width=\unitlength,page=11]{functensns.pdf}}%
  \end{picture}%
\endgroup%

\]
\end{example}
Showing only the part of the diagram that differs, the first diagram can be reduced to the second using the first, then fourth, then first again (twice) equations from Figure~\ref{fig:strictification}:
\[
\begingroup%
  \makeatletter%
  \providecommand\color[2][]{%
    \errmessage{(Inkscape) Color is used for the text in Inkscape, but the package 'color.sty' is not loaded}%
    \renewcommand\color[2][]{}%
  }%
  \providecommand\transparent[1]{%
    \errmessage{(Inkscape) Transparency is used (non-zero) for the text in Inkscape, but the package 'transparent.sty' is not loaded}%
    \renewcommand\transparent[1]{}%
  }%
  \providecommand\rotatebox[2]{#2}%
  \newcommand*\fsize{\dimexpr\f@size pt\relax}%
  \newcommand*\lineheight[1]{\fontsize{\fsize}{#1\fsize}\selectfont}%
  \ifx\svgwidth\undefined%
    \setlength{\unitlength}{213.65220756bp}%
    \ifx\svgscale\undefined%
      \relax%
    \else%
      \setlength{\unitlength}{\unitlength * \real{\svgscale}}%
    \fi%
  \else%
    \setlength{\unitlength}{\svgwidth}%
  \fi%
  \global\let\svgwidth\undefined%
  \global\let\svgscale\undefined%
  \makeatother%
  \begin{picture}(1,0.17903479)%
    \lineheight{1}%
    \setlength\tabcolsep{0pt}%
    \put(0,0){\includegraphics[width=\unitlength,page=1]{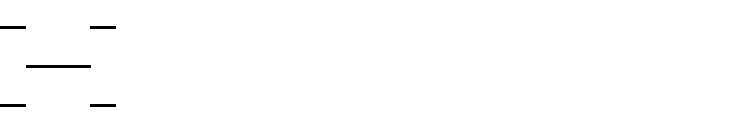}}%
    \put(0.15852908,0.07893859){\color[rgb]{0,0,0}\makebox(0,0)[lt]{\lineheight{1.25}\smash{\begin{tabular}[t]{l}$=$\end{tabular}}}}%
    \put(0,0){\includegraphics[width=\unitlength,page=2]{firstfourth.pdf}}%
    \put(0.53102177,0.07893869){\color[rgb]{0,0,0}\makebox(0,0)[lt]{\lineheight{1.25}\smash{\begin{tabular}[t]{l}$=$\end{tabular}}}}%
    \put(0,0){\includegraphics[width=\unitlength,page=3]{firstfourth.pdf}}%
    \put(0.31398911,0.06775309){\color[rgb]{0,0,0}\makebox(0,0)[lt]{\lineheight{1.25}\smash{\begin{tabular}[t]{l}$\overline{\id\otimes\id}$\end{tabular}}}}%
    \put(0,0){\includegraphics[width=\unitlength,page=4]{firstfourth.pdf}}%
    \put(0.68125762,0.12040876){\color[rgb]{0,0,0}\makebox(0,0)[lt]{\lineheight{1.25}\smash{\begin{tabular}[t]{l}$\overline{\id}$\end{tabular}}}}%
    \put(0,0){\includegraphics[width=\unitlength,page=5]{firstfourth.pdf}}%
    \put(0.68125768,0.01509742){\color[rgb]{0,0,0}\makebox(0,0)[lt]{\lineheight{1.25}\smash{\begin{tabular}[t]{l}$\overline{\id}$\end{tabular}}}}%
    \put(0,0){\includegraphics[width=\unitlength,page=6]{firstfourth.pdf}}%
    \put(0.81318093,0.07893889){\color[rgb]{0,0,0}\makebox(0,0)[lt]{\lineheight{1.25}\smash{\begin{tabular}[t]{l}$=$\end{tabular}}}}%
    \put(0,0){\includegraphics[width=\unitlength,page=7]{firstfourth.pdf}}%
  \end{picture}%
\endgroup%

\]
The important idea in the proof of the property in Remark~\ref{rem:strict} is that any construction in the non-strict category can be expressed by a diagram that has the following general shape:
\[
\begingroup%
  \makeatletter%
  \providecommand\color[2][]{%
    \errmessage{(Inkscape) Color is used for the text in Inkscape, but the package 'color.sty' is not loaded}%
    \renewcommand\color[2][]{}%
  }%
  \providecommand\transparent[1]{%
    \errmessage{(Inkscape) Transparency is used (non-zero) for the text in Inkscape, but the package 'transparent.sty' is not loaded}%
    \renewcommand\transparent[1]{}%
  }%
  \providecommand\rotatebox[2]{#2}%
  \newcommand*\fsize{\dimexpr\f@size pt\relax}%
  \newcommand*\lineheight[1]{\fontsize{\fsize}{#1\fsize}\selectfont}%
  \ifx\svgwidth\undefined%
    \setlength{\unitlength}{177.25877487bp}%
    \ifx\svgscale\undefined%
      \relax%
    \else%
      \setlength{\unitlength}{\unitlength * \real{\svgscale}}%
    \fi%
  \else%
    \setlength{\unitlength}{\svgwidth}%
  \fi%
  \global\let\svgwidth\undefined%
  \global\let\svgscale\undefined%
  \makeatother%
  \begin{picture}(1,0.16234595)%
    \lineheight{1}%
    \setlength\tabcolsep{0pt}%
    \put(0,0){\includegraphics[width=\unitlength,page=1]{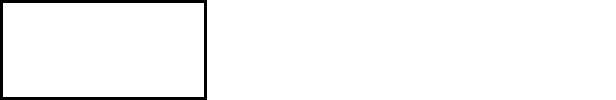}}%
    \put(0.1313366,0.07431317){\color[rgb]{0,0,0}\makebox(0,0)[lt]{\lineheight{1.25}\smash{\begin{tabular}[t]{l}$\sigma^*$\end{tabular}}}}%
    \put(0,0){\includegraphics[width=\unitlength,page=2]{drawing.pdf}}%
    \put(0.46325727,0.07431317){\color[rgb]{0,0,0}\makebox(0,0)[lt]{\lineheight{1.25}\smash{\begin{tabular}[t]{l}$f$\end{tabular}}}}%
    \put(0,0){\includegraphics[width=\unitlength,page=3]{drawing.pdf}}%
    \put(0.79517794,0.07431317){\color[rgb]{0,0,0}\makebox(0,0)[lt]{\lineheight{1.25}\smash{\begin{tabular}[t]{l}$\tau$\end{tabular}}}}%
  \end{picture}%
\endgroup%

\]
\begin{description}
\item[strictification]: A prefix diagram ($\sigma^*$) which is a tree of strictifiers, flattening the tensors,
\item[diagram]: The actual diagram $f$ as if constructed in the strict setting,
\item[de-strictification]: A suffix diagram ($\tau$) specifying the association of the tensors.
\end{description}
So we can define string diagrams as if they were performed in a strict category, without loss of generality. 
However, we can still use the non-strict tensor whenever convenient to pack structure into the wires for the purpose of abstraction. 
An example will be given in the next section (Example~\ref{ex:butterfly}). 
More concretely speaking:
\begin{remark}
A morphism $f:A\otimes B\to C$ can be drawn without ambiguity in three equivalent ways:
\[
\begingroup%
  \makeatletter%
  \providecommand\color[2][]{%
    \errmessage{(Inkscape) Color is used for the text in Inkscape, but the package 'color.sty' is not loaded}%
    \renewcommand\color[2][]{}%
  }%
  \providecommand\transparent[1]{%
    \errmessage{(Inkscape) Transparency is used (non-zero) for the text in Inkscape, but the package 'transparent.sty' is not loaded}%
    \renewcommand\transparent[1]{}%
  }%
  \providecommand\rotatebox[2]{#2}%
  \newcommand*\fsize{\dimexpr\f@size pt\relax}%
  \newcommand*\lineheight[1]{\fontsize{\fsize}{#1\fsize}\selectfont}%
  \ifx\svgwidth\undefined%
    \setlength{\unitlength}{307.4062999bp}%
    \ifx\svgscale\undefined%
      \relax%
    \else%
      \setlength{\unitlength}{\unitlength * \real{\svgscale}}%
    \fi%
  \else%
    \setlength{\unitlength}{\svgwidth}%
  \fi%
  \global\let\svgwidth\undefined%
  \global\let\svgscale\undefined%
  \makeatother%
  \begin{picture}(1,0.07917282)%
    \lineheight{1}%
    \setlength\tabcolsep{0pt}%
    \put(0,0){\includegraphics[width=\unitlength,page=1]{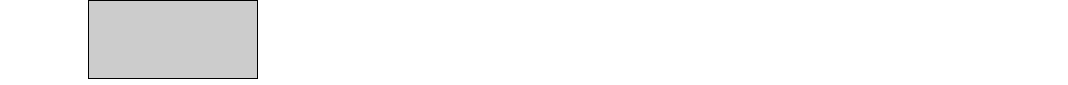}}%
    \put(0.15371914,0.03578983){\color[rgb]{0,0,0}\makebox(0,0)[lt]{\lineheight{1.25}\smash{\begin{tabular}[t]{l}$f$\end{tabular}}}}%
    \put(0,0){\includegraphics[width=\unitlength,page=2]{threeways.pdf}}%
    \put(-0.00147721,0.04979592){\color[rgb]{0,0,0}\makebox(0,0)[lt]{\lineheight{1.25}\smash{\begin{tabular}[t]{l}$A\otimes B$\end{tabular}}}}%
    \put(0.25537612,0.05160312){\color[rgb]{0,0,0}\makebox(0,0)[lt]{\lineheight{1.25}\smash{\begin{tabular}[t]{l}$C$\end{tabular}}}}%
    \put(0.29739438,0.03804883){\color[rgb]{0,0,0}\makebox(0,0)[lt]{\lineheight{1.25}\smash{\begin{tabular}[t]{l}$=$\end{tabular}}}}%
    \put(0,0){\includegraphics[width=\unitlength,page=3]{threeways.pdf}}%
    \put(0.4611298,0.03578983){\color[rgb]{0,0,0}\makebox(0,0)[lt]{\lineheight{1.25}\smash{\begin{tabular}[t]{l}$f$\end{tabular}}}}%
    \put(0,0){\includegraphics[width=\unitlength,page=4]{threeways.pdf}}%
    \put(0.3352108,0.06199473){\color[rgb]{0,0,0}\makebox(0,0)[lt]{\lineheight{1.25}\smash{\begin{tabular}[t]{l}$B$\end{tabular}}}}%
    \put(0.56278674,0.05160312){\color[rgb]{0,0,0}\makebox(0,0)[lt]{\lineheight{1.25}\smash{\begin{tabular}[t]{l}$C$\end{tabular}}}}%
    \put(0.60480528,0.03804883){\color[rgb]{0,0,0}\makebox(0,0)[lt]{\lineheight{1.25}\smash{\begin{tabular}[t]{l}$=$\end{tabular}}}}%
    \put(0,0){\includegraphics[width=\unitlength,page=5]{threeways.pdf}}%
    \put(0.33521073,0.00344046){\color[rgb]{0,0,0}\makebox(0,0)[lt]{\lineheight{1.25}\smash{\begin{tabular}[t]{l}$A$\end{tabular}}}}%
    \put(0,0){\includegraphics[width=\unitlength,page=6]{threeways.pdf}}%
    \put(0.86125126,0.03578969){\color[rgb]{0,0,0}\makebox(0,0)[lt]{\lineheight{1.25}\smash{\begin{tabular}[t]{l}$f$\end{tabular}}}}%
    \put(0,0){\includegraphics[width=\unitlength,page=7]{threeways.pdf}}%
    \put(0.63774195,0.06199473){\color[rgb]{0,0,0}\makebox(0,0)[lt]{\lineheight{1.25}\smash{\begin{tabular}[t]{l}$B$\end{tabular}}}}%
    \put(0.96290785,0.05160298){\color[rgb]{0,0,0}\makebox(0,0)[lt]{\lineheight{1.25}\smash{\begin{tabular}[t]{l}$C$\end{tabular}}}}%
    \put(0,0){\includegraphics[width=\unitlength,page=8]{threeways.pdf}}%
    \put(0.63774195,0.00344046){\color[rgb]{0,0,0}\makebox(0,0)[lt]{\lineheight{1.25}\smash{\begin{tabular}[t]{l}$A$\end{tabular}}}}%
    \put(0,0){\includegraphics[width=\unitlength,page=9]{threeways.pdf}}%
    \put(0.70063398,0.05047362){\color[rgb]{0,0,0}\makebox(0,0)[lt]{\lineheight{1.25}\smash{\begin{tabular}[t]{l}$A\otimes B$\end{tabular}}}}%
    \put(0,0){\includegraphics[width=\unitlength,page=10]{threeways.pdf}}%
  \end{picture}%
\endgroup%

\]
\end{remark}
\begin{exercise}
Using diagrammatic reasoning and the equations of strictification prove some of the expected properties of $\overline{\category C}$, e.g. $\overline {\alpha} \semic \overline{\alpha^{-1}} = \id$.
\end{exercise}
\begin{remark}
The exercise above should make the point that diagrammatic reasoning is not a panacea, and simple equational reasoning is sometimes more handy than diagrammatic reasoning. 
Diagrammatic reasoning is just another arrow in your quiver, to be fired at the appropriate targets. 
\end{remark}
\begin{example}[Sorting network]
A \emph{sorting network} is a freely generated strict monoidal category, where the objects represent a fixed-size integer data type and there is one generator morphism $s_2:2\to 2$, a two-input sorter, plus a family of generator morphisms $k_i:0\to 1$ corresponding to constant integers  $i\in\mathbb N$, so that
\begin{align*}
(k_i\otimes k_j) \semic s_2 &= k_i\otimes k_j &\text{if }i\geq j\\
(k_i\otimes k_j) \semic s_2 &= k_j\otimes k_i &\text{if }i\le j.
\end{align*}
Larger sorting networks are constructed recursively. 
Two common ones are \emph{insertion sort} in which the first $n$ wires are sorted then the remaining value is inserted, and \emph{bubble sort} in which the largest value is propagated to the bottom, then the remaining $n$ values are sorted. 
The insert-sort network is defined by the following two recursive definitions, with base case $max_2=s_2$:
\[
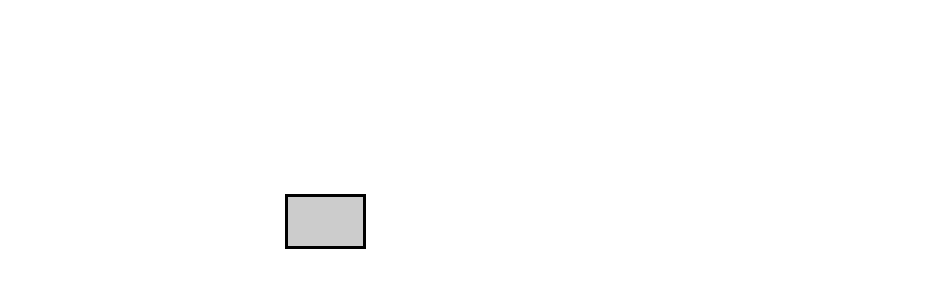
\]
\end{example}
\begin{exercise}
\begin{enumerate}
\item Construct $s_4$ and simplify according to the strictification equations.
\item Define the bubble-sorter, according to the informal specification above. 
\item Prove that the bubble-sorter and the insert-sorter are always equal. 
\item Define an efficient sorter, e.g. a \emph{bitonic sorter}~\cite{nassimi1979bitonic}. 
\end{enumerate}
\end{exercise}

\subsection{Symmetric monoidal categories}\label{sec:smc}
Remember that we gave the tensor ($\otimes$)  higher precedence than composition ($ \semic $), so we shall continue to do that in the sequel. 
\begin{definition}[Symmetric monoidal category]
A \emph{symmetric monoidal category} is a monoidal category $(\category C, \otimes, I)$ such that, for every pair $A, B$ of objects in $\category C$, there is an isomorphism $\sigma_{A,B}:A\otimes B\to B\otimes A$, called \emph{symmetry}, that is natural in both $A$ and $B$ and such that
\begin{align*}
\sigma_{A,I} \semic \lambda_A & =\rho_A \\
\sigma_{A,B}\otimes \id_C \semic \alpha_{B,A,C} \semic \id_B\otimes \sigma_{A,C} &=\alpha_{A,B,C} \semic \sigma_{A,B\otimes C} \semic \alpha_{B,C,A} \\
\sigma_{A,B} \semic \sigma_{B \semic A}&=\id_{A\otimes B}.
\end{align*}
\end{definition}
As discussed in the previous section, it is simpler to work in a strict setting and add strictifier and de-strictifier morphisms globally only if non-strict tensoring is required. 
\begin{remark}
In a strict monoidal category the symmetry equations are:
\begin{align*}
\sigma_{A,I}&=\id_A\\
\sigma_{A,B}\otimes \id_C \semic \id_B\otimes \sigma_{A,C} &=\sigma_{A,B\otimes C}\\
\sigma_{A,B} \semic \sigma_{B \semic A}&=\id_{A\otimes B}.
\end{align*}
\end{remark}
The string diagram language needs no extension, so the two equations above, along with the naturality equations are given in Figure~\ref{fig:eqsym}.
\begin{figure}
\[
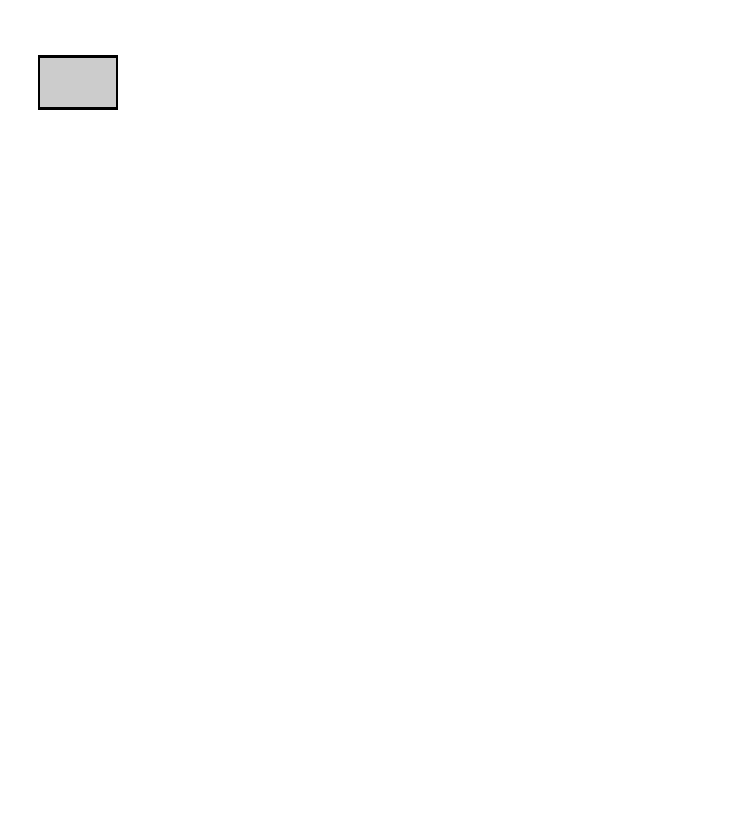
\]
\caption{Symmetry equations}
\label{fig:eqsym}
\end{figure}

Symmetry is an important property of many instances of the monoidal tensor (e.g. Cartesian product) and reasoning with the symmetry equations above is ubiquitous. 
A particularly clever development in the design of string diagram languages is to introduce special graphical notation for symmetry so that these equations become absorbed into the graphical notation. 
This is done by imagining symmetry $\sigma_{A,B}$ as a swapping of the wires for $A$ and $B$. 
With this new and improved graphical notation, the same equations become as in Figure~\ref{fig:eqsymg}, and they are absorbed into graphical equivalence by allowing wires to bend and cross and boxes to slide along them. 
These are informal considerations, but a suitable mathematical concept of graph can be expressed rigorously, and the rigorously defined notion of isomorphism of such graphs corresponds precisely to the intuition at work here. 
\begin{figure}
\[
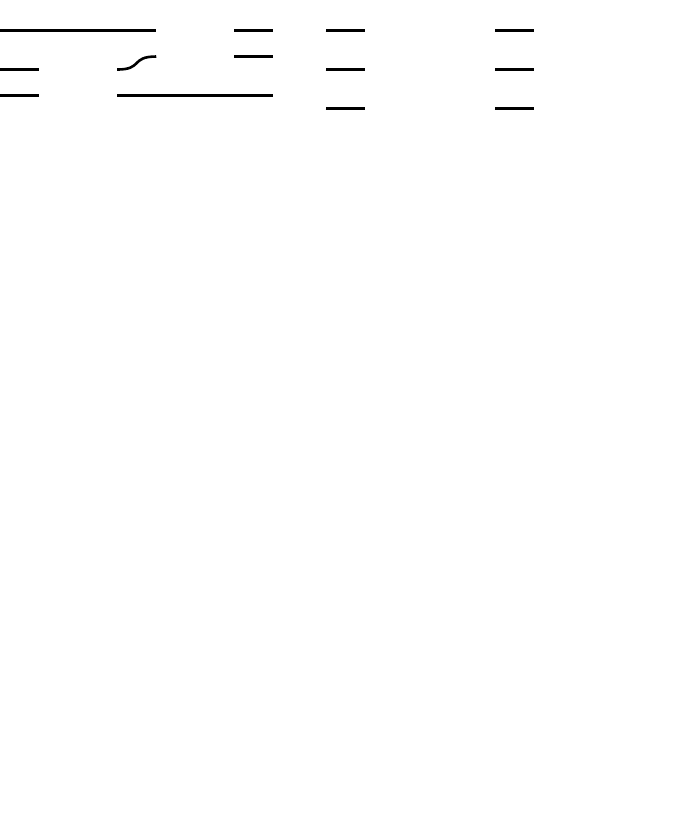
\]
\caption{Symmetry equations absorbed into the graphical notation}
\label{fig:eqsymg}
\end{figure}

The first equation of symmetry shows that $\sigma_{A,B\otimes C}$ can be constructed out of the simpler $\sigma_{A,B}$ and $\sigma_{B,C}$. 
Attempting to formally generalise this axiom to showing how general symmetries can be constructed from the basic symmetry in a single sorted strict monoidal category is an interesting exercise which shows the usefulness of strictification in an already strict category!
\begin{example}\label{ex:butterfly}
The following family of mutually recursive axioms show how $\sigma_{\overline m,\overline n}$ can be constructed from $\sigma_{\overline 1, \overline 1}$ in a symmetric single sorted strict symmetric monoidal category. 
These categories are a common setting for string diagrams, called PROPs (from \emph{PROducts and Permutations}). 
\[
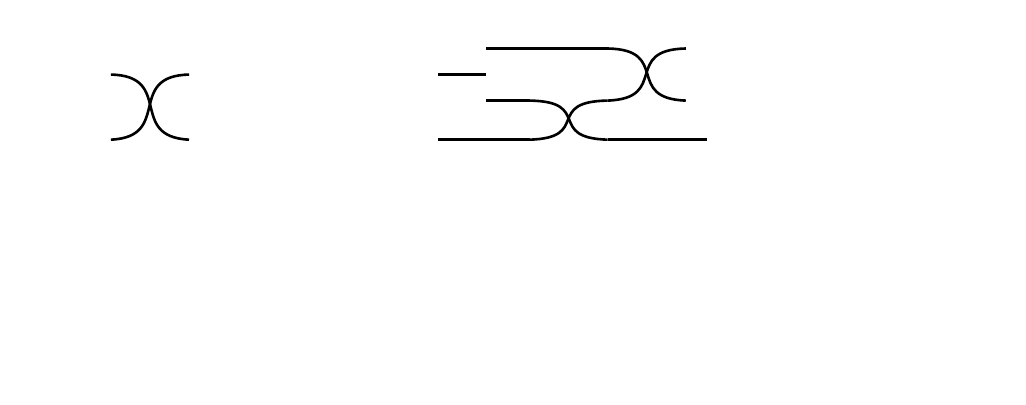
\]
Even in a strict monoidal category it is convenient to have a way to unpack a composite wire into its components, and to pack it again. 
Without such a facility composite symmetries are usually specified using ellipses in the diagrammatic notation, which injects an unpleasant element of informality. 
\end{example}
\begin{example}
We construct $\sigma_{2,2}$ according to the definition in Example~\ref{ex:butterfly}. 
Unfolding the definition produces the diagram
\[
\begingroup%
  \makeatletter%
  \providecommand\color[2][]{%
    \errmessage{(Inkscape) Color is used for the text in Inkscape, but the package 'color.sty' is not loaded}%
    \renewcommand\color[2][]{}%
  }%
  \providecommand\transparent[1]{%
    \errmessage{(Inkscape) Transparency is used (non-zero) for the text in Inkscape, but the package 'transparent.sty' is not loaded}%
    \renewcommand\transparent[1]{}%
  }%
  \providecommand\rotatebox[2]{#2}%
  \newcommand*\fsize{\dimexpr\f@size pt\relax}%
  \newcommand*\lineheight[1]{\fontsize{\fsize}{#1\fsize}\selectfont}%
  \ifx\svgwidth\undefined%
    \setlength{\unitlength}{225.65874331bp}%
    \ifx\svgscale\undefined%
      \relax%
    \else%
      \setlength{\unitlength}{\unitlength * \real{\svgscale}}%
    \fi%
  \else%
    \setlength{\unitlength}{\svgwidth}%
  \fi%
  \global\let\svgwidth\undefined%
  \global\let\svgscale\undefined%
  \makeatother%
  \begin{picture}(1,0.26714432)%
    \lineheight{1}%
    \setlength\tabcolsep{0pt}%
    \put(0,0){\includegraphics[width=\unitlength,page=1]{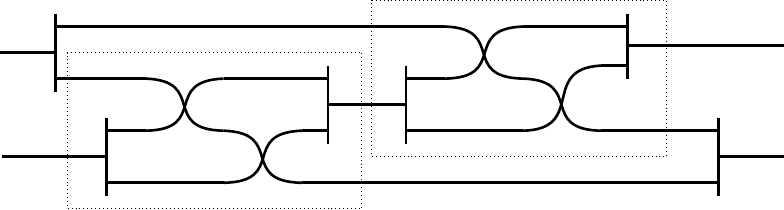}}%
  \end{picture}%
\endgroup%

\]
in which the dotted boxes indicate instances of $\sigma_{2,1}$.
Using the strictification axioms the matching pair of a de-strictifier and a strictifier can be eliminated, resulting in:
\[
\begingroup%
  \makeatletter%
  \providecommand\color[2][]{%
    \errmessage{(Inkscape) Color is used for the text in Inkscape, but the package 'color.sty' is not loaded}%
    \renewcommand\color[2][]{}%
  }%
  \providecommand\transparent[1]{%
    \errmessage{(Inkscape) Transparency is used (non-zero) for the text in Inkscape, but the package 'transparent.sty' is not loaded}%
    \renewcommand\transparent[1]{}%
  }%
  \providecommand\rotatebox[2]{#2}%
  \newcommand*\fsize{\dimexpr\f@size pt\relax}%
  \newcommand*\lineheight[1]{\fontsize{\fsize}{#1\fsize}\selectfont}%
  \ifx\svgwidth\undefined%
    \setlength{\unitlength}{123.74998866bp}%
    \ifx\svgscale\undefined%
      \relax%
    \else%
      \setlength{\unitlength}{\unitlength * \real{\svgscale}}%
    \fi%
  \else%
    \setlength{\unitlength}{\svgwidth}%
  \fi%
  \global\let\svgwidth\undefined%
  \global\let\svgscale\undefined%
  \makeatother%
  \begin{picture}(1,0.42424239)%
    \lineheight{1}%
    \setlength\tabcolsep{0pt}%
    \put(0,0){\includegraphics[width=\unitlength,page=1]{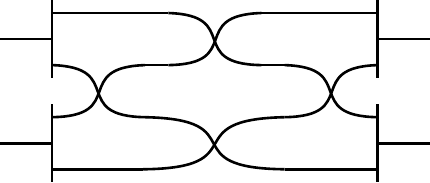}}%
  \end{picture}%
\endgroup%

\]
\begin{exercise}
Find two different constructions for $\sigma_{2,2}$, then prove that the two definitions are equal in the equational theory of the PROP. 
\end{exercise}
\begin{exercise}
Prove that for any $m,n$, it is the case that $\sigma_{m,n}$ constructed as above is indeed the symmetry of the PROP.
\end{exercise}
\end{example}

Analogously to monoidal categories, we may freely generate symmetric strict monoidal categories from a monoidal signature $\Sigma= (\Sigma_0,\Sigma_1)$. In short, what changes with respect to Definition~\ref{def:freeMC} is that the inductive definition of $\Sigma$-term includes an additional clause for symmetries:
\begin{itemize}
	\item If $A$ and $B$ are in $\Sigma_0$, then the `symmetry' $\sigma_{[A],[B]} \colon [A,B] \to [B,A]$ is a $\Sigma$-term.
\end{itemize}
and the morphisms of the freely generated category are $\Sigma$-terms quotiented as in Definition~\ref{def:freeMC} plus a family of equations saying that arbitrary symmetries are constructed from symmetries only involving generating objects, analogously to what is shown in Example~\ref{ex:butterfly}.
\[
\star\star\star\star\star
\]

This concludes the presentation of the core language of string diagrams, for symmetric monoidal categories, on which much of the more elaborate graphical syntaxes rely. 
The trajectory we followed is rather different from the one the reader will find in most of the literature, with an emphasis on functorial boxes and non-strict tensoring. 
The hope is that this perspective has some pedagogical value, in itself and as a complement of the presentations found in the literature, in understanding the core graphical language of string diagrams. 
Our unusual emphasis on functorial boxes and non-strict tensoring was pedagogically motivated, but these constructs will turn out to be important in their own right. 

\subsection{Foliations}
\label{sec:foliations}
\newcommand{\Int}{\mathit{Int}}

As compared to the term notation the diagrammatic notation was seen to have several advantages. 
It is more abstract, as several terms correspond to the same diagram;
it is automatically quotiented by systematic renaming of variables ($\alpha$-equivalence);
it can more easily identify redexes, which allows the derivation of more efficient abstract machines through the fine-grained control of copying and sharing. 

In this section we shall see how the diagrammatic presentation can help derive a quasi-normal form for terms which leads to simpler inductive algorithms and proofs of correctness for certain classes of algorithms on graph-like data structures. 
To give concrete examples we will use a generic syntax for an OCaml-like functional language (call-by-value, algebraic data types, pattern matching). 
The techniques we illustrate here are part of the folklore of string diagram research but a deliberate and focussed, rather than incidental, presentation can be considered to be new. 

Let us begin with a common situation where a non-tree-like data structure is algorithmically justified, and examine the issues and the proposed approaches. 
In functional languages, tree-like data structures are ubiquitous since they can be expressed as an algebraic data type. 
Consider for example the type of binary trees with (integer) data stored in the nodes:

\lstset{language=Haskell, basicstyle=\small, keywordstyle=\bfseries, identifierstyle=\ttfamily}
\begin{lstlisting}
type tree = Node int tree tree | Empty
\end{lstlisting}

A standard function on trees would be \emph{map}, which applies a function $f$ to each node while preserving the overall tree shape:
\begin{lstlisting}
map f (Node n t t') = Node (f n) (map f t) (map f t')
map f Empty         = Empty 
\end{lstlisting}
The function is inductive (or \emph{structurally} recursive on the data structure) so that termination is guaranteed. 
For instance, this is the effect of mapping the function $\lambda x.x+1$ on some tree, represented graphically in the usual way:
\[
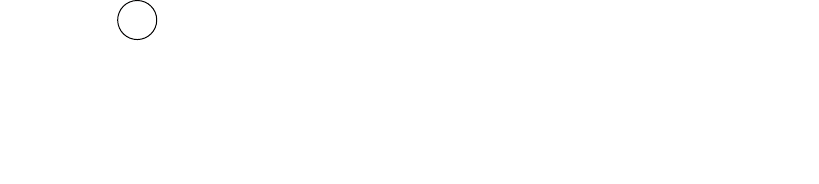
\]
Note that this particular tree has a certain amount of duplication. 
A common space-saving algorithmic device in this situation is to use directed acyclic graphs (DAGs) to represent sharing of common substructures. 
The tree above can have several optimised DAG representations, with the minimal one shown below:
\[
\begingroup%
  \makeatletter%
  \providecommand\color[2][]{%
    \errmessage{(Inkscape) Color is used for the text in Inkscape, but the package 'color.sty' is not loaded}%
    \renewcommand\color[2][]{}%
  }%
  \providecommand\transparent[1]{%
    \errmessage{(Inkscape) Transparency is used (non-zero) for the text in Inkscape, but the package 'transparent.sty' is not loaded}%
    \renewcommand\transparent[1]{}%
  }%
  \providecommand\rotatebox[2]{#2}%
  \newcommand*\fsize{\dimexpr\f@size pt\relax}%
  \newcommand*\lineheight[1]{\fontsize{\fsize}{#1\fsize}\selectfont}%
  \ifx\svgwidth\undefined%
    \setlength{\unitlength}{14.43683703bp}%
    \ifx\svgscale\undefined%
      \relax%
    \else%
      \setlength{\unitlength}{\unitlength * \real{\svgscale}}%
    \fi%
  \else%
    \setlength{\unitlength}{\svgwidth}%
  \fi%
  \global\let\svgwidth\undefined%
  \global\let\svgscale\undefined%
  \makeatother%
  \begin{picture}(1,3.91587742)%
    \lineheight{1}%
    \setlength\tabcolsep{0pt}%
    \put(0,0){\includegraphics[width=\unitlength,page=1]{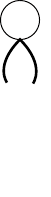}}%
    \put(0.230329,3.36822689){\makebox(0,0)[lt]{\lineheight{1.25}\smash{\begin{tabular}[t]{l}$1$\end{tabular}}}}%
    \put(0,0){\includegraphics[width=\unitlength,page=2]{dagmin.pdf}}%
    \put(0.23032825,1.80971527){\makebox(0,0)[lt]{\lineheight{1.25}\smash{\begin{tabular}[t]{l}$2$\end{tabular}}}}%
    \put(0,0){\includegraphics[width=\unitlength,page=3]{dagmin.pdf}}%
    \put(0.23535335,0.25124109){\makebox(0,0)[lt]{\lineheight{1.25}\smash{\begin{tabular}[t]{l}$3$\end{tabular}}}}%
    \put(0,0){\includegraphics[width=\unitlength,page=4]{dagmin.pdf}}%
  \end{picture}%
\endgroup%

\]
Note that the DAG has exactly the same paths as the tree it efficiently represents. 

The problem we now face is that DAGs are not an inductive data structure. 
Efficient algorithms on DAGs can be written functionally, but they are rather sophisticated (attribute grammars and higher-order abstract syntax are two ways of doing it). 

Here we will present a concrete solution to this problem based on string diagrams, then we will give a more general framework for such algorithms. 
Consider a freely generated PROP with signature morphisms $node_k : 2\to 1 $ for $k\in\Int$ and $emp:0\to 1$. 
Both the tree and the DAG used in the examples below can be represented int this category as string diagrams (if we tilt the graphs to the right to fit our graphical conventions). 

To save on notation overload we introduce the following auxiliary notation:
\[
f^{(k)}=f\otimes\cdots\otimes f\qquad (k\text{ times})
\]
As terms, they are
\begin{align*}
\text{Tree}& : emp^{(8)} \semic node_3^{(4)} \semic node_2^{(2)} \semic node_1\\
\text{DAG}& : emp^{(2)} \semic node_3 \semic node_2 \semic node_1
\end{align*}
noting that these representations are not unique. 
\begin{exercise}
Find alternative representations for the tree and the DAG used in the example. 
\end{exercise}
The key observation here is that we wrote morphisms in a form (sequentially composed tensors of signature elements and basic structural morphism) that can always be represented as a list of lists:
\begin{multline*}
\text{tree:} \bigr[[emp \semic emp \semic emp \semic emp \semic emp \semic emp \semic emp \semic emp] \semic \\
[node_3 \semic node_3 \semic node_3 \semic node_3] \semic 
[node_2 \semic node_2] \semic 
[node_1]\bigl]
\end{multline*}
and
\[
\text{DAG:} \bigl[
[emp \semic emp] \semic [node_3] \semic [node_2] \semic [node_1]
\bigr]
\]
respectively.
The inner lists are aggregated by tensor ($\otimes$) and the outer list by composition ($ \semic $). 
We call this presentation of the morphism a \emph{foliation}. 

Of course, a morphism can be presented in a different form, which does not fit this shape, for instance this tree:
\[
(((emp\otimes emp) \semic node_1)\otimes((emp\otimes emp) \semic node_2) \semic node_3
\]
Its foliation is
\[
emp^{(4)} \semic (node_1\otimes node_2) \semic node_3.
\]
The advantage of this representation is that now we can write a general inductive map function for DAGs presented as foliations, remembering that the language of a freely generated PROP also includes identity ($id$) and symmetry ($sym$). 
However, since a map function preserves the shape of the DAG, it simply preserve these structural morphisms.
\begin{lstlisting}
map' f (emp    :: xs) = emp :: (map' f xs)
map' f (node k :: xs) = node (f k) :: (map' f xs)
map' f (id     :: xs) = id :: (map' f xs)
map' f (sym    :: xs) = sym :: (map' f xs)
map' f []             = []

map f xs :: xss = (map' f xs) :: (map f xss) 
map f []        = []
\end{lstlisting}
If \emph{List.map} is the usual map on lists, the \emph{DAG.map} above can be written more compactly as
\begin{lstlisting}
lift f (node k) = node (f k)
lift f x        = x

Dag.map f = List.map (List.map (lift f))
\end{lstlisting}
Or, with a function composition operator defined as 
\begin{lstlisting}
f ; g x = g(f x)
\end{lstlisting}
simply
\begin{lstlisting}
Dag.map = lift ; List.map ; List.map
\end{lstlisting}

Let us consider some other common functions on such binary graphs and DAGs. 
Map preserves the shape of a DAG, and functions that transform the shape in a uniform way are also easy to implement. 
\emph{Mirroring} a binary tree, as an inductive data type is:
\begin{lstlisting}
flip Node n t t' = Node n (flip t') (flip t)
flip Empty       = Empty
\end{lstlisting}

On DAG foliations, we simply reverse every list in the foliation:
\begin{lstlisting}
DAG.flip = List.reverse ; List.map
\end{lstlisting}
where \emph{List.reverse} is the usual list reversing function. 
Also functions that modify a DAG locally, by replacing a node with another DAG are easy to implement;  
they are left as an exercise. 

Functions that ignore the shape of the DAG are also trivial, as the DAG can be flattened into a list. 
The structural elements (identities and symmetries) can be filtered out if needed.
Finding an element $n$ in a DAG can simply done by searching for it in the flattened list of lists:
\begin{lstlisting}
Dag.find n = List.concat ; List.find (node n) 
\end{lstlisting}

\begin{exercise}
Find the minimum element of a DAG presented as a foliation. 
\end{exercise}

However, not all algorithms are naturally suited when a DAG is presented as a foliation. 
The following is difficult, but not impossible. 
\begin{exercise}[difficult]
Define a datatype of expressions (integers, addition, multiplication) as a tree and write a simple evaluator. 
Consider the types of expressions with shared sub-expressions as a DAG and write an evaluator on foliations.
\end{exercise}

Many algorithms which are standard on graph representations seem harder to write efficiently or naturally on foliations, for example traversals with a prescribed order (e.g. prefix, infix, postfix), searching for sub-graphs, or computing isomorphisms. 
Applying the operational semantics is a perfect example of algorithm apparently unsuitable for foliations, which would benefit from the greater flexibility of the graph representation. 
Note here that the unsuitability is not in terms of expressiveness but in terms of efficiency: the token can be represented as a node in the graph and the rewriting rules can be represented in a suitable way. 
However, the traversal of the identities may require several steps in the list-like notation, instead of the single step in the conventional graph notation. 
In general, any edge may be represented by a number of identities and symmetries in the list-like notation, which is likely to be inefficient in time and space. 

\begin{remark}
Simple inductive algorithms on foliations are not to be seen as practical from the point of view of computation. 
The best way to think of them is as simple inductive (and executable) specifications on DAGs which are easier to formalise and use in proofs than working with DAGs as a native data structure. 
This feature particularly comes in handy in defining and proving complex program transformations such as (reverse) automatic differentiation or closure conversion. 
However, the computational overhead of computing in terms of foliation rather than on a direct implementation of graph using pointers has not been studied. 
\end{remark}

\subsection{Properties of foliations}
The following theorem is category theory folklore:
\begin{theorem}[Foliation]
Any morphism $f$ in a freely generated (strict) monoidal category can be presented as
\[
f = \bigotimes_{i_0}f_{1,i_0} \semic \cdots \semic \bigotimes_{i_n}f_{1,i_n}
\]
where $f_{j,k}$ are elements of the signature or structural morphisms (identities, symmetries if the category is symmetric, etc.). 
\end{theorem}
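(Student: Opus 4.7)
The plan is to proceed by structural induction on the term presentation of $f$ in the freely generated (strict) monoidal category, following Definition~\ref{def:freeMC}. The base cases are immediate: a generator $g \in \Sigma_1$ is already a one-layer foliation $g$; an identity $\id_L$ on a list $L = [A_1,\ldots,A_k]$ equals $\id_{[A_1]} \otimes \cdots \otimes \id_{[A_k]}$ by the equation $\id_{L_1 +\!\!+ L_2} = \id_{L_1} \otimes \id_{L_2}$, hence is a single-layer foliation of structural morphisms; and a symmetry $\sigma_{[A],[B]}$, when present, is already a single layer.

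For the inductive step on sequential composition, suppose $f$ and $g$ admit foliations $f = \bigotimes_{i} f_{1,i} \semic \cdots \semic \bigotimes_{i} f_{m,i}$ and $g = \bigotimes_{i} g_{1,i} \semic \cdots \semic \bigotimes_{i} g_{n,i}$. Then $f \semic g$ is foliated by concatenation, as associativity of $\semic$ makes the layer structure well defined. The delicate case is tensor. The key manoeuvre, which is the two-dimensional counterpart of the bracketing argument behind strictness, exploits the functoriality equation illustrated in Example~\ref{ex:funcmon}: a layer tensored with a layer is again a layer, because $(h_1 \otimes \cdots \otimes h_p) \otimes (k_1 \otimes \cdots \otimes k_q) = h_1 \otimes \cdots \otimes h_p \otimes k_1 \otimes \cdots \otimes k_q$ under strict associativity.

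To tensor two foliations of possibly different depths $m$ and $n$, first pad the shorter one with identity layers: replace $f$ by $f \semic \id_{B} \semic \cdots \semic \id_{B}$ where $B$ is the codomain of $f$ (and symmetrically for $g$ on its domain), until both have the same number of layers, say $N$. Each padded layer is itself a tensor of identities on the components of the list, so remains a valid layer. Then apply the functoriality equation layer-by-layer to obtain
\[
f \otimes g = \bigl(\bigotimes_i f_{1,i}\bigr) \otimes \bigl(\bigotimes_i g_{1,i}\bigr) \semic \cdots \semic \bigl(\bigotimes_i f_{N,i}\bigr) \otimes \bigl(\bigotimes_i g_{N,i}\bigr),
\]
each composite layer being the concatenation of the two tensor lists.

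The main obstacle, and the only step requiring real care, is the tensor case: one must verify that the padding-and-interleaving yields a term equal to $f \otimes g$ in the equational theory of strict monoidal categories, using only functoriality of $\otimes$ and the identity laws. The symmetric case adds no essential difficulty, because $\sigma_{[A],[B]}$ is declared a structural generator at the level of signature objects, so it slots into layers exactly like any other generator. Finally, the theorem can be stated up to the equations generating the free category, so the foliation need not be unique; uniqueness is neither claimed nor needed, which is why we properly speak of a quasi-normal form.
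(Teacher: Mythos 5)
Your proof is correct, but it takes a different route from the paper. The paper does not prove the foliation theorem directly by induction on terms: it states it as a corollary of the stronger \emph{maximally sequential} foliation theorem (Theorem~\ref{thm:msf}), whose proof is sketched combinatorially --- represent $f$ as a string diagram, pass to its concrete graph, topologically sort the nodes, and peel off one generator at a time as a layer $\id_{A_k}\otimes f_k\otimes\id_{A_k'}$, repeating until the graph is empty. Your argument instead stays entirely inside the term calculus: structural induction on a $\Sigma$-term representative, with concatenation of layer lists for $\semic$, and, for $\otimes$, padding the shallower foliation with identity layers (decomposed via $\id_{L_1 +\!\!+ L_2}=\id_{L_1}\otimes\id_{L_2}$) and then zipping layers together using the interchange law $(f\semic f')\otimes(g\semic g')=(f\otimes f')\semic(g\otimes g')$, which is exactly the delicate step and you handle it correctly; working on a representative and concluding only up to the equations of the free category is also the right reading of the statement. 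The trade-off is this: your induction is more elementary and self-contained (it needs no graph machinery, which the paper only develops in Section~\ref{sec:graphs}), but it yields only the coarse foliation, whereas the paper's topological-sort argument delivers the maximally sequential refinement --- one signature element per layer --- and doubles as a definability result connecting diagrams, graphs, and terms. If you wanted the stronger form from your proof, you would need one further pass, again by interchange, splitting each layer $\bigotimes_i f_{j,i}$ into a sequence of terms of the shape $\id\otimes f_{j,i}\otimes\id$.
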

This theorem is the corollary of a stricter version:
\begin{theorem}[Maximally sequential foliation]\label{thm:msf}
Any morphism $f$ in a freely generated (strict) monoidal category can be presented as
\[
f = id_{A_1}\otimes f_{1}\otimes id_{A_1'} \semic \cdots \semic id_{A_n}\otimes f_{n}\otimes id_{A_n'}
\]
where $f_k$ are elements of the signature or structural morphisms (identities, symmetries if the category is symmetric, etc.). 
\end{theorem}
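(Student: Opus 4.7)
The plan is to induct on the structure of $\Sigma$-terms as defined in Definition~\ref{def:freeMC} (extended with the symmetry clause from Section~\ref{sec:smc}), exploiting the quotient by the strict monoidal axioms, and producing at each step a decomposition into singleton layers of the required shape $\id_{A_k}\otimes f_k\otimes\id_{A_k'}$.

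For the base cases, a generator $f\colon L_1\to L_2$ from $\Sigma_1$ is already of the required form with $n=1$, $A_1=A_1'=[\,]$, and $f_1=f$; the identity $\id_{[A]}$ on a single generating object is a structural morphism and similarly fits the template with $n=1$; and $\id_{[\,]}$ corresponds to the empty composition ($n=0$), which is again admissible. In the symmetric case, the elementary swap $\sigma_{[A],[B]}$ on generators is likewise a structural morphism sitting in a single layer. The composition case is immediate: if $f = \bigcomp_{k=1}^m \id_{A_k}\otimes f_k\otimes\id_{A_k'}$ and $g = \bigcomp_{k=1}^{m'} \id_{B_k}\otimes g_k\otimes\id_{B_k'}$ are both in maximally sequential form (with matching boundary objects), then the concatenation of the two layer sequences gives $f\semic g$ in the same form.

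The genuine work is in the tensor case. The standard trick, which is an equation of strict monoidal categories (Example~\ref{ex:funcmon}), is
\[
f\otimes g \;=\; (f\otimes \id_{B}) \semic (\id_{A'}\otimes g),
\]
where $f\colon A\to A'$ and $g\colon B\to B'$. Given maximally sequential presentations of $f$ and $g$, I would substitute them into the two factors above: each layer $\id_{A_k}\otimes f_k\otimes\id_{A_k'}$ of $f$, tensored on the right with $\id_B$, remains a single-generator layer $\id_{A_k}\otimes f_k\otimes\id_{A_k'+\!+\,B}$, and dually on the other side. This uses only the equation $\id_{L_1+\!+\,L_2}=\id_{L_1}\otimes\id_{L_2}$ from Definition~\ref{def:freeMC} and functoriality of $\otimes$ to merge adjacent identity blocks. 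Concatenating the two resulting sequences yields the required form for $f\otimes g$.

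The subtle point I expect to be the main obstacle is the symmetric case, since a general $\Sigma$-term may involve a symmetry $\sigma_{L,M}$ on composite lists, which is \emph{not} a structural morphism in the sense of the theorem (only elementary swaps are). The remedy is Example~\ref{ex:butterfly}: the inductive definition of $\sigma_{\overline m,\overline n}$ from $\sigma_{\overline 1,\overline 1}$ already expresses any composite symmetry as a composition of tensors of identities with single elementary swaps, which is exactly a maximally sequential presentation. Formally, I would add this as an auxiliary lemma (by induction on the sizes of $L$ and $M$) and invoke it whenever the symmetry clause fires in the outer induction. With this in place the induction closes, and the looser \emph{Foliation} version follows as a corollary by bundling the singleton layers into arbitrary tensors.
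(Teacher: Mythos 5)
Your proof is correct, but it takes a genuinely different route from the paper. The paper does not induct on terms at all: it passes to the concrete graph representation of the string diagram, topologically sorts the nodes, and peels off one minimal node at a time, each peeled node giving a layer $\id_{A_k}\otimes f_k\otimes \id_{A_k'}$; equality with $f$ is then left as a "tedious but simple exercise". Your argument instead stays entirely on the syntactic side: structural induction on $\Sigma$-terms, with the tensor case resolved by the interchange-law decomposition $f\otimes g=(f\otimes\id)\semic(\id\otimes g)$ and identity-merging via $\id_{L_1+\!\!+ L_2}=\id_{L_1}\otimes\id_{L_2}$, and composite symmetries reduced to elementary swaps by the recursion of Example~\ref{ex:butterfly} (a precaution the paper's statement arguably makes unnecessary, since it admits symmetries wholesale as structural morphisms, but it does no harm). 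What each buys: your induction is self-contained, constructive, and needs none of the graph machinery of Section~\ref{sec:graphs}, so it could be stated and proved immediately after Definition~\ref{def:freeMC}; the paper's graph-theoretic proof is chosen deliberately because it doubles as a definability result (every well-formed graph can be read back as a term) and because the non-uniqueness of the topological sort explains the non-uniqueness of foliations, points the surrounding text relies on. Note also that the paper later remarks that deriving foliations "using terms seems extremely difficult to formulate directly"; your proof shows that for the plain (symmetric) monoidal case this is overly pessimistic, though the graph route scales more gracefully to the hierarchical setting where the paper reuses it.
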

\begin{proof} 
We sketch the proof below.

First we write $f$ as a string diagram, and we consider its concrete graph representation, which we topologically sort. 
We take the smallest node in the graph and call it $f_1$. 
Since this is a topological sort $f_1$ is not connected on the left to any other node but directly to the input interface. 
The interface itself is a list and we take the type of the edges below the input edges of $f_1$ to be $A_1$ and those above to be $A_1'$.
Note that edges do not cross (in braided or symmetric categories the braiding or the symmetry are non-identity morphisms). 
We have obtained the first term of the foliation, $id_{A_1}\otimes f_{1}\otimes id_{A_1'}$. 
Repeat the procedure for the smaller graph obtained by removing $f_1$ and its incoming edges, and adding its outgoing edges to the input interface.
Continue until the graph is empty. 
The result is the foliation. Proving that this foliation is equal to $f$ is a tedious but simple exercise. 
\end{proof}
\\[1.5ex]
Note that since the topological sort is not unique, the foliation is also not unique.

\subsection{Further reading and related work}

Among the surveys on string diagrammatic reasoning, we mention three: \cite{PiedeleuZanasi2023} is aimed at computer scientists, and present contemporary applications of string diagrams in several neighbouring fields; \cite{Selinger2011} offers a comprehensive catalogue of string diagrams for many different categorical structures, focussing on which geometric principles govern their equivalence; finally, \cite{baez2010physics} proposes a conceptual and high-level understanding of string diagrams in the context of physics, logic, topology and computer science.
The particular style of presentation we rely on, that of `functorial boxes' is due to \cite{DBLP:conf/csl/Mellies06}---although we have to point out that, somewhat curiously, this paper stops short of applying the technique to adjunctions, which is where we found it most useful. 
These papers together provide the best foundations for understanding the philosophy, motivation, and some of the technical details in this section. 
They were also an important source of inspiration for the first author in gaining a personal appreciation of the beauty and power of string diagrams. 
Also inspirational, although, as it will be seen in the next section, without an immediate technical connection, is \cite{coeckek}, which is a culmination of a series of papers starting with \cite{DBLP:conf/lics/AbramskyC04} which have firmly established diagrammatic methods as a serious and useful syntax for monoidal categories and their applications. 

The diagrammatic notation presented here is not the only kind of string diagrams out there. 
Unrelated, other than in spirit, are string diagram calculi using two \cite{marsden2014category} and higher dimensional structures \cite{DBLP:conf/lics/BarV17}. 
This should alert the reader to the fact that the term `string diagram' is rather broad and may cause some confusion. 

One important possible confusion worth explicating is the conflation of the term `string diagram' as a two-dimensional syntax with its graph representation, as a combinatorial mathematical object or concrete data structure. 
The distinction between the two is worth making, as the rest of this tutorial will hopefully show. 
The second author's research work on rewriting techniques used to formalise the equational theory of string diagrams represented as graphs insists on this distinction, as seen for example in \cite{DBLP:conf/lics/BonchiGKSZ16}. 

For monoidal categories we need to mention the original papers which introduce a graphical notation in the style of string diagrams \cite{JOYAL199155,joyal1995geometry}. 
Since these general monoidal categories are not presumed to be symmetric the notion of equivalence is not isomorphism but `homotopy', i.e. smooth deformations of the substrate in which the diagrams are embedded.
Symmetry is also considered in the original papers, with the interesting observation that isomorphism can be reduced to isotopy by increasing the number of dimensions of the substrate, i.e. the isomorphism of a 2-dimensional structure can be seen as isotopy in a 3-dimensional space. 

The original papers, and virtually the entire literature that follows, operate in the context of strict monoidal categories. 
The issue of non-strict categories is rather summarily dismissed: ``\emph{In principle, most results obtained with the hypothesis that a tensor category is strict can be reformulated and proved without this condition.}'' (p. 59, \emph{loc. cit.}).  
Mathematically this is perhaps true, but the details are clearly worth spelling out if we are to seriously consider string diagrams as a formal syntax. 
This careful specification of strictification for string diagrams is done in \cite{wilson}. 

\newpage

\section{Hierarchical String Diagrams and the $\lambda$ Calculus}
\label{sec:hsdcmc}

A \emph{symmetric closed monoidal category} (SCMC) is the essential mathematical structure for the interpretation of higher-order computation. 
It is the foundation on which models of $\lambda$ calculi are constructed. 
\begin{definition}[Symmetric closed monoidal category]
A symmetric monoidal category $\category C$ is \emph{closed} if for all objects $X$ the tensor product functor $F_X(A)=X\otimes A: \category  C\to\category C$ has a right adjoint functor $G_X(A)$, usually written as $X\multimap A$:
\[
X\otimes A \quad \dashv \quad X\multimap -. 
\]
\end{definition}
Objects $A\multimap B$ are called \emph{exponential objects} and are sometimes also written as $B^A$. 

We use the functorial boxes and adjunction notations to represent all the equations entailed by this definition, first noting that the functor $F_X$ can be represented explicitly:
\[
\begingroup%
  \makeatletter%
  \providecommand\color[2][]{%
    \errmessage{(Inkscape) Color is used for the text in Inkscape, but the package 'color.sty' is not loaded}%
    \renewcommand\color[2][]{}%
  }%
  \providecommand\transparent[1]{%
    \errmessage{(Inkscape) Transparency is used (non-zero) for the text in Inkscape, but the package 'transparent.sty' is not loaded}%
    \renewcommand\transparent[1]{}%
  }%
  \providecommand\rotatebox[2]{#2}%
  \newcommand*\fsize{\dimexpr\f@size pt\relax}%
  \newcommand*\lineheight[1]{\fontsize{\fsize}{#1\fsize}\selectfont}%
  \ifx\svgwidth\undefined%
    \setlength{\unitlength}{122.1547755bp}%
    \ifx\svgscale\undefined%
      \relax%
    \else%
      \setlength{\unitlength}{\unitlength * \real{\svgscale}}%
    \fi%
  \else%
    \setlength{\unitlength}{\svgwidth}%
  \fi%
  \global\let\svgwidth\undefined%
  \global\let\svgscale\undefined%
  \makeatother%
  \begin{picture}(1,0.33959873)%
    \lineheight{1}%
    \setlength\tabcolsep{0pt}%
    \put(0,0){\includegraphics[width=\unitlength,page=1]{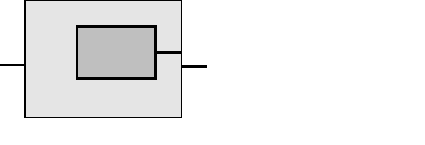}}%
    \put(0.25052417,0.20176133){\color[rgb]{0,0,0}\makebox(0,0)[lt]{\lineheight{1.25}\smash{\begin{tabular}[t]{l}$f$\end{tabular}}}}%
    \put(0.53314016,0.19395108){\color[rgb]{0,0,0}\makebox(0,0)[lt]{\lineheight{1.25}\smash{\begin{tabular}[t]{l}$=$\end{tabular}}}}%
    \put(0.74625576,0.011544){\color[rgb]{0,0,0}\makebox(0,0)[lt]{\lineheight{1.25}\smash{\begin{tabular}[t]{l}$\id_X$\end{tabular}}}}%
    \put(0.06062668,0.09224669){\color[rgb]{0,0,0}\makebox(0,0)[lt]{\lineheight{1.25}\smash{\begin{tabular}[t]{l}$F_X$\end{tabular}}}}%
    \put(0,0){\includegraphics[width=\unitlength,page=2]{fclosed.pdf}}%
    \put(0.77240322,0.20176115){\color[rgb]{0,0,0}\makebox(0,0)[lt]{\lineheight{1.25}\smash{\begin{tabular}[t]{l}$f$\end{tabular}}}}%
    \put(0,0){\includegraphics[width=\unitlength,page=3]{fclosed.pdf}}%
  \end{picture}%
\endgroup%

\]
The equations are shown in Figure~\ref{fig:adjclosed}: note that in order to express these equations rigorously, in particular the last one, the availability of explicit strictification and de-strictification comes in handy. 
The co-unit of the adjunction is usually called \emph{eval} (evaluation) so, for the sake of symmetry, we will call the unit \emph{co-eval}. 
\begin{figure} 
\begin{subfigure}[t]{\linewidth}
\[
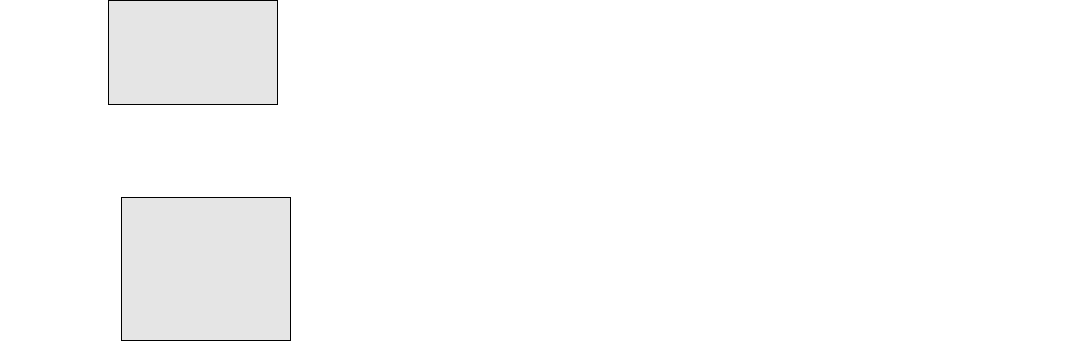
\]
\caption{Naturality of unit and co-unit}
\end{subfigure}

\begin{subfigure}[t]{\linewidth}
\[
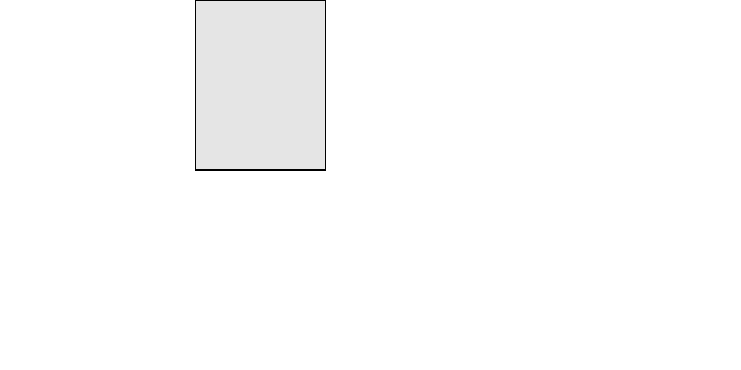
\]
\caption{Unit and co-unit equations}
\end{subfigure}

\caption{Symmetric closed monoidal adjunction equations}\label{fig:adjclosed}
\end{figure}
As we shall see both in models of $\lambda$ calculi and in the lemma below, a construction that proves to be useful is the following. 
\begin{definition}\label{def:lambdanotation}
For any morphism $f:X\otimes A\to Y$ we define its \emph{abstraction} as the morphism $\Lambda_X(f):A\to X\multimap Y$, denoted graphically as below:
\[
\begingroup%
  \makeatletter%
  \providecommand\color[2][]{%
    \errmessage{(Inkscape) Color is used for the text in Inkscape, but the package 'color.sty' is not loaded}%
    \renewcommand\color[2][]{}%
  }%
  \providecommand\transparent[1]{%
    \errmessage{(Inkscape) Transparency is used (non-zero) for the text in Inkscape, but the package 'transparent.sty' is not loaded}%
    \renewcommand\transparent[1]{}%
  }%
  \providecommand\rotatebox[2]{#2}%
  \newcommand*\fsize{\dimexpr\f@size pt\relax}%
  \newcommand*\lineheight[1]{\fontsize{\fsize}{#1\fsize}\selectfont}%
  \ifx\svgwidth\undefined%
    \setlength{\unitlength}{255.22729185bp}%
    \ifx\svgscale\undefined%
      \relax%
    \else%
      \setlength{\unitlength}{\unitlength * \real{\svgscale}}%
    \fi%
  \else%
    \setlength{\unitlength}{\svgwidth}%
  \fi%
  \global\let\svgwidth\undefined%
  \global\let\svgscale\undefined%
  \makeatother%
  \begin{picture}(1,0.13370439)%
    \lineheight{1}%
    \setlength\tabcolsep{0pt}%
    \put(0,0){\includegraphics[width=\unitlength,page=1]{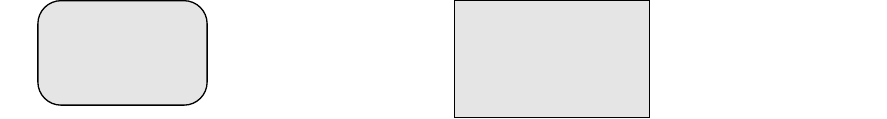}}%
    \put(0.36132628,0.05285488){\color[rgb]{0,0,0}\makebox(0,0)[lt]{\lineheight{1.25}\smash{\begin{tabular}[t]{l}$:=$\end{tabular}}}}%
    \put(0.52202131,0.08751242){\color[rgb]{0,0,0}\makebox(0,0)[lt]{\lineheight{1.25}\smash{\begin{tabular}[t]{l}$X\otimes A$\end{tabular}}}}%
    \put(0,0){\includegraphics[width=\unitlength,page=2]{lambda.pdf}}%
    \put(0.41031137,0.07539009){\color[rgb]{0,0,0}\makebox(0,0)[lt]{\lineheight{1.25}\smash{\begin{tabular}[t]{l}$A$\end{tabular}}}}%
    \put(0.74758947,0.07485669){\color[rgb]{0,0,0}\makebox(0,0)[lt]{\lineheight{1.25}\smash{\begin{tabular}[t]{l}$X\multimap Y$\end{tabular}}}}%
    \put(0,0){\includegraphics[width=\unitlength,page=3]{lambda.pdf}}%
    \put(0.52251548,0.00994162){\color[rgb]{0,0,0}\makebox(0,0)[lt]{\lineheight{1.25}\smash{\begin{tabular}[t]{l}$X{\multimap}$\end{tabular}}}}%
    \put(0,0){\includegraphics[width=\unitlength,page=4]{lambda.pdf}}%
    \put(0.65285741,0.06641058){\color[rgb]{0,0,0}\makebox(0,0)[lt]{\lineheight{1.25}\smash{\begin{tabular}[t]{l}$f$\end{tabular}}}}%
    \put(0.0463733,0.04933516){\color[rgb]{0,0,0}\makebox(0,0)[lt]{\lineheight{1.25}\smash{\begin{tabular}[t]{l}$X$\end{tabular}}}}%
    \put(0.19731807,0.08706917){\color[rgb]{0,0,0}\makebox(0,0)[lt]{\lineheight{1.25}\smash{\begin{tabular}[t]{l}$Y$\end{tabular}}}}%
    \put(0,0){\includegraphics[width=\unitlength,page=5]{lambda.pdf}}%
    \put(0.04871893,0.09975812){\color[rgb]{0,0,0}\makebox(0,0)[lt]{\lineheight{1.25}\smash{\begin{tabular}[t]{l}$A$\end{tabular}}}}%
    \put(0.24803472,0.07522198){\color[rgb]{0,0,0}\makebox(0,0)[lt]{\lineheight{1.25}\smash{\begin{tabular}[t]{l}$X\multimap Y$\end{tabular}}}}%
    \put(0,0){\includegraphics[width=\unitlength,page=6]{lambda.pdf}}%
    \put(0.13860983,0.06641041){\color[rgb]{0,0,0}\makebox(0,0)[lt]{\lineheight{1.25}\smash{\begin{tabular}[t]{l}$f$\end{tabular}}}}%
    \put(0,0){\includegraphics[width=\unitlength,page=7]{lambda.pdf}}%
  \end{picture}%
\endgroup%

\]
\end{definition}
Note that the rounded box with a wire attached to the edge is merely a graphical convention, and the abstraction box only has a precise meaning as expanded according to its definition. 
Most importantly, it is not a functorial box. 

A useful property of abstraction is that it can `swallow' other abstractions it is sequenced with:
\begin{lemma}\label{lem:sbstlam0}
\[
f;\Lambda(g) = \Lambda\bigl((f\otimes id);g\bigr).
\]
\end{lemma}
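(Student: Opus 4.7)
The plan is to unfold $\Lambda(g)$ via the co-eval unit $\eta$ of the closed monoidal adjunction, then slide $f$ through the resulting functorial box using naturality of $\eta$, and finally reassemble the right-hand side using functoriality of $X \multimap -$. Diagrammatically, this is exactly the ``sliding a morphism into an abstraction'' move familiar from $\beta$/$\eta$-style reasoning on thunks: the wire carrying the input to $\Lambda$ is a region over which $f$ can migrate at the cost of entering the $X \multimap -$ box.

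First I would rewrite the right-hand side of the definition (Definition~\ref{def:lambdanotation}) as
\[
\Lambda(g) = \eta_A \semic (X \multimap g),
\]
where $\eta_A : A \to X \multimap (X \otimes A)$ is the unit of the adjunction $X \otimes - \dashv X \multimap -$. Taking $f:A'\to A$, this yields
\[
f \semic \Lambda(g) = f \semic \eta_A \semic (X \multimap g).
\]
Now apply naturality of $\eta$ (first panel of Figure~\ref{fig:adjclosed}), which gives
\[
f \semic \eta_A \;=\; \eta_{A'} \semic \bigl(X \multimap (\id_X \otimes f)\bigr).
\]
Substituting and then collapsing the two adjacent $X \multimap -$ boxes by functoriality of the right adjoint,
\[
f \semic \Lambda(g) \;=\; \eta_{A'} \semic \bigl(X \multimap (\id_X \otimes f)\bigr) \semic (X \multimap g) \;=\; \eta_{A'} \semic \bigl(X \multimap ((f \otimes \id) \semic g)\bigr),
\]
which is precisely $\Lambda\bigl((f \otimes \id) \semic g\bigr)$ by the definition of abstraction.

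The only subtlety -- and what passes for the ``hard part'' of an otherwise routine derivation -- is a bookkeeping issue about which side of the tensor carries the bound wire: the formula in the statement writes $f \otimes \id$ whereas the naturality calculation produces $\id_X \otimes f$. Diagrammatically these are the same morphism (the bound-variable wire of $\Lambda$ is drawn as the leftmost strand, and in the two-dimensional syntax there is no ordering choice to make), but at the level of linear notation this is a convention to fix once and for all. Given how visually immediate the sliding move is, I would expect that presenting the proof purely as a string-diagrammatic chain of equalities -- unfold $\Lambda$, apply naturality, recombine functorial boxes, refold $\Lambda$ -- is considerably more transparent than the equational version above, and in fact absorbs the $f \otimes \id$ versus $\id \otimes f$ bookkeeping directly into the picture.
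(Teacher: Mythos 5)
Your proof is correct and takes essentially the same route as the paper's: unfold $\Lambda(g)$ by definition into the co-eval followed by the $X\multimap$ functorial box, slide $f$ across by naturality of the co-eval, merge the two functorial boxes by functoriality, and refold the definition --- the paper's diagrammatic chain is labelled exactly (definition), (nat.\ coeval), (functor comp.), (definition). Your remark about $f\otimes\id$ versus $\id\otimes f$ is only the convention for which side of the tensor carries the bound wire, which the diagrammatic form of the statement indeed absorbs.
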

\begin{proof}
\[
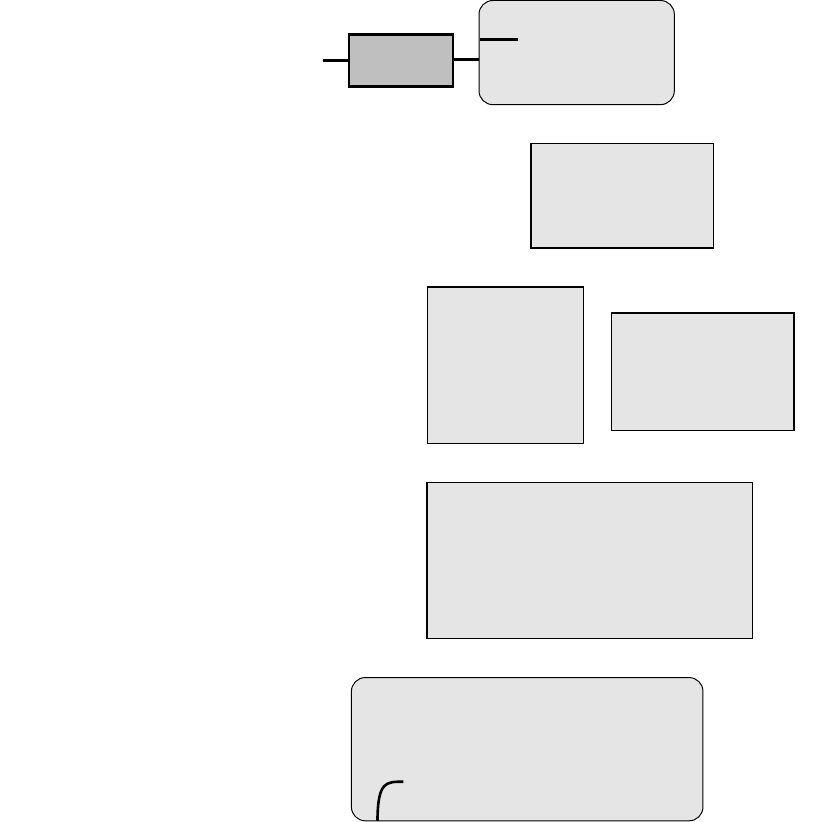
\]
\end{proof}

The following is the defining property of SCMCs.
\begin{lemma}
In a SCMC $\mathcal C$
\[ hom_{\category C}(X\otimes Y,Z)\simeq hom_{\category C}(X, Y\multimap Z)\]
 and 
 \[ X\otimes Y\multimap Z\simeq X\multimap (Y\multimap Z).\] 
\end{lemma}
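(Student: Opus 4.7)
The first isomorphism is immediate from the setup. By definition of SCMC, for each $Y$ we have the adjunction $Y \otimes - \dashv Y \multimap -$ presented in unit/co-unit form (via $\mathit{eval}$ and co-$\mathit{eval}$). We already showed in Section~\ref{sec:adjunctions} that any unit/co-unit adjunction (Definition~\ref{def:adj}) induces a homset adjunction (Definition~\ref{def:homset}), and verified one direction diagrammatically. Instantiating this with $F = Y \otimes -$ and $G = Y \multimap -$, we immediately obtain the natural bijection
\[ hom_{\category C}(X \otimes Y, Z) \simeq hom_{\category C}(X, Y \multimap Z). \]
Concretely, the bijection sends $f : X \otimes Y \to Z$ to $\Lambda_Y(f \semic \sigma) : X \to Y \multimap Z$ using the abstraction notation from Definition~\ref{def:lambdanotation} (with a symmetry to account for the order convention).

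For the second isomorphism, the plan is to apply the first one three times in succession, together with the associativity of $\otimes$, and then invoke Yoneda's lemma. First I would compute, for an arbitrary test object $W$,
\begin{align*}
hom_{\category C}(W,\, (X \otimes Y) \multimap Z) &\simeq hom_{\category C}((X \otimes Y) \otimes W,\, Z) \\
&\simeq hom_{\category C}(X \otimes (Y \otimes W),\, Z) \\
&\simeq hom_{\category C}(Y \otimes W,\, X \multimap Z) \\
&\simeq hom_{\category C}(W,\, Y \multimap (X \multimap Z)).
\end{align*}
Each step uses either the first part of this lemma or the associator $\alpha$. Since this chain of bijections is natural in $W$, the Yoneda lemma yields
\[ (X \otimes Y) \multimap Z \simeq Y \multimap (X \multimap Z). \]
To get the form stated in the lemma, $X \multimap (Y \multimap Z)$, I would pre-compose with the symmetry $\sigma_{X,Y}$ (from Section~\ref{sec:smc}) applied inside the exponential, using the functoriality of $- \multimap Z$.

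The main obstacle is not conceptual but bookkeeping: in a non-strict setting the associators $\alpha$ clutter the computation, and one must check that the composite isomorphism assembled from the three homset bijections plus $\alpha$ is indeed natural in $W$ before invoking Yoneda. This is precisely the situation where strictification (Section~\ref{sec:strictification}) pays off --- once we represent the composite in the strict category $\overline{\category C}$, the associators become identities, and the string-diagrammatic manipulation reduces to sliding the purple $F_{X \otimes Y}$-box through a $\psi^*$ strictifier and recognising the result as two nested $F_X, F_Y$ boxes, after which the adjunction equations of Figure~\ref{fig:adjclosed} finish the job. I would then transport this back across the strictification equivalence to conclude in $\category C$.
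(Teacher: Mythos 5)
Your proposal is correct, but it takes a genuinely different route from the paper. The paper proves the second isomorphism \emph{constructively}: it exhibits the currying and uncurrying morphisms explicitly as string diagrams (built from the abstraction box of Definition~\ref{def:lambdanotation}, evaluation, symmetry and strictifiers), and then verifies diagrammatically that one composite is the identity (Figure~\ref{fig:curryproof}), via Lemma~\ref{lem:sbstlam0}, naturality of the unit, unit/co-unit cancellation, symmetry cancellation and strictifier cancellation, leaving the remaining verifications as an exercise. You instead argue representably: instantiate the unit/co-unit-to-homset result of Section~\ref{sec:adjunctions} to get $hom_{\category C}(X\otimes Y,Z)\simeq hom_{\category C}(X,Y\multimap Z)$, then chain three such bijections with the associator, check naturality in the test object $W$, and invoke Yoneda, fixing the order of $X$ and $Y$ with a symmetry inside the exponential. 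Your argument is shorter and delivers both directions and naturality essentially for free, at the cost of being non-explicit: it does not produce the concrete curry/uncurry diagrams that the paper constructs and reuses (e.g.\ in Figure~\ref{fig:sdintro}), and it quietly relies on the contravariant functoriality of $-\multimap Z$ (or an extra Yoneda-style step with $\sigma_{X,Y}$), which the paper never develops explicitly but which does follow from the adjunction. Conversely, the paper's computation is exactly the kind of diagrammatic exercise the tutorial is trying to showcase, which your abstract argument bypasses. One small slip: in your description of the first bijection, $f:X\otimes Y\to Z$ should be sent to $\Lambda_Y(\sigma_{Y,X}\semic f)$, not $\Lambda_Y(f\semic\sigma)$ --- with the paper's left-to-right $\semic$, the symmetry must be applied \emph{before} $f$ for the types to match Definition~\ref{def:lambdanotation}.
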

\begin{proof}
The two isomorphisms are:
\[
\begingroup%
  \makeatletter%
  \providecommand\color[2][]{%
    \errmessage{(Inkscape) Color is used for the text in Inkscape, but the package 'color.sty' is not loaded}%
    \renewcommand\color[2][]{}%
  }%
  \providecommand\transparent[1]{%
    \errmessage{(Inkscape) Transparency is used (non-zero) for the text in Inkscape, but the package 'transparent.sty' is not loaded}%
    \renewcommand\transparent[1]{}%
  }%
  \providecommand\rotatebox[2]{#2}%
  \newcommand*\fsize{\dimexpr\f@size pt\relax}%
  \newcommand*\lineheight[1]{\fontsize{\fsize}{#1\fsize}\selectfont}%
  \ifx\svgwidth\undefined%
    \setlength{\unitlength}{276.61175193bp}%
    \ifx\svgscale\undefined%
      \relax%
    \else%
      \setlength{\unitlength}{\unitlength * \real{\svgscale}}%
    \fi%
  \else%
    \setlength{\unitlength}{\svgwidth}%
  \fi%
  \global\let\svgwidth\undefined%
  \global\let\svgscale\undefined%
  \makeatother%
  \begin{picture}(1,0.46229048)%
    \lineheight{1}%
    \setlength\tabcolsep{0pt}%
    \put(0,0){\includegraphics[width=\unitlength,page=1]{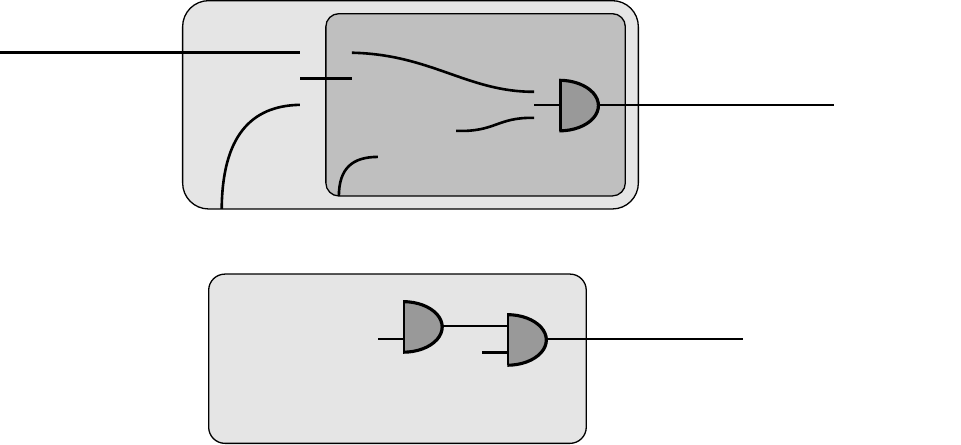}}%
    \put(-0.00001238,0.4195794){\color[rgb]{0,0,0}\makebox(0,0)[lt]{\lineheight{1.25}\smash{\begin{tabular}[t]{l}$X\otimes Y\multimap Z$\end{tabular}}}}%
    \put(0.616139,0.12182792){\color[rgb]{0,0,0}\makebox(0,0)[lt]{\lineheight{1.25}\smash{\begin{tabular}[t]{l}$X\otimes Y\multimap Z$\end{tabular}}}}%
    \put(0,0){\includegraphics[width=\unitlength,page=2]{curryuncurry.pdf}}%
    \put(0.67037599,0.36301214){\color[rgb]{0,0,0}\makebox(0,0)[lt]{\lineheight{1.25}\smash{\begin{tabular}[t]{l}$X\multimap (Y\multimap Z)$\end{tabular}}}}%
    \put(-0.00164166,0.14996143){\color[rgb]{0,0,0}\makebox(0,0)[lt]{\lineheight{1.25}\smash{\begin{tabular}[t]{l}$X\multimap (Y\multimap Z)$\end{tabular}}}}%
    \put(0.24229136,0.25188304){\color[rgb]{0,0,0}\makebox(0,0)[lt]{\lineheight{1.25}\smash{\begin{tabular}[t]{l}$X$\end{tabular}}}}%
    \put(0.37319572,0.26340025){\color[rgb]{0,0,0}\makebox(0,0)[lt]{\lineheight{1.25}\smash{\begin{tabular}[t]{l}$Y$\end{tabular}}}}%
    \put(0.38598006,0.0792459){\color[rgb]{0,0,0}\makebox(0,0)[lt]{\lineheight{1.25}\smash{\begin{tabular}[t]{l}$X$\end{tabular}}}}%
    \put(0.49156725,0.06552695){\color[rgb]{0,0,0}\makebox(0,0)[lt]{\lineheight{1.25}\smash{\begin{tabular}[t]{l}$Y$\end{tabular}}}}%
    \put(0,0){\includegraphics[width=\unitlength,page=3]{curryuncurry.pdf}}%
    \put(0.22291782,0.01566381){\color[rgb]{0,0,0}\makebox(0,0)[lt]{\lineheight{1.25}\smash{\begin{tabular}[t]{l}$X{\otimes}Y$\end{tabular}}}}%
    \put(0,0){\includegraphics[width=\unitlength,page=4]{curryuncurry.pdf}}%
  \end{picture}%
\endgroup%

\]
The proof that the first morphism composed to the second is the identity is sketched in Figure~\ref{fig:curryproof}. 
The steps are:
\begin{enumerate}
\item the initial morphism composition,
\item\label{step2} apply Lemma~\ref{lem:sbstlam0},
\item elaborate the definition of abstraction,
\item naturality of unit,
\item naturality of unit, 
\item unit/co-unit cancellation, 
\item unit/co-unit cancellation,
\item symmetry cancellation,
\item strictification/de-strictification cancellation,
\item unit and co-unit cancellation.
\end{enumerate}
The diagrammatic redexes involved in a particular step are highlighted in red. 

The details of verifying all the other details that these are indeed natural isomorphisms are left as an exercise.
\end{proof}

\begin{figure}
\[
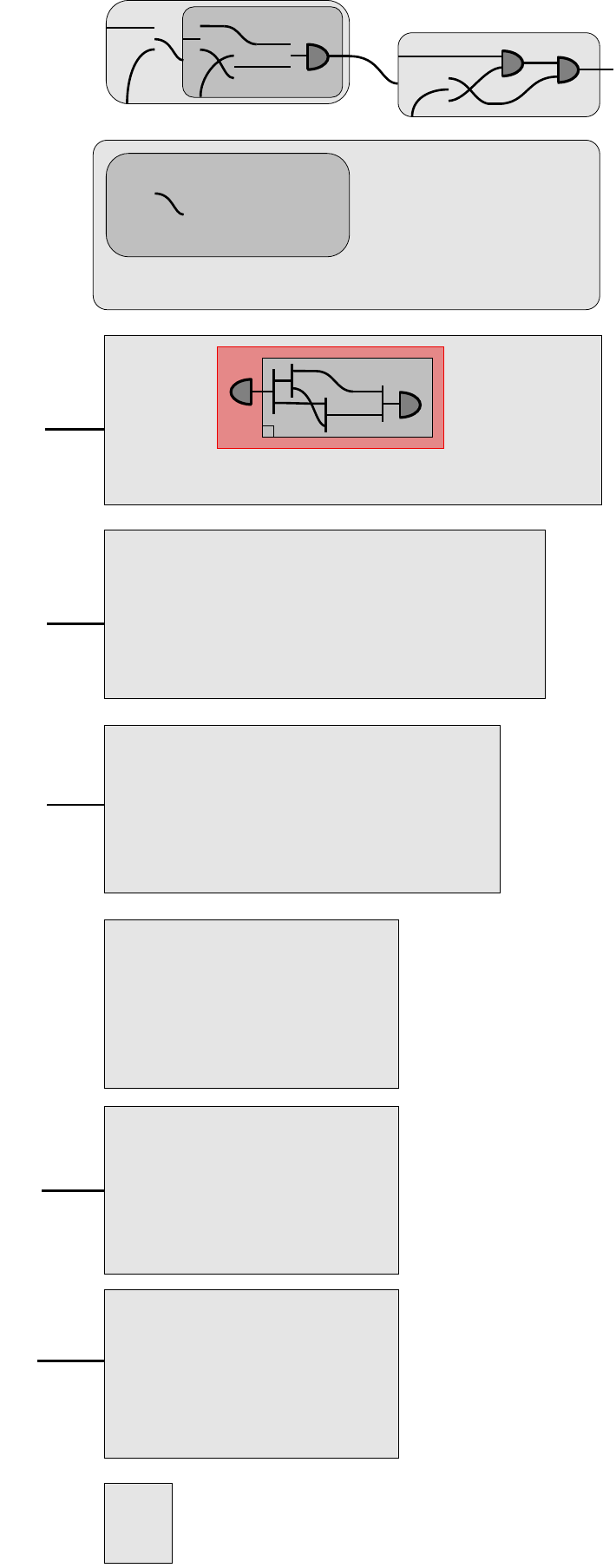
\]
\caption{Proof of identity of composition}
\label{fig:curryproof}
\end{figure}

{
The notion of a category freely generated from a signature, as seen for monoidal and symmetric monoidal strict monoidal categories in Section~\ref{sec:sdssmc}, can be adapted to the case of SCMCs. A difference is that, in order to properly express the conditions of closedness, we need a more fine grained representation for the objects: because of the presence of the operator $\multimap$, objects cannot be abstractly represented as lists of generators. 

\begin{definition}[Freely Generated Symmetric Monoidal Closed Category] \label{def:freeSCMC} A \emph{closed monoidal signature} is a pair $\Sigma = (\Sigma_0,\Sigma_1)$ where $\Sigma_0$ is a set of generating objects and $\Sigma_1$ a set of generating morphisms with sources and targets elements of the set $\mathit{obj}_{\Sigma_0}$, defined inductively as follows:
\begin{itemize}
	\item A designated object $I$ (the unit) and each object in $\Sigma_0$ is in $\mathit{obj}_{\Sigma_0}$.
	\item If $A_1$, $A_2$ are in $\mathit{obj}_{\Sigma_0}$, then $A_1 \otimes A_2$ and $A_1 \multimap A_2$ are in  $\mathit{obj}_{\Sigma_0}$.
\end{itemize}
The closed monoidal $\Sigma$-terms are defined inductively as follows:
\begin{itemize}
\item All morphisms $f \colon A_1 \to A_2$ in $\Sigma_1$ are $\Sigma$-terms.
\item If $A$ and $B$ are in $\mathit{obj}_{\Sigma_0}$, then the identity $\id_{A} \colon A \to A$ and symmetry $\sigma_{A,B} \colon {A \otimes B} \to B \otimes A$ are $\Sigma$-terms. 
\item If $f \colon A_1 \to A_2$, $g \colon A_2 \to A_3$ are $\Sigma$-terms, then $f ; g \colon A_1 \to A_3$ is a $\Sigma$-term. 
\item If $f \colon A_1 \to A_2$, $g \colon A_3 \to A_4$ are $\Sigma$-terms, then $f \otimes g \colon A_1 \otimes A_3 \to A_2 \otimes A_4$ is a $\Sigma$-term. 
\item If $X$ and $A$ are in $\mathit{obj}_{\Sigma_0}$, then the `evaluation' map $eval_{X,A} \colon ((X \multimap A) \otimes X) \to A$ is a $\Sigma$-term, represented graphically in the same way as the unit of the adjunction.
\item If $h \colon X \otimes A \to Y$ is a $\Sigma$-term, then its `abstraction' $\Lambda_X(h) \colon A \to (X \multimap Y)$ is a $\Sigma$-term, represented graphically as in Definition \ref{def:lambdanotation}. 
\end{itemize}
The closed monoidal category $\category C$ freely generated by $\Sigma$ is defined as having objects $\mathit{obj}(\category C) := \mathit{obj}_{\Sigma_0}$ and morphisms  $hom(\category C)$ the $\Sigma$-terms quotiented by:
\begin{itemize}
	\item the equations of symmetric monoidal categories, where $\otimes$ acts as the monoidal product, and $I$ acts as the identity for $\otimes$;
	\item the following three equations, describing the behaviour of evaluation and abstraction (where we adopt the notation of Definition~\ref{def:lambdanotation} for $\Lambda_X(h)$):
\[
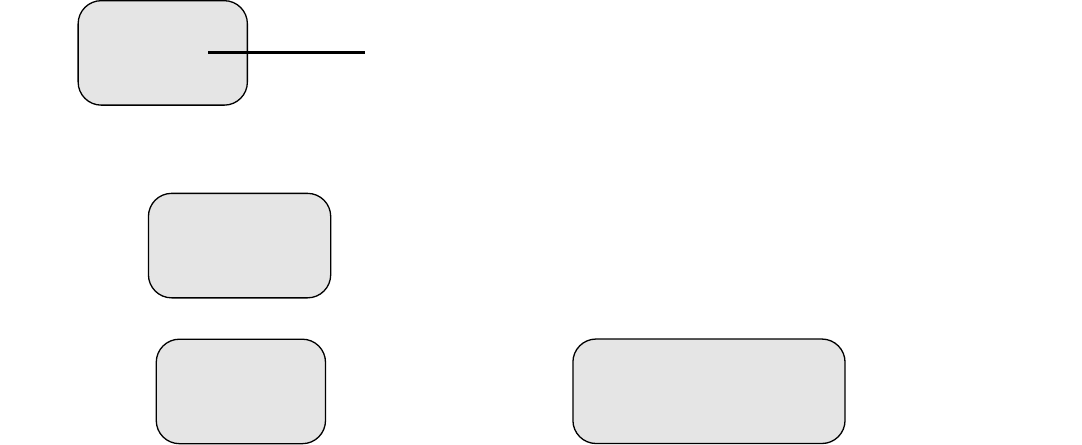
\]
\end{itemize}
\end{definition} 
Observe that the last equation is the same equation proven in Lemma~\ref{lem:sbstlam0}.

Considering SCMCs were introduced as SMCs with an adjunction, it is not completely obvious why the category $\category C$ obtained in Definition~\ref{def:freeSCMC} would meet the requirements of being monoidal closed. It is a pleasant exercise in diagrammatic reasoning to verify this fact. 

\begin{lemma}
	Let $\category C$ be freely generated from a signature $\Sigma = (\Sigma_0,\Sigma_1)$ as in Definition~\ref{def:freeSCMC}. Then $\category C$ is a symmetric monoidal closed category.
\end{lemma}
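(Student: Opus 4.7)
The plan is to establish the two required aspects separately. The symmetric monoidal structure is immediate from the construction: the generators and equations listed already include exactly those needed for identities, composition, symmetry, and tensor, quotiented by the SMC axioms. What remains is closedness, i.e.\ exhibiting for each object $X$ a right adjoint to $X \otimes -$; the obvious candidate is $X \multimap -$, and I would establish the adjunction via the homset formulation of Definition~\ref{def:homset}, since the three closedness equations in Definition~\ref{def:freeSCMC} essentially state the beta, eta, and substitution rules needed for the bijection to work.

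First I would extend $A \mapsto X \multimap A$ to a functor. Given $f \colon A \to B$, define
\[ X \multimap f \;:=\; \Lambda_X\bigl(\sigma_{X,\, X \multimap A} \semic \mathit{eval}_{X,A} \semic f\bigr) \colon X \multimap A \to X \multimap B. \]
Functoriality $X \multimap \id_A = \id_{X \multimap A}$ follows directly from equation (2) of Definition~\ref{def:freeSCMC} (with the symmetries cancelling via $\sigma \semic \sigma = \id$), and $X \multimap (f \semic g) = (X \multimap f) \semic (X \multimap g)$ unfolds by pulling $X \multimap f$ inside the outer abstraction using equation (3), collapsing the intermediate $\Lambda$--$\mathit{eval}$ pair using equation (1), and reassociating with the SMC equations.

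Next I would exhibit the homset bijection $hom_{\category C}(X \otimes A, B) \cong hom_{\category C}(A, X \multimap B)$. The forward map sends $h \colon X \otimes A \to B$ to $\Lambda_X(h)$; the reverse map sends $g \colon A \to X \multimap B$ to $\sigma_{A,X} \semic (g \otimes \id_X) \semic \mathit{eval}_{X,B}$. One direction of the inverse property is precisely equation (1); the other combines equation (3) (to move $g$ inside the abstraction) with equation (2) (to collapse an abstraction of $\mathit{eval}$ into the identity), essentially replaying the computation sketched in Figure~\ref{fig:curryproof}. Naturality in $A$ is a direct instance of equation (3); naturality in $B$ follows from the definition of $X \multimap -$ on morphisms combined with equation (1).

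The main obstacle will be the bookkeeping of symmetries. The paper's convention places $X$ on the left in $X \otimes A$ but on the right in the domain of $\mathit{eval}_{X,A} \colon (X \multimap A) \otimes X \to A$, so every manipulation produces stray $\sigma$-morphisms that must be reassociated and cancelled. All the genuine categorical content is carried by equations (1)--(3), which encode the beta, eta, and substitution principles for abstraction and eval; once the symmetry bookkeeping is organised, each verification reduces to a short diagrammatic chase. A minor technical point is that Definition~\ref{def:freeSCMC} presents objects by an inductive grammar rather than as lists, so the SMC structure must be verified modulo the strictification perspective of Section~\ref{sec:strictification}, but this is a cosmetic adjustment rather than a substantive difficulty.
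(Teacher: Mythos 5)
Your proposal is correct, but it takes a genuinely different route from the paper. You establish closedness via the homset formulation of Definition~\ref{def:homset}: the bijection $hom_{\category C}(X \otimes A, B)\cong hom_{\category C}(A, X\multimap B)$ given by $\Lambda_X$ in one direction and $g\mapsto \sigma_{A,X}\semic(g\otimes\id_X)\semic \mathit{eval}_{X,B}$ in the other, together with naturality in both arguments, all reduced to equations (1)--(3) of Definition~\ref{def:freeSCMC}. The paper instead works with the unit--counit presentation of Definition~\ref{def:adj}: it defines $G_X$ on morphisms by essentially the same abstraction-of-$\mathit{eval}$-then-$f$ formula you use, takes $\mathit{eval}_{X,B}$ and the abstraction of $\id_{X\otimes B}$ as the two canonical natural transformations, and then verifies the four equations of Figure~\ref{fig:adjclosed} (naturality of unit and co-unit plus the two triangle identities) by the diagrammatic chase of Figure~\ref{fig:psmc}. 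The mathematical content is the same in both cases --- equation (1) plays the role of $\beta$, (2) of $\eta$, and (3) of the substitution law, with symmetry bookkeeping and strictification handling the rest --- so the choice is largely one of packaging: your version front-loads the naturality checks of the bijection, the paper's version front-loads the triangle identities. One small point to flag: the paper's official definition of adjunction (and hence of ``closed'') is the unit--counit one, and the paper only proves the implication from unit--counit to homset, so to literally meet the definition you should add the standard remark that a homset bijection natural in both arguments yields a unit/counit pair by applying the bijection and its inverse to identities; this is routine, but it is a step your write-up currently leaves implicit.
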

\begin{proof}
First, we define the right adjoint functor $G_X(A)$ to the tensor product functor $F_X(A) = X \otimes A \colon C \to C$, as follows. On objects, we set $G_X(A)$ to be $X \multimap A$. On a morphism $f \colon A \to B$ we define it as
\begin{equation}\label{def:GXadj}
\begingroup%
  \makeatletter%
  \providecommand\color[2][]{%
    \errmessage{(Inkscape) Color is used for the text in Inkscape, but the package 'color.sty' is not loaded}%
    \renewcommand\color[2][]{}%
  }%
  \providecommand\transparent[1]{%
    \errmessage{(Inkscape) Transparency is used (non-zero) for the text in Inkscape, but the package 'transparent.sty' is not loaded}%
    \renewcommand\transparent[1]{}%
  }%
  \providecommand\rotatebox[2]{#2}%
  \newcommand*\fsize{\dimexpr\f@size pt\relax}%
  \newcommand*\lineheight[1]{\fontsize{\fsize}{#1\fsize}\selectfont}%
  \ifx\svgwidth\undefined%
    \setlength{\unitlength}{172.98454502bp}%
    \ifx\svgscale\undefined%
      \relax%
    \else%
      \setlength{\unitlength}{\unitlength * \real{\svgscale}}%
    \fi%
  \else%
    \setlength{\unitlength}{\svgwidth}%
  \fi%
  \global\let\svgwidth\undefined%
  \global\let\svgscale\undefined%
  \makeatother%
  \begin{picture}(1,0.1761327)%
    \lineheight{1}%
    \setlength\tabcolsep{0pt}%
    \put(0,0){\includegraphics[width=\unitlength,page=1]{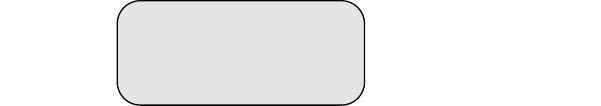}}%
    \put(0.62433317,0.10161245){\color[rgb]{0,0,0}\makebox(0,0)[lt]{\lineheight{1.25}\smash{\begin{tabular}[t]{l}$X\multimap B$\end{tabular}}}}%
    \put(0,0){\includegraphics[width=\unitlength,page=2]{freesmc1.pdf}}%
    \put(0.20374653,0.05165147){\color[rgb]{0,0,0}\makebox(0,0)[lt]{\lineheight{1.25}\smash{\begin{tabular}[t]{l}$X$\end{tabular}}}}%
    \put(0.00433549,0.12302122){\color[rgb]{0,0,0}\makebox(0,0)[lt]{\lineheight{1.25}\smash{\begin{tabular}[t]{l}$X\multimap A$\end{tabular}}}}%
    \put(0,0){\includegraphics[width=\unitlength,page=3]{freesmc1.pdf}}%
    \put(0.48830902,0.07711416){\color[rgb]{0,0,0}\makebox(0,0)[lt]{\lineheight{1.25}\smash{\begin{tabular}[t]{l}$f$\end{tabular}}}}%
  \end{picture}%
\endgroup%

\end{equation}
We claim $G_X$ is right adjoint to $F_X$. To show this, we define the unit $\eta_B \colon ((X \multimap B) \otimes X) \to B$ of the adjunction as $eval_{X,B}$, and the counit $\epsilon_B \colon B \to (X \multimap (X \otimes B))$ as
\begin{equation}\label{def:counitAdj}
\begingroup%
  \makeatletter%
  \providecommand\color[2][]{%
    \errmessage{(Inkscape) Color is used for the text in Inkscape, but the package 'color.sty' is not loaded}%
    \renewcommand\color[2][]{}%
  }%
  \providecommand\transparent[1]{%
    \errmessage{(Inkscape) Transparency is used (non-zero) for the text in Inkscape, but the package 'transparent.sty' is not loaded}%
    \renewcommand\transparent[1]{}%
  }%
  \providecommand\rotatebox[2]{#2}%
  \newcommand*\fsize{\dimexpr\f@size pt\relax}%
  \newcommand*\lineheight[1]{\fontsize{\fsize}{#1\fsize}\selectfont}%
  \ifx\svgwidth\undefined%
    \setlength{\unitlength}{159.84406156bp}%
    \ifx\svgscale\undefined%
      \relax%
    \else%
      \setlength{\unitlength}{\unitlength * \real{\svgscale}}%
    \fi%
  \else%
    \setlength{\unitlength}{\svgwidth}%
  \fi%
  \global\let\svgwidth\undefined%
  \global\let\svgscale\undefined%
  \makeatother%
  \begin{picture}(1,0.19061224)%
    \lineheight{1}%
    \setlength\tabcolsep{0pt}%
    \put(0,0){\includegraphics[width=\unitlength,page=1]{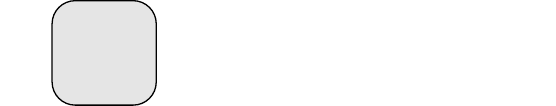}}%
    \put(0.30498475,0.11934984){\color[rgb]{0,0,0}\makebox(0,0)[lt]{\lineheight{1.25}\smash{\begin{tabular}[t]{l}$X\multimap (X\otimes B)$\end{tabular}}}}%
    \put(0.10319437,0.05589764){\color[rgb]{0,0,0}\makebox(0,0)[lt]{\lineheight{1.25}\smash{\begin{tabular}[t]{l}$X$\end{tabular}}}}%
    \put(0.00938414,0.13313457){\color[rgb]{0,0,0}\makebox(0,0)[lt]{\lineheight{1.25}\smash{\begin{tabular}[t]{l}$B$\end{tabular}}}}%
    \put(0,0){\includegraphics[width=\unitlength,page=2]{freesmc2.pdf}}%
  \end{picture}%
\endgroup%

\end{equation}
It now remains to verify the four equations of Figure~\ref{fig:adjclosed}. In the derivations below, remember that the label $X \multimap$ in the functorial boxes appearing in the equations of Figure~\ref{fig:adjclosed} stands for the right adjoint $G_X$, and thus those boxes are defined according to~\eqref{def:GXadj}. The diagrammatic reasoning is in Figure~\ref{fig:psmc}.
\begin{figure}
\[
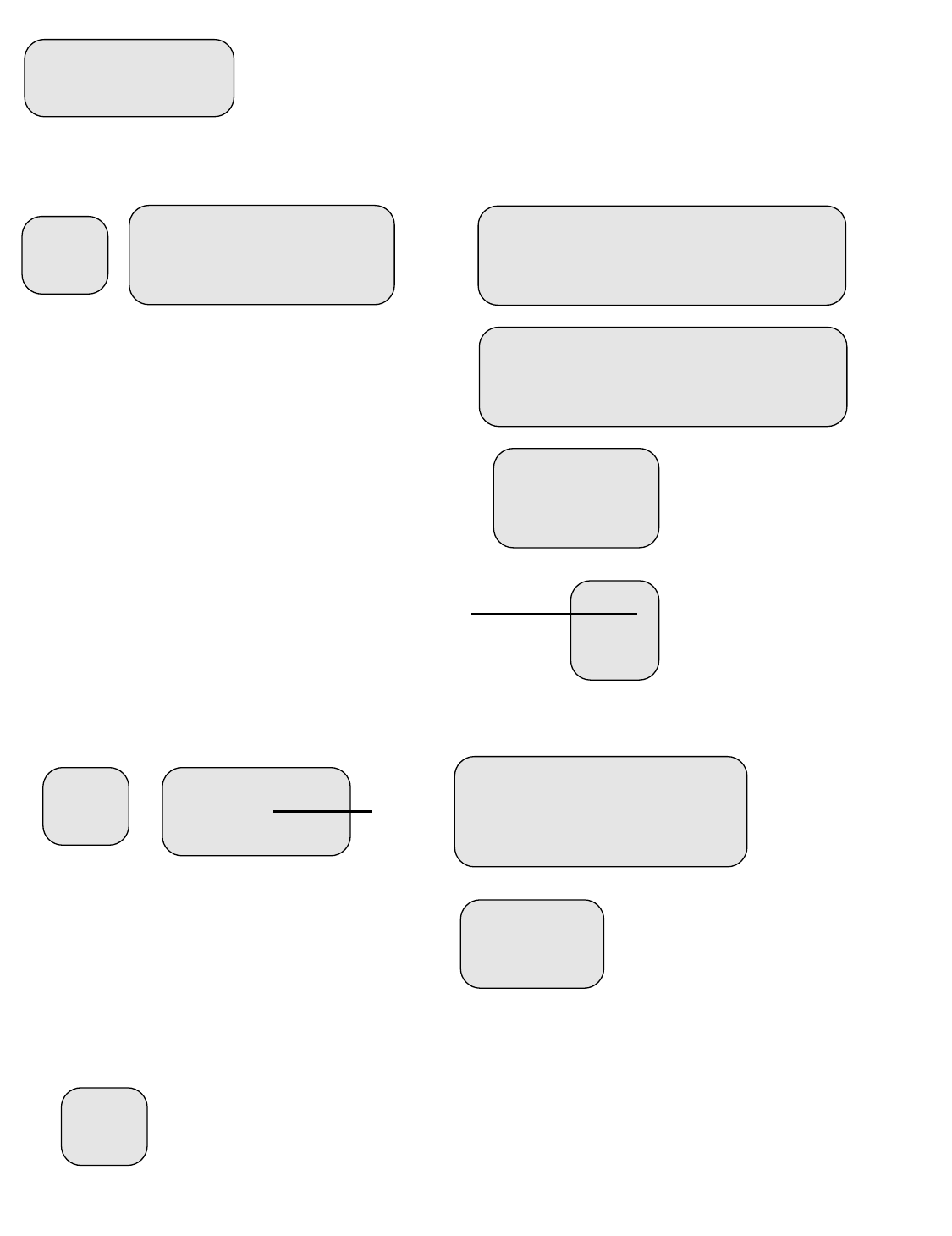
\]
\caption{Proof of monoidal closure}
\label{fig:psmc}
\end{figure}
\end{proof}

We may also show the converse implication, namely that any SCMC exhibits the structures defined by the construction of Definition~\ref{def:freeSCMC}. The reader may verify it via the following exercise.

\begin{exercise}
	Let $\category C$ be an SCMC. Consider objects $X, A, Y$ in $\category C$, and a morphism $h \colon X \otimes A \to Y$. Define an evaluation map $eval_{X,A} \colon ((X \multimap A ) \otimes X) \to A$ and abstraction $\Lambda_X(h) \colon A \to (X \multimap Y)$ satisfying the equations in Definition~\ref{def:freeSCMC}.
\end{exercise}

\begin{remark}
In contrast with the case of monoidal and symmetric monoidal categories, we constructed the free SCMC as a \emph{non-strict} monoidal category. Intuitively, this is due to the fact that, in order to properly define the monoidal closed structure, we need wires in string diagrams of a SCMC to be labeled with arbitrary objects, rather than just the generating ones. This is particularly evident both in Definition~\ref{def:lambdanotation}, where we need to be able to abstract over \emph{any} object $X$ that may be tensored with $A$, and in the definition~\eqref{def:counitAdj} of the co-unit. Because we need both wires for, say, $X$ and $A$, and for $X \otimes A$, we also need (de)strictifiers to switch from one representation to the other, as it happens for instance in~\eqref{def:counitAdj}. It is thus essential to not impose strictness in the construction of Definition~\ref{def:freeSCMC}. This is also coherent with our motivating examples of SCMCs (to be explored in Section~\ref{sec:lambda} below) which are usually non-strict.
\end{remark}
}

\medskip

We conclude this section by pointing out that also the concept of foliation, introduced in Section~\ref{sec:foliations} for  monoidal categories, can be extended to a monoidal closed category, in the obvious way:
\begin{theorem}[Hierarchical foliation]
Any morphism $f$ in a freely generated monoidal closed category can be presented as
\[
f = \bigotimes_{i_0}f_{1,i_0} \semic \cdots \semic \bigotimes_{i_n}f_{1,i_n}
\]
where $f_{j,k}$ are 
\begin{itemize}
\item elements of the signature 
\item structural morphisms (identities, symmetries, etc.)
\item $\Lambda(f')$ for some morphism $f'$ which is also a hierarchical foliation. 
\end{itemize}
\end{theorem}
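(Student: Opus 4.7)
The plan is to proceed by structural induction on the $\Sigma$-terms that present $f$, as given in Definition~\ref{def:freeSCMC}. That is, I would show that every $\Sigma$-term admits a hierarchical foliation, which then descends to the quotient by the equations of SCMCs (since a foliation is just one representative of the class, not a canonical one). This is the same overall strategy as in the proof sketch of Theorem~\ref{thm:msf}, but now lifted to include the abstraction constructor $\Lambda_X(-)$, treating an abstracted term as a \emph{hierarchical} generator whose body carries its own foliation.

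For the base cases, every generator $f \in \Sigma_1$, every identity $\id_A$, every symmetry $\sigma_{A,B}$, and every evaluation $eval_{X,A}$ is already a singleton foliation of length one, with a single tensor factor. For the inductive cases of composition and tensor, the construction follows the classical argument. If $f \colon L_1 \to L_2$ and $g \colon L_2 \to L_3$ are already presented as hierarchical foliations $\bigotimes_i f_{1,i} \semic \cdots \semic \bigotimes_i f_{n,i}$ and $\bigotimes_i g_{1,i} \semic \cdots \semic \bigotimes_i g_{m,i}$, then concatenating them yields a hierarchical foliation of $f \semic g$. For $f \otimes g$, I would pad each layer of one foliation with suitable identities on the domain/codomain of the other, so that the two foliations can be run in parallel layer by layer; here the fact that $id_{A \otimes B} = id_A \otimes id_B$ (in the strict case, or with the use of strictifiers in the non-strict case, as discussed in Section~\ref{sec:strictification}) is what makes the interleaving legitimate.

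The new case is abstraction: given $h \colon X \otimes A \to Y$, the inductive hypothesis provides a hierarchical foliation $h'$ of $h$. I would then take the single-layer foliation consisting only of $\Lambda_X(h')$, which by the definition of the theorem statement counts as an admissible tensor factor because its body $h'$ is itself a hierarchical foliation. This is where the hierarchy enters: the recursion unfolds through nested abstractions, and the inductive structure of $\Sigma$-terms guarantees well-foundedness.

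The main obstacle is not any single case but rather the handling of strictification in the non-strict setting adopted for SCMCs. Because the tensor is not strict, the concatenations in the composition case and the parallel padding in the tensor case will formally require intermediate $\phi, \psi$ and $\phi^*, \psi^*$ morphisms from Definition~\ref{definition:catd} to reassociate wires, and these need to be absorbed either into the structural part of each layer or into a global pre/postcomposition, as in the general shape diagram shown after Remark~\ref{rem:strict}. The cleanest way to handle this is to first prove the result for the strictified category $\overline{\category C}$, where identities on composite wires split cleanly, and then transfer the statement back using the equivalence between $\category C$ and $\overline{\category C}$; the strictification equations in Figure~\ref{fig:strictification}, in particular the functoriality of $\overline{(-)}$ with respect to composition and tensor, are exactly what is needed to make this transfer routine.
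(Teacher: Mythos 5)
Your induction is sound, but it takes a genuinely different route from the paper. The paper proves the hierarchical case exactly as it proves Theorem~\ref{thm:msf}: pass to the concrete graph representation of the string diagram, topologically sort it, peel off one box per layer, and extend this to the hierarchical setting by foliating the graph-label of each abstraction box starting from the deepest one (termination because the nesting depth of $\Lambda(-)$ is finite). You instead argue by structural induction on the $\Sigma$-terms of Definition~\ref{def:freeSCMC}: generators, identities, symmetries and $\mathit{eval}$ are singleton layers, composition concatenates foliations, tensor interleaves them with identity padding, and $\Lambda_X(h)$ becomes a single layer whose body is the foliated $h$. This is a perfectly good existence proof and is arguably easier to formalise; note also that your worry about strictifiers in the composition and tensor cases is largely unnecessary, since the interchange law and $\id_A\otimes\id_B=\id_{A\otimes B}$ hold in any (non-strict) monoidal category, so the detour through $\overline{\category C}$ is harmless but not needed. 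What the paper's graph-theoretic route buys, and your term induction does not, is the intended use of the theorem: since it starts from an arbitrary (hierarchical) graph and reads a term off it, it simultaneously proves definability --- every well-formed hierarchical graph denotes a term --- and yields the stronger maximally sequential shape, serving as the bridge from the combinatorial representation back to syntax (a form of normalisation by evaluation). Your construction starts from a chosen term representative, so the foliation it produces inherits that term's structure rather than being computed from the diagram, and the definability byproduct is lost; one small presentational point is that $\mathit{eval}$ fits the statement only under the reading that the ``etc.''\ in ``structural morphisms'' covers the unit/counit structure of the closed category, which is how the paper intends it.
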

The proof is similar to that of Theorem~\ref{thm:msf} except that the graph-label of each hierarchical box is foliated beginning with the deepest one. 
This is sound since the depth of nesting of $\Lambda()$ is finite. 

Note that the proof of the foliation theorem(s) also works as a proof of definability, showing that any well-formed hierarchical graph can be converted into a term. 
Starting with an arbitrary morphism in a (closed) monoidal category, therefore we can represent it as a string diagram, the retrieve an equal foliated morphism. 
So foliations, in particular maximally sequential ones, can be seen as a quasi-canonical form of morphisms in such categories, and generating them a form of \emph{normalisation by evaluation}.

We remark that, in obtaining the foliated form, the string diagram notation and its graph concrete representation are essential. 
An algorithm for deriving a foliation out of a morphism using terms seems extremely difficult to formulate directly. 


\subsection{Cartesian product and Cartesian closed categories}\label{def:cartesian}
We introduce a family of natural transformations called \emph{copying} and \emph{deletion} respectively, written as $\delta_A : A \to A \otimes A $ and $\omega_A : A \to I $, respectively. 
Because deletion and copying play such an important role in a Cartesian category we employ graphical syntax for them, with the caveat that this graphical syntax is merely meant to prevent clutter but, unlike e.g. symmetry, does not absorb any equations. 
The naturality equations are graphically rendered as:
\[
\begingroup%
  \makeatletter%
  \providecommand\color[2][]{%
    \errmessage{(Inkscape) Color is used for the text in Inkscape, but the package 'color.sty' is not loaded}%
    \renewcommand\color[2][]{}%
  }%
  \providecommand\transparent[1]{%
    \errmessage{(Inkscape) Transparency is used (non-zero) for the text in Inkscape, but the package 'transparent.sty' is not loaded}%
    \renewcommand\transparent[1]{}%
  }%
  \providecommand\rotatebox[2]{#2}%
  \newcommand*\fsize{\dimexpr\f@size pt\relax}%
  \newcommand*\lineheight[1]{\fontsize{\fsize}{#1\fsize}\selectfont}%
  \ifx\svgwidth\undefined%
    \setlength{\unitlength}{200.72255883bp}%
    \ifx\svgscale\undefined%
      \relax%
    \else%
      \setlength{\unitlength}{\unitlength * \real{\svgscale}}%
    \fi%
  \else%
    \setlength{\unitlength}{\svgwidth}%
  \fi%
  \global\let\svgwidth\undefined%
  \global\let\svgscale\undefined%
  \makeatother%
  \begin{picture}(1,0.19056153)%
    \lineheight{1}%
    \setlength\tabcolsep{0pt}%
    \put(0.25212835,0.08419884){\color[rgb]{0,0,0}\makebox(0,0)[lt]{\lineheight{1.25}\smash{\begin{tabular}[t]{l}$=$\end{tabular}}}}%
    \put(0,0){\includegraphics[width=\unitlength,page=1]{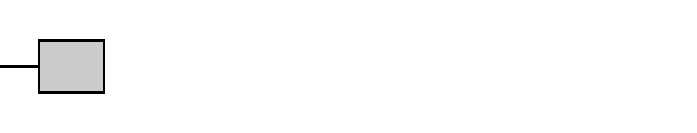}}%
    \put(0.08454414,0.08537788){\color[rgb]{0,0,0}\makebox(0,0)[lt]{\lineheight{1.25}\smash{\begin{tabular}[t]{l}$f$\end{tabular}}}}%
    \put(0,0){\includegraphics[width=\unitlength,page=2]{copydel.pdf}}%
    \put(0.43951163,0.1414255){\color[rgb]{0,0,0}\makebox(0,0)[lt]{\lineheight{1.25}\smash{\begin{tabular}[t]{l}$f$\end{tabular}}}}%
    \put(0,0){\includegraphics[width=\unitlength,page=3]{copydel.pdf}}%
    \put(0.43951174,0.02933069){\color[rgb]{0,0,0}\makebox(0,0)[lt]{\lineheight{1.25}\smash{\begin{tabular}[t]{l}$f$\end{tabular}}}}%
    \put(0,0){\includegraphics[width=\unitlength,page=4]{copydel.pdf}}%
    \put(0.86491437,0.08419884){\color[rgb]{0,0,0}\makebox(0,0)[lt]{\lineheight{1.25}\smash{\begin{tabular}[t]{l}$=$\end{tabular}}}}%
    \put(0,0){\includegraphics[width=\unitlength,page=5]{copydel.pdf}}%
    \put(0.7197493,0.08537799){\color[rgb]{0,0,0}\makebox(0,0)[lt]{\lineheight{1.25}\smash{\begin{tabular}[t]{l}$f$\end{tabular}}}}%
    \put(0,0){\includegraphics[width=\unitlength,page=6]{copydel.pdf}}%
  \end{picture}%
\endgroup%

\]
\begin{remark}
Cartesian product is not strict but to avoid clutter we present the diagrammatic equations as if it is. 
This imprecision can be easily corrected using explicit strictification and de-strictification as seen in Section~\ref{sec:strictification}.
\end{remark}

\begin{definition}[Cartesian product]
A symmetric monoidal tensor is a (Cartesian) product if for each object $A$ in the category the following monoidal natural transformations exist:
\begin{align*}
\omega_A &: A \to I  & \text{(deletion)}\\
\delta_A &: A \to A \otimes A & \text{(copying)}
\end{align*}
such that 
\begin{align*}
\omega_{A\otimes B} &= \omega_A\otimes\omega_B \\
\delta_{A\otimes B} &= \delta_A\otimes\delta_B  \semic  
\id_A\sigma_{A,B}\otimes \id_B
\end{align*}
and
\[
\delta \semic \id\otimes \omega = \delta \semic \omega\otimes\id = \id
\]
\end{definition}
Graphically, this is
\[
\begingroup%
  \makeatletter%
  \providecommand\color[2][]{%
    \errmessage{(Inkscape) Color is used for the text in Inkscape, but the package 'color.sty' is not loaded}%
    \renewcommand\color[2][]{}%
  }%
  \providecommand\transparent[1]{%
    \errmessage{(Inkscape) Transparency is used (non-zero) for the text in Inkscape, but the package 'transparent.sty' is not loaded}%
    \renewcommand\transparent[1]{}%
  }%
  \providecommand\rotatebox[2]{#2}%
  \newcommand*\fsize{\dimexpr\f@size pt\relax}%
  \newcommand*\lineheight[1]{\fontsize{\fsize}{#1\fsize}\selectfont}%
  \ifx\svgwidth\undefined%
    \setlength{\unitlength}{140.52677432bp}%
    \ifx\svgscale\undefined%
      \relax%
    \else%
      \setlength{\unitlength}{\unitlength * \real{\svgscale}}%
    \fi%
  \else%
    \setlength{\unitlength}{\svgwidth}%
  \fi%
  \global\let\svgwidth\undefined%
  \global\let\svgscale\undefined%
  \makeatother%
  \begin{picture}(1,0.63354563)%
    \lineheight{1}%
    \setlength\tabcolsep{0pt}%
    \put(0.39945861,0.07578202){\color[rgb]{0,0,0}\makebox(0,0)[lt]{\lineheight{1.25}\smash{\begin{tabular}[t]{l}$=$\end{tabular}}}}%
    \put(0,0){\includegraphics[width=\unitlength,page=1]{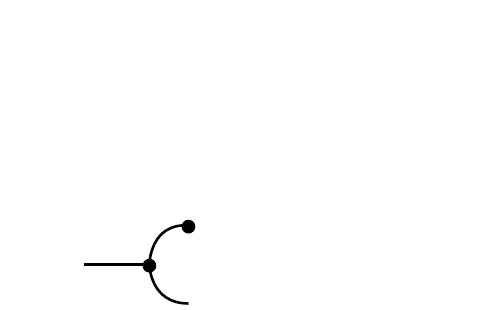}}%
    \put(0.74636762,0.07366656){\color[rgb]{0,0,0}\makebox(0,0)[lt]{\lineheight{1.25}\smash{\begin{tabular}[t]{l}$=$\end{tabular}}}}%
    \put(0,0){\includegraphics[width=\unitlength,page=2]{prodeq.pdf}}%
    \put(0.50619994,0.31061218){\color[rgb]{0,0,0}\makebox(0,0)[lt]{\lineheight{1.25}\smash{\begin{tabular}[t]{l}$=$\end{tabular}}}}%
    \put(0,0){\includegraphics[width=\unitlength,page=3]{prodeq.pdf}}%
    \put(-0.00323139,0.31079316){\makebox(0,0)[lt]{\lineheight{1.25}\smash{\begin{tabular}[t]{l}$\delta_{A\otimes B}=$\end{tabular}}}}%
    \put(0,0){\includegraphics[width=\unitlength,page=4]{prodeq.pdf}}%
    \put(0.22741123,0.34594536){\makebox(0,0)[lt]{\lineheight{1.25}\smash{\begin{tabular}[t]{l}$A\otimes B$\\\end{tabular}}}}%
    \put(0.65928992,0.29962102){\makebox(0,0)[lt]{\lineheight{1.25}\smash{\begin{tabular}[t]{l}$A$\end{tabular}}}}%
    \put(0.65528337,0.38186904){\makebox(0,0)[lt]{\lineheight{1.25}\smash{\begin{tabular}[t]{l}$B$\end{tabular}}}}%
    \put(0.50619986,0.52943244){\color[rgb]{0,0,0}\makebox(0,0)[lt]{\lineheight{1.25}\smash{\begin{tabular}[t]{l}$=$\end{tabular}}}}%
    \put(0,0){\includegraphics[width=\unitlength,page=5]{prodeq.pdf}}%
    \put(-0.00323143,0.52961342){\makebox(0,0)[lt]{\lineheight{1.25}\smash{\begin{tabular}[t]{l}$\omega_{A\otimes B}=$\end{tabular}}}}%
    \put(0,0){\includegraphics[width=\unitlength,page=6]{prodeq.pdf}}%
    \put(0.22741115,0.56476562){\makebox(0,0)[lt]{\lineheight{1.25}\smash{\begin{tabular}[t]{l}$A\otimes B$\\\end{tabular}}}}%
    \put(0.65928992,0.51844127){\makebox(0,0)[lt]{\lineheight{1.25}\smash{\begin{tabular}[t]{l}$A$\end{tabular}}}}%
    \put(0.65528337,0.6006893){\makebox(0,0)[lt]{\lineheight{1.25}\smash{\begin{tabular}[t]{l}$B$\end{tabular}}}}%
  \end{picture}%
\endgroup%

\]
If a symmetric monoidal category is Cartesian it is customary to write the tensor as $A\times B$. 
If it is also closed it is customary to write the exponentiation as $A\Rightarrow B$;  the composite morphisms $\pi_1:=\id\otimes\omega$ and $\pi_2:=\omega\otimes \id$ are called the first and second projections, respectively. 
Such a category is said to be a \emph{Cartesian closed category} (CCC)
Finally, it is usual to introduce notation for 
\[
A\xrightarrow{\langle f,g\rangle}B\times C := A\xrightarrow{\delta_A}A\times A\xrightarrow{f\otimes g} B\times C, 
\]
called the \emph{pairing} of $f$ and $g$. 

\paragraph{Further syntactic conventions}

The principle of absorbing equations into the notation can be pushed farther. 
It is convenient to generalise the notation for binary to $n$-ary copying: 
\[
\begingroup%
  \makeatletter%
  \providecommand\color[2][]{%
    \errmessage{(Inkscape) Color is used for the text in Inkscape, but the package 'color.sty' is not loaded}%
    \renewcommand\color[2][]{}%
  }%
  \providecommand\transparent[1]{%
    \errmessage{(Inkscape) Transparency is used (non-zero) for the text in Inkscape, but the package 'transparent.sty' is not loaded}%
    \renewcommand\transparent[1]{}%
  }%
  \providecommand\rotatebox[2]{#2}%
  \newcommand*\fsize{\dimexpr\f@size pt\relax}%
  \newcommand*\lineheight[1]{\fontsize{\fsize}{#1\fsize}\selectfont}%
  \ifx\svgwidth\undefined%
    \setlength{\unitlength}{100.8676671bp}%
    \ifx\svgscale\undefined%
      \relax%
    \else%
      \setlength{\unitlength}{\unitlength * \real{\svgscale}}%
    \fi%
  \else%
    \setlength{\unitlength}{\svgwidth}%
  \fi%
  \global\let\svgwidth\undefined%
  \global\let\svgscale\undefined%
  \makeatother%
  \begin{picture}(1,0.45415683)%
    \lineheight{1}%
    \setlength\tabcolsep{0pt}%
    \put(0.25916218,0.27749534){\color[rgb]{0,0,0}\makebox(0,0)[lt]{\lineheight{1.25}\smash{\begin{tabular}[t]{l}$:=$\end{tabular}}}}%
    \put(0,0){\includegraphics[width=\unitlength,page=1]{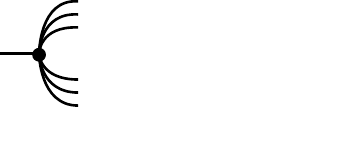}}%
    \put(0.20107785,0.3305274){\color[rgb]{0,0,0}\rotatebox{-89.846127}{\makebox(0,0)[lt]{\lineheight{1.25}\smash{\begin{tabular}[t]{l}...\end{tabular}}}}}%
    \put(0,0){\includegraphics[width=\unitlength,page=2]{narycopy.pdf}}%
    \put(0.94240531,0.00061904){\color[rgb]{0,0,0}\rotatebox{-0.61675973}{\makebox(0,0)[lt]{\lineheight{1.25}\smash{\begin{tabular}[t]{l}...\end{tabular}}}}}%
    \put(0,0){\includegraphics[width=\unitlength,page=3]{narycopy.pdf}}%
    \put(0.94240531,0.14932892){\color[rgb]{0,0,0}\rotatebox{-0.61675968}{\makebox(0,0)[lt]{\lineheight{1.25}\smash{\begin{tabular}[t]{l}...\end{tabular}}}}}%
    \put(0.94240531,0.22368365){\color[rgb]{0,0,0}\rotatebox{-0.61675968}{\makebox(0,0)[lt]{\lineheight{1.25}\smash{\begin{tabular}[t]{l}...\end{tabular}}}}}%
    \put(0.94240531,0.29803837){\color[rgb]{0,0,0}\rotatebox{-0.61675968}{\makebox(0,0)[lt]{\lineheight{1.25}\smash{\begin{tabular}[t]{l}...\end{tabular}}}}}%
    \put(0.94240531,0.3723931){\color[rgb]{0,0,0}\rotatebox{-0.61675968}{\makebox(0,0)[lt]{\lineheight{1.25}\smash{\begin{tabular}[t]{l}...\end{tabular}}}}}%
  \end{picture}%
\endgroup%

\]
It can be easily shown that because of the laws of the Cartesian product any tree constructed of copying, deletion, and symmetry can be given a normal form, e.g.
\[
\begingroup%
  \makeatletter%
  \providecommand\color[2][]{%
    \errmessage{(Inkscape) Color is used for the text in Inkscape, but the package 'color.sty' is not loaded}%
    \renewcommand\color[2][]{}%
  }%
  \providecommand\transparent[1]{%
    \errmessage{(Inkscape) Transparency is used (non-zero) for the text in Inkscape, but the package 'transparent.sty' is not loaded}%
    \renewcommand\transparent[1]{}%
  }%
  \providecommand\rotatebox[2]{#2}%
  \newcommand*\fsize{\dimexpr\f@size pt\relax}%
  \newcommand*\lineheight[1]{\fontsize{\fsize}{#1\fsize}\selectfont}%
  \ifx\svgwidth\undefined%
    \setlength{\unitlength}{164.65353691bp}%
    \ifx\svgscale\undefined%
      \relax%
    \else%
      \setlength{\unitlength}{\unitlength * \real{\svgscale}}%
    \fi%
  \else%
    \setlength{\unitlength}{\svgwidth}%
  \fi%
  \global\let\svgwidth\undefined%
  \global\let\svgscale\undefined%
  \makeatother%
  \begin{picture}(1,0.36905484)%
    \lineheight{1}%
    \setlength\tabcolsep{0pt}%
    \put(0,0){\includegraphics[width=\unitlength,page=1]{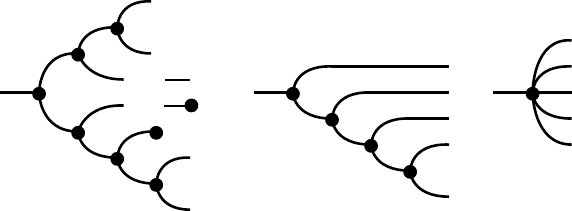}}%
    \put(0.37968288,0.19488262){\color[rgb]{0,0,0}\makebox(0,0)[lt]{\lineheight{1.25}\smash{\begin{tabular}[t]{l}$=$\end{tabular}}}}%
    \put(0.81696421,0.19488262){\color[rgb]{0,0,0}\makebox(0,0)[lt]{\lineheight{1.25}\smash{\begin{tabular}[t]{l}$=$\end{tabular}}}}%
    \put(0,0){\includegraphics[width=\unitlength,page=2]{narycopy2.pdf}}%
  \end{picture}%
\endgroup%

\]
\begin{remark}
A more formal construction of $n$-ary copying, in the style of Exercise~\ref{ex:butterfly} is left as an exercise to the reader. 
\end{remark}
In general any term in a monoid or, as in the case of copying a co-monoid, can be reduced to this dot-like notation. 
As a more advance graphical device, which we will not employ in the sequel, the presence or absence of commutativity (or co-commutativity, respectively) can be made graphically perspicuous by making it impossible to discern the order of incoming (or out-going) wires. 
Thus, an $n$-ary addition operation ($+$), which is commutative, would be represented differently from an $n$-ary list constructor ($::$), which is not, as seen below (for $n=5$): 
\[
\begingroup%
  \makeatletter%
  \providecommand\color[2][]{%
    \errmessage{(Inkscape) Color is used for the text in Inkscape, but the package 'color.sty' is not loaded}%
    \renewcommand\color[2][]{}%
  }%
  \providecommand\transparent[1]{%
    \errmessage{(Inkscape) Transparency is used (non-zero) for the text in Inkscape, but the package 'transparent.sty' is not loaded}%
    \renewcommand\transparent[1]{}%
  }%
  \providecommand\rotatebox[2]{#2}%
  \newcommand*\fsize{\dimexpr\f@size pt\relax}%
  \newcommand*\lineheight[1]{\fontsize{\fsize}{#1\fsize}\selectfont}%
  \ifx\svgwidth\undefined%
    \setlength{\unitlength}{105.76872781bp}%
    \ifx\svgscale\undefined%
      \relax%
    \else%
      \setlength{\unitlength}{\unitlength * \real{\svgscale}}%
    \fi%
  \else%
    \setlength{\unitlength}{\svgwidth}%
  \fi%
  \global\let\svgwidth\undefined%
  \global\let\svgscale\undefined%
  \makeatother%
  \begin{picture}(1,0.29103717)%
    \lineheight{1}%
    \setlength\tabcolsep{0pt}%
    \put(0,0){\includegraphics[width=\unitlength,page=1]{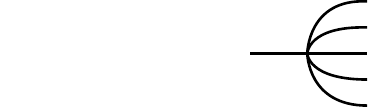}}%
    \put(0.52473026,0.13118362){\makebox(0,0)[lt]{\lineheight{1.25}\smash{\begin{tabular}[t]{l}vs.\end{tabular}}}}%
    \put(0,0){\includegraphics[width=\unitlength,page=2]{plusvcons.pdf}}%
    \put(0.80377276,0.12735715){\makebox(0,0)[lt]{\lineheight{1.25}\smash{\begin{tabular}[t]{l}$::$\end{tabular}}}}%
    \put(0,0){\includegraphics[width=\unitlength,page=3]{plusvcons.pdf}}%
    \put(0.12304221,0.12735674){\makebox(0,0)[lt]{\lineheight{1.25}\smash{\begin{tabular}[t]{l}$+$\end{tabular}}}}%
    \put(0,0){\includegraphics[width=\unitlength,page=4]{plusvcons.pdf}}%
  \end{picture}%
\endgroup%

\]

\paragraph{Universal products}

The formulation of a Cartesian category in Definition~\ref{def:cartesian} is not the usual one found in category theory literature. 
It is more common to define it using a \emph{universal property}:
\begin{definition}[Categorical product (universal)]\label{def:universal}
Given a category $\category C$ and two objects $A_1,A_2$, then their product $A_1\times A_2$ is an object equipped with morphisms (projections) $p_i:A_1\times A_2\to A_i$ for $i=1,2$ such that it is universal with this property, i.e. given any other object $X$ with morphisms $f_i:X\to A_i$ for $i=1,2$ there exists a unique morphism (pairing) $\langle f_1,f_2\rangle: X\to A_1\times A_2$ such that $f_i=\langle f_1, f_2\rangle \semic p_i$. 
\end{definition}
This definition generalises to arbitrary sets $i\in I$. 

We can show that Definition~\ref{def:cartesian} implies that Definition~\ref{def:universal} also holds. 
The projections and the pairing are:
\[
\begingroup%
  \makeatletter%
  \providecommand\color[2][]{%
    \errmessage{(Inkscape) Color is used for the text in Inkscape, but the package 'color.sty' is not loaded}%
    \renewcommand\color[2][]{}%
  }%
  \providecommand\transparent[1]{%
    \errmessage{(Inkscape) Transparency is used (non-zero) for the text in Inkscape, but the package 'transparent.sty' is not loaded}%
    \renewcommand\transparent[1]{}%
  }%
  \providecommand\rotatebox[2]{#2}%
  \newcommand*\fsize{\dimexpr\f@size pt\relax}%
  \newcommand*\lineheight[1]{\fontsize{\fsize}{#1\fsize}\selectfont}%
  \ifx\svgwidth\undefined%
    \setlength{\unitlength}{259.23082682bp}%
    \ifx\svgscale\undefined%
      \relax%
    \else%
      \setlength{\unitlength}{\unitlength * \real{\svgscale}}%
    \fi%
  \else%
    \setlength{\unitlength}{\svgwidth}%
  \fi%
  \global\let\svgwidth\undefined%
  \global\let\svgscale\undefined%
  \makeatother%
  \begin{picture}(1,0.23396629)%
    \lineheight{1}%
    \setlength\tabcolsep{0pt}%
    \put(0.13652525,0.0630801){\makebox(0,0)[lt]{\lineheight{1.25}\smash{\begin{tabular}[t]{l}$\langle f_1,f_2\rangle:X\to A_1\times A_2=$\end{tabular}}}}%
    \put(-0.00233564,0.2014035){\color[rgb]{0,0,0}\makebox(0,0)[lt]{\lineheight{1.25}\smash{\begin{tabular}[t]{l}$p_1:A_1\times A_2\to A_1=$\end{tabular}}}}%
    \put(0,0){\includegraphics[width=\unitlength,page=1]{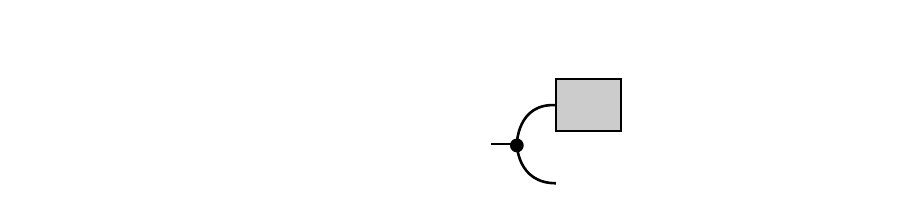}}%
    \put(0.63389023,0.10950575){\color[rgb]{0,0,0}\makebox(0,0)[lt]{\lineheight{1.25}\smash{\begin{tabular}[t]{l}$f_2$\end{tabular}}}}%
    \put(0,0){\includegraphics[width=\unitlength,page=2]{pairing.pdf}}%
    \put(0.63389031,0.02271069){\color[rgb]{0,0,0}\makebox(0,0)[lt]{\lineheight{1.25}\smash{\begin{tabular}[t]{l}$f_1$\end{tabular}}}}%
    \put(0,0){\includegraphics[width=\unitlength,page=3]{pairing.pdf}}%
    \put(0.51843578,0.20140416){\color[rgb]{0,0,0}\makebox(0,0)[lt]{\lineheight{1.25}\smash{\begin{tabular}[t]{l}$p_2:A_1\times A_2\to A_2=$\end{tabular}}}}%
    \put(0,0){\includegraphics[width=\unitlength,page=4]{pairing.pdf}}%
  \end{picture}%
\endgroup%

\]
The equation governing the interaction between the first projection and pairing is
\[
\begingroup%
  \makeatletter%
  \providecommand\color[2][]{%
    \errmessage{(Inkscape) Color is used for the text in Inkscape, but the package 'color.sty' is not loaded}%
    \renewcommand\color[2][]{}%
  }%
  \providecommand\transparent[1]{%
    \errmessage{(Inkscape) Transparency is used (non-zero) for the text in Inkscape, but the package 'transparent.sty' is not loaded}%
    \renewcommand\transparent[1]{}%
  }%
  \providecommand\rotatebox[2]{#2}%
  \newcommand*\fsize{\dimexpr\f@size pt\relax}%
  \newcommand*\lineheight[1]{\fontsize{\fsize}{#1\fsize}\selectfont}%
  \ifx\svgwidth\undefined%
    \setlength{\unitlength}{247.7337285bp}%
    \ifx\svgscale\undefined%
      \relax%
    \else%
      \setlength{\unitlength}{\unitlength * \real{\svgscale}}%
    \fi%
  \else%
    \setlength{\unitlength}{\svgwidth}%
  \fi%
  \global\let\svgwidth\undefined%
  \global\let\svgscale\undefined%
  \makeatother%
  \begin{picture}(1,0.1604541)%
    \lineheight{1}%
    \setlength\tabcolsep{0pt}%
    \put(-0.00244403,0.07587889){\makebox(0,0)[lt]{\lineheight{1.25}\smash{\begin{tabular}[t]{l}$\langle f_1,f_2\rangle;p_1=$\end{tabular}}}}%
    \put(0.5013928,0.0757331){\color[rgb]{0,0,0}\makebox(0,0)[lt]{\lineheight{1.25}\smash{\begin{tabular}[t]{l}$=$\end{tabular}}}}%
    \put(0,0){\includegraphics[width=\unitlength,page=1]{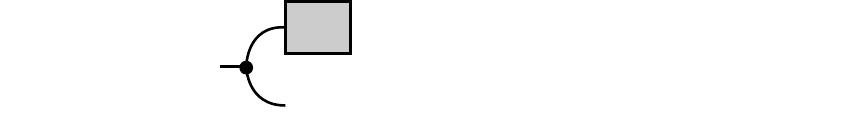}}%
    \put(0.34868908,0.12064227){\color[rgb]{0,0,0}\makebox(0,0)[lt]{\lineheight{1.25}\smash{\begin{tabular}[t]{l}$f_2$\end{tabular}}}}%
    \put(0,0){\includegraphics[width=\unitlength,page=2]{eqpairing.pdf}}%
    \put(0.34868916,0.02981913){\color[rgb]{0,0,0}\makebox(0,0)[lt]{\lineheight{1.25}\smash{\begin{tabular}[t]{l}$f_1$\end{tabular}}}}%
    \put(0,0){\includegraphics[width=\unitlength,page=3]{eqpairing.pdf}}%
    \put(0.80499529,0.0757331){\color[rgb]{0,0,0}\makebox(0,0)[lt]{\lineheight{1.25}\smash{\begin{tabular}[t]{l}$=$\end{tabular}}}}%
    \put(0,0){\includegraphics[width=\unitlength,page=4]{eqpairing.pdf}}%
    \put(0.65446179,0.02376449){\color[rgb]{0,0,0}\makebox(0,0)[lt]{\lineheight{1.25}\smash{\begin{tabular}[t]{l}$f_1$\end{tabular}}}}%
    \put(0,0){\includegraphics[width=\unitlength,page=5]{eqpairing.pdf}}%
    \put(0.90876646,0.07523061){\color[rgb]{0,0,0}\makebox(0,0)[lt]{\lineheight{1.25}\smash{\begin{tabular}[t]{l}$f_1$\end{tabular}}}}%
    \put(0,0){\includegraphics[width=\unitlength,page=6]{eqpairing.pdf}}%
  \end{picture}%
\endgroup%

\]
The equation for the second projection is the obvious counterpart. 

The proof of uniqueness is slightly more interesting. 
Suppose there is a map $u$ satisfying the same equational properties as pairing, i.e.
\[
\begingroup%
  \makeatletter%
  \providecommand\color[2][]{%
    \errmessage{(Inkscape) Color is used for the text in Inkscape, but the package 'color.sty' is not loaded}%
    \renewcommand\color[2][]{}%
  }%
  \providecommand\transparent[1]{%
    \errmessage{(Inkscape) Transparency is used (non-zero) for the text in Inkscape, but the package 'transparent.sty' is not loaded}%
    \renewcommand\transparent[1]{}%
  }%
  \providecommand\rotatebox[2]{#2}%
  \newcommand*\fsize{\dimexpr\f@size pt\relax}%
  \newcommand*\lineheight[1]{\fontsize{\fsize}{#1\fsize}\selectfont}%
  \ifx\svgwidth\undefined%
    \setlength{\unitlength}{87.57156161bp}%
    \ifx\svgscale\undefined%
      \relax%
    \else%
      \setlength{\unitlength}{\unitlength * \real{\svgscale}}%
    \fi%
  \else%
    \setlength{\unitlength}{\svgwidth}%
  \fi%
  \global\let\svgwidth\undefined%
  \global\let\svgscale\undefined%
  \makeatother%
  \begin{picture}(1,0.43678562)%
    \lineheight{1}%
    \setlength\tabcolsep{0pt}%
    \put(0,0){\includegraphics[width=\unitlength,page=1]{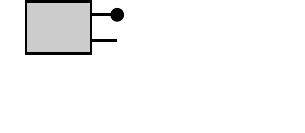}}%
    \put(0.15096156,0.32416128){\color[rgb]{0,0,0}\makebox(0,0)[lt]{\lineheight{1.25}\smash{\begin{tabular}[t]{l}$u$\end{tabular}}}}%
    \put(0,0){\includegraphics[width=\unitlength,page=2]{uniqpair.pdf}}%
    \put(0.15096156,0.06722856){\color[rgb]{0,0,0}\makebox(0,0)[lt]{\lineheight{1.25}\smash{\begin{tabular}[t]{l}$u$\end{tabular}}}}%
    \put(0,0){\includegraphics[width=\unitlength,page=3]{uniqpair.pdf}}%
    \put(0.45455507,0.31905614){\color[rgb]{0,0,0}\makebox(0,0)[lt]{\lineheight{1.25}\smash{\begin{tabular}[t]{l}$=$\end{tabular}}}}%
    \put(0.46025662,0.05569407){\color[rgb]{0,0,0}\makebox(0,0)[lt]{\lineheight{1.25}\smash{\begin{tabular}[t]{l}$=$\end{tabular}}}}%
    \put(0,0){\includegraphics[width=\unitlength,page=4]{uniqpair.pdf}}%
    \put(0.74190678,0.32416128){\color[rgb]{0,0,0}\makebox(0,0)[lt]{\lineheight{1.25}\smash{\begin{tabular}[t]{l}$f_1$\end{tabular}}}}%
    \put(0,0){\includegraphics[width=\unitlength,page=5]{uniqpair.pdf}}%
    \put(0.74190678,0.06722856){\color[rgb]{0,0,0}\makebox(0,0)[lt]{\lineheight{1.25}\smash{\begin{tabular}[t]{l}$f_2$\end{tabular}}}}%
    \put(0,0){\includegraphics[width=\unitlength,page=6]{uniqpair.pdf}}%
  \end{picture}%
\endgroup%

\]
The equation $u=\langle f_1,f_2\rangle$ follows from copy-delete interaction, naturality of copy, and the hypothesis:
\[
\begingroup%
  \makeatletter%
  \providecommand\color[2][]{%
    \errmessage{(Inkscape) Color is used for the text in Inkscape, but the package 'color.sty' is not loaded}%
    \renewcommand\color[2][]{}%
  }%
  \providecommand\transparent[1]{%
    \errmessage{(Inkscape) Transparency is used (non-zero) for the text in Inkscape, but the package 'transparent.sty' is not loaded}%
    \renewcommand\transparent[1]{}%
  }%
  \providecommand\rotatebox[2]{#2}%
  \newcommand*\fsize{\dimexpr\f@size pt\relax}%
  \newcommand*\lineheight[1]{\fontsize{\fsize}{#1\fsize}\selectfont}%
  \ifx\svgwidth\undefined%
    \setlength{\unitlength}{221.82154157bp}%
    \ifx\svgscale\undefined%
      \relax%
    \else%
      \setlength{\unitlength}{\unitlength * \real{\svgscale}}%
    \fi%
  \else%
    \setlength{\unitlength}{\svgwidth}%
  \fi%
  \global\let\svgwidth\undefined%
  \global\let\svgscale\undefined%
  \makeatother%
  \begin{picture}(1,0.17243591)%
    \lineheight{1}%
    \setlength\tabcolsep{0pt}%
    \put(0.16847432,0.0751938){\color[rgb]{0,0,0}\makebox(0,0)[lt]{\lineheight{1.25}\smash{\begin{tabular}[t]{l}$=$\end{tabular}}}}%
    \put(0,0){\includegraphics[width=\unitlength,page=1]{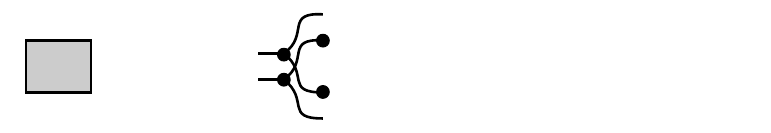}}%
    \put(0.06635925,0.07725719){\color[rgb]{0,0,0}\makebox(0,0)[lt]{\lineheight{1.25}\smash{\begin{tabular}[t]{l}$u$\end{tabular}}}}%
    \put(0,0){\includegraphics[width=\unitlength,page=2]{uniqpair2.pdf}}%
    \put(0.4520267,0.0751938){\color[rgb]{0,0,0}\makebox(0,0)[lt]{\lineheight{1.25}\smash{\begin{tabular}[t]{l}$=$\end{tabular}}}}%
    \put(0,0){\includegraphics[width=\unitlength,page=3]{uniqpair2.pdf}}%
    \put(0.28274943,0.07725719){\color[rgb]{0,0,0}\makebox(0,0)[lt]{\lineheight{1.25}\smash{\begin{tabular}[t]{l}$u$\end{tabular}}}}%
    \put(0,0){\includegraphics[width=\unitlength,page=4]{uniqpair2.pdf}}%
    \put(0.62424024,0.12797364){\color[rgb]{0,0,0}\makebox(0,0)[lt]{\lineheight{1.25}\smash{\begin{tabular}[t]{l}$u$\end{tabular}}}}%
    \put(0,0){\includegraphics[width=\unitlength,page=5]{uniqpair2.pdf}}%
    \put(0.62424024,0.02654075){\color[rgb]{0,0,0}\makebox(0,0)[lt]{\lineheight{1.25}\smash{\begin{tabular}[t]{l}$u$\end{tabular}}}}%
    \put(0,0){\includegraphics[width=\unitlength,page=6]{uniqpair2.pdf}}%
    \put(0.73265782,0.0751938){\color[rgb]{0,0,0}\makebox(0,0)[lt]{\lineheight{1.25}\smash{\begin{tabular}[t]{l}$=$\end{tabular}}}}%
    \put(0,0){\includegraphics[width=\unitlength,page=7]{uniqpair2.pdf}}%
    \put(0.89810897,0.12797364){\color[rgb]{0,0,0}\makebox(0,0)[lt]{\lineheight{1.25}\smash{\begin{tabular}[t]{l}$f_2$\end{tabular}}}}%
    \put(0,0){\includegraphics[width=\unitlength,page=8]{uniqpair2.pdf}}%
    \put(0.89810897,0.02654075){\color[rgb]{0,0,0}\makebox(0,0)[lt]{\lineheight{1.25}\smash{\begin{tabular}[t]{l}$f_1$\end{tabular}}}}%
    \put(0,0){\includegraphics[width=\unitlength,page=9]{uniqpair2.pdf}}%
  \end{picture}%
\endgroup%

\]

\paragraph{Product as adjunction}

Mathematical concepts can be defined in many ways, something which is particularly true of category theory. 
So let us define product in a third way, this time using adjunctions!

First consider the category $\mathcal C\times\mathcal C$ which is constructed so that its objects are pairs of objects of $\mathcal C$ and its morphisms are pairs of morphisms of $\mathcal C$, which we shall write as $(A,B)$ and $(f,g)$, respectively.

Note that this is a new category, but one which we can relate to $\mathcal C$ using an obvious \emph{diagonal} functor that pairs objects, and respectively morphisms with themselves, $\Delta:\mathcal C\rightarrow \mathcal C\times\mathcal C$. 
We can show that the functor ${-}\times{-} : \mathcal C\times\mathcal C\rightarrow \mathcal C$ is adjoint to the diagonal, $\Delta\dashv{-}\times{-}$.
To do that, we define 
\[
\begingroup%
  \makeatletter%
  \providecommand\color[2][]{%
    \errmessage{(Inkscape) Color is used for the text in Inkscape, but the package 'color.sty' is not loaded}%
    \renewcommand\color[2][]{}%
  }%
  \providecommand\transparent[1]{%
    \errmessage{(Inkscape) Transparency is used (non-zero) for the text in Inkscape, but the package 'transparent.sty' is not loaded}%
    \renewcommand\transparent[1]{}%
  }%
  \providecommand\rotatebox[2]{#2}%
  \newcommand*\fsize{\dimexpr\f@size pt\relax}%
  \newcommand*\lineheight[1]{\fontsize{\fsize}{#1\fsize}\selectfont}%
  \ifx\svgwidth\undefined%
    \setlength{\unitlength}{101.92216035bp}%
    \ifx\svgscale\undefined%
      \relax%
    \else%
      \setlength{\unitlength}{\unitlength * \real{\svgscale}}%
    \fi%
  \else%
    \setlength{\unitlength}{\svgwidth}%
  \fi%
  \global\let\svgwidth\undefined%
  \global\let\svgscale\undefined%
  \makeatother%
  \begin{picture}(1,0.2282722)%
    \lineheight{1}%
    \setlength\tabcolsep{0pt}%
    \put(0,0){\includegraphics[width=\unitlength,page=1]{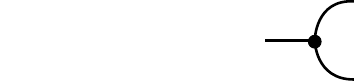}}%
    \put(-0.00445538,0.08306066){\makebox(0,0)[lt]{\lineheight{1.25}\smash{\begin{tabular}[t]{l}$\eta_A:A\to A\times A=$\end{tabular}}}}%
  \end{picture}%
\endgroup%

\]
and 
\[
\begingroup%
  \makeatletter%
  \providecommand\color[2][]{%
    \errmessage{(Inkscape) Color is used for the text in Inkscape, but the package 'color.sty' is not loaded}%
    \renewcommand\color[2][]{}%
  }%
  \providecommand\transparent[1]{%
    \errmessage{(Inkscape) Transparency is used (non-zero) for the text in Inkscape, but the package 'transparent.sty' is not loaded}%
    \renewcommand\transparent[1]{}%
  }%
  \providecommand\rotatebox[2]{#2}%
  \newcommand*\fsize{\dimexpr\f@size pt\relax}%
  \newcommand*\lineheight[1]{\fontsize{\fsize}{#1\fsize}\selectfont}%
  \ifx\svgwidth\undefined%
    \setlength{\unitlength}{161.86841077bp}%
    \ifx\svgscale\undefined%
      \relax%
    \else%
      \setlength{\unitlength}{\unitlength * \real{\svgscale}}%
    \fi%
  \else%
    \setlength{\unitlength}{\svgwidth}%
  \fi%
  \global\let\svgwidth\undefined%
  \global\let\svgscale\undefined%
  \makeatother%
  \begin{picture}(1,0.14363518)%
    \lineheight{1}%
    \setlength\tabcolsep{0pt}%
    \put(0,0){\includegraphics[width=\unitlength,page=1]{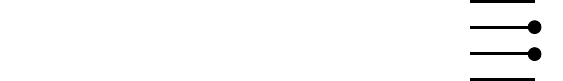}}%
    \put(-0.00280538,0.05521354){\makebox(0,0)[lt]{\lineheight{1.25}\smash{\begin{tabular}[t]{l}$\epsilon_{A,B}:A\times B,A\times B\to A\times B=$\end{tabular}}}}%
  \end{picture}%
\endgroup%

\] 
The details of checking that the above define an adjunction are routine and are left as an exercise to the reader. 

\subsection{The simply-typed $\lambda$ calculus}\label{sec:lambda}

\subsubsection{Term syntax}

The simply-typed $\lambda$ calculus (STLC) is at the core of the design of the functional fragment of most modern programming languages. 
The types $T$ are constructed from a fixed set of base types, for instance natural numbers ($N$), and exponentiation $T\to T$. 
Terms $u$ are constructed from variables $x$,  abstraction $\lambda x{:}T.u$, and application $u\, u$. 
This is the \emph{pure} fragment of the calculus, that can be extended with constants $c$, resulting in the \emph{applied} STLC. 
For now we focus on the pure fragment. 
If the type is unimportant we may omit it from a term and just write $\lambda x.u$. 

An important syntactic operation is that of substituting a term $v$ for a variable $x$ in some other term $u$, written as $u[x/v]$. 
For instance
\[
(\lambda x.f\, x)[f/\lambda y.y] = \lambda x.(\lambda y.y)\, x.
\]
This operation is not mere textual replacement, and must be defined carefully in order to avoid \emph{variable capture}: 
A $\lambda x.u$ term defines $x$ locally, in a definite scope $u$;  $x$ is said to be \emph{bound}.
Otherwise, a variable is said to be \emph{free}, such as $f$ in the example above. 
Moreover, repeated definitions are allowed, so variables can \emph{overshadow} other variables. 
For instance, the term $\lambda x.(\lambda \underline x.\underline x)\, x$, shows four occurrences of the string $x$, functioning sometimes as a binder and sometimes as a bound variable. 
The correct matchups between binders and bound variables are identified using underlining. 
\begin{definition}[Free variables]
The set of \emph{free variables} $\mathcal F(-)$ of a $\lambda$ term is defined inductively on its syntax as
\begin{align*}
\mathcal F(x) = \{ x\}
\quad
\mathcal F(u\,u') = \mathcal F(u)\cup\mathcal F(u')
\quad
\mathcal F(\lambda x{:}T.u) = \mathcal F(u)\setminus\{x\}.
\end{align*}
\end{definition}
To see why substitution is not merely textual replacement, consider the following invalid substitution:  
\[
(\lambda x.f\, x)[f/x] \not= \lambda x.x\, x. 
\]
The status of the occurrence of the substituted variable changes from free $f$ to captured $x$ in the resulting term. 
The operation of substitution is, however, always well defined because bound variables can be systematically renamed in any $\lambda$ term so, the correct application of substitution is:
\[
(\lambda x.f\, x)[f/x] = (\lambda y.f\, y)[f/x] = \lambda y.x\, y. 
\]
The first equation is the renaming of $x$ to $y$, the next is the actual performing of the substitution. 
To get there we define first \emph{transposition} of variables, then $\alpha$-equivalence of terms up to renaming of bound variables. 

The transposition of variables $x,y$ in $u$, written as $(x\,y)\cdot u$, is defined inductively on terms taken as text, swapping all occurrences ignoring the bindings. 
\begin{align*}
(x\,y)\cdot x &= y\\
(x\,y)\cdot y &= x\\
(x\,y)\cdot z &= z &\text{if } x\neq z\neq y\\
(x\,y)\cdot (u\,u') &= \bigl((x\,y)\cdot u\bigr)\,\bigl((x\,y)\cdot u'\bigr)\\
(x\,y)\cdot (\lambda z.u) &= \lambda (x\,y)\cdot z. (x\,y)\cdot u.
\end{align*}
Using it, we can define
\begin{definition}[$\alpha$-equivalence]
The relation of $\alpha$-equivalence between $\lambda$ terms, written as $u\equiv u'$, is defined as
\[
\frac{}{x\equiv x} \qquad
\frac{u_1\equiv u_1' \qquad u_2\equiv u_2'}{u_1\,u_2\equiv u_1'\,u_2'} \quad
\frac{(z\, x)\cdot u\equiv (z\, y)\cdot u' \quad x\neq z\neq y}{\lambda x{:}T.u\equiv\lambda y{:}T.u'}
\]
\end{definition}
From now on we consider $\lambda$ terms quotiented by $\equiv$, so e.g. $\lambda x.x=\lambda y.y$. 
We can now define (capture-avoiding) substitution, inductively on the syntax of $\lambda$ terms up to $\alpha$ equivalence:
\begin{align*}
x[x/u] & = u \\
y[x/u] & = y & x\neq y\\ 
(u_1\,u_2)[x/u] &= (u_1[x/u])\,(u_2[x/u])\\
(\lambda y{:}T.u')[x/u] &= \lambda y{:}T.u'[x/u] &  x\neq y \text{ and } y\not\in \mathcal F(u).
\end{align*}
\begin{remark}
On a superficial glance the definition seems ill-defined. 
What if $x\neq y$ and $y\in\mathcal F(u)$? 
What if $x=y$?
Are there cases missing from the definition of substitution?
But this is exactly the situation of the variable capture discussed earlier!
Using the mechanics of $\alpha$-equivalence $y$ can be simply replaced by a suitably named variable not bound in $u$, just as we replaced $x$ in $(\lambda x.f\, x)[f/x]$. 
\begin{exercise}
Prove that $(\lambda x{:}T.u)[x/u'] = \lambda x{:}T.u$.
\end{exercise}
\end{remark}
Types are assigned to terms using judgements of the form
\[
\Gamma \vdash u:T
\]
where $\Gamma = x_1:T_1, \ldots, x_k:T_k$ is a \emph{variable type assignment}.
The judgement above is interpreted as \emph{term $u$ has type $T$ if each $x_i$ has type $T_i$ as given by $\Gamma$}. 
The rules for deriving type judgements can be given in the style of natural deduction:

\begin{gather}
\Gamma_1, x:T,\Gamma_2\vdash x:T 
 \tag{var}\\[1.5ex]
\frac
{\Gamma_1,x:T,\Gamma_2\vdash u:T' }
{\Gamma_1,\Gamma_2\vdash \lambda x{:}T.u:T\to T'}
\tag{abs}\\[1.5ex]
\frac
{\Gamma\vdash u_1:T_1 \quad \Gamma\vdash u_2:T_1\to T_2}
{\Gamma\vdash u_2\,u_1:T_2}  
\tag{app}
\end{gather}

\subsection{Categorical and string diagram syntax}

\newcommand{\seval}[1]{\llbracket #1 \rrbracket}
\newcommand{\conc}{+\!\!+}
\newcommand{\lst}[1]{\overrightarrow{ #1 }}

The STLC can be given a canonical interpretation in any Cartesian closed category $\category C$.
Types are interpreted as objects of the category ($\seval N=N$, $\seval{T\to T'}=\seval T\Rightarrow\seval{T'}$). 
To reduce clutter we shall just write $T$ instead of $\seval T$, without ambiguity. 
\begin{align*}
\seval{\Gamma}=\seval{\lst{x:T}}=\seval{x_1:T_1,\ldots,x_n:T_n} = T_1\otimes\cdots\otimes T_n=\bigotimes_nT_i.
\end{align*}
To avoid clutter we also write just $\Gamma$ instead of $\seval \Gamma$. 
Interpretation is given to type judgements, inductively on its derivation, with types interpreted as objects and judgements as morphisms:
\[
\seval{\Gamma\vdash u:T'}:\Gamma\to T'. 
\]
The definitions are usually given as: 
\begin{align*}
\seval{x:T,\Gamma\vdash x:T} &= \id_T\otimes\omega_{\Gamma}\\
\seval{x:T,\Gamma\vdash \lambda x{:}T.u:T'} &= \Lambda_T\bigl(\seval{x:T,\Gamma\vdash u:T'}\bigr)\\
\seval{\Gamma\vdash u_2\,u_1: T_2} &= \langle\seval{\Gamma\vdash u_1:T_1}, \seval{\Gamma\vdash u_2:T_1\to T_2}\rangle \semic eval_{T_1,T_2}.
\end{align*}
\begin{remark}
These definitions, which can be encountered with various small variations in the literature, treat the product used in the interpretation of $\Gamma$ as if it is strict, by quietly re-associating (and even applying symmetry) its components. 
This is necessary since variables are not necessarily in the first position in the environment. 
This is not likely to lead to serious confusions, due to the strictification theorem.
Not unless, that is, there is a product type in the language as well. 
But it is possible to make the definitions formal if working in the strictified category.  
Of course, it is possible to make the definitions as formal in the non-strict category as well, but that is an even more tedious exercise, which we shall leave to the reader. 
\end{remark}
Let us change the definition of the interpretation of $\Gamma$ to a strict tensor
\[
\seval{\Gamma}=\seval{x_1:T_1,\ldots x_n:T_n}=[T_1,\ldots,T_n]=[\Gamma]
\]
and change the definition of interpretation accordingly to
\[
\seval{\Gamma\vdash u:T'}:[\Gamma]\to T'.
\]
We write $[\Gamma]$ to avoid ambiguity with the non-strict interpretation, written just~$\Gamma$. 

It is convenient to define
\begin{align*}
\psi_{\Gamma} : [\Gamma] \to \Gamma\qquad
\psi^*_{\Gamma} : \Gamma \to [\Gamma].
\end{align*}
As the strictifiers and de-strictifiers of $\Gamma$, respectively.

The more precise definitions can be now stated. 
\begin{align*}
\seval{\Gamma_1,x:T,\Gamma_2\vdash x:T} = [\Gamma_1,T,\Gamma_2]& \xrightarrow{\psi_{[\Gamma_1,T],[\Gamma_2]}}[\Gamma_1,T]\otimes[\Gamma_2]\\
&\xrightarrow{\psi_{[\Gamma_1],T}\otimes\id_{[\Gamma_2]}}[\Gamma_1]\otimes (T\otimes[\Gamma_2])\\
&\xrightarrow{\alpha_{[\Gamma_2],T,[\Gamma_2]}}[\Gamma_1]\otimes T\otimes[\Gamma_2]\\
&\xrightarrow{\omega_{[\Gamma_1]}\otimes id_T\otimes\omega_{[\Gamma_2]}} T \\[1.5ex]
\seval{\Gamma_1,x:T,\Gamma_2\vdash \lambda x{:}T.u:T'} &= [\Gamma_1,\Gamma_2]\xrightarrow{\Lambda_T(f)}T\Rightarrow T'\\
&\text{where}\\
f = T\otimes[\Gamma_1]\otimes[\Gamma_2] & \xrightarrow{\sigma_{T,[\Gamma_1]}\otimes \id{[\Gamma_2]}}[\Gamma_1]\otimes T\otimes[\Gamma_2]\\
&\xrightarrow{\alpha^{-1}_{[\Gamma_2],T,[\Gamma_2]}}[\Gamma_1]\otimes (T\otimes[\Gamma_2])\\
&\xrightarrow{\psi^*_{[\Gamma_1],T}\otimes\id_{[\Gamma_2]}}[\Gamma_1,T]\otimes[\Gamma_2]\\
& \xrightarrow{\psi^*_{[\Gamma_1,T],[\Gamma_2]}} [\Gamma_1,T,\Gamma_2] \\
& \xrightarrow{\seval{\Gamma_1,x:T,\Gamma_2\vdash u:T'}}T'\\[1.5ex]
\seval{\Gamma\vdash u_2\,u_1: T_2} = [\Gamma]&\xrightarrow{\langle\seval{\Gamma\vdash u_1:T_1}, \seval{\Gamma\vdash u_2:T_1\to T_2}\rangle}T_1\otimes (T_1\Rightarrow T_2)\\*
&\xrightarrow{\mathit{eval}_{T_1, T_2}} T_2.
\end{align*}
\begin{remark}
It is understandable why most presentation eschew the tedium of the structural manipulation of the type-assignment $\Gamma$, preferring to trade informality for readability. 
With string diagrams, as we shall see, there is no price to pay for full rigour, as the bureaucracy of structural manipulation is for the most part absorbed into the notation. 
The diagrammatic interpretation corresponding to the terms above is:
\[
\hspace{-15ex}
\begingroup%
  \makeatletter%
  \providecommand\color[2][]{%
    \errmessage{(Inkscape) Color is used for the text in Inkscape, but the package 'color.sty' is not loaded}%
    \renewcommand\color[2][]{}%
  }%
  \providecommand\transparent[1]{%
    \errmessage{(Inkscape) Transparency is used (non-zero) for the text in Inkscape, but the package 'transparent.sty' is not loaded}%
    \renewcommand\transparent[1]{}%
  }%
  \providecommand\rotatebox[2]{#2}%
  \newcommand*\fsize{\dimexpr\f@size pt\relax}%
  \newcommand*\lineheight[1]{\fontsize{\fsize}{#1\fsize}\selectfont}%
  \ifx\svgwidth\undefined%
    \setlength{\unitlength}{361.38557476bp}%
    \ifx\svgscale\undefined%
      \relax%
    \else%
      \setlength{\unitlength}{\unitlength * \real{\svgscale}}%
    \fi%
  \else%
    \setlength{\unitlength}{\svgwidth}%
  \fi%
  \global\let\svgwidth\undefined%
  \global\let\svgscale\undefined%
  \makeatother%
  \begin{picture}(1,0.36460174)%
    \lineheight{1}%
    \setlength\tabcolsep{0pt}%
    \put(0,0){\includegraphics[width=\unitlength,page=1]{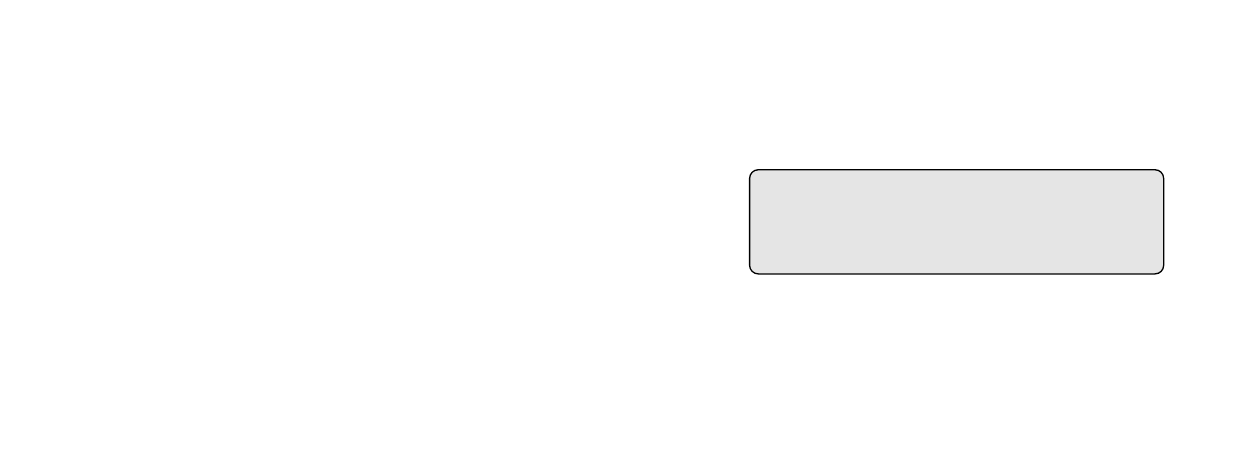}}%
    \put(0.56147847,0.31678691){\color[rgb]{0,0,0}\makebox(0,0)[rt]{\lineheight{1.25}\smash{\begin{tabular}[t]{r}$\seval{\Gamma_1,x:T,\Gamma_2\vdash x:T}=$\end{tabular}}}}%
    \put(0,0){\includegraphics[width=\unitlength,page=2]{stlc.pdf}}%
    \put(0.56792388,0.27627731){\color[rgb]{0,0,0}\makebox(0,0)[lt]{\lineheight{1.25}\smash{\begin{tabular}[t]{l}$\Gamma_1$\end{tabular}}}}%
    \put(0.56846296,0.35182537){\color[rgb]{0,0,0}\makebox(0,0)[lt]{\lineheight{1.25}\smash{\begin{tabular}[t]{l}$\Gamma_2$\end{tabular}}}}%
    \put(0.63894379,0.32622043){\color[rgb]{0,0,0}\makebox(0,0)[lt]{\lineheight{1.25}\smash{\begin{tabular}[t]{l}$T$\end{tabular}}}}%
    \put(0,0){\includegraphics[width=\unitlength,page=3]{stlc.pdf}}%
    \put(0.56090357,0.18187774){\makebox(0,0)[rt]{\lineheight{1.25}\smash{\begin{tabular}[t]{r}$\seval{\Gamma_1,\Gamma_2\vdash \lambda x{:}T.u:T\to T'}=$\end{tabular}}}}%
    \put(0,0){\includegraphics[width=\unitlength,page=4]{stlc.pdf}}%
    \put(0.78411134,0.18450895){\color[rgb]{0,0,0}\makebox(0,0)[t]{\lineheight{1.25}\smash{\begin{tabular}[t]{c}$\seval{\Gamma_1,x:T,\Gamma_2\vdash u:T'}$\end{tabular}}}}%
    \put(0,0){\includegraphics[width=\unitlength,page=5]{stlc.pdf}}%
    \put(0.56290794,0.14353755){\color[rgb]{0,0,0}\makebox(0,0)[lt]{\lineheight{1.25}\smash{\begin{tabular}[t]{l}$\Gamma_1$\end{tabular}}}}%
    \put(0.56399452,0.21888795){\color[rgb]{0,0,0}\makebox(0,0)[lt]{\lineheight{1.25}\smash{\begin{tabular}[t]{l}$\Gamma_2$\end{tabular}}}}%
    \put(0.61266531,0.12278715){\color[rgb]{0,0,0}\makebox(0,0)[lt]{\lineheight{1.25}\smash{\begin{tabular}[t]{l}$T$\end{tabular}}}}%
    \put(0.56217343,0.04357302){\makebox(0,0)[rt]{\lineheight{1.25}\smash{\begin{tabular}[t]{r}$\seval{\Gamma\vdash u_2\,u_1:T_2}=$\end{tabular}}}}%
    \put(0,0){\includegraphics[width=\unitlength,page=6]{stlc.pdf}}%
    \put(0.77580995,0.06206354){\color[rgb]{0,0,0}\makebox(0,0)[t]{\lineheight{1.25}\smash{\begin{tabular}[t]{c}$\seval{\Gamma\vdash u_2:T_1\to T_2}$\end{tabular}}}}%
    \put(0,0){\includegraphics[width=\unitlength,page=7]{stlc.pdf}}%
    \put(0.77580978,0.01433051){\color[rgb]{0,0,0}\makebox(0,0)[t]{\lineheight{1.25}\smash{\begin{tabular}[t]{c}$\seval{\Gamma\vdash u_1:T_1}$\end{tabular}}}}%
    \put(0,0){\includegraphics[width=\unitlength,page=8]{stlc.pdf}}%
  \end{picture}%
\endgroup%
\\
\]
The full calculation resulting in the simplified interpretations above is routine. 
For instance, the full unsimplified diagram with all the strictifiers for the interpretation of abstraction is given in Figure~\ref{fig:fullabs}.
\begin{figure}
\resizebox{\textwidth}{!}{
\begingroup%
  \makeatletter%
  \providecommand\color[2][]{%
    \errmessage{(Inkscape) Color is used for the text in Inkscape, but the package 'color.sty' is not loaded}%
    \renewcommand\color[2][]{}%
  }%
  \providecommand\transparent[1]{%
    \errmessage{(Inkscape) Transparency is used (non-zero) for the text in Inkscape, but the package 'transparent.sty' is not loaded}%
    \renewcommand\transparent[1]{}%
  }%
  \providecommand\rotatebox[2]{#2}%
  \newcommand*\fsize{\dimexpr\f@size pt\relax}%
  \newcommand*\lineheight[1]{\fontsize{\fsize}{#1\fsize}\selectfont}%
  \ifx\svgwidth\undefined%
    \setlength{\unitlength}{428.03691349bp}%
    \ifx\svgscale\undefined%
      \relax%
    \else%
      \setlength{\unitlength}{\unitlength * \real{\svgscale}}%
    \fi%
  \else%
    \setlength{\unitlength}{\svgwidth}%
  \fi%
  \global\let\svgwidth\undefined%
  \global\let\svgscale\undefined%
  \makeatother%
  \begin{picture}(1,0.12374568)%
    \lineheight{1}%
    \setlength\tabcolsep{0pt}%
    \put(0,0){\includegraphics[width=\unitlength,page=1]{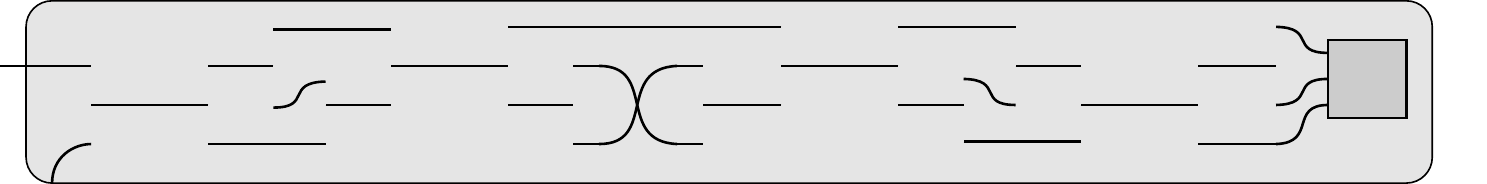}}%
    \put(0.90272607,0.06479134){\color[rgb]{0,0,0}\makebox(0,0)[lt]{\lineheight{1.25}\smash{\begin{tabular}[t]{l}$\seval u$\end{tabular}}}}%
    \put(0,0){\includegraphics[width=\unitlength,page=2]{fullabs.pdf}}%
  \end{picture}%
\endgroup%
}
\caption{Diagram of abstraction before simplifying strictification operators}\label{fig:fullabs}
\end{figure}

By a slight, although unambiguous, overloading of the notation we will combine strictification and symmetry in a single box, to write abstraction as: 
\[
\hspace{-25ex}
\begingroup%
  \makeatletter%
  \providecommand\color[2][]{%
    \errmessage{(Inkscape) Color is used for the text in Inkscape, but the package 'color.sty' is not loaded}%
    \renewcommand\color[2][]{}%
  }%
  \providecommand\transparent[1]{%
    \errmessage{(Inkscape) Transparency is used (non-zero) for the text in Inkscape, but the package 'transparent.sty' is not loaded}%
    \renewcommand\transparent[1]{}%
  }%
  \providecommand\rotatebox[2]{#2}%
  \newcommand*\fsize{\dimexpr\f@size pt\relax}%
  \newcommand*\lineheight[1]{\fontsize{\fsize}{#1\fsize}\selectfont}%
  \ifx\svgwidth\undefined%
    \setlength{\unitlength}{328.74981078bp}%
    \ifx\svgscale\undefined%
      \relax%
    \else%
      \setlength{\unitlength}{\unitlength * \real{\svgscale}}%
    \fi%
  \else%
    \setlength{\unitlength}{\svgwidth}%
  \fi%
  \global\let\svgwidth\undefined%
  \global\let\svgscale\undefined%
  \makeatother%
  \begin{picture}(1,0.15314448)%
    \lineheight{1}%
    \setlength\tabcolsep{0pt}%
    \put(0.61658579,0.13909977){\makebox(0,0)[rt]{\lineheight{1.25}\smash{\begin{tabular}[t]{r}$\seval{\Gamma_1,\Gamma_2\vdash \lambda x{:}T.u:T\to T'}=$\end{tabular}}}}%
    \put(0,0){\includegraphics[width=\unitlength,page=1]{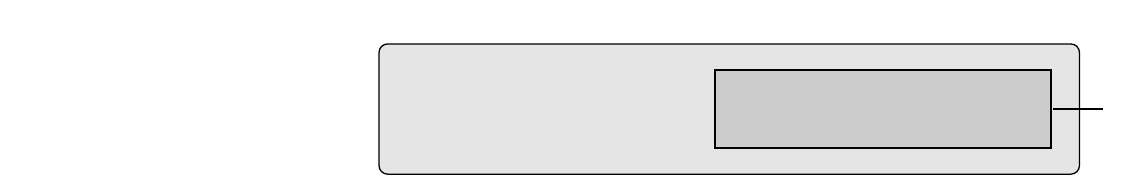}}%
    \put(0.7745094,0.05396609){\color[rgb]{0,0,0}\makebox(0,0)[t]{\lineheight{1.25}\smash{\begin{tabular}[t]{c}$\seval{\Gamma_1,x:T,\Gamma_2\vdash u:T'}$\end{tabular}}}}%
    \put(0,0){\includegraphics[width=\unitlength,page=2]{stlc2.pdf}}%
    \put(0.34590335,0.07774744){\color[rgb]{0,0,0}\makebox(0,0)[lt]{\lineheight{1.25}\smash{\begin{tabular}[t]{l}$\Gamma_1\times\Gamma_2$\end{tabular}}}}%
    \put(0.42098906,0.012493){\color[rgb]{0,0,0}\makebox(0,0)[lt]{\lineheight{1.25}\smash{\begin{tabular}[t]{l}$T$\end{tabular}}}}%
    \put(0,0){\includegraphics[width=\unitlength,page=3]{stlc2.pdf}}%
    \put(0.4736151,0.02966432){\color[rgb]{0,0,0}\makebox(0,0)[lt]{\lineheight{1.25}\smash{\begin{tabular}[t]{l}$\Gamma_1\times T\times\Gamma_2$\end{tabular}}}}%
  \end{picture}%
\endgroup%

\]
The object annotations on the composite box indicate that it can be uniquely reconstructed as
\[
\hspace{-15ex}
\begingroup%
  \makeatletter%
  \providecommand\color[2][]{%
    \errmessage{(Inkscape) Color is used for the text in Inkscape, but the package 'color.sty' is not loaded}%
    \renewcommand\color[2][]{}%
  }%
  \providecommand\transparent[1]{%
    \errmessage{(Inkscape) Transparency is used (non-zero) for the text in Inkscape, but the package 'transparent.sty' is not loaded}%
    \renewcommand\transparent[1]{}%
  }%
  \providecommand\rotatebox[2]{#2}%
  \newcommand*\fsize{\dimexpr\f@size pt\relax}%
  \newcommand*\lineheight[1]{\fontsize{\fsize}{#1\fsize}\selectfont}%
  \ifx\svgwidth\undefined%
    \setlength{\unitlength}{402.53338004bp}%
    \ifx\svgscale\undefined%
      \relax%
    \else%
      \setlength{\unitlength}{\unitlength * \real{\svgscale}}%
    \fi%
  \else%
    \setlength{\unitlength}{\svgwidth}%
  \fi%
  \global\let\svgwidth\undefined%
  \global\let\svgscale\undefined%
  \makeatother%
  \begin{picture}(1,0.08385591)%
    \lineheight{1}%
    \setlength\tabcolsep{0pt}%
    \put(0,0){\includegraphics[width=\unitlength,page=1]{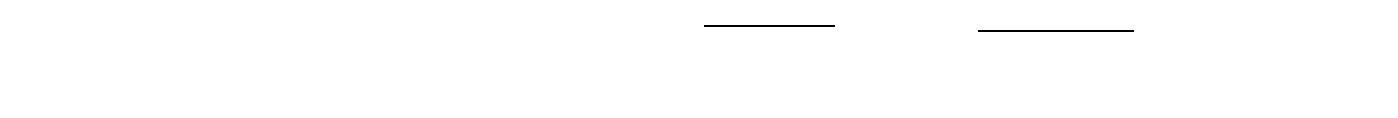}}%
    \put(0.50608465,0.07235861){\color[rgb]{0,0,0}\makebox(0,0)[lt]{\lineheight{1.25}\smash{\begin{tabular}[t]{l}$\Gamma_1\times\Gamma_2$\end{tabular}}}}%
    \put(0.5599545,0.01533878){\color[rgb]{0,0,0}\makebox(0,0)[lt]{\lineheight{1.25}\smash{\begin{tabular}[t]{l}$T$\end{tabular}}}}%
    \put(0.70354704,0.03308903){\color[rgb]{0,0,0}\makebox(0,0)[lt]{\lineheight{1.25}\smash{\begin{tabular}[t]{l}$\Gamma_1\times T\times\Gamma_2$\end{tabular}}}}%
    \put(0,0){\includegraphics[width=\unitlength,page=2]{stlc3.pdf}}%
    \put(0.23405752,0.06304258){\color[rgb]{0,0,0}\makebox(0,0)[lt]{\lineheight{1.25}\smash{\begin{tabular}[t]{l}$\Gamma_1\times\Gamma_2$\end{tabular}}}}%
    \put(0.29910657,0.0060228){\color[rgb]{0,0,0}\makebox(0,0)[lt]{\lineheight{1.25}\smash{\begin{tabular}[t]{l}$T$\end{tabular}}}}%
    \put(0,0){\includegraphics[width=\unitlength,page=3]{stlc3.pdf}}%
    \put(0.34208636,0.02377305){\color[rgb]{0,0,0}\makebox(0,0)[lt]{\lineheight{1.25}\smash{\begin{tabular}[t]{l}$\Gamma_1\times T\times\Gamma_2$\end{tabular}}}}%
    \put(0.46676669,0.0428655){\makebox(0,0)[lt]{\lineheight{1.25}\smash{\begin{tabular}[t]{l}$=$\end{tabular}}}}%
    \put(0,0){\includegraphics[width=\unitlength,page=4]{stlc3.pdf}}%
  \end{picture}%
\endgroup%

\]
\end{remark}

\begin{example}\label{ex:idid}
Consider the identity applied to itself $(\lambda x.x)(\lambda y.y)$. 
Its string diagram representation is:
$
\begingroup%
  \makeatletter%
  \providecommand\color[2][]{%
    \errmessage{(Inkscape) Color is used for the text in Inkscape, but the package 'color.sty' is not loaded}%
    \renewcommand\color[2][]{}%
  }%
  \providecommand\transparent[1]{%
    \errmessage{(Inkscape) Transparency is used (non-zero) for the text in Inkscape, but the package 'transparent.sty' is not loaded}%
    \renewcommand\transparent[1]{}%
  }%
  \providecommand\rotatebox[2]{#2}%
  \newcommand*\fsize{\dimexpr\f@size pt\relax}%
  \newcommand*\lineheight[1]{\fontsize{\fsize}{#1\fsize}\selectfont}%
  \ifx\svgwidth\undefined%
    \setlength{\unitlength}{48.93749717bp}%
    \ifx\svgscale\undefined%
      \relax%
    \else%
      \setlength{\unitlength}{\unitlength * \real{\svgscale}}%
    \fi%
  \else%
    \setlength{\unitlength}{\svgwidth}%
  \fi%
  \global\let\svgwidth\undefined%
  \global\let\svgscale\undefined%
  \makeatother%
  \begin{picture}(1,0.54406134)%
    \lineheight{1}%
    \setlength\tabcolsep{0pt}%
    \put(0,0){\includegraphics[width=\unitlength,page=1]{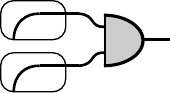}}%
  \end{picture}%
\endgroup%

$
\end{example}
\begin{example}
Consider the term  
\[
\lambda n{:}((o\to o)\to o \to o).\lambda f{:}o\to o.\lambda x:o .f (\,n f\, x)
\]
The diagram constructed from the earlier definitions is:
\[
\begingroup%
  \makeatletter%
  \providecommand\color[2][]{%
    \errmessage{(Inkscape) Color is used for the text in Inkscape, but the package 'color.sty' is not loaded}%
    \renewcommand\color[2][]{}%
  }%
  \providecommand\transparent[1]{%
    \errmessage{(Inkscape) Transparency is used (non-zero) for the text in Inkscape, but the package 'transparent.sty' is not loaded}%
    \renewcommand\transparent[1]{}%
  }%
  \providecommand\rotatebox[2]{#2}%
  \newcommand*\fsize{\dimexpr\f@size pt\relax}%
  \newcommand*\lineheight[1]{\fontsize{\fsize}{#1\fsize}\selectfont}%
  \ifx\svgwidth\undefined%
    \setlength{\unitlength}{196.21240814bp}%
    \ifx\svgscale\undefined%
      \relax%
    \else%
      \setlength{\unitlength}{\unitlength * \real{\svgscale}}%
    \fi%
  \else%
    \setlength{\unitlength}{\svgwidth}%
  \fi%
  \global\let\svgwidth\undefined%
  \global\let\svgscale\undefined%
  \makeatother%
  \begin{picture}(1,0.44156351)%
    \lineheight{1}%
    \setlength\tabcolsep{0pt}%
    \put(0,0){\includegraphics[width=\unitlength,page=1]{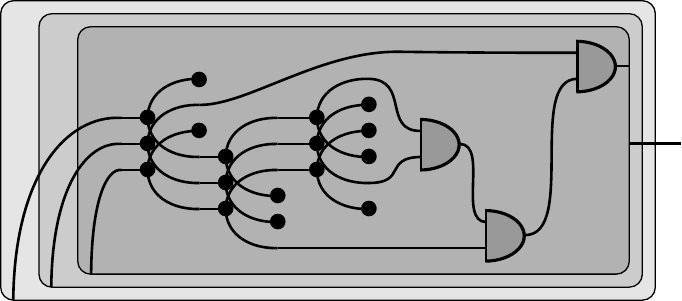}}%
  \end{picture}%
\endgroup%

\] 
The copying-deletion pairs introduced by the definition of application and variable, respectively can be simplified, resulting in:
\[
\begingroup%
  \makeatletter%
  \providecommand\color[2][]{%
    \errmessage{(Inkscape) Color is used for the text in Inkscape, but the package 'color.sty' is not loaded}%
    \renewcommand\color[2][]{}%
  }%
  \providecommand\transparent[1]{%
    \errmessage{(Inkscape) Transparency is used (non-zero) for the text in Inkscape, but the package 'transparent.sty' is not loaded}%
    \renewcommand\transparent[1]{}%
  }%
  \providecommand\rotatebox[2]{#2}%
  \newcommand*\fsize{\dimexpr\f@size pt\relax}%
  \newcommand*\lineheight[1]{\fontsize{\fsize}{#1\fsize}\selectfont}%
  \ifx\svgwidth\undefined%
    \setlength{\unitlength}{198.88646782bp}%
    \ifx\svgscale\undefined%
      \relax%
    \else%
      \setlength{\unitlength}{\unitlength * \real{\svgscale}}%
    \fi%
  \else%
    \setlength{\unitlength}{\svgwidth}%
  \fi%
  \global\let\svgwidth\undefined%
  \global\let\svgscale\undefined%
  \makeatother%
  \begin{picture}(1,0.43562661)%
    \lineheight{1}%
    \setlength\tabcolsep{0pt}%
    \put(0,0){\includegraphics[width=\unitlength,page=1]{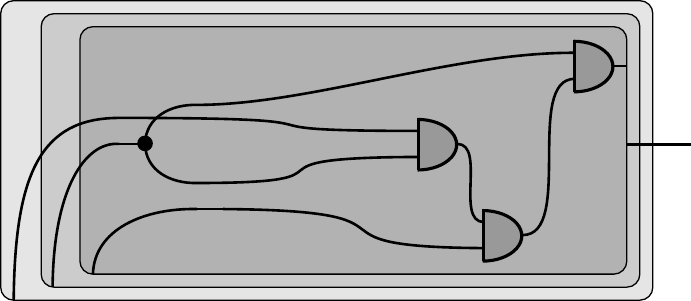}}%
  \end{picture}%
\endgroup%

\] 
\end{example}

\subsubsection{$\lambda$ calculus with product types}
Extending the calculus with products is a simple exercise. 
The language of types is extended with a new constructor, $T:=\cdots\mid T\times T$. 
They term syntax is extended with the ability to construct pairs $(u_1,u_2)$, as well as a $\lambda$ abstraction on a pair, $\lambda(x,y).u$. 
Note that this is a syntax that implies pattern matching on the argument. 
Projections can be defined as a family of terms $\pi_i = \lambda(x_1{:}T_1,x_2{:}T_2).x_i:T_i$ with $i=1,2$. 

The two new typing rules are:
\begin{gather*}
\frac{\Gamma\vdash u_i:T_i\quad i=1,2}{\Gamma\vdash{(u_1,u_2)}:T_1\times T_2}
\\[1.5ex]
\frac{\Gamma_1,x_1:T_1,
\Gamma_2,x_2:T_2, 
\Gamma_3\vdash u:T_3}{\Gamma_1,\Gamma_2,\Gamma_3\vdash \lambda(x_1,x_2){:}T_1\times T_2.u:(T_1\times T_2)\to T_3}.
\end{gather*}
The diagrammatic syntax is:
\[
\begingroup%
  \makeatletter%
  \providecommand\color[2][]{%
    \errmessage{(Inkscape) Color is used for the text in Inkscape, but the package 'color.sty' is not loaded}%
    \renewcommand\color[2][]{}%
  }%
  \providecommand\transparent[1]{%
    \errmessage{(Inkscape) Transparency is used (non-zero) for the text in Inkscape, but the package 'transparent.sty' is not loaded}%
    \renewcommand\transparent[1]{}%
  }%
  \providecommand\rotatebox[2]{#2}%
  \newcommand*\fsize{\dimexpr\f@size pt\relax}%
  \newcommand*\lineheight[1]{\fontsize{\fsize}{#1\fsize}\selectfont}%
  \ifx\svgwidth\undefined%
    \setlength{\unitlength}{151.61998867bp}%
    \ifx\svgscale\undefined%
      \relax%
    \else%
      \setlength{\unitlength}{\unitlength * \real{\svgscale}}%
    \fi%
  \else%
    \setlength{\unitlength}{\svgwidth}%
  \fi%
  \global\let\svgwidth\undefined%
  \global\let\svgscale\undefined%
  \makeatother%
  \begin{picture}(1,0.69623067)%
    \lineheight{1}%
    \setlength\tabcolsep{0pt}%
    \put(0,0){\includegraphics[width=\unitlength,page=1]{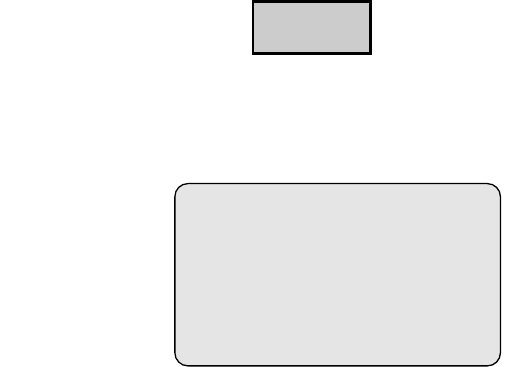}}%
    \put(0.59820676,0.62450538){\makebox(0,0)[t]{\lineheight{1.25}\smash{\begin{tabular}[t]{c}$\seval{u_2}$\end{tabular}}}}%
    \put(0.24971974,0.55033296){\makebox(0,0)[lt]{\lineheight{1.25}\smash{\begin{tabular}[t]{l}$\Gamma$\end{tabular}}}}%
    \put(0,0){\includegraphics[width=\unitlength,page=2]{lambprod.pdf}}%
    \put(0.59820676,0.47610806){\makebox(0,0)[t]{\lineheight{1.25}\smash{\begin{tabular}[t]{c}$\seval{u_1}$\end{tabular}}}}%
    \put(0,0){\includegraphics[width=\unitlength,page=3]{lambprod.pdf}}%
    \put(0.79730642,0.23743545){\makebox(0,0)[t]{\lineheight{1.25}\smash{\begin{tabular}[t]{c}$\seval{u}$\end{tabular}}}}%
    \put(0,0){\includegraphics[width=\unitlength,page=4]{lambprod.pdf}}%
    \put(0.2708062,0.28543197){\makebox(0,0)[rt]{\lineheight{1.25}\smash{\begin{tabular}[t]{r}$\Gamma_3$\end{tabular}}}}%
    \put(0.2708062,0.22607241){\makebox(0,0)[rt]{\lineheight{1.25}\smash{\begin{tabular}[t]{r}$\Gamma_2$\end{tabular}}}}%
    \put(0.2708062,0.16671314){\makebox(0,0)[rt]{\lineheight{1.25}\smash{\begin{tabular}[t]{r}$\Gamma_1$\end{tabular}}}}%
    \put(0.54068295,0.14371377){\makebox(0,0)[rt]{\lineheight{1.25}\smash{\begin{tabular}[t]{r}$T_2$\end{tabular}}}}%
    \put(0.59214725,0.08674396){\makebox(0,0)[lt]{\lineheight{1.25}\smash{\begin{tabular}[t]{l}$T_1$\end{tabular}}}}%
    \put(0,0){\includegraphics[width=\unitlength,page=5]{lambprod.pdf}}%
  \end{picture}%
\endgroup%

\]

\subsubsection{The un(i)typed $\lambda$ calculus}\label{sec:unityped}
We can define $\lambda$ terms without typing information, i.e. \emph{untyped}, and we call it UTLC for brevity.  
However, in order to implement this calculus in a CCC types are essential, as they correspond to the category's objects. 
So the interpretation will be \emph{uni-typed}, assigning all objects a single type, interpreted by a single object $U$.
\begin{definition}
$U$ is a \emph{reflexive object} in a CCC if it there are morphisms 
\[
\xymatrix{
U \ar@<1ex>[r]^{\iota\quad} &  \ar@<1ex>[l]^{\rho\quad}
(U\Rightarrow U) 
}
\]
such that 
\[
\rho \semic \iota = \id_{U\Rightarrow U}.
\]
\end{definition}
In this case we say that $U\Rightarrow U$ is a \emph{retract} of $U$. 

Let us use the following graphical syntax for the special morphisms $\iota$ and $\rho$:
\[
\begingroup%
  \makeatletter%
  \providecommand\color[2][]{%
    \errmessage{(Inkscape) Color is used for the text in Inkscape, but the package 'color.sty' is not loaded}%
    \renewcommand\color[2][]{}%
  }%
  \providecommand\transparent[1]{%
    \errmessage{(Inkscape) Transparency is used (non-zero) for the text in Inkscape, but the package 'transparent.sty' is not loaded}%
    \renewcommand\transparent[1]{}%
  }%
  \providecommand\rotatebox[2]{#2}%
  \newcommand*\fsize{\dimexpr\f@size pt\relax}%
  \newcommand*\lineheight[1]{\fontsize{\fsize}{#1\fsize}\selectfont}%
  \ifx\svgwidth\undefined%
    \setlength{\unitlength}{203.63690432bp}%
    \ifx\svgscale\undefined%
      \relax%
    \else%
      \setlength{\unitlength}{\unitlength * \real{\svgscale}}%
    \fi%
  \else%
    \setlength{\unitlength}{\svgwidth}%
  \fi%
  \global\let\svgwidth\undefined%
  \global\let\svgscale\undefined%
  \makeatother%
  \begin{picture}(1,0.24307991)%
    \lineheight{1}%
    \setlength\tabcolsep{0pt}%
    \put(0,0){\includegraphics[width=\unitlength,page=1]{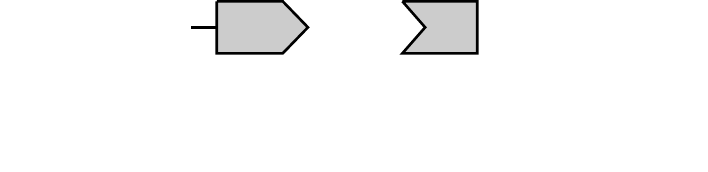}}%
    \put(0.35293068,0.19243781){\makebox(0,0)[t]{\lineheight{1.25}\smash{\begin{tabular}[t]{c}$\rho$\end{tabular}}}}%
    \put(0.63228729,0.19335879){\makebox(0,0)[t]{\lineheight{1.25}\smash{\begin{tabular}[t]{c}$\iota$\end{tabular}}}}%
    \put(0,0){\includegraphics[width=\unitlength,page=2]{untypedlc.pdf}}%
    \put(0.26412804,0.1910644){\makebox(0,0)[rt]{\lineheight{1.25}\smash{\begin{tabular}[t]{r}$U\Rightarrow U$\end{tabular}}}}%
    \put(0.73568495,0.19151746){\makebox(0,0)[lt]{\lineheight{1.25}\smash{\begin{tabular}[t]{l}$U\Rightarrow U$\end{tabular}}}}%
    \put(0.53959266,0.19105038){\makebox(0,0)[rt]{\lineheight{1.25}\smash{\begin{tabular}[t]{r}$U$\end{tabular}}}}%
    \put(0,0){\includegraphics[width=\unitlength,page=3]{untypedlc.pdf}}%
    \put(0.3308325,0.02670164){\makebox(0,0)[t]{\lineheight{1.25}\smash{\begin{tabular}[t]{c}$\rho$\end{tabular}}}}%
    \put(0.4665508,0.02762262){\makebox(0,0)[t]{\lineheight{1.25}\smash{\begin{tabular}[t]{c}$\iota$\end{tabular}}}}%
    \put(0,0){\includegraphics[width=\unitlength,page=4]{untypedlc.pdf}}%
    \put(0.5939964,0.02670228){\makebox(0,0)[t]{\lineheight{1.25}\smash{\begin{tabular}[t]{c}$=$\end{tabular}}}}%
    \put(0.70026448,0.05708686){\makebox(0,0)[t]{\lineheight{1.25}\smash{\begin{tabular}[t]{c}$\id_{U\Rightarrow U}$\end{tabular}}}}%
    \put(0,0){\includegraphics[width=\unitlength,page=5]{untypedlc.pdf}}%
  \end{picture}%
\endgroup%

\]
Note that it is redundant to indicate the type of terms and variables in $\Gamma$, as they are always $U$. 
With these, the interpretation of UTLC terms can be given as
\[
\hspace{-15ex}
\begingroup%
  \makeatletter%
  \providecommand\color[2][]{%
    \errmessage{(Inkscape) Color is used for the text in Inkscape, but the package 'color.sty' is not loaded}%
    \renewcommand\color[2][]{}%
  }%
  \providecommand\transparent[1]{%
    \errmessage{(Inkscape) Transparency is used (non-zero) for the text in Inkscape, but the package 'transparent.sty' is not loaded}%
    \renewcommand\transparent[1]{}%
  }%
  \providecommand\rotatebox[2]{#2}%
  \newcommand*\fsize{\dimexpr\f@size pt\relax}%
  \newcommand*\lineheight[1]{\fontsize{\fsize}{#1\fsize}\selectfont}%
  \ifx\svgwidth\undefined%
    \setlength{\unitlength}{300.77610549bp}%
    \ifx\svgscale\undefined%
      \relax%
    \else%
      \setlength{\unitlength}{\unitlength * \real{\svgscale}}%
    \fi%
  \else%
    \setlength{\unitlength}{\svgwidth}%
  \fi%
  \global\let\svgwidth\undefined%
  \global\let\svgscale\undefined%
  \makeatother%
  \begin{picture}(1,0.37573373)%
    \lineheight{1}%
    \setlength\tabcolsep{0pt}%
    \put(0,0){\includegraphics[width=\unitlength,page=1]{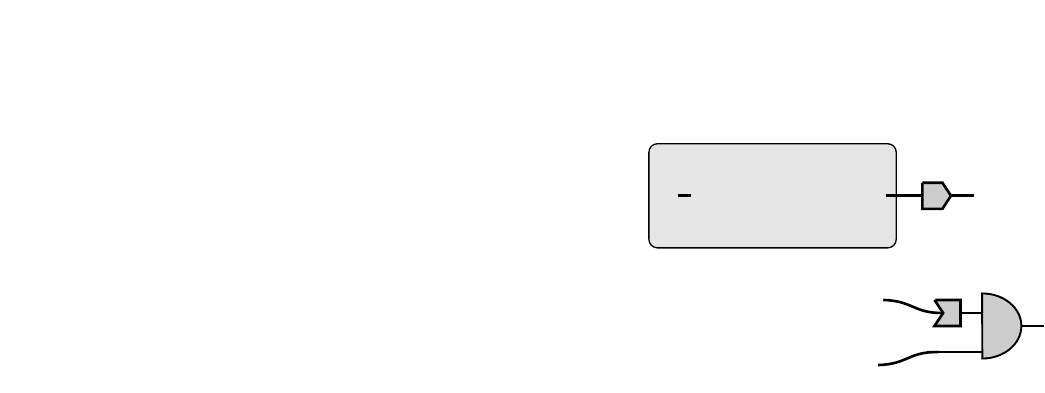}}%
    \put(0.55074319,0.31828373){\color[rgb]{0,0,0}\makebox(0,0)[rt]{\lineheight{1.25}\smash{\begin{tabular}[t]{r}$\seval{\Gamma_1,x,\Gamma_2\vdash x}=$\end{tabular}}}}%
    \put(0,0){\includegraphics[width=\unitlength,page=2]{utlc.pdf}}%
    \put(0.55848742,0.26961103){\color[rgb]{0,0,0}\makebox(0,0)[lt]{\lineheight{1.25}\smash{\begin{tabular}[t]{l}$\Gamma_1$\end{tabular}}}}%
    \put(0.55913512,0.36038279){\color[rgb]{0,0,0}\makebox(0,0)[lt]{\lineheight{1.25}\smash{\begin{tabular}[t]{l}$\Gamma_2$\end{tabular}}}}%
    \put(0,0){\includegraphics[width=\unitlength,page=3]{utlc.pdf}}%
    \put(0.55503955,0.18112462){\makebox(0,0)[rt]{\lineheight{1.25}\smash{\begin{tabular}[t]{r}$\seval{\Gamma_1,\Gamma_2\vdash \lambda x.u}=$\end{tabular}}}}%
    \put(0,0){\includegraphics[width=\unitlength,page=4]{utlc.pdf}}%
    \put(0.55744784,0.13505849){\color[rgb]{0,0,0}\makebox(0,0)[lt]{\lineheight{1.25}\smash{\begin{tabular}[t]{l}$\Gamma_1$\end{tabular}}}}%
    \put(0.55875337,0.22559277){\color[rgb]{0,0,0}\makebox(0,0)[lt]{\lineheight{1.25}\smash{\begin{tabular}[t]{l}$\Gamma_2$\end{tabular}}}}%
    \put(0.56404583,0.05235325){\makebox(0,0)[rt]{\lineheight{1.25}\smash{\begin{tabular}[t]{r}$\seval{\Gamma\vdash u_2\,u_1}=$\end{tabular}}}}%
    \put(0,0){\includegraphics[width=\unitlength,page=5]{utlc.pdf}}%
    \put(0.76313542,0.07456979){\color[rgb]{0,0,0}\makebox(0,0)[t]{\lineheight{1.25}\smash{\begin{tabular}[t]{c}$\seval{\Gamma\vdash u_2}$\end{tabular}}}}%
    \put(0,0){\includegraphics[width=\unitlength,page=6]{utlc.pdf}}%
    \put(0.76427964,0.01824021){\color[rgb]{0,0,0}\makebox(0,0)[t]{\lineheight{1.25}\smash{\begin{tabular}[t]{c}$\seval{\Gamma\vdash u_1}$\end{tabular}}}}%
    \put(0,0){\includegraphics[width=\unitlength,page=7]{utlc.pdf}}%
    \put(0.75565491,0.18428604){\color[rgb]{0,0,0}\makebox(0,0)[t]{\lineheight{1.25}\smash{\begin{tabular}[t]{c}$\seval{\Gamma_1,x,\Gamma_2\vdash u}$\end{tabular}}}}%
  \end{picture}%
\endgroup%

\]
\begin{example}\label{ex:idid2}
We can revisit Example~\ref{ex:idid}. 
Consider the identity applied to itself $(\lambda x.x)(\lambda y.y)$. 
Its string diagram representation is, after cancelling out the retract, the same as before:
\[
\begingroup%
  \makeatletter%
  \providecommand\color[2][]{%
    \errmessage{(Inkscape) Color is used for the text in Inkscape, but the package 'color.sty' is not loaded}%
    \renewcommand\color[2][]{}%
  }%
  \providecommand\transparent[1]{%
    \errmessage{(Inkscape) Transparency is used (non-zero) for the text in Inkscape, but the package 'transparent.sty' is not loaded}%
    \renewcommand\transparent[1]{}%
  }%
  \providecommand\rotatebox[2]{#2}%
  \newcommand*\fsize{\dimexpr\f@size pt\relax}%
  \newcommand*\lineheight[1]{\fontsize{\fsize}{#1\fsize}\selectfont}%
  \ifx\svgwidth\undefined%
    \setlength{\unitlength}{143.43748867bp}%
    \ifx\svgscale\undefined%
      \relax%
    \else%
      \setlength{\unitlength}{\unitlength * \real{\svgscale}}%
    \fi%
  \else%
    \setlength{\unitlength}{\svgwidth}%
  \fi%
  \global\let\svgwidth\undefined%
  \global\let\svgscale\undefined%
  \makeatother%
  \begin{picture}(1,0.18562093)%
    \lineheight{1}%
    \setlength\tabcolsep{0pt}%
    \put(0,0){\includegraphics[width=\unitlength,page=1]{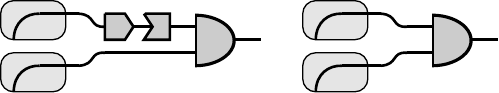}}%
    \put(0.56210183,0.09214566){\makebox(0,0)[t]{\lineheight{1.25}\smash{\begin{tabular}[t]{c}$=$\end{tabular}}}}%
  \end{picture}%
\endgroup%

\]
\end{example}
\begin{example}
In the UTLC we can have more terms than in STLC, for example diverging terms, $(\lambda x.x\,x)\,(\lambda x.x\,x)$:
\[
\begingroup%
  \makeatletter%
  \providecommand\color[2][]{%
    \errmessage{(Inkscape) Color is used for the text in Inkscape, but the package 'color.sty' is not loaded}%
    \renewcommand\color[2][]{}%
  }%
  \providecommand\transparent[1]{%
    \errmessage{(Inkscape) Transparency is used (non-zero) for the text in Inkscape, but the package 'transparent.sty' is not loaded}%
    \renewcommand\transparent[1]{}%
  }%
  \providecommand\rotatebox[2]{#2}%
  \newcommand*\fsize{\dimexpr\f@size pt\relax}%
  \newcommand*\lineheight[1]{\fontsize{\fsize}{#1\fsize}\selectfont}%
  \ifx\svgwidth\undefined%
    \setlength{\unitlength}{112.68746575bp}%
    \ifx\svgscale\undefined%
      \relax%
    \else%
      \setlength{\unitlength}{\unitlength * \real{\svgscale}}%
    \fi%
  \else%
    \setlength{\unitlength}{\svgwidth}%
  \fi%
  \global\let\svgwidth\undefined%
  \global\let\svgscale\undefined%
  \makeatother%
  \begin{picture}(1,0.9999996)%
    \lineheight{1}%
    \setlength\tabcolsep{0pt}%
    \put(0,0){\includegraphics[width=\unitlength,page=1]{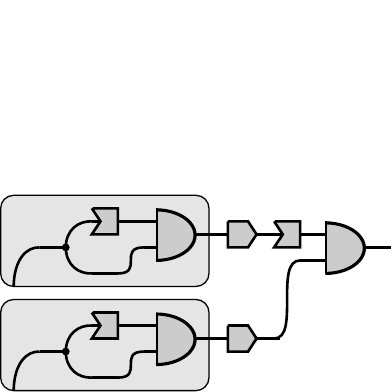}}%
  \end{picture}%
\endgroup%

\]
\end{example}

\subsection{The theory $\lambda$}
We write the equality of two $\lambda$ calculus terms, typed or un(i)typed, as 
\[
\Gamma \vdash u = v
\]
where $\Gamma \vdash u:T$ and $\Gamma \vdash v:T$ for some type $T$.

The equality of  consists of the usual equality axioms (reflexivity, symmetry, transitivity) plus  three \emph{congruences}:
\[
\frac{\Gamma \vdash u=u'}{\Gamma \vdash u\,v=u'\,v}
\qquad
\frac{\Gamma \vdash v=v'}{\Gamma \vdash u\,v=u\,v'}
\qquad
\frac{\Gamma,x:U,\Gamma' \vdash u=u'}{\Gamma,\Gamma' \vdash \lambda x.u=\lambda x.u'}
\]
and a main rule, which makes the $\lambda$ calculus special, relating \emph{application} and \emph{substitution}:
\[
\Gamma \vdash (\lambda x.u)v = u[x/v],
\]
where $[-/-]$ is the capture-avoiding substitution defined earlier. 
It is called \emph{the $\beta$ rule}: 

The calculus is often extended with an additional rule called $\eta$:
\[
\Gamma \vdash \lambda x.(u\,x)=u\qquad x\not\in \Gamma.
\]
\begin{remark}
The first set of inferences are called congruences because they immediately imply that if two terms are equivalent then they are so in any \emph{context}, i.e. term-with-a-hole $\mathcal U[-]$:
\[
\frac{v=v'}{\mathcal U[v]=\mathcal U[v']}.
\]
\end{remark}
The categorical and implicitly string-diagram syntax is constructed out of adjunctions which have their own equational properties. 
Ideally, the theory $\lambda$ on the terms should coincide with the equational theory of the CCC as applied to the interpreted terms. 

Before that, we will derive the interpretation of substitution in the CCC:
\begin{lemma}\label{lem:subst}
\begin{multline*}
\seval{\Gamma_1,\Gamma_2\vdash u[x/v]:T} = \\
\begingroup%
  \makeatletter%
  \providecommand\color[2][]{%
    \errmessage{(Inkscape) Color is used for the text in Inkscape, but the package 'color.sty' is not loaded}%
    \renewcommand\color[2][]{}%
  }%
  \providecommand\transparent[1]{%
    \errmessage{(Inkscape) Transparency is used (non-zero) for the text in Inkscape, but the package 'transparent.sty' is not loaded}%
    \renewcommand\transparent[1]{}%
  }%
  \providecommand\rotatebox[2]{#2}%
  \newcommand*\fsize{\dimexpr\f@size pt\relax}%
  \newcommand*\lineheight[1]{\fontsize{\fsize}{#1\fsize}\selectfont}%
  \ifx\svgwidth\undefined%
    \setlength{\unitlength}{322.1529145bp}%
    \ifx\svgscale\undefined%
      \relax%
    \else%
      \setlength{\unitlength}{\unitlength * \real{\svgscale}}%
    \fi%
  \else%
    \setlength{\unitlength}{\svgwidth}%
  \fi%
  \global\let\svgwidth\undefined%
  \global\let\svgscale\undefined%
  \makeatother%
  \begin{picture}(1,0.10589427)%
    \lineheight{1}%
    \setlength\tabcolsep{0pt}%
    \put(0,0){\includegraphics[width=\unitlength,page=1]{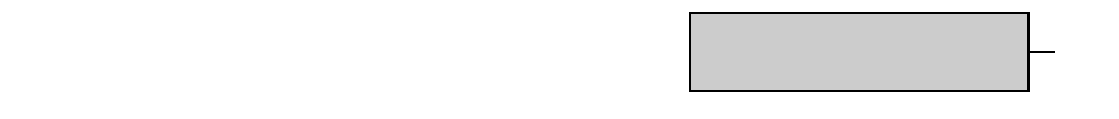}}%
    \put(0.76820949,0.05565319){\color[rgb]{0,0,0}\makebox(0,0)[t]{\lineheight{1.25}\smash{\begin{tabular}[t]{c}$\seval{\Gamma_1,x:T',\Gamma_2\vdash u:T}$\end{tabular}}}}%
    \put(0,0){\includegraphics[width=\unitlength,page=2]{sbst.pdf}}%
    \put(0.33548289,0.09156196){\color[rgb]{0,0,0}\makebox(0,0)[lt]{\lineheight{1.25}\smash{\begin{tabular}[t]{l}$\Gamma_1\times\Gamma_2$\end{tabular}}}}%
    \put(0.38658229,0.04317729){\color[rgb]{0,0,0}\makebox(0,0)[lt]{\lineheight{1.25}\smash{\begin{tabular}[t]{l}$T'$\end{tabular}}}}%
    \put(0,0){\includegraphics[width=\unitlength,page=3]{sbst.pdf}}%
    \put(0.46115359,0.03085378){\color[rgb]{0,0,0}\makebox(0,0)[lt]{\lineheight{1.25}\smash{\begin{tabular}[t]{l}$\Gamma_1\times T'\times\Gamma_2$\end{tabular}}}}%
    \put(0,0){\includegraphics[width=\unitlength,page=4]{sbst.pdf}}%
    \put(0.22907571,0.03171488){\color[rgb]{0,0,0}\makebox(0,0)[t]{\lineheight{1.25}\smash{\begin{tabular}[t]{c}$\seval{\Gamma_1,\Gamma_2\vdash v:T'}$\end{tabular}}}}%
    \put(0,0){\includegraphics[width=\unitlength,page=5]{sbst.pdf}}%
  \end{picture}%
\endgroup%

\end{multline*}
\end{lemma}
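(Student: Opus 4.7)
The plan is to proceed by structural induction on the term $u$, exploiting the inductive definition of both $\seval{-}$ and of capture-avoiding substitution. Because we are working up to $\alpha$-equivalence, we may freely assume that bound variables in $u$ are distinct from $x$ and from the free variables of $v$, so the four clauses defining $[x/v]$ apply without obstruction.

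For the variable case there are two sub-cases. If $u = x$, then $u[x/v] = v$ and the right-hand side reduces to $\seval{\Gamma_1,\Gamma_2\vdash v:T'}$ by applying the variable projection inside $\seval{\Gamma_1,x:T',\Gamma_2\vdash x:T'}$: the copy on $\Gamma_1\times\Gamma_2$ is cancelled against the deletion morphisms inserted by the variable rule, leaving only the wire carrying $v$. If $u = y$ for $y \neq x$, then $u[x/v] = y$, and the right-hand side will exhibit $\seval{v}$ feeding into the $T'$-wire of the variable diagram for $y$, which immediately deletes it; using the copy--delete equation (Definition~\ref{def:cartesian}) the $\seval{v}$ sub-diagram can be erased, leaving exactly the left-hand side. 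For the application case $u = u_2\, u_1$, I would apply the induction hypothesis to both $u_1$ and $u_2$, then use the naturality of copying to factor the shared occurrence of $\seval{v}$ on the copied $\Gamma_1\times\Gamma_2$ wires, reassembling the pairing-and-eval pattern of the application clause.

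The abstraction case $u = \lambda y{:}T_1.u'$ is the conceptually interesting one. Here I would use the inductive hypothesis applied to $u'$ inside the $\Lambda_{T_1}$ box. The key observation is Lemma~\ref{lem:sbstlam0}: the pre-composition by $\seval{v}$ (with appropriate copying) commutes with abstraction, allowing the $\seval{v}$ sub-diagram to be drawn \emph{outside} the $\Lambda$-box while preserving equality. Once $\seval{v}$ sits outside, the resulting diagram is literally the right-hand side of the claim for $\Gamma_1,\Gamma_2 \vdash \lambda y{:}T_1. u'[x/v]$. Combining this with the abstraction clause of substitution (noting our $\alpha$-renaming convention guarantees $y \notin \mathcal F(v)$ so the wire $y$ carries through cleanly) closes the case.

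The main obstacle is the abstraction case, specifically the bookkeeping to confirm that the strictifiers, symmetries, and copies that transport $\Gamma_1 \times \Gamma_2$ past both the new bound variable $y$ and the old bound variable $x$ align on the two sides. My expectation is that this is purely a routine unfolding using the equations for strictification (Figure~\ref{fig:strictification}), naturality of symmetry and copying, and Lemma~\ref{lem:sbstlam0}; but writing it down in full term-form would be unpleasant, so it is crucial that the verification be carried out diagrammatically, where sliding the $\seval{v}$ box along wires and across the abstraction boundary is visually transparent.
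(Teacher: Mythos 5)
Your proposal is correct and follows essentially the same route as the paper: induction on the syntax of $u$, with the variable cases handled by the copy/discard (and strictification) equations, the application case by naturality of copying, and the abstraction case by pulling $\seval{v}$ outside the $\Lambda$-box via Lemma~\ref{lem:sbstlam0} (the paper's Figure~\ref{fig:sbstlam} argument, which it notes is essentially that lemma, using naturality of co-eval and functoriality of the exponential). No gaps beyond the routine diagrammatic bookkeeping you already flag.
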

\begin{proof}
The proof is by induction on the syntax of $u$, and we show the equations diagrammatically
\begin{enumerate}
\item $u=x$, which implies $T=T'$, and $u[x/v]=x[x/v]=v$. 
\[
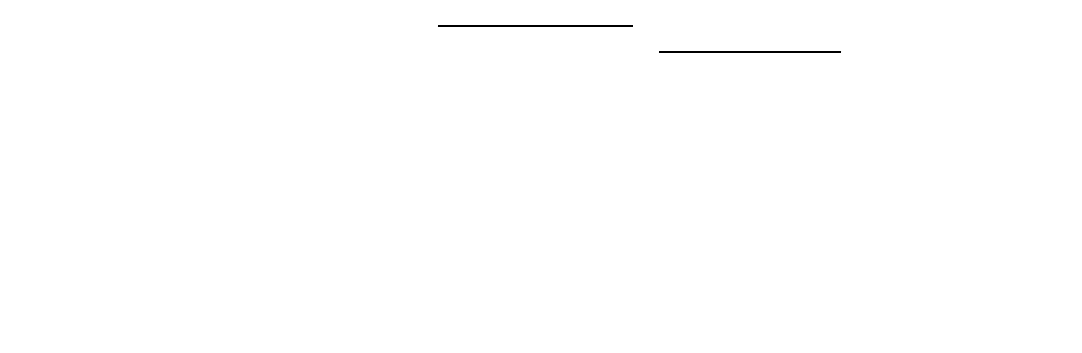
\]
\item \label{subst:2}
$u=y$, $y\neq x$ which implies $u[x/v]=y[x/v]=y$.
The proof is in Figure~\ref{fig:sbsty}.
\begin{figure}
\[
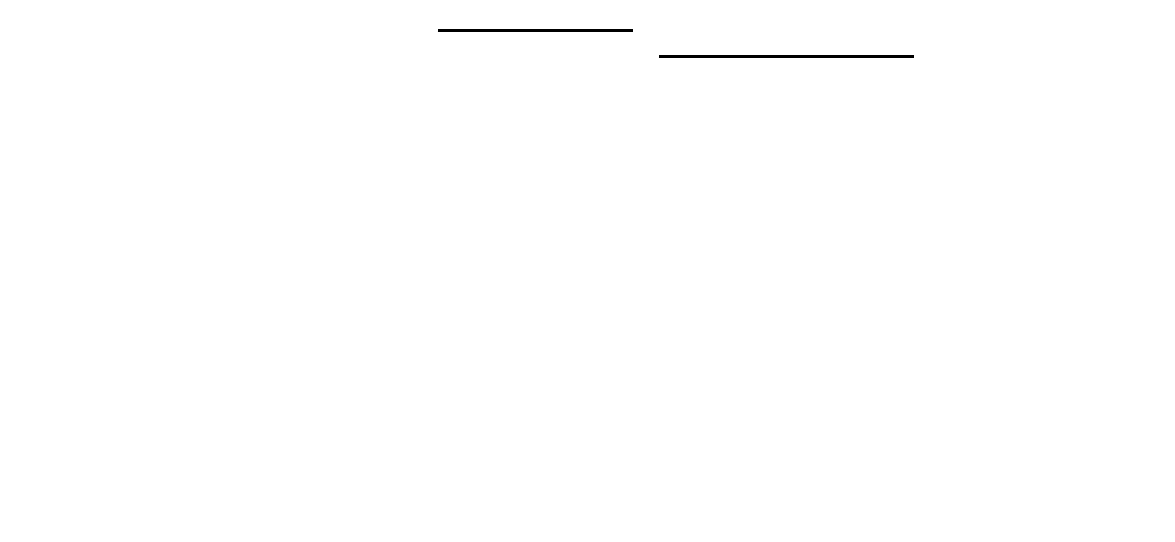
\] 
\caption{Proof of Lemma \ref{lem:subst}(\ref{subst:2})}
\label{fig:sbsty}
\end{figure}
\item The case $u_1\,u_2$ is an immediate consequence of the naturality of copying, left as an exercise to the reader. 
\item The case $\lambda y.u$, $y\neq x$  which implies the right-hand-side is $(\lambda y.u)[x/v]=\lambda y.(u[x/v]$.
It follows from naturality of coeval and functoriality of the exponential, as shown in Figure~\ref{fig:sbstlam} (which is essentially the same as Lemma~\ref{lem:sbstlam0}).
\begin{figure}
\[
\hspace{-20ex}
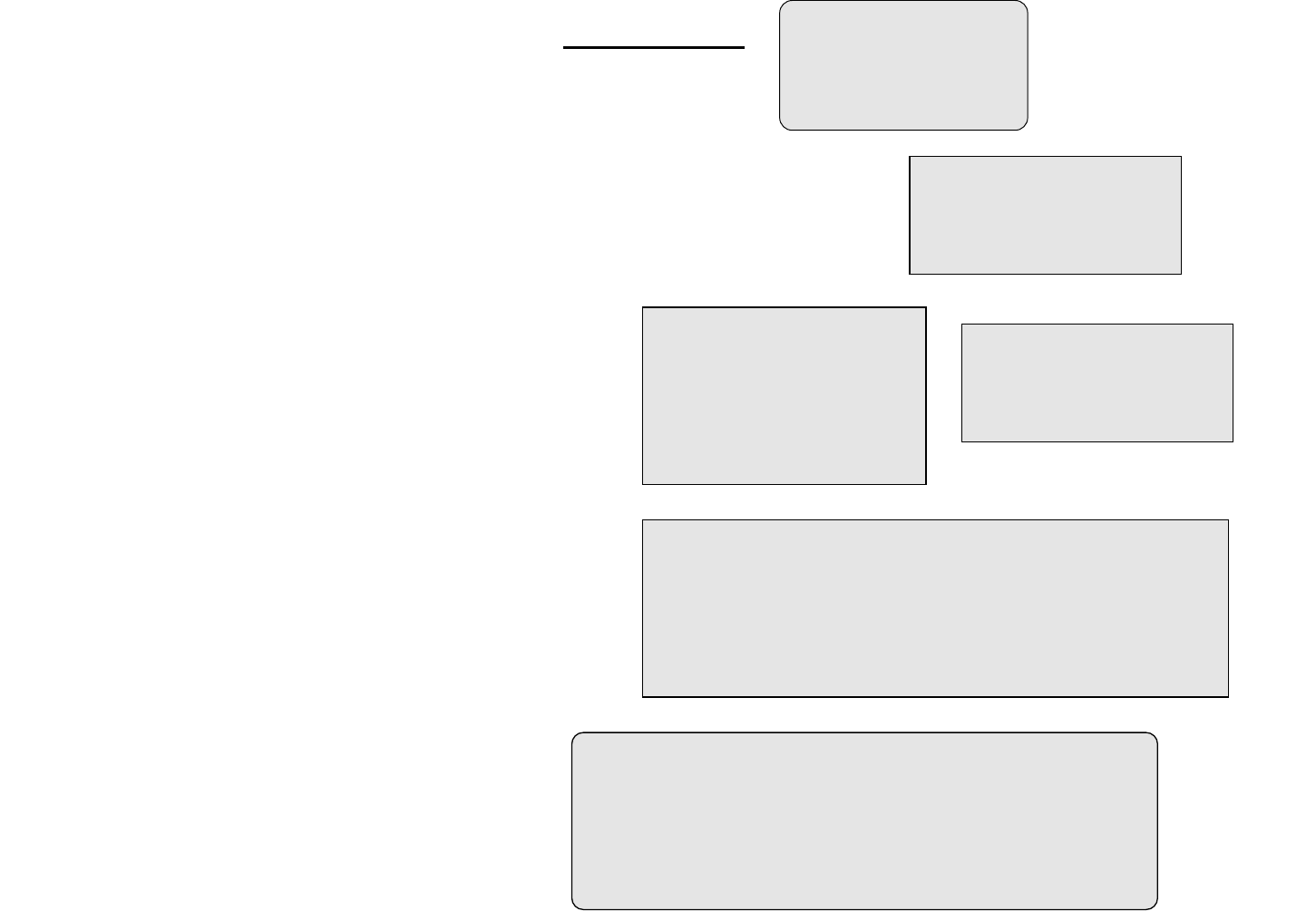
\]
\caption{Substitution in abstraction}
\label{fig:sbstlam}
\end{figure}
\end{enumerate}
\end{proof}
\begin{theorem}\label{thm:snd}
For any $\Gamma\vdash u:T$, $\Gamma\vdash v:T$,
if $\Gamma \vdash u=v$ in the theory $\lambda$, then 
\[\seval{\Gamma\vdash u:T}=\seval{\Gamma\vdash v:T}\] 
in the CCC. 
\end{theorem}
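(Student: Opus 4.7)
The plan is to proceed by induction on the derivation of $\Gamma \vdash u = v$ in the theory $\lambda$. The reflexivity, symmetry, and transitivity axioms are immediate, since equality in the CCC is itself an equivalence relation on morphisms. The three congruence rules follow directly from the inductive clauses of the interpretation: if a subterm's interpretation is replaced by an equal morphism, then application (built via $\langle -, -\rangle \semic eval$) and abstraction (built via $\Lambda$) yield equal morphisms, essentially by functoriality of $\times$ and of $\Lambda$. So the content of the theorem reduces to verifying the two substantive rules: the $\beta$ rule, and (if included) the $\eta$ rule.

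For the $\beta$ rule $(\lambda x.u)\,v = u[x/v]$, I would first expand $\seval{\Gamma \vdash (\lambda x.u)\,v : T}$ using the defining clauses. This produces a diagram in which the interpretation of $v$ is paired with an abstraction box containing the interpretation of $u$, followed by an $eval$ cell. Graphically, this is precisely the configuration of a co-eval/eval pair surrounding $\seval{u}$ with $\seval{v}$ plugged in along the $T'$-wire. Applying the unit/co-unit cancellation equation from Figure~\ref{fig:adjclosed} (or equivalently the second defining equation of abstraction and evaluation in Definition~\ref{def:freeSCMC}) collapses the co-eval/eval pair, leaving exactly the diagram in the statement of Lemma~\ref{lem:subst}. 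By that lemma, this equals $\seval{\Gamma \vdash u[x/v] : T}$, which is what we want.

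For the $\eta$ rule $\lambda x.(u\,x) = u$ with $x \not\in \Gamma$, I would expand the left-hand side: it consists of $\seval{u}$ being copied trivially (the fresh $x$ wire is discarded on the $\seval{u}$ side since $x \not\in \mathcal{F}(u)$), then applied to $x$ via $eval$, and the whole thing abstracted by $\Lambda$. The discard/copy simplifications on $\Gamma$ reduce this to $\Lambda(eval)$ precomposed with $\seval{u}$, and the third defining equation of abstraction (the one proved as Lemma~\ref{lem:sbstlam0}, used in the direction that $\Lambda$ absorbs a composition) together with $\Lambda(eval) = \id_{T \Rightarrow T'}$ gives back $\seval{u}$.

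The main obstacle I anticipate is not any one step but rather keeping the bookkeeping of $\Gamma_1, x{:}T, \Gamma_2$ splits honest in the $\beta$ case, since the interpretation implicitly uses strictifiers to shuffle the position of $x$ in the context. These structural manipulations are absorbed into the string diagram notation, so diagrammatically the proof is routine once one has Lemma~\ref{lem:subst}; the real work has already been done in establishing that lemma. The $\eta$ case requires the side condition $x \not\in \mathcal{F}(u)$ to be used precisely where the $x$-wire becomes a discard on the $\seval{u}$ component, which is where the proof could go wrong if one is sloppy about free variables.
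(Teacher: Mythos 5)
Your proposal is correct and follows essentially the same route as the paper: induction on the rules of the theory $\lambda$, with the equality and congruence cases immediate, the $\beta$ case handled by definitional expansion, eval/co-eval cancellation and Lemma~\ref{lem:subst}, and the $\eta$ case by copy/discard simplification using $x\not\in\mathcal F(u)$ plus the adjunction identities. Your use of $\Lambda_T(\mathit{eval})=\id$ in the $\eta$ case is just a repackaging of the triangle identity that the paper unfolds step by step (naturality of coeval followed by eval/coeval cancellation), so the arguments coincide in substance.
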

\begin{proof}
We use induction on the rules of the theory $\lambda$. 
For the equality and congruence axioms the statement is obviously true. 
The non-obvious rules are $\beta$ and $\eta$, which we derive graphically. 
\begin{description}
\item[$\beta$ rule]: \label{snd:beta}
$(\lambda x.u)v=u[v/x]$, proof in Figure \ref{fig:betaeq}.
\begin{figure}
\[
\hspace{-20ex}
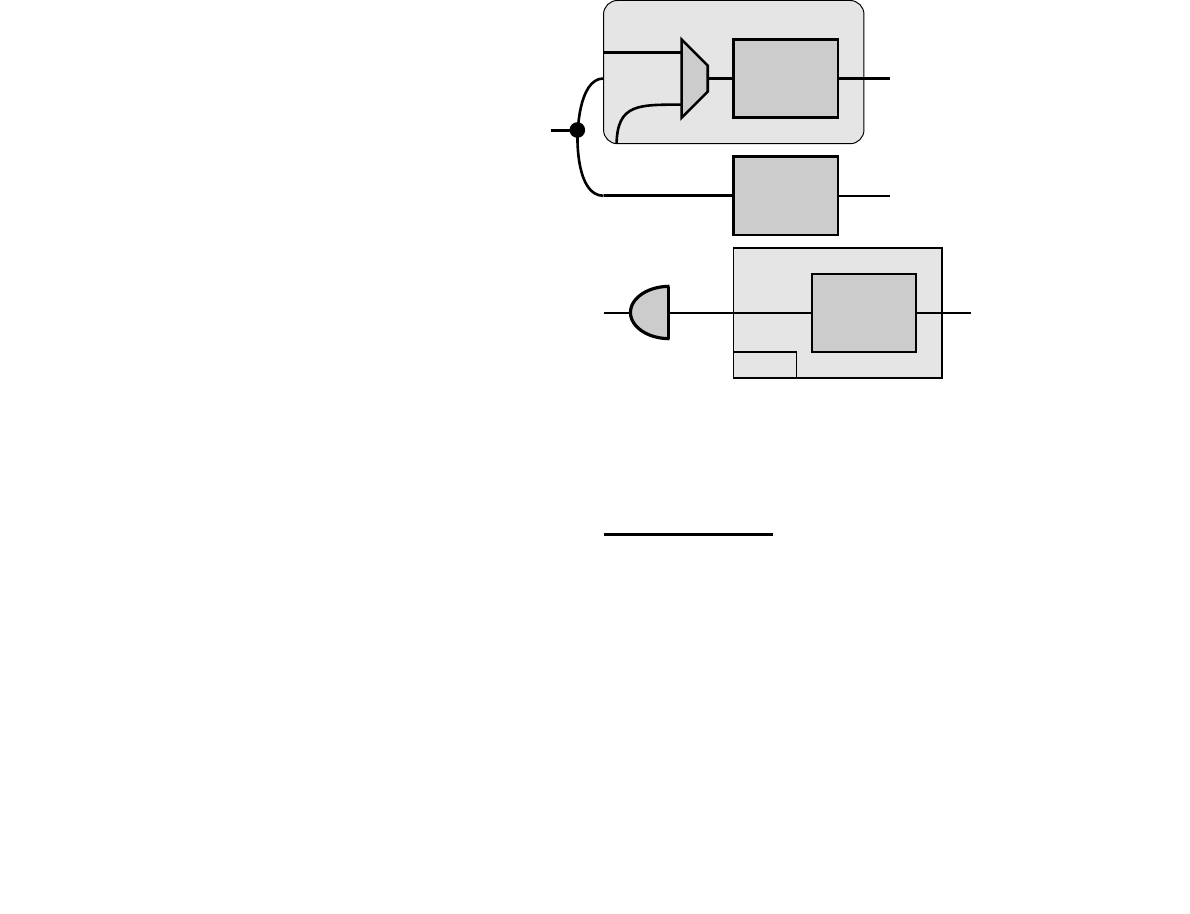
\]
\caption{Proof of $\beta$ rule}
\label{fig:betaeq}
\end{figure}
\item[$\eta$ rule]: $\lambda x.ux = u$, with $x\not\in\mathcal F(u)$ is shown in Figure~\ref{fig:etaeq}.
\begin{figure}
\[
\hspace{-20ex}
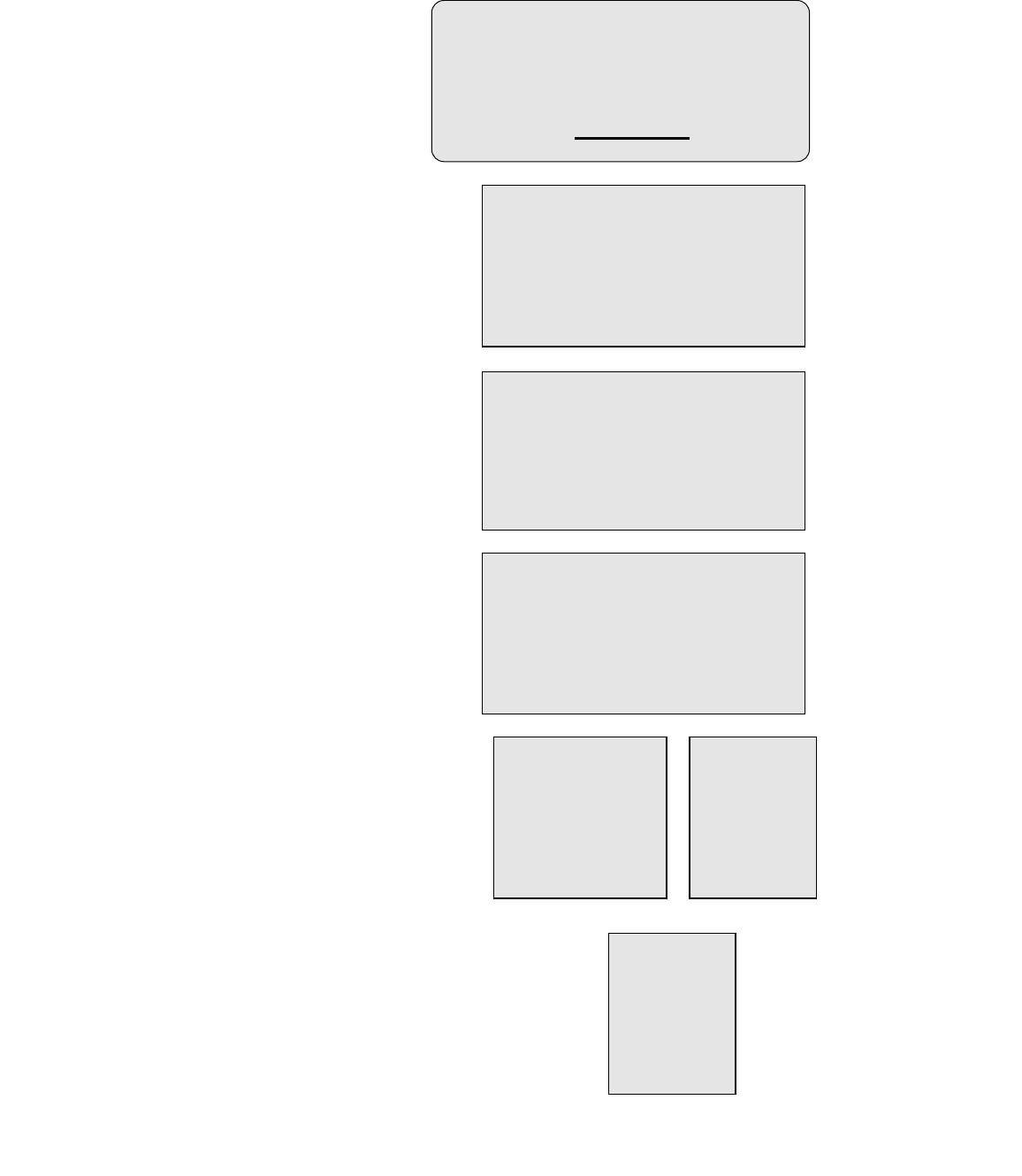
\]
\caption{Proof of the $\eta$ law}
\label{fig:etaeq}
\end{figure}
\end{description}
\end{proof}

\subsection{The Geometry of Synthesis construction}

In this section we will examine how Cartesian closed categories can be constructed out of more elemental categories, especially out of compact closed categories. 
The latter are natural models of systems with a relational semantics, for instance electrical circuits, whereas the former are, as we have just seen, frameworks for models of programming languages. 
``Compiling'' conventional programming languages into circuits is a long standing area of engineering interest, generally called \emph{high level synthesis}. 
Indeed, this process can be achieved systematically by understanding how Cartesian closed categories can be constructed out of compact closed categories, a construction dubbed \emph{the Geometry of Synthesis}~\cite{DBLP:conf/popl/Ghica07}.
The name of the construction is a homage to a famous construction in semantics of linear logic proofs called \emph{the Geometry of Interaction}, which shows how compact closed categories arise out of traced monoidal categories \cite{DBLP:journals/mscs/AbramskyHS02} and which can also be given an intuitively diagrammatic construction. 

We will describe it briefly, starting with the diagram language of compact closed categories, which are (symmetric for this case) monoidal categories such that every object $A$ has: 
\begin{itemize}
\item a \emph{dual} $A^*$, unique up to canonical isomorphism;
\item a \emph{unit} $\eta_A:I\mapsto A^*\otimes A$
\item a \emph{counit} $\epsilon_A:A\otimes A^*\to A$
\end{itemize}
represented diagramatically as: 
\[
\begingroup%
  \makeatletter%
  \providecommand\color[2][]{%
    \errmessage{(Inkscape) Color is used for the text in Inkscape, but the package 'color.sty' is not loaded}%
    \renewcommand\color[2][]{}%
  }%
  \providecommand\transparent[1]{%
    \errmessage{(Inkscape) Transparency is used (non-zero) for the text in Inkscape, but the package 'transparent.sty' is not loaded}%
    \renewcommand\transparent[1]{}%
  }%
  \providecommand\rotatebox[2]{#2}%
  \newcommand*\fsize{\dimexpr\f@size pt\relax}%
  \newcommand*\lineheight[1]{\fontsize{\fsize}{#1\fsize}\selectfont}%
  \ifx\svgwidth\undefined%
    \setlength{\unitlength}{131.98966694bp}%
    \ifx\svgscale\undefined%
      \relax%
    \else%
      \setlength{\unitlength}{\unitlength * \real{\svgscale}}%
    \fi%
  \else%
    \setlength{\unitlength}{\svgwidth}%
  \fi%
  \global\let\svgwidth\undefined%
  \global\let\svgscale\undefined%
  \makeatother%
  \begin{picture}(1,0.22956453)%
    \lineheight{1}%
    \setlength\tabcolsep{0pt}%
    \put(-0.00458726,0.09567581){\color[rgb]{0,0,0}\makebox(0,0)[lt]{\lineheight{1.25}\smash{\begin{tabular}[t]{l}$\eta_A=$\end{tabular}}}}%
    \put(0,0){\includegraphics[width=\unitlength,page=1]{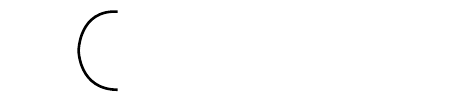}}%
    \put(0.55619035,0.09223477){\color[rgb]{0,0,0}\makebox(0,0)[lt]{\lineheight{1.25}\smash{\begin{tabular}[t]{l}$\epsilon_A=$\end{tabular}}}}%
    \put(0,0){\includegraphics[width=\unitlength,page=2]{compactcc.pdf}}%
    \put(0.78354516,0.17809332){\color[rgb]{0,0,0}\makebox(0,0)[lt]{\lineheight{1.25}\smash{\begin{tabular}[t]{l}$A^*$\end{tabular}}}}%
    \put(0.77871598,0.01068387){\color[rgb]{0,0,0}\makebox(0,0)[lt]{\lineheight{1.25}\smash{\begin{tabular}[t]{l}$A$\end{tabular}}}}%
    \put(0.26270212,0.18292249){\color[rgb]{0,0,0}\makebox(0,0)[lt]{\lineheight{1.25}\smash{\begin{tabular}[t]{l}$A$\end{tabular}}}}%
    \put(0.26666649,0.01551288){\color[rgb]{0,0,0}\makebox(0,0)[lt]{\lineheight{1.25}\smash{\begin{tabular}[t]{l}$A^*$\end{tabular}}}}%
  \end{picture}%
\endgroup%

\]
such that the following equations hold, expressed directly in the diagrammatic language: 
\[
\begingroup%
  \makeatletter%
  \providecommand\color[2][]{%
    \errmessage{(Inkscape) Color is used for the text in Inkscape, but the package 'color.sty' is not loaded}%
    \renewcommand\color[2][]{}%
  }%
  \providecommand\transparent[1]{%
    \errmessage{(Inkscape) Transparency is used (non-zero) for the text in Inkscape, but the package 'transparent.sty' is not loaded}%
    \renewcommand\transparent[1]{}%
  }%
  \providecommand\rotatebox[2]{#2}%
  \newcommand*\fsize{\dimexpr\f@size pt\relax}%
  \newcommand*\lineheight[1]{\fontsize{\fsize}{#1\fsize}\selectfont}%
  \ifx\svgwidth\undefined%
    \setlength{\unitlength}{170.49917802bp}%
    \ifx\svgscale\undefined%
      \relax%
    \else%
      \setlength{\unitlength}{\unitlength * \real{\svgscale}}%
    \fi%
  \else%
    \setlength{\unitlength}{\svgwidth}%
  \fi%
  \global\let\svgwidth\undefined%
  \global\let\svgscale\undefined%
  \makeatother%
  \begin{picture}(1,0.3119684)%
    \lineheight{1}%
    \setlength\tabcolsep{0pt}%
    \put(0.14049472,0.148506){\color[rgb]{0,0,0}\makebox(0,0)[lt]{\lineheight{1.25}\smash{\begin{tabular}[t]{l}$=$\end{tabular}}}}%
    \put(0,0){\includegraphics[width=\unitlength,page=1]{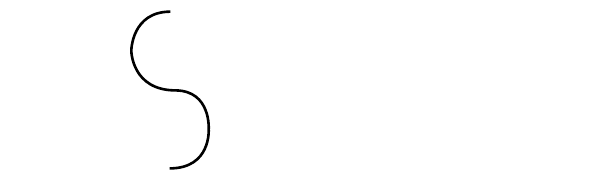}}%
    \put(0.67754381,0.14459594){\color[rgb]{0,0,0}\makebox(0,0)[lt]{\lineheight{1.25}\smash{\begin{tabular}[t]{l}$=$\end{tabular}}}}%
    \put(0.2353785,0.00984502){\color[rgb]{0,0,0}\makebox(0,0)[lt]{\lineheight{1.25}\smash{\begin{tabular}[t]{l}$A$\end{tabular}}}}%
    \put(0.02568663,0.17243202){\color[rgb]{0,0,0}\makebox(0,0)[lt]{\lineheight{1.25}\smash{\begin{tabular}[t]{l}$A$\end{tabular}}}}%
    \put(0.29235904,0.27586109){\color[rgb]{0,0,0}\makebox(0,0)[lt]{\lineheight{1.25}\smash{\begin{tabular}[t]{l}$A$\end{tabular}}}}%
    \put(0,0){\includegraphics[width=\unitlength,page=2]{compactcceq.pdf}}%
    \put(0.84329059,0.00827077){\color[rgb]{0,0,0}\makebox(0,0)[lt]{\lineheight{1.25}\smash{\begin{tabular}[t]{l}$A^*$\end{tabular}}}}%
    \put(0.75111922,0.27428671){\color[rgb]{0,0,0}\makebox(0,0)[lt]{\lineheight{1.25}\smash{\begin{tabular}[t]{l}$A^*$\end{tabular}}}}%
    \put(0.5767565,0.17492436){\color[rgb]{0,0,0}\makebox(0,0)[lt]{\lineheight{1.25}\smash{\begin{tabular}[t]{l}$A^*$\end{tabular}}}}%
    \put(0,0){\includegraphics[width=\unitlength,page=3]{compactcceq.pdf}}%
  \end{picture}%
\endgroup%

\]
Using the machinery of the (symmetric) compact closed category, a monoidal closed category can be constructed in a standard way:
\[
\hspace{18ex} 
\begingroup%
  \makeatletter%
  \providecommand\color[2][]{%
    \errmessage{(Inkscape) Color is used for the text in Inkscape, but the package 'color.sty' is not loaded}%
    \renewcommand\color[2][]{}%
  }%
  \providecommand\transparent[1]{%
    \errmessage{(Inkscape) Transparency is used (non-zero) for the text in Inkscape, but the package 'transparent.sty' is not loaded}%
    \renewcommand\transparent[1]{}%
  }%
  \providecommand\rotatebox[2]{#2}%
  \newcommand*\fsize{\dimexpr\f@size pt\relax}%
  \newcommand*\lineheight[1]{\fontsize{\fsize}{#1\fsize}\selectfont}%
  \ifx\svgwidth\undefined%
    \setlength{\unitlength}{308.29833705bp}%
    \ifx\svgscale\undefined%
      \relax%
    \else%
      \setlength{\unitlength}{\unitlength * \real{\svgscale}}%
    \fi%
  \else%
    \setlength{\unitlength}{\svgwidth}%
  \fi%
  \global\let\svgwidth\undefined%
  \global\let\svgscale\undefined%
  \makeatother%
  \begin{picture}(1,0.10128756)%
    \lineheight{1}%
    \setlength\tabcolsep{0pt}%
    \put(0,0){\includegraphics[width=\unitlength,page=1]{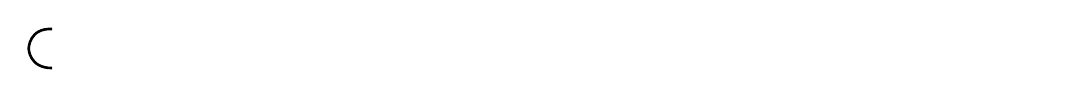}}%
    \put(0.19082419,0.08131903){\color[rgb]{0,0,0}\makebox(0,0)[lt]{\lineheight{1.25}\smash{\begin{tabular}[t]{l}$A^*$\end{tabular}}}}%
    \put(0.30195004,0.05582037){\color[rgb]{0,0,0}\makebox(0,0)[lt]{\lineheight{1.25}\smash{\begin{tabular}[t]{l}$A^*\otimes B\stackrel{\text{def}}=A\multimap B$\end{tabular}}}}%
    \put(0.04842869,0.04341575){\color[rgb]{0,0,0}\makebox(0,0)[lt]{\lineheight{1.25}\smash{\begin{tabular}[t]{l}$A$\end{tabular}}}}%
    \put(0.19270968,0.03170013){\color[rgb]{0,0,0}\makebox(0,0)[lt]{\lineheight{1.25}\smash{\begin{tabular}[t]{l}$B$\end{tabular}}}}%
    \put(0.11687531,0.01977778){\color[rgb]{0,0,0}\makebox(0,0)[lt]{\lineheight{1.25}\smash{\begin{tabular}[t]{l}$f$\end{tabular}}}}%
    \put(0,0){\includegraphics[width=\unitlength,page=2]{compcc2moncc.pdf}}%
    \put(0.00520505,0.01984673){\color[rgb]{0,0,0}\makebox(0,0)[lt]{\lineheight{1.25}\smash{\begin{tabular}[t]{l}$C$\end{tabular}}}}%
  \end{picture}%
\endgroup%

\]
Evaluation and the proof of correctness of evaluation (the $\beta$ law) are given below:
\[
\begingroup%
  \makeatletter%
  \providecommand\color[2][]{%
    \errmessage{(Inkscape) Color is used for the text in Inkscape, but the package 'color.sty' is not loaded}%
    \renewcommand\color[2][]{}%
  }%
  \providecommand\transparent[1]{%
    \errmessage{(Inkscape) Transparency is used (non-zero) for the text in Inkscape, but the package 'transparent.sty' is not loaded}%
    \renewcommand\transparent[1]{}%
  }%
  \providecommand\rotatebox[2]{#2}%
  \newcommand*\fsize{\dimexpr\f@size pt\relax}%
  \newcommand*\lineheight[1]{\fontsize{\fsize}{#1\fsize}\selectfont}%
  \ifx\svgwidth\undefined%
    \setlength{\unitlength}{243.87053953bp}%
    \ifx\svgscale\undefined%
      \relax%
    \else%
      \setlength{\unitlength}{\unitlength * \real{\svgscale}}%
    \fi%
  \else%
    \setlength{\unitlength}{\svgwidth}%
  \fi%
  \global\let\svgwidth\undefined%
  \global\let\svgscale\undefined%
  \makeatother%
  \begin{picture}(1,0.26046695)%
    \lineheight{1}%
    \setlength\tabcolsep{0pt}%
    \put(0,0){\includegraphics[width=\unitlength,page=1]{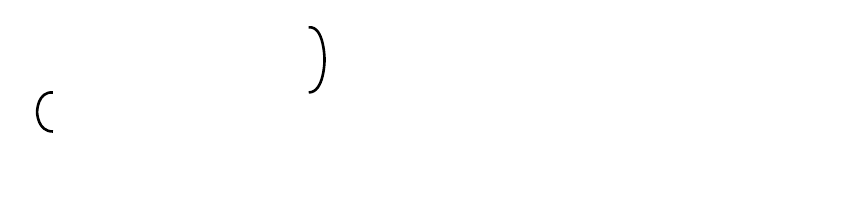}}%
    \put(0.12092364,0.08248852){\color[rgb]{0,0,0}\makebox(0,0)[lt]{\lineheight{1.25}\smash{\begin{tabular}[t]{l}$f$\end{tabular}}}}%
    \put(0,0){\includegraphics[width=\unitlength,page=2]{compcc2moncceval.pdf}}%
    \put(0.12092364,0.22088169){\color[rgb]{0,0,0}\makebox(0,0)[lt]{\lineheight{1.25}\smash{\begin{tabular}[t]{l}$g$\end{tabular}}}}%
    \put(0,0){\includegraphics[width=\unitlength,page=3]{compcc2moncceval.pdf}}%
    \put(0.32170715,0.0049495){\makebox(0,0)[lt]{\lineheight{1.25}\smash{\begin{tabular}[t]{l}$\mathit{eval}$\end{tabular}}}}%
    \put(0.10632096,0.0049495){\makebox(0,0)[lt]{\lineheight{1.25}\smash{\begin{tabular}[t]{l}$\Lambda(f)$\end{tabular}}}}%
    \put(0,0){\includegraphics[width=\unitlength,page=4]{compcc2moncceval.pdf}}%
    \put(0.55842428,0.08248852){\color[rgb]{0,0,0}\makebox(0,0)[lt]{\lineheight{1.25}\smash{\begin{tabular}[t]{l}$f$\end{tabular}}}}%
    \put(0,0){\includegraphics[width=\unitlength,page=5]{compcc2moncceval.pdf}}%
    \put(0.55842428,0.22088169){\color[rgb]{0,0,0}\makebox(0,0)[lt]{\lineheight{1.25}\smash{\begin{tabular}[t]{l}$g$\end{tabular}}}}%
    \put(0,0){\includegraphics[width=\unitlength,page=6]{compcc2moncceval.pdf}}%
    \put(0.42805767,0.12545641){\makebox(0,0)[lt]{\lineheight{1.25}\smash{\begin{tabular}[t]{l}$=$\end{tabular}}}}%
    \put(0.91259017,0.12861961){\color[rgb]{0,0,0}\makebox(0,0)[lt]{\lineheight{1.25}\smash{\begin{tabular}[t]{l}$f$\end{tabular}}}}%
    \put(0,0){\includegraphics[width=\unitlength,page=7]{compcc2moncceval.pdf}}%
    \put(0.78759001,0.14399647){\color[rgb]{0,0,0}\makebox(0,0)[lt]{\lineheight{1.25}\smash{\begin{tabular}[t]{l}$g$\end{tabular}}}}%
    \put(0,0){\includegraphics[width=\unitlength,page=8]{compcc2moncceval.pdf}}%
    \put(0.67569738,0.12515138){\makebox(0,0)[lt]{\lineheight{1.25}\smash{\begin{tabular}[t]{l}$=$\end{tabular}}}}%
    \put(0,0){\includegraphics[width=\unitlength,page=9]{compcc2moncceval.pdf}}%
  \end{picture}%
\endgroup%

\]
The first equation relies on strictifier/de-strictifier cancellation and symmetry. 
The second equation is the `yanking' property of compact closed category in which the unit/co-unit pair cancel out, allowing the straightening of the wire. 

However, this simple solution is not satisfactory because of a ``no go'' result of category theory, that in a compact closed category the product is automatically a co-product. 
It is not important to dwell at this stage on what a co-product is, but having the two coincide is a degeneracy of the framework: in other words, there are now too many equations in the category, allowing certain constructions that should be distinct to be identified. 
Therefore, this construction cannot be used to eliminate the need of a direct construction for a Cartesian closed category, as seen so far. 

This construction is useful in the context of `high-level synthesis', with a caveat: the category cannot have unrestricted Cartesian product. 
Restricting the scope of the product while maintaining the usefulness of the language require certain type-system trickery inspired by bounded linear logic, which is beyond the scope of this tutorial \cite{DBLP:conf/popl/GhicaS11}.

\subsection{Abstract syntax graphs}
\label{sec:asg}

This section is informal, intended to reconnect the readers, especially those less versed into category theoretical concepts, with more basic and more available intuitions. 
As such, the section can be skipped without loss of continuity, especially by the reader confident in their mathematical intuitions. 

It is common, especially in the area of compilers, to cling to the oversimplified view that syntax is represented by a \emph{tree}. 
In fact, syntax is a far more sophisticated data structure: a \emph{hierarchical graph}.
In this section we shall see how the diagrams we used so far can be informally reconstructed as a data structure which enhances in a rather intuitive way the more commonly used yet often unsatisfactory `\emph{abstract syntax trees}' (AST). 

As an expression of algebraic expressions, abstract syntax trees are perfectly adequate. 
Consider an expression such as $1+(2+3)$. 
Its tree representation is conventionally written top-to-bottom like this: 
\[
\begingroup%
  \makeatletter%
  \providecommand\color[2][]{%
    \errmessage{(Inkscape) Color is used for the text in Inkscape, but the package 'color.sty' is not loaded}%
    \renewcommand\color[2][]{}%
  }%
  \providecommand\transparent[1]{%
    \errmessage{(Inkscape) Transparency is used (non-zero) for the text in Inkscape, but the package 'transparent.sty' is not loaded}%
    \renewcommand\transparent[1]{}%
  }%
  \providecommand\rotatebox[2]{#2}%
  \newcommand*\fsize{\dimexpr\f@size pt\relax}%
  \newcommand*\lineheight[1]{\fontsize{\fsize}{#1\fsize}\selectfont}%
  \ifx\svgwidth\undefined%
    \setlength{\unitlength}{51.09040483bp}%
    \ifx\svgscale\undefined%
      \relax%
    \else%
      \setlength{\unitlength}{\unitlength * \real{\svgscale}}%
    \fi%
  \else%
    \setlength{\unitlength}{\svgwidth}%
  \fi%
  \global\let\svgwidth\undefined%
  \global\let\svgscale\undefined%
  \makeatother%
  \begin{picture}(1,0.96528721)%
    \lineheight{1}%
    \setlength\tabcolsep{0pt}%
    \put(0,0){\includegraphics[width=\unitlength,page=1]{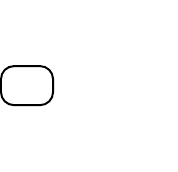}}%
    \put(0.10673885,0.42304818){\makebox(0,0)[lt]{\lineheight{1.25}\smash{\begin{tabular}[t]{l}$1$\end{tabular}}}}%
    \put(0,0){\includegraphics[width=\unitlength,page=2]{ast.pdf}}%
    \put(0.32693646,0.7900442){\makebox(0,0)[lt]{\lineheight{1.25}\smash{\begin{tabular}[t]{l}$+$\end{tabular}}}}%
    \put(0,0){\includegraphics[width=\unitlength,page=3]{ast.pdf}}%
    \put(0.32693734,0.05605215){\makebox(0,0)[lt]{\lineheight{1.25}\smash{\begin{tabular}[t]{l}$2$\end{tabular}}}}%
    \put(0,0){\includegraphics[width=\unitlength,page=4]{ast.pdf}}%
    \put(0.76733301,0.05605215){\makebox(0,0)[lt]{\lineheight{1.25}\smash{\begin{tabular}[t]{l}$3$\end{tabular}}}}%
    \put(0,0){\includegraphics[width=\unitlength,page=5]{ast.pdf}}%
    \put(0.54713408,0.42304818){\makebox(0,0)[lt]{\lineheight{1.25}\smash{\begin{tabular}[t]{l}$+$\end{tabular}}}}%
    \put(0,0){\includegraphics[width=\unitlength,page=6]{ast.pdf}}%
  \end{picture}%
\endgroup%

\]
This syntax tree is `abstract' in the sense that not all the lexical tokens of the original are represented in the tree, namely the open and closed brackets in this example. 

The representation becomes unsatisfactory when variables are introduced, as in the expression $x+(2+x)$. 
The AST of this expression is:
\[
\begingroup%
  \makeatletter%
  \providecommand\color[2][]{%
    \errmessage{(Inkscape) Color is used for the text in Inkscape, but the package 'color.sty' is not loaded}%
    \renewcommand\color[2][]{}%
  }%
  \providecommand\transparent[1]{%
    \errmessage{(Inkscape) Transparency is used (non-zero) for the text in Inkscape, but the package 'transparent.sty' is not loaded}%
    \renewcommand\transparent[1]{}%
  }%
  \providecommand\rotatebox[2]{#2}%
  \newcommand*\fsize{\dimexpr\f@size pt\relax}%
  \newcommand*\lineheight[1]{\fontsize{\fsize}{#1\fsize}\selectfont}%
  \ifx\svgwidth\undefined%
    \setlength{\unitlength}{50.80617684bp}%
    \ifx\svgscale\undefined%
      \relax%
    \else%
      \setlength{\unitlength}{\unitlength * \real{\svgscale}}%
    \fi%
  \else%
    \setlength{\unitlength}{\svgwidth}%
  \fi%
  \global\let\svgwidth\undefined%
  \global\let\svgscale\undefined%
  \makeatother%
  \begin{picture}(1,0.97068738)%
    \lineheight{1}%
    \setlength\tabcolsep{0pt}%
    \put(0,0){\includegraphics[width=\unitlength,page=1]{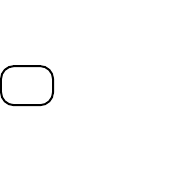}}%
    \put(0.10733599,0.42541486){\makebox(0,0)[lt]{\lineheight{1.25}\smash{\begin{tabular}[t]{l}$x$\end{tabular}}}}%
    \put(0,0){\includegraphics[width=\unitlength,page=2]{ast2.pdf}}%
    \put(0.32876546,0.79446399){\makebox(0,0)[lt]{\lineheight{1.25}\smash{\begin{tabular}[t]{l}$+$\end{tabular}}}}%
    \put(0,0){\includegraphics[width=\unitlength,page=3]{ast2.pdf}}%
    \put(0.32876635,0.05636573){\makebox(0,0)[lt]{\lineheight{1.25}\smash{\begin{tabular}[t]{l}$2$\end{tabular}}}}%
    \put(0,0){\includegraphics[width=\unitlength,page=4]{ast2.pdf}}%
    \put(0.77162575,0.05636573){\makebox(0,0)[lt]{\lineheight{1.25}\smash{\begin{tabular}[t]{l}$x$\end{tabular}}}}%
    \put(0,0){\includegraphics[width=\unitlength,page=5]{ast2.pdf}}%
    \put(0.55019494,0.42541486){\makebox(0,0)[lt]{\lineheight{1.25}\smash{\begin{tabular}[t]{l}$+$\end{tabular}}}}%
    \put(0,0){\includegraphics[width=\unitlength,page=6]{ast2.pdf}}%
  \end{picture}%
\endgroup%

\]
The problem with this representation, as a data structure, is that it offers poor support to the most common operations performed on variables: substitution. 
Replacing $x$ with, say, 0 requires searching the entire graph for occurrences of $x$. 
By contrast, using string diagrams the same expression is represented such that only one logical occurrence of the variable $x$, the leftmost wire, is shared by two nodes in the expression:
\[
\begingroup%
  \makeatletter%
  \providecommand\color[2][]{%
    \errmessage{(Inkscape) Color is used for the text in Inkscape, but the package 'color.sty' is not loaded}%
    \renewcommand\color[2][]{}%
  }%
  \providecommand\transparent[1]{%
    \errmessage{(Inkscape) Transparency is used (non-zero) for the text in Inkscape, but the package 'transparent.sty' is not loaded}%
    \renewcommand\transparent[1]{}%
  }%
  \providecommand\rotatebox[2]{#2}%
  \newcommand*\fsize{\dimexpr\f@size pt\relax}%
  \newcommand*\lineheight[1]{\fontsize{\fsize}{#1\fsize}\selectfont}%
  \ifx\svgwidth\undefined%
    \setlength{\unitlength}{108.74999622bp}%
    \ifx\svgscale\undefined%
      \relax%
    \else%
      \setlength{\unitlength}{\unitlength * \real{\svgscale}}%
    \fi%
  \else%
    \setlength{\unitlength}{\svgwidth}%
  \fi%
  \global\let\svgwidth\undefined%
  \global\let\svgscale\undefined%
  \makeatother%
  \begin{picture}(1,0.31754052)%
    \lineheight{1}%
    \setlength\tabcolsep{0pt}%
    \put(0,0){\includegraphics[width=\unitlength,page=1]{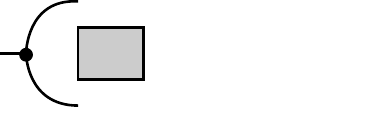}}%
    \put(0.25949332,0.15758427){\color[rgb]{0,0,0}\makebox(0,0)[lt]{\lineheight{1.25}\smash{\begin{tabular}[t]{l}$2$\end{tabular}}}}%
    \put(0,0){\includegraphics[width=\unitlength,page=2]{ast2sd.pdf}}%
    \put(0.53535547,0.05413638){\color[rgb]{0,0,0}\makebox(0,0)[lt]{\lineheight{1.25}\smash{\begin{tabular}[t]{l}$+$\end{tabular}}}}%
    \put(0,0){\includegraphics[width=\unitlength,page=3]{ast2sd.pdf}}%
    \put(0.81120687,0.15758466){\color[rgb]{0,0,0}\makebox(0,0)[lt]{\lineheight{1.25}\smash{\begin{tabular}[t]{l}$+$\end{tabular}}}}%
    \put(0,0){\includegraphics[width=\unitlength,page=4]{ast2sd.pdf}}%
  \end{picture}%
\endgroup%

\]
To perform a substitution we no longer require to search the tree for occurrences of $x$, as there is only a single occurrence of the variable. 
In fact, in any reasonable concrete implementation of string diagrams the left and right interface wires need to be explicitly recorded, which makes substitution possible in constant time, relative to the size of the overall string diagram. 

Once binding syntax is introduced, the shortcomings of the AST become more perspicuous. 
Consider the expression $\mathit{let}\ x=0\ in\ x+(2+x)$. 
The AST form of the expression is:
\[
\begingroup%
  \makeatletter%
  \providecommand\color[2][]{%
    \errmessage{(Inkscape) Color is used for the text in Inkscape, but the package 'color.sty' is not loaded}%
    \renewcommand\color[2][]{}%
  }%
  \providecommand\transparent[1]{%
    \errmessage{(Inkscape) Transparency is used (non-zero) for the text in Inkscape, but the package 'transparent.sty' is not loaded}%
    \renewcommand\transparent[1]{}%
  }%
  \providecommand\rotatebox[2]{#2}%
  \newcommand*\fsize{\dimexpr\f@size pt\relax}%
  \newcommand*\lineheight[1]{\fontsize{\fsize}{#1\fsize}\selectfont}%
  \ifx\svgwidth\undefined%
    \setlength{\unitlength}{84.55619846bp}%
    \ifx\svgscale\undefined%
      \relax%
    \else%
      \setlength{\unitlength}{\unitlength * \real{\svgscale}}%
    \fi%
  \else%
    \setlength{\unitlength}{\svgwidth}%
  \fi%
  \global\let\svgwidth\undefined%
  \global\let\svgscale\undefined%
  \makeatother%
  \begin{picture}(1,0.80499058)%
    \lineheight{1}%
    \setlength\tabcolsep{0pt}%
    \put(0,0){\includegraphics[width=\unitlength,page=1]{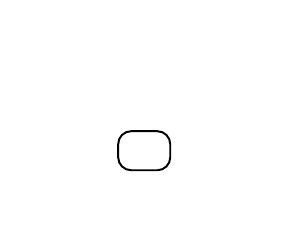}}%
    \put(0.46363665,0.25561346){\makebox(0,0)[lt]{\lineheight{1.25}\smash{\begin{tabular}[t]{l}$x$\end{tabular}}}}%
    \put(0,0){\includegraphics[width=\unitlength,page=2]{ast3.pdf}}%
    \put(0.59668409,0.47735919){\makebox(0,0)[lt]{\lineheight{1.25}\smash{\begin{tabular}[t]{l}$+$\end{tabular}}}}%
    \put(0,0){\includegraphics[width=\unitlength,page=3]{ast3.pdf}}%
    \put(0.28623936,0.6991054){\makebox(0,0)[lt]{\lineheight{1.25}\smash{\begin{tabular}[t]{l}$\mathit{let}$\end{tabular}}}}%
    \put(0,0){\includegraphics[width=\unitlength,page=4]{ast3.pdf}}%
    \put(0.59668462,0.03386774){\makebox(0,0)[lt]{\lineheight{1.25}\smash{\begin{tabular}[t]{l}$2$\end{tabular}}}}%
    \put(0,0){\includegraphics[width=\unitlength,page=5]{ast3.pdf}}%
    \put(0.86277975,0.03386774){\makebox(0,0)[lt]{\lineheight{1.25}\smash{\begin{tabular}[t]{l}$x$\end{tabular}}}}%
    \put(0,0){\includegraphics[width=\unitlength,page=6]{ast3.pdf}}%
    \put(0.72973152,0.25561346){\makebox(0,0)[lt]{\lineheight{1.25}\smash{\begin{tabular}[t]{l}$+$\end{tabular}}}}%
    \put(0,0){\includegraphics[width=\unitlength,page=7]{ast3.pdf}}%
    \put(0.33058869,0.47735919){\makebox(0,0)[lt]{\lineheight{1.25}\smash{\begin{tabular}[t]{l}$0$\end{tabular}}}}%
    \put(0,0){\includegraphics[width=\unitlength,page=8]{ast3.pdf}}%
    \put(0.06449302,0.47735919){\makebox(0,0)[lt]{\lineheight{1.25}\smash{\begin{tabular}[t]{l}$x$\end{tabular}}}}%
    \put(0,0){\includegraphics[width=\unitlength,page=9]{ast3.pdf}}%
  \end{picture}%
\endgroup%

\]
Note that the binding occurrence of $x$ and the regular occurrence are not distinguished, and there is no easy way to connect them. 
By contrast, the same expression, using a string diagram representation, is:
\[
\begingroup%
  \makeatletter%
  \providecommand\color[2][]{%
    \errmessage{(Inkscape) Color is used for the text in Inkscape, but the package 'color.sty' is not loaded}%
    \renewcommand\color[2][]{}%
  }%
  \providecommand\transparent[1]{%
    \errmessage{(Inkscape) Transparency is used (non-zero) for the text in Inkscape, but the package 'transparent.sty' is not loaded}%
    \renewcommand\transparent[1]{}%
  }%
  \providecommand\rotatebox[2]{#2}%
  \newcommand*\fsize{\dimexpr\f@size pt\relax}%
  \newcommand*\lineheight[1]{\fontsize{\fsize}{#1\fsize}\selectfont}%
  \ifx\svgwidth\undefined%
    \setlength{\unitlength}{131.62500655bp}%
    \ifx\svgscale\undefined%
      \relax%
    \else%
      \setlength{\unitlength}{\unitlength * \real{\svgscale}}%
    \fi%
  \else%
    \setlength{\unitlength}{\svgwidth}%
  \fi%
  \global\let\svgwidth\undefined%
  \global\let\svgscale\undefined%
  \makeatother%
  \begin{picture}(1,0.26235539)%
    \lineheight{1}%
    \setlength\tabcolsep{0pt}%
    \put(0,0){\includegraphics[width=\unitlength,page=1]{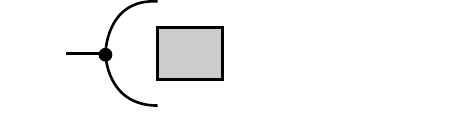}}%
    \put(0.38818542,0.13019782){\color[rgb]{0,0,0}\makebox(0,0)[lt]{\lineheight{1.25}\smash{\begin{tabular}[t]{l}$2$\end{tabular}}}}%
    \put(0,0){\includegraphics[width=\unitlength,page=2]{ast3sd.pdf}}%
    \put(0.04630517,0.13019765){\color[rgb]{0,0,0}\makebox(0,0)[lt]{\lineheight{1.25}\smash{\begin{tabular}[t]{l}$0$\end{tabular}}}}%
    \put(0,0){\includegraphics[width=\unitlength,page=3]{ast3sd.pdf}}%
    \put(0.61610569,0.04472806){\color[rgb]{0,0,0}\makebox(0,0)[lt]{\lineheight{1.25}\smash{\begin{tabular}[t]{l}$+$\end{tabular}}}}%
    \put(0,0){\includegraphics[width=\unitlength,page=4]{ast3sd.pdf}}%
    \put(0.84401709,0.13019814){\color[rgb]{0,0,0}\makebox(0,0)[lt]{\lineheight{1.25}\smash{\begin{tabular}[t]{l}$+$\end{tabular}}}}%
    \put(0,0){\includegraphics[width=\unitlength,page=5]{ast3sd.pdf}}%
  \end{picture}%
\endgroup%

\]
The overall structure is clearer, and it corresponds more closely to the \emph{meaning} of the expression. 
The problem of relating variable occurrences to the binder is solved be making variables into wires rather than boxes. 
This distinction between variables and operations is useful in several ways, perhaps the most important being that the string diagram is automatically quotiented by alpha equivalence. 

The convenience of this approach becomes more perspicuous in the case of variable shadowing, as in 
\[
\mathit{let}\  x = 0\ in\ (\mathit{let}\  x = 1\ in\ x + 2) + x
\]
The AST is of course not ambiguous, but connecting variable occurrences to their binders is even more difficult:
\[
\begingroup%
  \makeatletter%
  \providecommand\color[2][]{%
    \errmessage{(Inkscape) Color is used for the text in Inkscape, but the package 'color.sty' is not loaded}%
    \renewcommand\color[2][]{}%
  }%
  \providecommand\transparent[1]{%
    \errmessage{(Inkscape) Transparency is used (non-zero) for the text in Inkscape, but the package 'transparent.sty' is not loaded}%
    \renewcommand\transparent[1]{}%
  }%
  \providecommand\rotatebox[2]{#2}%
  \newcommand*\fsize{\dimexpr\f@size pt\relax}%
  \newcommand*\lineheight[1]{\fontsize{\fsize}{#1\fsize}\selectfont}%
  \ifx\svgwidth\undefined%
    \setlength{\unitlength}{88.59036915bp}%
    \ifx\svgscale\undefined%
      \relax%
    \else%
      \setlength{\unitlength}{\unitlength * \real{\svgscale}}%
    \fi%
  \else%
    \setlength{\unitlength}{\svgwidth}%
  \fi%
  \global\let\svgwidth\undefined%
  \global\let\svgscale\undefined%
  \makeatother%
  \begin{picture}(1,0.97998108)%
    \lineheight{1}%
    \setlength\tabcolsep{0pt}%
    \put(0,0){\includegraphics[width=\unitlength,page=1]{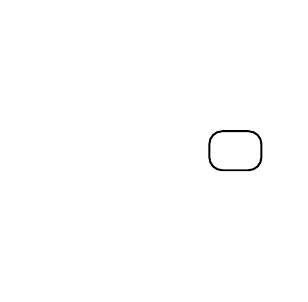}}%
    \put(0.73883125,0.45562114){\makebox(0,0)[lt]{\lineheight{1.25}\smash{\begin{tabular}[t]{l}$x$\end{tabular}}}}%
    \put(0,0){\includegraphics[width=\unitlength,page=2]{ast4.pdf}}%
    \put(0.61184244,0.66726915){\makebox(0,0)[lt]{\lineheight{1.25}\smash{\begin{tabular}[t]{l}$+$\end{tabular}}}}%
    \put(0,0){\includegraphics[width=\unitlength,page=3]{ast4.pdf}}%
    \put(0.27320478,0.87891763){\makebox(0,0)[lt]{\lineheight{1.25}\smash{\begin{tabular}[t]{l}$\mathit{let}$\end{tabular}}}}%
    \put(0,0){\includegraphics[width=\unitlength,page=4]{ast4.pdf}}%
    \put(0.6118427,0.03232512){\makebox(0,0)[lt]{\lineheight{1.25}\smash{\begin{tabular}[t]{l}$x$\end{tabular}}}}%
    \put(0,0){\includegraphics[width=\unitlength,page=5]{ast4.pdf}}%
    \put(0.86582006,0.03232512){\makebox(0,0)[lt]{\lineheight{1.25}\smash{\begin{tabular}[t]{l}$2$\end{tabular}}}}%
    \put(0,0){\includegraphics[width=\unitlength,page=6]{ast4.pdf}}%
    \put(0.73883074,0.24397313){\makebox(0,0)[lt]{\lineheight{1.25}\smash{\begin{tabular}[t]{l}$+$\end{tabular}}}}%
    \put(0,0){\includegraphics[width=\unitlength,page=7]{ast4.pdf}}%
    \put(0.31553455,0.66726915){\makebox(0,0)[lt]{\lineheight{1.25}\smash{\begin{tabular}[t]{l}$0$\end{tabular}}}}%
    \put(0,0){\includegraphics[width=\unitlength,page=8]{ast4.pdf}}%
    \put(0.06155618,0.66726915){\makebox(0,0)[lt]{\lineheight{1.25}\smash{\begin{tabular}[t]{l}$x$\end{tabular}}}}%
    \put(0,0){\includegraphics[width=\unitlength,page=9]{ast4.pdf}}%
    \put(0.44252361,0.45562067){\makebox(0,0)[lt]{\lineheight{1.25}\smash{\begin{tabular}[t]{l}$\mathit{let}$\end{tabular}}}}%
    \put(0,0){\includegraphics[width=\unitlength,page=10]{ast4.pdf}}%
    \put(0.48485313,0.24397219){\makebox(0,0)[lt]{\lineheight{1.25}\smash{\begin{tabular}[t]{l}$1$\end{tabular}}}}%
    \put(0,0){\includegraphics[width=\unitlength,page=11]{ast4.pdf}}%
    \put(0.23087476,0.24397219){\makebox(0,0)[lt]{\lineheight{1.25}\smash{\begin{tabular}[t]{l}$x$\end{tabular}}}}%
    \put(0,0){\includegraphics[width=\unitlength,page=12]{ast4.pdf}}%
  \end{picture}%
\endgroup%

\]
This is why for purposes of mechanisation this representation of variables is usually avoided.
The more streamlined, if harder for human consumption, notation of deBruijn indices is the preferred alternative. 
The string diagram representation of this expression is easy to read and it avoids all the pitfalls of variable shadowing:
\[
\begingroup%
  \makeatletter%
  \providecommand\color[2][]{%
    \errmessage{(Inkscape) Color is used for the text in Inkscape, but the package 'color.sty' is not loaded}%
    \renewcommand\color[2][]{}%
  }%
  \providecommand\transparent[1]{%
    \errmessage{(Inkscape) Transparency is used (non-zero) for the text in Inkscape, but the package 'transparent.sty' is not loaded}%
    \renewcommand\transparent[1]{}%
  }%
  \providecommand\rotatebox[2]{#2}%
  \newcommand*\fsize{\dimexpr\f@size pt\relax}%
  \newcommand*\lineheight[1]{\fontsize{\fsize}{#1\fsize}\selectfont}%
  \ifx\svgwidth\undefined%
    \setlength{\unitlength}{124.12499934bp}%
    \ifx\svgscale\undefined%
      \relax%
    \else%
      \setlength{\unitlength}{\unitlength * \real{\svgscale}}%
    \fi%
  \else%
    \setlength{\unitlength}{\svgwidth}%
  \fi%
  \global\let\svgwidth\undefined%
  \global\let\svgscale\undefined%
  \makeatother%
  \begin{picture}(1,0.45921438)%
    \lineheight{1}%
    \setlength\tabcolsep{0pt}%
    \put(0,0){\includegraphics[width=\unitlength,page=1]{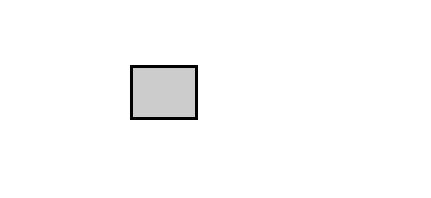}}%
    \put(0.35121773,0.2286992){\color[rgb]{0,0,0}\makebox(0,0)[lt]{\lineheight{1.25}\smash{\begin{tabular}[t]{l}$2$\end{tabular}}}}%
    \put(0,0){\includegraphics[width=\unitlength,page=2]{ast4sd.pdf}}%
    \put(0.04910307,0.37975631){\color[rgb]{0,0,0}\makebox(0,0)[lt]{\lineheight{1.25}\smash{\begin{tabular}[t]{l}$0$\end{tabular}}}}%
    \put(0,0){\includegraphics[width=\unitlength,page=3]{ast4sd.pdf}}%
    \put(0.20016036,0.04743014){\color[rgb]{0,0,0}\makebox(0,0)[lt]{\lineheight{1.25}\smash{\begin{tabular}[t]{l}$1$\end{tabular}}}}%
    \put(0,0){\includegraphics[width=\unitlength,page=4]{ast4sd.pdf}}%
    \put(0.59290963,0.13806511){\color[rgb]{0,0,0}\makebox(0,0)[lt]{\lineheight{1.25}\smash{\begin{tabular}[t]{l}$+$\end{tabular}}}}%
    \put(0,0){\includegraphics[width=\unitlength,page=5]{ast4sd.pdf}}%
    \put(0.83459213,0.22869955){\color[rgb]{0,0,0}\makebox(0,0)[lt]{\lineheight{1.25}\smash{\begin{tabular}[t]{l}$+$\end{tabular}}}}%
    \put(0,0){\includegraphics[width=\unitlength,page=6]{ast4sd.pdf}}%
  \end{picture}%
\endgroup%

\]
The rather remarkable thing to note here is that this string diagram is \emph{equal} to that of just $(1+2)+0$!
So the string diagram notation is significantly more abstract than `abstract' syntax trees, as not just alpha equivalence but also linear substitution are absorbed in the notation. 

The version of string diagrams used up to this point can be called, in their concrete representation, `abstract syntax DAGs' (directed acyclic graphs). 
The hierarchical aspect appears with the introduction of \emph{thunks}. 
Categorical considerations aside, thunks are have uses both semantically and syntactically. 
Semantically they indicate to the evaluator that the evaluation of a piece of code must be postponed;  first, a thunk needs to be `forced', i.e. extracted from its protective bubble, by some other operations. 
We have already seen how the if-then-else statement forces one of its argument thunks and how application forces the thunk that corresponds to the applied function. 

Representing the nullary (i.e. which binds zero variables) thunks of the if-then-else statement explicitly may not seem justified enough. 
But the small notational overhead of indicating the thunk explicitly is relevant for generic analyses that want to take the flow of control into account, as the presence of the thunk immediately indicates that the program can control and direct the evaluation. 
But these are rather subtle and partly manufactured reasons;  the real reasons are conceptual, the consistency and integrity of the string diagram notation, as presented in the previous sections. 

A more direct case can be made for the hierarchical notation in the case of thunks which also bind variables, as is the case with functions. 
Taking the ubiquitous $\lambda$-abstraction, the same issues of binding and alpha equivalence discuss earlier will come into play again. 
Notably, the binding structures above did not involve thunks. 
Why do we need thunks for $\lambda$ and why is the hierarchical structure a reasonable evolution of the AST? 

The most important thing is that thunks must be seen as undivided whole, in particular when it comes to their interaction with the Cartesian equations for copying and discarding. 
Let us illustrate this with a simple non-example, in which we attempt, and fail, to represent thunks without resorting to hierarchical structure: $\lambda x.x+x$. 
The conventional AST we have already agreed that it is not ideal, so we will use the variable conventions of string diagrams while treating the $\lambda$ as if it were an operation:
\[
\begingroup%
  \makeatletter%
  \providecommand\color[2][]{%
    \errmessage{(Inkscape) Color is used for the text in Inkscape, but the package 'color.sty' is not loaded}%
    \renewcommand\color[2][]{}%
  }%
  \providecommand\transparent[1]{%
    \errmessage{(Inkscape) Transparency is used (non-zero) for the text in Inkscape, but the package 'transparent.sty' is not loaded}%
    \renewcommand\transparent[1]{}%
  }%
  \providecommand\rotatebox[2]{#2}%
  \newcommand*\fsize{\dimexpr\f@size pt\relax}%
  \newcommand*\lineheight[1]{\fontsize{\fsize}{#1\fsize}\selectfont}%
  \ifx\svgwidth\undefined%
    \setlength{\unitlength}{145.05969665bp}%
    \ifx\svgscale\undefined%
      \relax%
    \else%
      \setlength{\unitlength}{\unitlength * \real{\svgscale}}%
    \fi%
  \else%
    \setlength{\unitlength}{\svgwidth}%
  \fi%
  \global\let\svgwidth\undefined%
  \global\let\svgscale\undefined%
  \makeatother%
  \begin{picture}(1,0.3923112)%
    \lineheight{1}%
    \setlength\tabcolsep{0pt}%
    \put(0,0){\includegraphics[width=\unitlength,page=1]{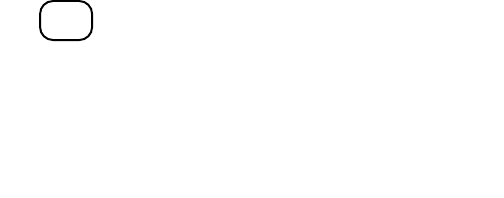}}%
    \put(0.11514807,0.33059007){\makebox(0,0)[lt]{\lineheight{1.25}\smash{\begin{tabular}[t]{l}$\lambda$\end{tabular}}}}%
    \put(0,0){\includegraphics[width=\unitlength,page=2]{astlam.pdf}}%
    \put(0.11514869,0.07207615){\makebox(0,0)[lt]{\lineheight{1.25}\smash{\begin{tabular}[t]{l}$x$\end{tabular}}}}%
    \put(0,0){\includegraphics[width=\unitlength,page=3]{astlam.pdf}}%
    \put(0.27025689,0.07207615){\makebox(0,0)[lt]{\lineheight{1.25}\smash{\begin{tabular}[t]{l}$x$\end{tabular}}}}%
    \put(0,0){\includegraphics[width=\unitlength,page=4]{astlam.pdf}}%
    \put(0.1927024,0.20133311){\makebox(0,0)[lt]{\lineheight{1.25}\smash{\begin{tabular}[t]{l}$+$\end{tabular}}}}%
    \put(0,0){\includegraphics[width=\unitlength,page=5]{astlam.pdf}}%
    \put(0.03759297,0.20133282){\makebox(0,0)[lt]{\lineheight{1.25}\smash{\begin{tabular}[t]{l}$x$\end{tabular}}}}%
    \put(0,0){\includegraphics[width=\unitlength,page=6]{astlam.pdf}}%
    \put(0.76143375,0.33059007){\makebox(0,0)[lt]{\lineheight{1.25}\smash{\begin{tabular}[t]{l}$\lambda$\end{tabular}}}}%
    \put(0,0){\includegraphics[width=\unitlength,page=7]{astlam.pdf}}%
    \put(0.83898793,0.20133311){\makebox(0,0)[lt]{\lineheight{1.25}\smash{\begin{tabular}[t]{l}$+$\end{tabular}}}}%
    \put(0,0){\includegraphics[width=\unitlength,page=8]{astlam.pdf}}%
  \end{picture}%
\endgroup%

\]
This is an approach similar to that taken in \emph{Interaction nets}. 

But this notation is clearly problematic both in the presence of copy and discard operation. 
A term such as $\mathit{let}\ f=\lambda x.x\ in\ f\,f$ is represented in a way that applying the copying axioms results in an ill-formed string diagram which corresponds to no legal term:
\[
\begingroup%
  \makeatletter%
  \providecommand\color[2][]{%
    \errmessage{(Inkscape) Color is used for the text in Inkscape, but the package 'color.sty' is not loaded}%
    \renewcommand\color[2][]{}%
  }%
  \providecommand\transparent[1]{%
    \errmessage{(Inkscape) Transparency is used (non-zero) for the text in Inkscape, but the package 'transparent.sty' is not loaded}%
    \renewcommand\transparent[1]{}%
  }%
  \providecommand\rotatebox[2]{#2}%
  \newcommand*\fsize{\dimexpr\f@size pt\relax}%
  \newcommand*\lineheight[1]{\fontsize{\fsize}{#1\fsize}\selectfont}%
  \ifx\svgwidth\undefined%
    \setlength{\unitlength}{148.80973568bp}%
    \ifx\svgscale\undefined%
      \relax%
    \else%
      \setlength{\unitlength}{\unitlength * \real{\svgscale}}%
    \fi%
  \else%
    \setlength{\unitlength}{\svgwidth}%
  \fi%
  \global\let\svgwidth\undefined%
  \global\let\svgscale\undefined%
  \makeatother%
  \begin{picture}(1,0.63442463)%
    \lineheight{1}%
    \setlength\tabcolsep{0pt}%
    \put(0,0){\includegraphics[width=\unitlength,page=1]{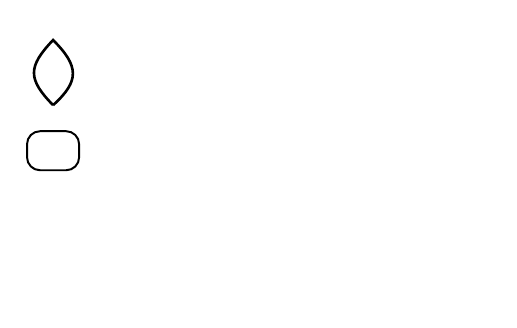}}%
    \put(0.08704648,0.32225915){\makebox(0,0)[lt]{\lineheight{1.25}\smash{\begin{tabular}[t]{l}$\lambda$\end{tabular}}}}%
    \put(0,0){\includegraphics[width=\unitlength,page=2]{astlam2.pdf}}%
    \put(0.03024237,0.58433848){\makebox(0,0)[lt]{\lineheight{1.25}\smash{\begin{tabular}[t]{l}$\mathit{apply}$\end{tabular}}}}%
    \put(0,0){\includegraphics[width=\unitlength,page=3]{astlam2.pdf}}%
    \put(0.60984174,0.58433876){\makebox(0,0)[lt]{\lineheight{1.25}\smash{\begin{tabular}[t]{l}$\mathit{apply}$\end{tabular}}}}%
    \put(0,0){\includegraphics[width=\unitlength,page=4]{astlam2.pdf}}%
    \put(0.16264627,0.1962595){\makebox(0,0)[lt]{\lineheight{1.25}\smash{\begin{tabular}[t]{l}$+$\end{tabular}}}}%
    \put(0,0){\includegraphics[width=\unitlength,page=5]{astlam2.pdf}}%
    \put(0.56584602,0.42305933){\makebox(0,0)[lt]{\lineheight{1.25}\smash{\begin{tabular}[t]{l}$\lambda$\end{tabular}}}}%
    \put(0,0){\includegraphics[width=\unitlength,page=6]{astlam2.pdf}}%
    \put(0.64144582,0.1962595){\makebox(0,0)[lt]{\lineheight{1.25}\smash{\begin{tabular}[t]{l}$+$\end{tabular}}}}%
    \put(0,0){\includegraphics[width=\unitlength,page=7]{astlam2.pdf}}%
    \put(0.76744567,0.42305905){\makebox(0,0)[lt]{\lineheight{1.25}\smash{\begin{tabular}[t]{l}$\lambda$\end{tabular}}}}%
    \put(0,0){\includegraphics[width=\unitlength,page=8]{astlam2.pdf}}%
  \end{picture}%
\endgroup%

\]
To overcome this problem interaction nets uses a different kind of node when $\lambda$ nodes are duplicated, with $C$ for copy and $\delta$ as the special node resulting from the copying of a $\lambda$, so the copying rule for abstraction is:
\[
\begingroup%
  \makeatletter%
  \providecommand\color[2][]{%
    \errmessage{(Inkscape) Color is used for the text in Inkscape, but the package 'color.sty' is not loaded}%
    \renewcommand\color[2][]{}%
  }%
  \providecommand\transparent[1]{%
    \errmessage{(Inkscape) Transparency is used (non-zero) for the text in Inkscape, but the package 'transparent.sty' is not loaded}%
    \renewcommand\transparent[1]{}%
  }%
  \providecommand\rotatebox[2]{#2}%
  \newcommand*\fsize{\dimexpr\f@size pt\relax}%
  \newcommand*\lineheight[1]{\fontsize{\fsize}{#1\fsize}\selectfont}%
  \ifx\svgwidth\undefined%
    \setlength{\unitlength}{135.06854557bp}%
    \ifx\svgscale\undefined%
      \relax%
    \else%
      \setlength{\unitlength}{\unitlength * \real{\svgscale}}%
    \fi%
  \else%
    \setlength{\unitlength}{\svgwidth}%
  \fi%
  \global\let\svgwidth\undefined%
  \global\let\svgscale\undefined%
  \makeatother%
  \begin{picture}(1,0.44421896)%
    \lineheight{1}%
    \setlength\tabcolsep{0pt}%
    \put(0,0){\includegraphics[width=\unitlength,page=1]{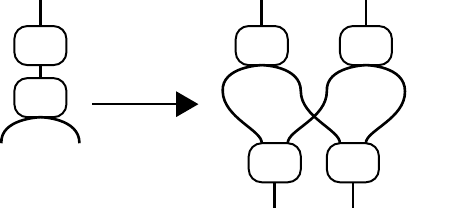}}%
    \put(0.067261,0.33017737){\makebox(0,0)[lt]{\lineheight{1.25}\smash{\begin{tabular}[t]{l}$C$\end{tabular}}}}%
    \put(0.56497074,0.08111891){\makebox(0,0)[lt]{\lineheight{1.25}\smash{\begin{tabular}[t]{l}$\delta$\end{tabular}}}}%
    \put(0.06739678,0.21939409){\makebox(0,0)[lt]{\lineheight{1.25}\smash{\begin{tabular}[t]{l}$\lambda$\end{tabular}}}}%
    \put(0.54012603,0.33180639){\makebox(0,0)[lt]{\lineheight{1.25}\smash{\begin{tabular}[t]{l}$\lambda$\end{tabular}}}}%
    \put(0.76209959,0.33180639){\makebox(0,0)[lt]{\lineheight{1.25}\smash{\begin{tabular}[t]{l}$\lambda$\end{tabular}}}}%
    \put(0.73365723,0.07976113){\makebox(0,0)[lt]{\lineheight{1.25}\smash{\begin{tabular}[t]{l}$\delta$\end{tabular}}}}%
  \end{picture}%
\endgroup%

\]
This, however, is not a satisfactory solution since it ruins the naturality of copying. 
The hierarchical graph structure prevents this problem by treating a thunk as a single indivisible thing.
Indeed, a thunk is always a node, with the inner graph its label: 
\[
\begingroup%
  \makeatletter%
  \providecommand\color[2][]{%
    \errmessage{(Inkscape) Color is used for the text in Inkscape, but the package 'color.sty' is not loaded}%
    \renewcommand\color[2][]{}%
  }%
  \providecommand\transparent[1]{%
    \errmessage{(Inkscape) Transparency is used (non-zero) for the text in Inkscape, but the package 'transparent.sty' is not loaded}%
    \renewcommand\transparent[1]{}%
  }%
  \providecommand\rotatebox[2]{#2}%
  \newcommand*\fsize{\dimexpr\f@size pt\relax}%
  \newcommand*\lineheight[1]{\fontsize{\fsize}{#1\fsize}\selectfont}%
  \ifx\svgwidth\undefined%
    \setlength{\unitlength}{242.20487445bp}%
    \ifx\svgscale\undefined%
      \relax%
    \else%
      \setlength{\unitlength}{\unitlength * \real{\svgscale}}%
    \fi%
  \else%
    \setlength{\unitlength}{\svgwidth}%
  \fi%
  \global\let\svgwidth\undefined%
  \global\let\svgscale\undefined%
  \makeatother%
  \begin{picture}(1,0.20244654)%
    \lineheight{1}%
    \setlength\tabcolsep{0pt}%
    \put(0,0){\includegraphics[width=\unitlength,page=1]{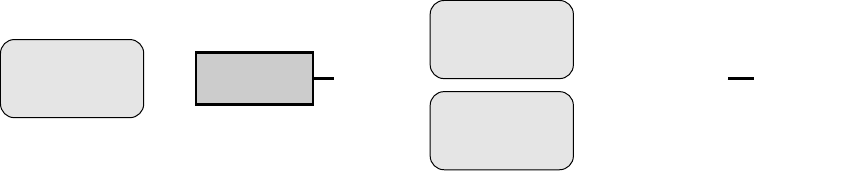}}%
    \put(0.24857754,0.09761329){\makebox(0,0)[lt]{\lineheight{1.25}\smash{\begin{tabular}[t]{l}${\mathit{apply}}$\end{tabular}}}}%
    \put(0,0){\includegraphics[width=\unitlength,page=2]{astlamsd.pdf}}%
    \put(0.10999667,0.09554603){\makebox(0,0)[lt]{\lineheight{1.25}\smash{\begin{tabular}[t]{l}$+$\end{tabular}}}}%
    \put(0,0){\includegraphics[width=\unitlength,page=3]{astlamsd.pdf}}%
    \put(0.44140676,0.09373914){\makebox(0,0)[lt]{\lineheight{1.25}\smash{\begin{tabular}[t]{l}$=$\end{tabular}}}}%
    \put(0,0){\includegraphics[width=\unitlength,page=4]{astlamsd.pdf}}%
    \put(0.74789673,0.09761382){\makebox(0,0)[lt]{\lineheight{1.25}\smash{\begin{tabular}[t]{l}${\mathit{apply}}$\end{tabular}}}}%
    \put(0,0){\includegraphics[width=\unitlength,page=5]{astlamsd.pdf}}%
    \put(0.62092775,0.14199449){\makebox(0,0)[lt]{\lineheight{1.25}\smash{\begin{tabular}[t]{l}$+$\end{tabular}}}}%
    \put(0,0){\includegraphics[width=\unitlength,page=6]{astlamsd.pdf}}%
    \put(0.62092793,0.03361493){\makebox(0,0)[lt]{\lineheight{1.25}\smash{\begin{tabular}[t]{l}$+$\end{tabular}}}}%
    \put(0,0){\includegraphics[width=\unitlength,page=7]{astlamsd.pdf}}%
  \end{picture}%
\endgroup%

\]
\begin{exercise}
Consider the way in which applying the discard equation is not compatible with the naive representation of $\lambda$ nodes are operations. 
\end{exercise}

Before we conclude, a note on expressing the the beta law or, more generally, the unit-counit cancellation in string diagrams.
The widely used string diagrams for strict monoidal compact closed categories absorb the cancellation rule for adjunctions.
Without going into the technical detail, the unit and the co-unit can be represented as `loops' so that their cancellation equations
\begin{gather*}
\id_A\otimes\eta  \semic  \epsilon\otimes\id_A = id_A \\
\eta\otimes\id_{A^*} \semic  \id_{A^*}\otimes\epsilon = id_{A^*}
\end{gather*}
have neat topological interpretations:
\[
\begingroup%
  \makeatletter%
  \providecommand\color[2][]{%
    \errmessage{(Inkscape) Color is used for the text in Inkscape, but the package 'color.sty' is not loaded}%
    \renewcommand\color[2][]{}%
  }%
  \providecommand\transparent[1]{%
    \errmessage{(Inkscape) Transparency is used (non-zero) for the text in Inkscape, but the package 'transparent.sty' is not loaded}%
    \renewcommand\transparent[1]{}%
  }%
  \providecommand\rotatebox[2]{#2}%
  \newcommand*\fsize{\dimexpr\f@size pt\relax}%
  \newcommand*\lineheight[1]{\fontsize{\fsize}{#1\fsize}\selectfont}%
  \ifx\svgwidth\undefined%
    \setlength{\unitlength}{245.34452714bp}%
    \ifx\svgscale\undefined%
      \relax%
    \else%
      \setlength{\unitlength}{\unitlength * \real{\svgscale}}%
    \fi%
  \else%
    \setlength{\unitlength}{\svgwidth}%
  \fi%
  \global\let\svgwidth\undefined%
  \global\let\svgscale\undefined%
  \makeatother%
  \begin{picture}(1,0.21405171)%
    \lineheight{1}%
    \setlength\tabcolsep{0pt}%
    \put(0,0){\includegraphics[width=\unitlength,page=1]{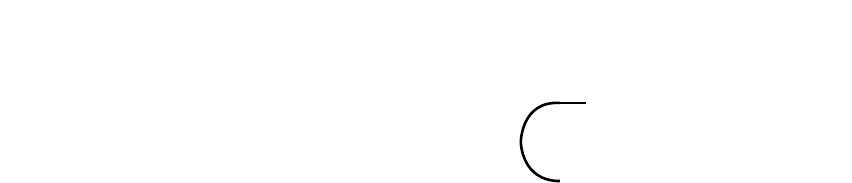}}%
    \put(0.76595832,0.08451683){\color[rgb]{0,0,0}\makebox(0,0)[lt]{\lineheight{1.25}\smash{\begin{tabular}[t]{l}$=$\end{tabular}}}}%
    \put(0.60004051,0.19523248){\makebox(0,0)[lt]{\lineheight{1.25}\smash{\begin{tabular}[t]{l}$\id_{A^*}$\end{tabular}}}}%
    \put(0,0){\includegraphics[width=\unitlength,page=2]{compact-closed.pdf}}%
    \put(0.67646359,0.01181694){\makebox(0,0)[lt]{\lineheight{1.25}\smash{\begin{tabular}[t]{l}$\id_{A^*}$\end{tabular}}}}%
    \put(0.62055572,0.04738237){\makebox(0,0)[lt]{\lineheight{1.25}\smash{\begin{tabular}[t]{l}$\eta$\end{tabular}}}}%
    \put(0.70614937,0.12380545){\makebox(0,0)[lt]{\lineheight{1.25}\smash{\begin{tabular}[t]{l}$\epsilon$\end{tabular}}}}%
    \put(0.83855587,0.10351889){\makebox(0,0)[lt]{\lineheight{1.25}\smash{\begin{tabular}[t]{l}$\id_{A^*}$\end{tabular}}}}%
    \put(0,0){\includegraphics[width=\unitlength,page=3]{compact-closed.pdf}}%
    \put(0.16985798,0.08451683){\color[rgb]{0,0,0}\makebox(0,0)[lt]{\lineheight{1.25}\smash{\begin{tabular}[t]{l}$=$\end{tabular}}}}%
    \put(0.07730645,0.1952324){\makebox(0,0)[lt]{\lineheight{1.25}\smash{\begin{tabular}[t]{l}$\id_{A}$\end{tabular}}}}%
    \put(0,0){\includegraphics[width=\unitlength,page=4]{compact-closed.pdf}}%
    \put(0.01311082,0.01181694){\makebox(0,0)[lt]{\lineheight{1.25}\smash{\begin{tabular}[t]{l}$\id_{A}$\end{tabular}}}}%
    \put(0.02445538,0.13297531){\makebox(0,0)[lt]{\lineheight{1.25}\smash{\begin{tabular}[t]{l}$\eta$\end{tabular}}}}%
    \put(0.11004903,0.03821251){\makebox(0,0)[lt]{\lineheight{1.25}\smash{\begin{tabular}[t]{l}$\epsilon$\end{tabular}}}}%
    \put(0.24245561,0.10351889){\makebox(0,0)[lt]{\lineheight{1.25}\smash{\begin{tabular}[t]{l}$\id_{A}$\end{tabular}}}}%
    \put(0,0){\includegraphics[width=\unitlength,page=5]{compact-closed.pdf}}%
  \end{picture}%
\endgroup%

\]
Such notations do not seem to be readily available for the unit/co-unit pair involved in the beta law. 
A notable attempt has been made by Baez and Stay, who proposed a `bubble'-`claps' pair in their seminal `Rosetta Stone' paper. 
However, their proposal was informal, a side issue to the main thrust of their paper, which focusses on the compact closed structure. 
We will attempt to refine our notation here to make it more similar to bubble-clasp diagrams, but only for the sake of explanation. 
In the sequel we will use the graphical syntactic sugar already introduced. 

Currying in the Rosetta notation, vs. our notation is:
\[
\begingroup%
  \makeatletter%
  \providecommand\color[2][]{%
    \errmessage{(Inkscape) Color is used for the text in Inkscape, but the package 'color.sty' is not loaded}%
    \renewcommand\color[2][]{}%
  }%
  \providecommand\transparent[1]{%
    \errmessage{(Inkscape) Transparency is used (non-zero) for the text in Inkscape, but the package 'transparent.sty' is not loaded}%
    \renewcommand\transparent[1]{}%
  }%
  \providecommand\rotatebox[2]{#2}%
  \newcommand*\fsize{\dimexpr\f@size pt\relax}%
  \newcommand*\lineheight[1]{\fontsize{\fsize}{#1\fsize}\selectfont}%
  \ifx\svgwidth\undefined%
    \setlength{\unitlength}{233.07016991bp}%
    \ifx\svgscale\undefined%
      \relax%
    \else%
      \setlength{\unitlength}{\unitlength * \real{\svgscale}}%
    \fi%
  \else%
    \setlength{\unitlength}{\svgwidth}%
  \fi%
  \global\let\svgwidth\undefined%
  \global\let\svgscale\undefined%
  \makeatother%
  \begin{picture}(1,0.1629328)%
    \lineheight{1}%
    \setlength\tabcolsep{0pt}%
    \put(0,0){\includegraphics[width=\unitlength,page=1]{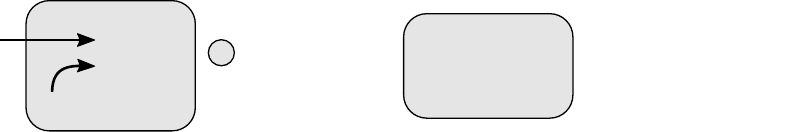}}%
    \put(0.5027611,0.05445336){\color[rgb]{0,0,0}\makebox(0,0)[lt]{\lineheight{1.25}\smash{\begin{tabular}[t]{l}$X$\end{tabular}}}}%
    \put(0.66805563,0.0957746){\color[rgb]{0,0,0}\makebox(0,0)[lt]{\lineheight{1.25}\smash{\begin{tabular}[t]{l}$Y$\end{tabular}}}}%
    \put(0,0){\includegraphics[width=\unitlength,page=2]{rosetta-currying.pdf}}%
    \put(0.50532972,0.10966985){\color[rgb]{0,0,0}\makebox(0,0)[lt]{\lineheight{1.25}\smash{\begin{tabular}[t]{l}$A$\end{tabular}}}}%
    \put(0.72359373,0.09567281){\color[rgb]{0,0,0}\makebox(0,0)[lt]{\lineheight{1.25}\smash{\begin{tabular}[t]{l}$X\multimap Y$\end{tabular}}}}%
    \put(0,0){\includegraphics[width=\unitlength,page=3]{rosetta-currying.pdf}}%
    \put(0.60376621,0.07315189){\color[rgb]{0,0,0}\makebox(0,0)[lt]{\lineheight{1.25}\smash{\begin{tabular}[t]{l}$f$\end{tabular}}}}%
    \put(0,0){\includegraphics[width=\unitlength,page=4]{rosetta-currying.pdf}}%
    \put(0.40101916,0.09177052){\makebox(0,0)[lt]{\lineheight{1.25}\smash{\begin{tabular}[t]{l}vs.\end{tabular}}}}%
    \put(0.31275649,0.00605037){\color[rgb]{0,0,0}\makebox(0,0)[lt]{\lineheight{1.25}\smash{\begin{tabular}[t]{l}$X$\end{tabular}}}}%
    \put(0.31300584,0.11146422){\color[rgb]{0,0,0}\makebox(0,0)[lt]{\lineheight{1.25}\smash{\begin{tabular}[t]{l}$Y$\end{tabular}}}}%
    \put(0,0){\includegraphics[width=\unitlength,page=5]{rosetta-currying.pdf}}%
    \put(0.03873211,0.12535946){\color[rgb]{0,0,0}\makebox(0,0)[lt]{\lineheight{1.25}\smash{\begin{tabular}[t]{l}$A$\end{tabular}}}}%
    \put(0,0){\includegraphics[width=\unitlength,page=6]{rosetta-currying.pdf}}%
    \put(0.1371686,0.08884151){\color[rgb]{0,0,0}\makebox(0,0)[lt]{\lineheight{1.25}\smash{\begin{tabular}[t]{l}$f$\end{tabular}}}}%
    \put(0,0){\includegraphics[width=\unitlength,page=7]{rosetta-currying.pdf}}%
  \end{picture}%
\endgroup%

\]
We can see that this notation is trying to maintain the strict monoidal convention that each wire is labelled by an object, so that composite wires can be decomposed in their constituents. 
However, in the case of closed monoidal structure it is not a good idea to give direct access to both $X$ and $Y$ separately in $X\multimap Y$ unless additional graphical conventions are used to prevent nonsensical use of the clasp. 

The Rosetta notation `peeks' inside the evaluation, so that the beta law is written as the following string diagram equation. 
\[
\begingroup%
  \makeatletter%
  \providecommand\color[2][]{%
    \errmessage{(Inkscape) Color is used for the text in Inkscape, but the package 'color.sty' is not loaded}%
    \renewcommand\color[2][]{}%
  }%
  \providecommand\transparent[1]{%
    \errmessage{(Inkscape) Transparency is used (non-zero) for the text in Inkscape, but the package 'transparent.sty' is not loaded}%
    \renewcommand\transparent[1]{}%
  }%
  \providecommand\rotatebox[2]{#2}%
  \newcommand*\fsize{\dimexpr\f@size pt\relax}%
  \newcommand*\lineheight[1]{\fontsize{\fsize}{#1\fsize}\selectfont}%
  \ifx\svgwidth\undefined%
    \setlength{\unitlength}{211.13385318bp}%
    \ifx\svgscale\undefined%
      \relax%
    \else%
      \setlength{\unitlength}{\unitlength * \real{\svgscale}}%
    \fi%
  \else%
    \setlength{\unitlength}{\svgwidth}%
  \fi%
  \global\let\svgwidth\undefined%
  \global\let\svgscale\undefined%
  \makeatother%
  \begin{picture}(1,0.25159362)%
    \lineheight{1}%
    \setlength\tabcolsep{0pt}%
    \put(0,0){\includegraphics[width=\unitlength,page=1]{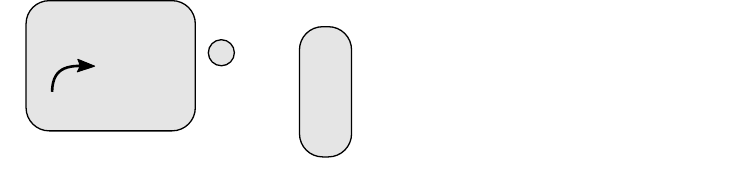}}%
    \put(0.29122157,0.08218297){\color[rgb]{0,0,0}\makebox(0,0)[lt]{\lineheight{1.25}\smash{\begin{tabular}[t]{l}$X$\end{tabular}}}}%
    \put(0.29149682,0.21111408){\color[rgb]{0,0,0}\makebox(0,0)[lt]{\lineheight{1.25}\smash{\begin{tabular}[t]{l}$Y$\end{tabular}}}}%
    \put(0,0){\includegraphics[width=\unitlength,page=2]{rosetta-beta.pdf}}%
    \put(0.04275635,0.21011845){\color[rgb]{0,0,0}\makebox(0,0)[lt]{\lineheight{1.25}\smash{\begin{tabular}[t]{l}$A$\end{tabular}}}}%
    \put(0,0){\includegraphics[width=\unitlength,page=3]{rosetta-beta.pdf}}%
    \put(0.15142016,0.16980636){\color[rgb]{0,0,0}\makebox(0,0)[lt]{\lineheight{1.25}\smash{\begin{tabular}[t]{l}$f$\end{tabular}}}}%
    \put(0,0){\includegraphics[width=\unitlength,page=4]{rosetta-beta.pdf}}%
    \put(0.40576026,0.00543939){\makebox(0,0)[lt]{\lineheight{1.25}\smash{\begin{tabular}[t]{l}$\mathit{eval}$\end{tabular}}}}%
    \put(0,0){\includegraphics[width=\unitlength,page=5]{rosetta-beta.pdf}}%
    \put(0.53283734,0.10861752){\makebox(0,0)[lt]{\lineheight{1.25}\smash{\begin{tabular}[t]{l}$=$\end{tabular}}}}%
    \put(0,0){\includegraphics[width=\unitlength,page=6]{rosetta-beta.pdf}}%
    \put(0.58834674,0.0725725){\color[rgb]{0,0,0}\makebox(0,0)[lt]{\lineheight{1.25}\smash{\begin{tabular}[t]{l}$X$\end{tabular}}}}%
    \put(0.83740972,0.1952211){\color[rgb]{0,0,0}\makebox(0,0)[lt]{\lineheight{1.25}\smash{\begin{tabular}[t]{l}$Y$\end{tabular}}}}%
    \put(0,0){\includegraphics[width=\unitlength,page=7]{rosetta-beta.pdf}}%
    \put(0.58866924,0.21056003){\color[rgb]{0,0,0}\makebox(0,0)[lt]{\lineheight{1.25}\smash{\begin{tabular}[t]{l}$A$\end{tabular}}}}%
    \put(0,0){\includegraphics[width=\unitlength,page=8]{rosetta-beta.pdf}}%
    \put(0.73754118,0.17024794){\color[rgb]{0,0,0}\makebox(0,0)[lt]{\lineheight{1.25}\smash{\begin{tabular}[t]{l}$f$\end{tabular}}}}%
    \put(0,0){\includegraphics[width=\unitlength,page=9]{rosetta-beta.pdf}}%
  \end{picture}%
\endgroup%

\]
This is graphically more compelling as it establishes a visual parallel with the way compact closed categories cancel out the two adjunctions. 
By using the clasp in a way similar to the strictifier/de-strictifier morphisms, and by pairing it with an `anti-clasp' we can combine our notation with the Rosetta in a way that is perhaps graphically more informative without raising the possibility of ill-formed diagrams:
\[
\begingroup%
  \makeatletter%
  \providecommand\color[2][]{%
    \errmessage{(Inkscape) Color is used for the text in Inkscape, but the package 'color.sty' is not loaded}%
    \renewcommand\color[2][]{}%
  }%
  \providecommand\transparent[1]{%
    \errmessage{(Inkscape) Transparency is used (non-zero) for the text in Inkscape, but the package 'transparent.sty' is not loaded}%
    \renewcommand\transparent[1]{}%
  }%
  \providecommand\rotatebox[2]{#2}%
  \newcommand*\fsize{\dimexpr\f@size pt\relax}%
  \newcommand*\lineheight[1]{\fontsize{\fsize}{#1\fsize}\selectfont}%
  \ifx\svgwidth\undefined%
    \setlength{\unitlength}{236.24999433bp}%
    \ifx\svgscale\undefined%
      \relax%
    \else%
      \setlength{\unitlength}{\unitlength * \real{\svgscale}}%
    \fi%
  \else%
    \setlength{\unitlength}{\svgwidth}%
  \fi%
  \global\let\svgwidth\undefined%
  \global\let\svgscale\undefined%
  \makeatother%
  \begin{picture}(1,0.21052746)%
    \lineheight{1}%
    \setlength\tabcolsep{0pt}%
    \put(0,0){\includegraphics[width=\unitlength,page=1]{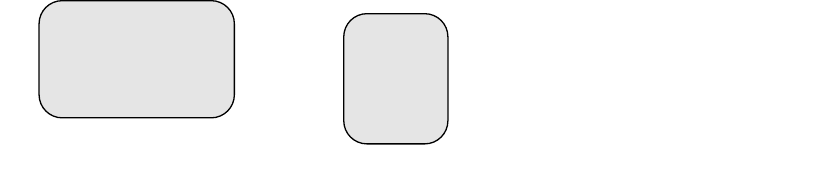}}%
    \put(0.22253824,0.07597094){\color[rgb]{0,0,0}\makebox(0,0)[lt]{\lineheight{1.25}\smash{\begin{tabular}[t]{l}$X$\end{tabular}}}}%
    \put(0.20032578,0.1721055){\color[rgb]{0,0,0}\makebox(0,0)[lt]{\lineheight{1.25}\smash{\begin{tabular}[t]{l}$Y$\end{tabular}}}}%
    \put(0,0){\includegraphics[width=\unitlength,page=2]{rosetta-beta2.pdf}}%
    \put(0.0022773,0.17497916){\color[rgb]{0,0,0}\makebox(0,0)[lt]{\lineheight{1.25}\smash{\begin{tabular}[t]{l}$A$\end{tabular}}}}%
    \put(0,0){\includegraphics[width=\unitlength,page=3]{rosetta-beta2.pdf}}%
    \put(0.13532239,0.13782978){\color[rgb]{0,0,0}\makebox(0,0)[lt]{\lineheight{1.25}\smash{\begin{tabular}[t]{l}$f$\end{tabular}}}}%
    \put(0.43563909,0.00324074){\makebox(0,0)[lt]{\lineheight{1.25}\smash{\begin{tabular}[t]{l}$\mathit{eval}$\end{tabular}}}}%
    \put(0,0){\includegraphics[width=\unitlength,page=4]{rosetta-beta2.pdf}}%
    \put(0.59477381,0.13997844){\makebox(0,0)[lt]{\lineheight{1.25}\smash{\begin{tabular}[t]{l}$=$\end{tabular}}}}%
    \put(0,0){\includegraphics[width=\unitlength,page=5]{rosetta-beta2.pdf}}%
    \put(0.633541,0.01572779){\color[rgb]{0,0,0}\makebox(0,0)[lt]{\lineheight{1.25}\smash{\begin{tabular}[t]{l}$X$\end{tabular}}}}%
    \put(0.85949444,0.16014801){\color[rgb]{0,0,0}\makebox(0,0)[lt]{\lineheight{1.25}\smash{\begin{tabular}[t]{l}$Y$\end{tabular}}}}%
    \put(0,0){\includegraphics[width=\unitlength,page=6]{rosetta-beta2.pdf}}%
    \put(0.63719798,0.17385623){\color[rgb]{0,0,0}\makebox(0,0)[lt]{\lineheight{1.25}\smash{\begin{tabular}[t]{l}$A$\end{tabular}}}}%
    \put(0,0){\includegraphics[width=\unitlength,page=7]{rosetta-beta2.pdf}}%
    \put(0.77024309,0.13782978){\color[rgb]{0,0,0}\makebox(0,0)[lt]{\lineheight{1.25}\smash{\begin{tabular}[t]{l}$f$\end{tabular}}}}%
    \put(0,0){\includegraphics[width=\unitlength,page=8]{rosetta-beta2.pdf}}%
    \put(0.29183296,0.15280548){\color[rgb]{0,0,0}\makebox(0,0)[lt]{\lineheight{1.25}\smash{\begin{tabular}[t]{l}$X\multimap Y$\end{tabular}}}}%
    \put(0,0){\includegraphics[width=\unitlength,page=9]{rosetta-beta2.pdf}}%
    \put(0.34729909,0.05800422){\color[rgb]{0,0,0}\makebox(0,0)[lt]{\lineheight{1.25}\smash{\begin{tabular}[t]{l}$X$\end{tabular}}}}%
    \put(0.48491714,0.1528637){\color[rgb]{0,0,0}\makebox(0,0)[lt]{\lineheight{1.25}\smash{\begin{tabular}[t]{l}$Y$\end{tabular}}}}%
    \put(0,0){\includegraphics[width=\unitlength,page=10]{rosetta-beta2.pdf}}%
  \end{picture}%
\endgroup%

\]
However, the clasps and the anti-clasp now ruin somewhat the visual appeal. 
But this is the price we may need to pay to avoid ill-formed diagrams. 
The additional purchase of the clasps is not worth the additional clutter, so we will stick to the simpler notation. 
The reader who disagrees may adopt it, with no cost in terms of rigour. 

We conclude here this brief and informal section. 
In it we have temporarily suspended our category-theory-driven approach and attempted to persuade the reader that string diagrams, and their concrete incarnation as hierarchical abstract syntax graphs (or more precisely abstract syntax DAGs) could be recovered as an evolution of the more conventional AST representation in which variables, binding, sharing, copying, and discarding are dealt with in such a way so that many useful properties are absorbed in the notation. 



\newpage
\section{String Diagram Rewriting}\label{sec:graphs}

\newcommand{\rew}{\rightsquigarrow}
\newcommand{\downmapsto}{\rotatebox[origin=c]{-90}{\mapsto}\mkern2mu}

Even though we specified the reduction rules of the $\lambda$ calculus as equations, they display a clear directionality. For instance, the $\beta$ rule may be seen as the process of applying a term $\lambda x .u$ to a value $v$, resulting into the term $u[v/x]$. More formally, this amounts to orienting the equation $(\lambda x. u) = u[v/x]$ into a \emph{rewrite rule} $(\lambda x. u)v \rew u[v/x]$. We may apply such a rule inside a $\lambda$-term $t$ when the left-hand side $(\lambda x. u)v$ appears as a sub-term of $t$ --- in this case say that $t$ contains a \emph{redex} for the rewrite rule. Applying the rule then results into the substitution of $(\lambda x. u)v$ with $u[v/x]$ inside $t$. 

The rewriting theory of the $\lambda$ calculus is fundamental to describe the computational behaviour of $\lambda$-terms via an \emph{operational semantics} --- a perspective that we will pursue in Section~\ref{sec:oslc} below. The subject of this section is to illustrate how rewriting works when applied to string diagrams rather than terms.

\subsection{Rewriting of string diagrams in symmetric monoidal categories}\label{sec:rewSMC}

We first consider the case of string diagrams in a symmetric strict monoidal category, as introduced in Section~\ref{sec:sdssmc}. A naive attempt of transposing the above definitions to the context of string diagrammatic syntax poses immediate challenges. Consider the following example. Suppose we have a rewrite rule of string diagrams

\begin{equation}\label{eq:rewrule-stringdiag}
\raisebox{-.5cm}{\hbox{
\begingroup%
  \makeatletter%
  \providecommand\color[2][]{%
    \errmessage{(Inkscape) Color is used for the text in Inkscape, but the package 'color.sty' is not loaded}%
    \renewcommand\color[2][]{}%
  }%
  \providecommand\transparent[1]{%
    \errmessage{(Inkscape) Transparency is used (non-zero) for the text in Inkscape, but the package 'transparent.sty' is not loaded}%
    \renewcommand\transparent[1]{}%
  }%
  \providecommand\rotatebox[2]{#2}%
  \newcommand*\fsize{\dimexpr\f@size pt\relax}%
  \newcommand*\lineheight[1]{\fontsize{\fsize}{#1\fsize}\selectfont}%
  \ifx\svgwidth\undefined%
    \setlength{\unitlength}{74.84348409bp}%
    \ifx\svgscale\undefined%
      \relax%
    \else%
      \setlength{\unitlength}{\unitlength * \real{\svgscale}}%
    \fi%
  \else%
    \setlength{\unitlength}{\svgwidth}%
  \fi%
  \global\let\svgwidth\undefined%
  \global\let\svgscale\undefined%
  \makeatother%
  \begin{picture}(1,0.42139631)%
    \lineheight{1}%
    \setlength\tabcolsep{0pt}%
    \put(0,0){\includegraphics[width=\unitlength,page=1]{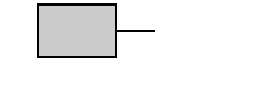}}%
    \put(0.25814083,0.28075502){\color[rgb]{0,0,0}\makebox(0,0)[lt]{\lineheight{1.25}\smash{\begin{tabular}[t]{l}$f$\end{tabular}}}}%
    \put(0,0){\includegraphics[width=\unitlength,page=2]{diagruleLHS.pdf}}%
    \put(0.70908203,0.20058844){\color[rgb]{0,0,0}\makebox(0,0)[lt]{\lineheight{1.25}\smash{\begin{tabular}[t]{l}$g$\end{tabular}}}}%
    \put(0.01475092,0.3391411){\color[rgb]{0,0,0}\makebox(0,0)[lt]{\lineheight{1.25}\smash{\begin{tabular}[t]{l}$A$\end{tabular}}}}%
    \put(0.49002117,0.33219368){\color[rgb]{0,0,0}\makebox(0,0)[lt]{\lineheight{1.25}\smash{\begin{tabular}[t]{l}$B$\end{tabular}}}}%
    \put(0.0039902,0.01884146){\color[rgb]{0,0,0}\makebox(0,0)[lt]{\lineheight{1.25}\smash{\begin{tabular}[t]{l}$C$\end{tabular}}}}%
  \end{picture}%
\endgroup%
}} \qquad \rew \qquad \raisebox{-.5cm}{\hbox{
\begingroup%
  \makeatletter%
  \providecommand\color[2][]{%
    \errmessage{(Inkscape) Color is used for the text in Inkscape, but the package 'color.sty' is not loaded}%
    \renewcommand\color[2][]{}%
  }%
  \providecommand\transparent[1]{%
    \errmessage{(Inkscape) Transparency is used (non-zero) for the text in Inkscape, but the package 'transparent.sty' is not loaded}%
    \renewcommand\transparent[1]{}%
  }%
  \providecommand\rotatebox[2]{#2}%
  \newcommand*\fsize{\dimexpr\f@size pt\relax}%
  \newcommand*\lineheight[1]{\fontsize{\fsize}{#1\fsize}\selectfont}%
  \ifx\svgwidth\undefined%
    \setlength{\unitlength}{85.14618832bp}%
    \ifx\svgscale\undefined%
      \relax%
    \else%
      \setlength{\unitlength}{\unitlength * \real{\svgscale}}%
    \fi%
  \else%
    \setlength{\unitlength}{\svgwidth}%
  \fi%
  \global\let\svgwidth\undefined%
  \global\let\svgscale\undefined%
  \makeatother%
  \begin{picture}(1,0.33054957)%
    \lineheight{1}%
    \setlength\tabcolsep{0pt}%
    \put(0.01255473,0.25824725){\color[rgb]{0,0,0}\makebox(0,0)[lt]{\lineheight{1.25}\smash{\begin{tabular}[t]{l}$A$\end{tabular}}}}%
    \put(0.82263909,0.18574457){\color[rgb]{0,0,0}\makebox(0,0)[lt]{\lineheight{1.25}\smash{\begin{tabular}[t]{l}$B$\end{tabular}}}}%
    \put(0.43527522,0.18652408){\color[rgb]{0,0,0}\makebox(0,0)[lt]{\lineheight{1.25}\smash{\begin{tabular}[t]{l}$A$\end{tabular}}}}%
    \put(0,0){\includegraphics[width=\unitlength,page=1]{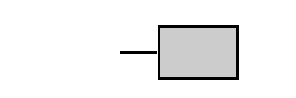}}%
    \put(0.63866684,0.13418491){\color[rgb]{0,0,0}\makebox(0,0)[lt]{\lineheight{1.25}\smash{\begin{tabular}[t]{l}$f$\end{tabular}}}}%
    \put(0,0){\includegraphics[width=\unitlength,page=2]{diagruleRHS.pdf}}%
    \put(0.0077931,0.08685069){\color[rgb]{0,0,0}\makebox(0,0)[lt]{\lineheight{1.25}\smash{\begin{tabular}[t]{l}$C$\end{tabular}}}}%
    \put(0,0){\includegraphics[width=\unitlength,page=3]{diagruleRHS.pdf}}%
    \put(0.23439839,0.13418554){\color[rgb]{0,0,0}\makebox(0,0)[lt]{\lineheight{1.25}\smash{\begin{tabular}[t]{l}$h$\end{tabular}}}}%
  \end{picture}%
\endgroup%
}}
\end{equation}

\noindent which we want to apply inside a string diagram $t$ of the form

\begin{equation}\label{eq:rewcontext-stringdiag}
	{
\begingroup%
  \makeatletter%
  \providecommand\color[2][]{%
    \errmessage{(Inkscape) Color is used for the text in Inkscape, but the package 'color.sty' is not loaded}%
    \renewcommand\color[2][]{}%
  }%
  \providecommand\transparent[1]{%
    \errmessage{(Inkscape) Transparency is used (non-zero) for the text in Inkscape, but the package 'transparent.sty' is not loaded}%
    \renewcommand\transparent[1]{}%
  }%
  \providecommand\rotatebox[2]{#2}%
  \newcommand*\fsize{\dimexpr\f@size pt\relax}%
  \newcommand*\lineheight[1]{\fontsize{\fsize}{#1\fsize}\selectfont}%
  \ifx\svgwidth\undefined%
    \setlength{\unitlength}{162.60987723bp}%
    \ifx\svgscale\undefined%
      \relax%
    \else%
      \setlength{\unitlength}{\unitlength * \real{\svgscale}}%
    \fi%
  \else%
    \setlength{\unitlength}{\svgwidth}%
  \fi%
  \global\let\svgwidth\undefined%
  \global\let\svgscale\undefined%
  \makeatother%
  \begin{picture}(1,0.33444716)%
    \lineheight{1}%
    \setlength\tabcolsep{0pt}%
    \put(0,0){\includegraphics[width=\unitlength,page=1]{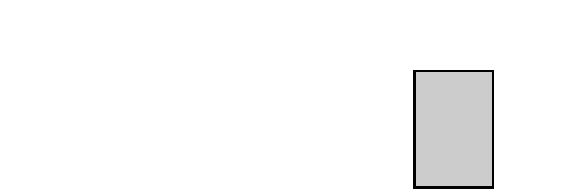}}%
    \put(0.78628582,0.0979363){\color[rgb]{0,0,0}\makebox(0,0)[lt]{\lineheight{1.25}\smash{\begin{tabular}[t]{l}$g$\end{tabular}}}}%
    \put(0.00543012,0.29658804){\color[rgb]{0,0,0}\makebox(0,0)[lt]{\lineheight{1.25}\smash{\begin{tabular}[t]{l}$A$\end{tabular}}}}%
    \put(0.00756193,0.07274992){\color[rgb]{0,0,0}\makebox(0,0)[lt]{\lineheight{1.25}\smash{\begin{tabular}[t]{l}$A$\end{tabular}}}}%
    \put(0.89545297,0.29339032){\color[rgb]{0,0,0}\makebox(0,0)[lt]{\lineheight{1.25}\smash{\begin{tabular}[t]{l}$A$\end{tabular}}}}%
    \put(0,0){\includegraphics[width=\unitlength,page=2]{diagrewlhs.pdf}}%
    \put(0.23421463,0.27024492){\color[rgb]{0,0,0}\makebox(0,0)[lt]{\lineheight{1.25}\smash{\begin{tabular}[t]{l}$f$\end{tabular}}}}%
    \put(0,0){\includegraphics[width=\unitlength,page=3]{diagrewlhs.pdf}}%
    \put(0.4176543,0.15822498){\color[rgb]{0,0,0}\makebox(0,0)[lt]{\lineheight{1.25}\smash{\begin{tabular}[t]{l}$i$\end{tabular}}}}%
    \put(0,0){\includegraphics[width=\unitlength,page=4]{diagrewlhs.pdf}}%
    \put(0.90712984,0.13563776){\color[rgb]{0,0,0}\makebox(0,0)[lt]{\lineheight{1.25}\smash{\begin{tabular}[t]{l}$B$\end{tabular}}}}%
    \put(0.00720154,0.18253717){\color[rgb]{0,0,0}\makebox(0,0)[lt]{\lineheight{1.25}\smash{\begin{tabular}[t]{l}$C$\end{tabular}}}}%
    \put(0,0){\includegraphics[width=\unitlength,page=5]{diagrewlhs.pdf}}%
  \end{picture}%
\endgroup%
}.
\end{equation}
`Morally', $t$ contains a redex for the rule. But, strictly speaking, we can isolate the left-hand side of the rule as a sub-term of $t$ only if we first use naturality of the symmetry (Figure~\ref{fig:eqsym}) to rearrange $t$ into a string diagram $t'$, as follows:
\begin{equation}\label{eq:rearrangeforrew}
	\raisebox{-.7cm}{\hbox{\scalebox{.75}{}}} \quad = \quad \raisebox{-.7cm}{\scalebox{.75}{\hbox{
\begingroup%
  \makeatletter%
  \providecommand\color[2][]{%
    \errmessage{(Inkscape) Color is used for the text in Inkscape, but the package 'color.sty' is not loaded}%
    \renewcommand\color[2][]{}%
  }%
  \providecommand\transparent[1]{%
    \errmessage{(Inkscape) Transparency is used (non-zero) for the text in Inkscape, but the package 'transparent.sty' is not loaded}%
    \renewcommand\transparent[1]{}%
  }%
  \providecommand\rotatebox[2]{#2}%
  \newcommand*\fsize{\dimexpr\f@size pt\relax}%
  \newcommand*\lineheight[1]{\fontsize{\fsize}{#1\fsize}\selectfont}%
  \ifx\svgwidth\undefined%
    \setlength{\unitlength}{157.21183968bp}%
    \ifx\svgscale\undefined%
      \relax%
    \else%
      \setlength{\unitlength}{\unitlength * \real{\svgscale}}%
    \fi%
  \else%
    \setlength{\unitlength}{\svgwidth}%
  \fi%
  \global\let\svgwidth\undefined%
  \global\let\svgscale\undefined%
  \makeatother%
  \begin{picture}(1,0.34314705)%
    \lineheight{1}%
    \setlength\tabcolsep{0pt}%
    \put(0,0){\includegraphics[width=\unitlength,page=1]{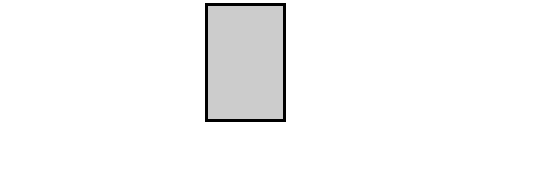}}%
    \put(0.43171168,0.22089214){\color[rgb]{0,0,0}\makebox(0,0)[lt]{\lineheight{1.25}\smash{\begin{tabular}[t]{l}$g$\end{tabular}}}}%
    \put(0.00660996,0.30398799){\color[rgb]{0,0,0}\makebox(0,0)[lt]{\lineheight{1.25}\smash{\begin{tabular}[t]{l}$A$\end{tabular}}}}%
    \put(0.00881496,0.07246402){\color[rgb]{0,0,0}\makebox(0,0)[lt]{\lineheight{1.25}\smash{\begin{tabular}[t]{l}$A$\end{tabular}}}}%
    \put(0.90404043,0.24776065){\color[rgb]{0,0,0}\makebox(0,0)[lt]{\lineheight{1.25}\smash{\begin{tabular}[t]{l}$A$\end{tabular}}}}%
    \put(0,0){\includegraphics[width=\unitlength,page=2]{ruleappLHS.pdf}}%
    \put(0.24325,0.27674049){\color[rgb]{0,0,0}\makebox(0,0)[lt]{\lineheight{1.25}\smash{\begin{tabular}[t]{l}$f$\end{tabular}}}}%
    \put(0,0){\includegraphics[width=\unitlength,page=3]{ruleappLHS.pdf}}%
    \put(0.63314052,0.03598558){\color[rgb]{0,0,0}\makebox(0,0)[lt]{\lineheight{1.25}\smash{\begin{tabular}[t]{l}$i$\end{tabular}}}}%
    \put(0,0){\includegraphics[width=\unitlength,page=4]{ruleappLHS.pdf}}%
    \put(0.90178577,0.06143912){\color[rgb]{0,0,0}\makebox(0,0)[lt]{\lineheight{1.25}\smash{\begin{tabular}[t]{l}$B$\end{tabular}}}}%
    \put(0.0084422,0.18602127){\color[rgb]{0,0,0}\makebox(0,0)[lt]{\lineheight{1.25}\smash{\begin{tabular}[t]{l}$C$\end{tabular}}}}%
    \put(0,0){\includegraphics[width=\unitlength,page=5]{ruleappLHS.pdf}}%
  \end{picture}%
\endgroup%
}}} 
\end{equation}
Note $t$ and $t'$ in~\eqref{eq:rearrangeforrew} are equivalent modulo the laws of symmetric monoidal categories. We now have a clear indication of how to apply the rewrite rule: simply replace in $t'$ the left-hand side of the rule with its right-hand side:
 \begin{equation}\label{eq:rewstepstringdiag}
	\raisebox{-.7cm}{\scalebox{.75}{\hbox{}}} \quad \rew \quad \raisebox{-.7cm}{\scalebox{.75}{\hbox{
\begingroup%
  \makeatletter%
  \providecommand\color[2][]{%
    \errmessage{(Inkscape) Color is used for the text in Inkscape, but the package 'color.sty' is not loaded}%
    \renewcommand\color[2][]{}%
  }%
  \providecommand\transparent[1]{%
    \errmessage{(Inkscape) Transparency is used (non-zero) for the text in Inkscape, but the package 'transparent.sty' is not loaded}%
    \renewcommand\transparent[1]{}%
  }%
  \providecommand\rotatebox[2]{#2}%
  \newcommand*\fsize{\dimexpr\f@size pt\relax}%
  \newcommand*\lineheight[1]{\fontsize{\fsize}{#1\fsize}\selectfont}%
  \ifx\svgwidth\undefined%
    \setlength{\unitlength}{157.2118427bp}%
    \ifx\svgscale\undefined%
      \relax%
    \else%
      \setlength{\unitlength}{\unitlength * \real{\svgscale}}%
    \fi%
  \else%
    \setlength{\unitlength}{\svgwidth}%
  \fi%
  \global\let\svgwidth\undefined%
  \global\let\svgscale\undefined%
  \makeatother%
  \begin{picture}(1,0.34314718)%
    \lineheight{1}%
    \setlength\tabcolsep{0pt}%
    \put(0,0){\includegraphics[width=\unitlength,page=1]{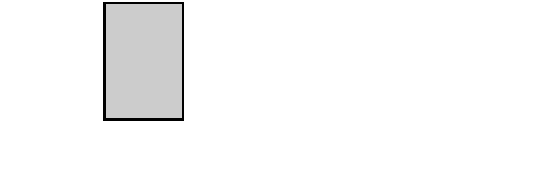}}%
    \put(0.24280637,0.2234112){\color[rgb]{0,0,0}\makebox(0,0)[lt]{\lineheight{1.25}\smash{\begin{tabular}[t]{l}$h$\end{tabular}}}}%
    \put(0.00660998,0.30398813){\color[rgb]{0,0,0}\makebox(0,0)[lt]{\lineheight{1.25}\smash{\begin{tabular}[t]{l}$A$\end{tabular}}}}%
    \put(0.00881498,0.07246429){\color[rgb]{0,0,0}\makebox(0,0)[lt]{\lineheight{1.25}\smash{\begin{tabular}[t]{l}$A$\end{tabular}}}}%
    \put(0.90404044,0.24776092){\color[rgb]{0,0,0}\makebox(0,0)[lt]{\lineheight{1.25}\smash{\begin{tabular}[t]{l}$A$\end{tabular}}}}%
    \put(0,0){\includegraphics[width=\unitlength,page=2]{ruleappRHS.pdf}}%
    \put(0.44223028,0.21629077){\color[rgb]{0,0,0}\makebox(0,0)[lt]{\lineheight{1.25}\smash{\begin{tabular}[t]{l}$f$\end{tabular}}}}%
    \put(0,0){\includegraphics[width=\unitlength,page=3]{ruleappRHS.pdf}}%
    \put(0.63314053,0.03598572){\color[rgb]{0,0,0}\makebox(0,0)[lt]{\lineheight{1.25}\smash{\begin{tabular}[t]{l}$i$\end{tabular}}}}%
    \put(0,0){\includegraphics[width=\unitlength,page=4]{ruleappRHS.pdf}}%
    \put(0.90178577,0.06143939){\color[rgb]{0,0,0}\makebox(0,0)[lt]{\lineheight{1.25}\smash{\begin{tabular}[t]{l}$B$\end{tabular}}}}%
    \put(0.00844222,0.18602127){\color[rgb]{0,0,0}\makebox(0,0)[lt]{\lineheight{1.25}\smash{\begin{tabular}[t]{l}$C$\end{tabular}}}}%
    \put(0,0){\includegraphics[width=\unitlength,page=5]{ruleappRHS.pdf}}%
  \end{picture}%
\endgroup%
}}} 
\end{equation}
This example suggests that the notion of redex becomes way subtler once moving from terms to string diagrams. A string diagram is an equivalence class of terms, and to find a redex in a string diagrams means to find at least one member of such an equivalence class which contains a redex in the traditional sense. More formally:

\begin{definition}[String diagram rewrite step] \label{def:rewritestepsyn}
Consider a rewrite rule $$\mathcal{R} \qquad = \qquad   \ \raisebox{-.2cm}{\hbox{
\begingroup%
  \makeatletter%
  \providecommand\color[2][]{%
    \errmessage{(Inkscape) Color is used for the text in Inkscape, but the package 'color.sty' is not loaded}%
    \renewcommand\color[2][]{}%
  }%
  \providecommand\transparent[1]{%
    \errmessage{(Inkscape) Transparency is used (non-zero) for the text in Inkscape, but the package 'transparent.sty' is not loaded}%
    \renewcommand\transparent[1]{}%
  }%
  \providecommand\rotatebox[2]{#2}%
  \newcommand*\fsize{\dimexpr\f@size pt\relax}%
  \newcommand*\lineheight[1]{\fontsize{\fsize}{#1\fsize}\selectfont}%
  \ifx\svgwidth\undefined%
    \setlength{\unitlength}{51.3670127bp}%
    \ifx\svgscale\undefined%
      \relax%
    \else%
      \setlength{\unitlength}{\unitlength * \real{\svgscale}}%
    \fi%
  \else%
    \setlength{\unitlength}{\svgwidth}%
  \fi%
  \global\let\svgwidth\undefined%
  \global\let\svgscale\undefined%
  \makeatother%
  \begin{picture}(1,0.34302614)%
    \lineheight{1}%
    \setlength\tabcolsep{0pt}%
    \put(0,0){\includegraphics[width=\unitlength,page=1]{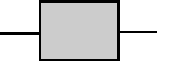}}%
    \put(0.70600577,0.21740018){\color[rgb]{0,0,0}\makebox(0,0)[lt]{\lineheight{1.25}\smash{\begin{tabular}[t]{l}$B$\end{tabular}}}}%
    \put(0.3883638,0.13508095){\color[rgb]{0,0,0}\makebox(0,0)[lt]{\lineheight{1.25}\smash{\begin{tabular}[t]{l}$l$\end{tabular}}}}%
    \put(0.06336321,0.19838368){\color[rgb]{0,0,0}\makebox(0,0)[lt]{\lineheight{1.25}\smash{\begin{tabular}[t]{l}$A$\end{tabular}}}}%
  \end{picture}%
\endgroup%
}} \rew  \ \raisebox{-.2cm}{\hbox{
\begingroup%
  \makeatletter%
  \providecommand\color[2][]{%
    \errmessage{(Inkscape) Color is used for the text in Inkscape, but the package 'color.sty' is not loaded}%
    \renewcommand\color[2][]{}%
  }%
  \providecommand\transparent[1]{%
    \errmessage{(Inkscape) Transparency is used (non-zero) for the text in Inkscape, but the package 'transparent.sty' is not loaded}%
    \renewcommand\transparent[1]{}%
  }%
  \providecommand\rotatebox[2]{#2}%
  \newcommand*\fsize{\dimexpr\f@size pt\relax}%
  \newcommand*\lineheight[1]{\fontsize{\fsize}{#1\fsize}\selectfont}%
  \ifx\svgwidth\undefined%
    \setlength{\unitlength}{51.36700893bp}%
    \ifx\svgscale\undefined%
      \relax%
    \else%
      \setlength{\unitlength}{\unitlength * \real{\svgscale}}%
    \fi%
  \else%
    \setlength{\unitlength}{\svgwidth}%
  \fi%
  \global\let\svgwidth\undefined%
  \global\let\svgscale\undefined%
  \makeatother%
  \begin{picture}(1,0.34302617)%
    \lineheight{1}%
    \setlength\tabcolsep{0pt}%
    \put(0,0){\includegraphics[width=\unitlength,page=1]{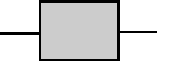}}%
    \put(0.70600575,0.21740041){\color[rgb]{0,0,0}\makebox(0,0)[lt]{\lineheight{1.25}\smash{\begin{tabular}[t]{l}$B$\end{tabular}}}}%
    \put(0.38836418,0.13508096){\color[rgb]{0,0,0}\makebox(0,0)[lt]{\lineheight{1.25}\smash{\begin{tabular}[t]{l}$r$\end{tabular}}}}%
    \put(0.06336314,0.1983837){\color[rgb]{0,0,0}\makebox(0,0)[lt]{\lineheight{1.25}\smash{\begin{tabular}[t]{l}$A$\end{tabular}}}}%
  \end{picture}%
\endgroup%
}}.$$ 
We say that a string diagram  \ \raisebox{-.2cm}{\hbox{
\begingroup%
  \makeatletter%
  \providecommand\color[2][]{%
    \errmessage{(Inkscape) Color is used for the text in Inkscape, but the package 'color.sty' is not loaded}%
    \renewcommand\color[2][]{}%
  }%
  \providecommand\transparent[1]{%
    \errmessage{(Inkscape) Transparency is used (non-zero) for the text in Inkscape, but the package 'transparent.sty' is not loaded}%
    \renewcommand\transparent[1]{}%
  }%
  \providecommand\rotatebox[2]{#2}%
  \newcommand*\fsize{\dimexpr\f@size pt\relax}%
  \newcommand*\lineheight[1]{\fontsize{\fsize}{#1\fsize}\selectfont}%
  \ifx\svgwidth\undefined%
    \setlength{\unitlength}{51.2606641bp}%
    \ifx\svgscale\undefined%
      \relax%
    \else%
      \setlength{\unitlength}{\unitlength * \real{\svgscale}}%
    \fi%
  \else%
    \setlength{\unitlength}{\svgwidth}%
  \fi%
  \global\let\svgwidth\undefined%
  \global\let\svgscale\undefined%
  \makeatother%
  \begin{picture}(1,0.32848456)%
    \lineheight{1}%
    \setlength\tabcolsep{0pt}%
    \put(0.69213638,0.20709246){\color[rgb]{0,0,0}\makebox(0,0)[lt]{\lineheight{1.25}\smash{\begin{tabular}[t]{l}$D$\end{tabular}}}}%
    \put(0.04870817,0.20838726){\color[rgb]{0,0,0}\makebox(0,0)[lt]{\lineheight{1.25}\smash{\begin{tabular}[t]{l}$C$\end{tabular}}}}%
    \put(0,0){\includegraphics[width=\unitlength,page=1]{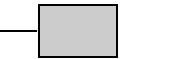}}%
    \put(0.38655091,0.12144984){\color[rgb]{0,0,0}\makebox(0,0)[lt]{\lineheight{1.25}\smash{\begin{tabular}[t]{l}$f$\end{tabular}}}}%
    \put(0,0){\includegraphics[width=\unitlength,page=2]{diagf.pdf}}%
  \end{picture}%
\endgroup%
}} has a \emph{redex} for the  rule $\mathcal{R}$ if there exist string diagrams 
$$ \ \raisebox{-.4cm}{\hbox{
\begingroup%
  \makeatletter%
  \providecommand\color[2][]{%
    \errmessage{(Inkscape) Color is used for the text in Inkscape, but the package 'color.sty' is not loaded}%
    \renewcommand\color[2][]{}%
  }%
  \providecommand\transparent[1]{%
    \errmessage{(Inkscape) Transparency is used (non-zero) for the text in Inkscape, but the package 'transparent.sty' is not loaded}%
    \renewcommand\transparent[1]{}%
  }%
  \providecommand\rotatebox[2]{#2}%
  \newcommand*\fsize{\dimexpr\f@size pt\relax}%
  \newcommand*\lineheight[1]{\fontsize{\fsize}{#1\fsize}\selectfont}%
  \ifx\svgwidth\undefined%
    \setlength{\unitlength}{51.69525735bp}%
    \ifx\svgscale\undefined%
      \relax%
    \else%
      \setlength{\unitlength}{\unitlength * \real{\svgscale}}%
    \fi%
  \else%
    \setlength{\unitlength}{\svgwidth}%
  \fi%
  \global\let\svgwidth\undefined%
  \global\let\svgscale\undefined%
  \makeatother%
  \begin{picture}(1,0.60651501)%
    \lineheight{1}%
    \setlength\tabcolsep{0pt}%
    \put(0,0){\includegraphics[width=\unitlength,page=1]{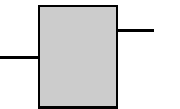}}%
    \put(0.68995277,0.48742735){\color[rgb]{0,0,0}\makebox(0,0)[lt]{\lineheight{1.25}\smash{\begin{tabular}[t]{l}$A$\end{tabular}}}}%
    \put(0,0){\includegraphics[width=\unitlength,page=2]{diaglc.pdf}}%
    \put(0.69215538,0.20048226){\color[rgb]{0,0,0}\makebox(0,0)[lt]{\lineheight{1.25}\smash{\begin{tabular}[t]{l}$Q$\end{tabular}}}}%
    \put(0.38011304,0.25522493){\color[rgb]{0,0,0}\makebox(0,0)[lt]{\lineheight{1.25}\smash{\begin{tabular}[t]{l}$l_C$\end{tabular}}}}%
    \put(0.03394477,0.32969566){\color[rgb]{0,0,0}\makebox(0,0)[lt]{\lineheight{1.25}\smash{\begin{tabular}[t]{l}$C$\end{tabular}}}}%
  \end{picture}%
\endgroup%
}}\quad \text{ and } \quad \raisebox{-.4cm}{\hbox{
\begingroup%
  \makeatletter%
  \providecommand\color[2][]{%
    \errmessage{(Inkscape) Color is used for the text in Inkscape, but the package 'color.sty' is not loaded}%
    \renewcommand\color[2][]{}%
  }%
  \providecommand\transparent[1]{%
    \errmessage{(Inkscape) Transparency is used (non-zero) for the text in Inkscape, but the package 'transparent.sty' is not loaded}%
    \renewcommand\transparent[1]{}%
  }%
  \providecommand\rotatebox[2]{#2}%
  \newcommand*\fsize{\dimexpr\f@size pt\relax}%
  \newcommand*\lineheight[1]{\fontsize{\fsize}{#1\fsize}\selectfont}%
  \ifx\svgwidth\undefined%
    \setlength{\unitlength}{51.74764663bp}%
    \ifx\svgscale\undefined%
      \relax%
    \else%
      \setlength{\unitlength}{\unitlength * \real{\svgscale}}%
    \fi%
  \else%
    \setlength{\unitlength}{\svgwidth}%
  \fi%
  \global\let\svgwidth\undefined%
  \global\let\svgscale\undefined%
  \makeatother%
  \begin{picture}(1,0.59858259)%
    \lineheight{1}%
    \setlength\tabcolsep{0pt}%
    \put(0,0){\includegraphics[width=\unitlength,page=1]{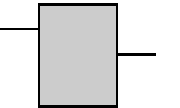}}%
    \put(0.6950336,0.34823829){\color[rgb]{0,0,0}\makebox(0,0)[lt]{\lineheight{1.25}\smash{\begin{tabular}[t]{l}$D$\end{tabular}}}}%
    \put(0,0){\includegraphics[width=\unitlength,page=2]{diagrc.pdf}}%
    \put(0.03833691,0.20027929){\color[rgb]{0,0,0}\makebox(0,0)[lt]{\lineheight{1.25}\smash{\begin{tabular}[t]{l}$Q$\end{tabular}}}}%
    \put(0.37972806,0.25496665){\color[rgb]{0,0,0}\makebox(0,0)[lt]{\lineheight{1.25}\smash{\begin{tabular}[t]{l}$r_C$\end{tabular}}}}%
    \put(0.03391025,0.47961549){\color[rgb]{0,0,0}\makebox(0,0)[lt]{\lineheight{1.25}\smash{\begin{tabular}[t]{l}$B$\end{tabular}}}}%
  \end{picture}%
\endgroup%
}}$$ such that
	\begin{equation*}
		\ \raisebox{-.2cm}{\hbox{}} \quad = \quad \raisebox{-.5cm}{\hbox{
\begingroup%
  \makeatletter%
  \providecommand\color[2][]{%
    \errmessage{(Inkscape) Color is used for the text in Inkscape, but the package 'color.sty' is not loaded}%
    \renewcommand\color[2][]{}%
  }%
  \providecommand\transparent[1]{%
    \errmessage{(Inkscape) Transparency is used (non-zero) for the text in Inkscape, but the package 'transparent.sty' is not loaded}%
    \renewcommand\transparent[1]{}%
  }%
  \providecommand\rotatebox[2]{#2}%
  \newcommand*\fsize{\dimexpr\f@size pt\relax}%
  \newcommand*\lineheight[1]{\fontsize{\fsize}{#1\fsize}\selectfont}%
  \ifx\svgwidth\undefined%
    \setlength{\unitlength}{134.55800788bp}%
    \ifx\svgscale\undefined%
      \relax%
    \else%
      \setlength{\unitlength}{\unitlength * \real{\svgscale}}%
    \fi%
  \else%
    \setlength{\unitlength}{\svgwidth}%
  \fi%
  \global\let\svgwidth\undefined%
  \global\let\svgscale\undefined%
  \makeatother%
  \begin{picture}(1,0.23382956)%
    \lineheight{1}%
    \setlength\tabcolsep{0pt}%
    \put(0,0){\includegraphics[width=\unitlength,page=1]{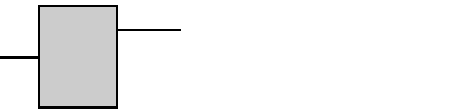}}%
    \put(0.2650701,0.18807779){\color[rgb]{0,0,0}\makebox(0,0)[lt]{\lineheight{1.25}\smash{\begin{tabular}[t]{l}$A$\end{tabular}}}}%
    \put(0,0){\includegraphics[width=\unitlength,page=2]{rewcontextLHS.pdf}}%
    \put(0.41259817,0.05557772){\color[rgb]{0,0,0}\makebox(0,0)[lt]{\lineheight{1.25}\smash{\begin{tabular}[t]{l}$Q$\end{tabular}}}}%
    \put(0.14603412,0.09886895){\color[rgb]{0,0,0}\makebox(0,0)[lt]{\lineheight{1.25}\smash{\begin{tabular}[t]{l}$l_C$\end{tabular}}}}%
    \put(0.01304123,0.12747954){\color[rgb]{0,0,0}\makebox(0,0)[lt]{\lineheight{1.25}\smash{\begin{tabular}[t]{l}$C$\end{tabular}}}}%
    \put(0,0){\includegraphics[width=\unitlength,page=3]{rewcontextLHS.pdf}}%
    \put(0.88271754,0.13392374){\color[rgb]{0,0,0}\makebox(0,0)[lt]{\lineheight{1.25}\smash{\begin{tabular}[t]{l}$D$\end{tabular}}}}%
    \put(0.76145908,0.0980538){\color[rgb]{0,0,0}\makebox(0,0)[lt]{\lineheight{1.25}\smash{\begin{tabular}[t]{l}$r_C$\end{tabular}}}}%
    \put(0.60617127,0.18444813){\color[rgb]{0,0,0}\makebox(0,0)[lt]{\lineheight{1.25}\smash{\begin{tabular}[t]{l}$B$\end{tabular}}}}%
    \put(0,0){\includegraphics[width=\unitlength,page=4]{rewcontextLHS.pdf}}%
    \put(0.46208899,0.15193028){\color[rgb]{0,0,0}\makebox(0,0)[lt]{\lineheight{1.25}\smash{\begin{tabular}[t]{l}$l$\end{tabular}}}}%
  \end{picture}%
\endgroup%
}}
	\end{equation*}
	modulo the laws of symmetric monoidal categories. If $\ \raisebox{-.2cm}{\hbox{}}$ has such a redex, then rewriting $\ \raisebox{-.2cm}{\hbox{}}$ with $\mathcal{R}$ produces
		\begin{equation*}
		\ \raisebox{-.2cm}{\hbox{}}  \quad  \rew\quad \raisebox{-.5cm}{\hbox{
\begingroup%
  \makeatletter%
  \providecommand\color[2][]{%
    \errmessage{(Inkscape) Color is used for the text in Inkscape, but the package 'color.sty' is not loaded}%
    \renewcommand\color[2][]{}%
  }%
  \providecommand\transparent[1]{%
    \errmessage{(Inkscape) Transparency is used (non-zero) for the text in Inkscape, but the package 'transparent.sty' is not loaded}%
    \renewcommand\transparent[1]{}%
  }%
  \providecommand\rotatebox[2]{#2}%
  \newcommand*\fsize{\dimexpr\f@size pt\relax}%
  \newcommand*\lineheight[1]{\fontsize{\fsize}{#1\fsize}\selectfont}%
  \ifx\svgwidth\undefined%
    \setlength{\unitlength}{134.55800782bp}%
    \ifx\svgscale\undefined%
      \relax%
    \else%
      \setlength{\unitlength}{\unitlength * \real{\svgscale}}%
    \fi%
  \else%
    \setlength{\unitlength}{\svgwidth}%
  \fi%
  \global\let\svgwidth\undefined%
  \global\let\svgscale\undefined%
  \makeatother%
  \begin{picture}(1,0.2338296)%
    \lineheight{1}%
    \setlength\tabcolsep{0pt}%
    \put(0,0){\includegraphics[width=\unitlength,page=1]{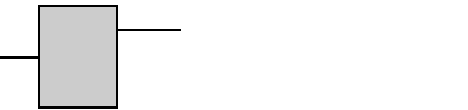}}%
    \put(0.2650701,0.18807783){\color[rgb]{0,0,0}\makebox(0,0)[lt]{\lineheight{1.25}\smash{\begin{tabular}[t]{l}$A$\end{tabular}}}}%
    \put(0,0){\includegraphics[width=\unitlength,page=2]{rewcontextRHS.pdf}}%
    \put(0.41259817,0.05557772){\color[rgb]{0,0,0}\makebox(0,0)[lt]{\lineheight{1.25}\smash{\begin{tabular}[t]{l}$Q$\end{tabular}}}}%
    \put(0.14603412,0.09886899){\color[rgb]{0,0,0}\makebox(0,0)[lt]{\lineheight{1.25}\smash{\begin{tabular}[t]{l}$l_C$\end{tabular}}}}%
    \put(0.01304123,0.12747954){\color[rgb]{0,0,0}\makebox(0,0)[lt]{\lineheight{1.25}\smash{\begin{tabular}[t]{l}$C$\end{tabular}}}}%
    \put(0,0){\includegraphics[width=\unitlength,page=3]{rewcontextRHS.pdf}}%
    \put(0.88271754,0.13392374){\color[rgb]{0,0,0}\makebox(0,0)[lt]{\lineheight{1.25}\smash{\begin{tabular}[t]{l}$D$\end{tabular}}}}%
    \put(0.76145908,0.0980538){\color[rgb]{0,0,0}\makebox(0,0)[lt]{\lineheight{1.25}\smash{\begin{tabular}[t]{l}$r_C$\end{tabular}}}}%
    \put(0.60617127,0.18444813){\color[rgb]{0,0,0}\makebox(0,0)[lt]{\lineheight{1.25}\smash{\begin{tabular}[t]{l}$B$\end{tabular}}}}%
    \put(0,0){\includegraphics[width=\unitlength,page=4]{rewcontextRHS.pdf}}%
    \put(0.4509414,0.16307794){\color[rgb]{0,0,0}\makebox(0,0)[lt]{\lineheight{1.25}\smash{\begin{tabular}[t]{l}$r$\end{tabular}}}}%
  \end{picture}%
\endgroup%
}}
	\end{equation*}
\end{definition}
Although mathematically correct, this definition of string diagram rewriting is unsatisfactory from a practical viewpoint. If we aim at studying the operational behaviour of programs using rewriting, finding a redex should be a computationally feasible task. When it amounts to finding a sub-term inside a $\lambda$-term, this is the case. However, imagine searching for a redex in a string diagram: one would need to explore the space of all equivalent terms represented by such diagram, trying to find one with a redex. This is typically an extremely expensive process to implement.

For this reason, we need to rely on an interpretation of string diagrams which is more efficient for the purpose of rewriting. The data structure which we use to interpret string diagrams is the one of a \emph{open hypergraph}.

In a nutshell, open hypergraphs generalise standard (directed) graphs in two ways. First, edges are replaced with \emph{hyperedges}: whereas an edge has just one ingoing and one outgoing node, a hyperedge may have multiple of them (or none), organised as lists. Second, the structures we consider are `open', in the sense that some of their nodes act as an \emph{interface}, with which they may be combined together with other hypergraphs. Thanks to the interface, open hypergraphs may mirror sequential and parallel composition of string diagrams, and are thus adapted to interpret them.

An example showing our graphical representation of open hypergraphs is displayed in~\eqref{ex:hyp} below. First, we proceed with the formal definition of these structures. Labels for nodes and hyperedges will come from a \emph{monoidal signature}, as introduced in Section~\ref{sec:smc} when discussing freely generated symmetric strict monoidal categories.

\begin{definition}[Open Hypergraph] Fix a monoidal signature $\Sigma = (\Sigma_0,\Sigma_1)$ of objects $\Sigma_0$ and morphisms $\Sigma_1$.  A \emph{($\Sigma$-labelled) hypergraph} is a tuple $(N,E,v,l_n,l_e,)$ where 
\begin{itemize}
	\item $N$ is a set of nodes
	\item $E$ is a set of hyperedges
	\item $v \colon E \to N^{\star} \times N^{\star}$ is a function assigning to each hyperedge $e$ a list $\pi_1(v(e))$ of ingoing nodes and a list $\pi_2(v(e))$ of outogoing nodes  
	\item $l_n \colon N \to \Sigma_0$ is a function labelling each node with a generating object in $\Sigma_0$
	\item $l_e \colon E \to \Sigma_1$ is a function labelling each hyperedge with a generating object in $\Sigma_0$, with the requirement that $l_e(e)$ must have arity $\pi_1(v(e))$ and coarity $\pi_2(v(e))$ in $\Sigma_0$.
\end{itemize}
Given $\Sigma$-labelled hypergraphs $G = (N,E,v,l_n,l_e,)$ and $G' = (N',E',v',{l'}_n,{l'}_e,)$, a \emph{hypergraph morphism} $f \colon G \to G'$ consists of functions $f_N \colon N \to N'$ and $f_E \colon E \to E'$ respecting the labelling of nodes and hyperedges and the type of hyperedges. In other words, ${l'}_n(f_N(n)) = v(n)$, ${l'}_e(f_E(e)) = l_e(e)$, and $v'(f_E(e)) = (\hat{f}_N(\pi_1(v(e)),\hat{f}_N(\pi_2(v(e)) )$, where $\hat{f}_N \colon N^{\star} \to N^{\star}$ is the lifting of $f_N \colon N \to N'$, defined point-wise in the obvious way. Such a morphism is a \emph{monomorphism} if $f_N$ and $f_E$ are both injective functions.

    A ($\Sigma$-labelled) \emph{open} hypergraph is a tuple $(L, G, R, f_L, f_R)$ where 
    \begin{itemize}
    	\item $G$ is a $\Sigma$-labelled hypergraph, called the \emph{carrier}
    	\item $L$ is a discrete\footnote{Recall that a graph is discrete when is has only nodes, and no (hyper)edges.} $\Sigma$-labelled hypergraph, called the \emph{left interface} of $G$
    	\item $R$ is a discrete $\Sigma$-labelled hypergraph, called the \emph{right interface} of $G$
    	\item $f_L \colon L \to G$ and $f_R \colon R \to G$ are hypergraph morphisms.
    \end{itemize} 
   A \emph{open hypergraph morphism} $h \colon (L,G,R, f_L, f_R) \to (L',G',R', f'_L, f'_R)$ consists of hypergraph morphisms $h_G \colon G \to G'$, $h_L \colon L \to L'$, $h_R \colon R \to R'$ commuting with the interface morphisms, i.e. such that $h_L  \semic  f'_L = f_L  \semic  h_G$ and $h_R  \semic  f'_R = f_R  \semic  h_G$. 
\end{definition}

\begin{example}\label{ex:hypergraph}
	Fix $\Sigma = (\{A,B,C\}, \{f \colon A \otimes B \to C\}, g \colon C \to B\})$. Here is an example of open $\Sigma$-hypergraph $(G,L,R, f_L, f_R)$, displayed both as a tuple (top) and in its graphical representation --- in~\eqref{ex:hyp} below. 
	
	\begin{minipage}{0.7\textwidth}
	\begin{gather*} 
		\hspace{-.5cm}		G = \left( \begin{matrix}
			N = \{n_1,n_2,n_3,n_4\}, \ E = \{e_1,e_2\} \\
			{\small v \colon e_1 \mapsto ([n_1,n_2],[n_3]), \ e_2 \mapsto ([n_3], [n_4])}\\
			l_n \colon n_1 \mapsto A, \ n_2 \mapsto B,\  n_3 \mapsto C , \ n_4 \mapsto B \\
			l_e \colon e_1 \mapsto f , \ e_2 \mapsto g
		\end{matrix}\right) 
		\qquad 
	\hspace{-.5cm}	\begin{matrix}
			f_L \colon m_1 \mapsto n_1, \ m_2 \mapsto n_2 \\ \\ f_R \colon p \mapsto n_4
		\end{matrix}
	\end{gather*}
		\end{minipage} \\
	\begin{minipage}{0.4\textwidth}
	\begin{gather*}
		L = \left( \begin{matrix}
			N = \{m_1, m_2\},  E = \emptyset \\
			l_n \colon m_1 \mapsto A, \ m_2 \mapsto B 
		\end{matrix}\right) 
      \qquad 
		R = \left( \begin{matrix}
			N = \{p\}, \ E = \emptyset \\
			l_n \colon p \mapsto B  
		\end{matrix}\right) 
	\end{gather*}
	\end{minipage}
	\\[2em]
	\begin{equation}\label{ex:hyp}
\scalebox{0.9}{\tikzfig{exhyp}}
	\end{equation}
	
	 Hyperedges are represented as boxes with round corners, labeled with the generating morphisms in $\Sigma$. Nodes are represented as black dots, labeled with the generating objects in $\Sigma$. The assignment function associating hyperedges to their ingoing and outgoing nodes is displayed with black lines. The left and right interface functions are displayed with red lines.  To emphasise the different role played by the three hypergraphs involved, we use a grey background for the carrier and a blue background for the left and right interfaces.  
\end{example}


As mentioned, the presence of interfaces is essential to be able to \emph{compose} open hypergraphs sequentially, mimicking the way string diagrams are composed. Given open hypergraphs $H_1 = (L_1, G_1, R_1, f_{L_1}, f_{R_1})$ and $H_2 = L_2, G_2, R_2, f_{L_2}, f_{R_2})$ such that $R_1 = L_2$, we can form a composite open hypergraph $H_1  \semic  H_2 = (L_1, G, R_2, f'_{L_1}, f_{R_2})$, where $G$ has been obtained by `gluing' together $G_1$ and $G_2$ along the common interface $R_1 = L_2$. For instance:
\begin{gather*}
	H_1 = \scalebox{0.9}{\tikzfig{glueL}} \ \qquad H_2 = \scalebox{0.9}{\tikzfig{glueR}} \\
	H_1  \semic  H_2 \ = \ \scalebox{0.9}{\tikzfig{exhyp}}
\end{gather*}
This operation can be defined in full generality using category theory. $\Sigma$-labelled hypergraphs form a category, in which open hypergraphs $H_1$ and $H_2$ as above can be identified with \emph{cospans} $L_1 \xrightarrow{f_{L_1}} G_1 \xleftarrow{f_{R_1}} R_1$ and $L_2 \xrightarrow{f_{L_2}} G_2 \xleftarrow{f_{R_2}} R_2$ respectively. The composite open hypergraph $H_1 \semic H_2$ is then defined by the cospan $L_1 \xrightarrow{f_{L_1} \semic p_1} G \xleftarrow{f_{R_2} \semic p_2} R_2$ obtained via pushout along the shared interface $R_1 =L_2$:

\newcommand{\pushoutcorner}[1][u]{\save*!/#1-1.2pc/#1:(-1,1)@^{|-}\restore}

\begin{eqnarray*}
	\xymatrix{
	L_1 \ar^{f_{L_1}}[dr] & & \ar_{f_{R_1}}[dl]  R_1 \ar^{f_{L_2}}[dr] & & \ar_{f_{R_2}}[dl] R_2\\ 
	& H_1 \ar^{p_1}[dr]  & & \ar_{p_2}[dl] H_2 & \\
	&& \pushoutcorner G && 
	}
\end{eqnarray*}
Moreover, we can easily define a monoidal product operation between open hypergraphs of arbitrary interfaces, which pictorially amounts to stacking them vertically. For instance, with $H_1$ and $H_2$ as above, we have:

\begin{eqnarray*}
\qquad \qquad H_1  \otimes  H_2 & = & \vcenter{\scalebox{0.9}{\tikzfig{tnshyp}}}
\end{eqnarray*}

\medskip

The reason to introduce open hypergraphs is that there is a well-behaved, natural interpretation of string diagrams into them. In a nutshell, each box (morphism) of a string diagram corresponds to a hyperedge, and each wire (object) corresponds to a node. Nodes on the interfaces correspond to the domain (left interface) and the codomain (right interface) of the string diagram. Sequential and parallel composition of string diagrams is interpreted as composition of open hypergraphs by $\semic$ and $\otimes$ respectively, as defined above.

\newcommand{\gint}[1]{[\![#1]\!]}

\begin{definition}[Hypergraph interpretation] \label{def:hypint}
	Let $\Sigma$ be a signature. We define inductively an interpretation $\gint{ \cdot }$ mapping symmetric monoidal $\Sigma$-terms into $\Sigma$-labelled open hypergraphs as follows:
	\begin{gather*}
		{} \ \mapsto \ \scalebox{0.9}{\tikzfig{idhyp}} \qquad\qquad \raisebox{-.5cm}{\hbox{
\begingroup%
  \makeatletter%
  \providecommand\color[2][]{%
    \errmessage{(Inkscape) Color is used for the text in Inkscape, but the package 'color.sty' is not loaded}%
    \renewcommand\color[2][]{}%
  }%
  \providecommand\transparent[1]{%
    \errmessage{(Inkscape) Transparency is used (non-zero) for the text in Inkscape, but the package 'transparent.sty' is not loaded}%
    \renewcommand\transparent[1]{}%
  }%
  \providecommand\rotatebox[2]{#2}%
  \newcommand*\fsize{\dimexpr\f@size pt\relax}%
  \newcommand*\lineheight[1]{\fontsize{\fsize}{#1\fsize}\selectfont}%
  \ifx\svgwidth\undefined%
    \setlength{\unitlength}{46.0184658bp}%
    \ifx\svgscale\undefined%
      \relax%
    \else%
      \setlength{\unitlength}{\unitlength * \real{\svgscale}}%
    \fi%
  \else%
    \setlength{\unitlength}{\svgwidth}%
  \fi%
  \global\let\svgwidth\undefined%
  \global\let\svgscale\undefined%
  \makeatother%
  \begin{picture}(1,0.83604446)%
    \lineheight{1}%
    \setlength\tabcolsep{0pt}%
    \put(0.66259006,0.0733413){\color[rgb]{0,0,0}\makebox(0,0)[lt]{\lineheight{1.25}\smash{\begin{tabular}[t]{l}$A$\end{tabular}}}}%
    \put(0.02418045,0.70226627){\color[rgb]{0,0,0}\makebox(0,0)[lt]{\lineheight{1.25}\smash{\begin{tabular}[t]{l}$A$\end{tabular}}}}%
    \put(0,0){\includegraphics[width=\unitlength,page=1]{symmetry.pdf}}%
    \put(0.02212769,0.07336198){\color[rgb]{0,0,0}\makebox(0,0)[lt]{\lineheight{1.25}\smash{\begin{tabular}[t]{l}$B$\end{tabular}}}}%
    \put(0.67183597,0.70226627){\color[rgb]{0,0,0}\makebox(0,0)[lt]{\lineheight{1.25}\smash{\begin{tabular}[t]{l}$B$\end{tabular}}}}%
    \put(0,0){\includegraphics[width=\unitlength,page=2]{symmetry.pdf}}%
  \end{picture}%
\endgroup%
}}
 \ \mapsto \ \scalebox{0.9}{\tikzfig{symhyp}} \\[2em]
		\raisebox{-.7cm}{\hbox{
\begingroup%
  \makeatletter%
  \providecommand\color[2][]{%
    \errmessage{(Inkscape) Color is used for the text in Inkscape, but the package 'color.sty' is not loaded}%
    \renewcommand\color[2][]{}%
  }%
  \providecommand\transparent[1]{%
    \errmessage{(Inkscape) Transparency is used (non-zero) for the text in Inkscape, but the package 'transparent.sty' is not loaded}%
    \renewcommand\transparent[1]{}%
  }%
  \providecommand\rotatebox[2]{#2}%
  \newcommand*\fsize{\dimexpr\f@size pt\relax}%
  \newcommand*\lineheight[1]{\fontsize{\fsize}{#1\fsize}\selectfont}%
  \ifx\svgwidth\undefined%
    \setlength{\unitlength}{73.70321424bp}%
    \ifx\svgscale\undefined%
      \relax%
    \else%
      \setlength{\unitlength}{\unitlength * \real{\svgscale}}%
    \fi%
  \else%
    \setlength{\unitlength}{\svgwidth}%
  \fi%
  \global\let\svgwidth\undefined%
  \global\let\svgscale\undefined%
  \makeatother%
  \begin{picture}(1,0.59361787)%
    \lineheight{1}%
    \setlength\tabcolsep{0pt}%
    \put(0.48435362,0.14403292){\color[rgb]{0,0,0}\makebox(0,0)[lt]{\lineheight{1.25}\smash{\begin{tabular}[t]{l}$B_m$\end{tabular}}}}%
    \put(0.48298389,0.51009007){\color[rgb]{0,0,0}\makebox(0,0)[lt]{\lineheight{1.25}\smash{\begin{tabular}[t]{l}$B_1$\end{tabular}}}}%
    \put(-0.00385463,0.14493345){\color[rgb]{0,0,0}\makebox(0,0)[lt]{\lineheight{1.25}\smash{\begin{tabular}[t]{l}$A_n$\end{tabular}}}}%
    \put(0,0){\includegraphics[width=\unitlength,page=1]{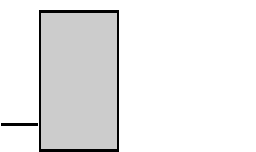}}%
    \put(0.28217721,0.25020682){\color[rgb]{0,0,0}\makebox(0,0)[lt]{\lineheight{1.25}\smash{\begin{tabular}[t]{l}$f$\end{tabular}}}}%
    \put(0,0){\includegraphics[width=\unitlength,page=2]{opf.pdf}}%
    \put(-0.00821498,0.50527549){\color[rgb]{0,0,0}\makebox(0,0)[lt]{\lineheight{1.25}\smash{\begin{tabular}[t]{l}$A_1$\end{tabular}}}}%
    \put(0,0){\includegraphics[width=\unitlength,page=3]{opf.pdf}}%
    \put(0.05402122,0.28514185){\color[rgb]{0,0,0}\makebox(0,0)[lt]{\lineheight{1.25}\smash{\begin{tabular}[t]{l}$\vdots$\end{tabular}}}}%
    \put(0.52316013,0.2886759){\color[rgb]{0,0,0}\makebox(0,0)[lt]{\lineheight{1.25}\smash{\begin{tabular}[t]{l}$\vdots$\end{tabular}}}}%
  \end{picture}%
\endgroup%
}} \!\!\!\!\!\!\!\!\!\!\!\! \mapsto \ \scalebox{0.9}{\tikzfig{ophyp}} \qquad \text{ for each $f$ in $\Sigma_1$} \\[2em]
		\raisebox{-.2cm}{\hbox{
\begingroup%
  \makeatletter%
  \providecommand\color[2][]{%
    \errmessage{(Inkscape) Color is used for the text in Inkscape, but the package 'color.sty' is not loaded}%
    \renewcommand\color[2][]{}%
  }%
  \providecommand\transparent[1]{%
    \errmessage{(Inkscape) Transparency is used (non-zero) for the text in Inkscape, but the package 'transparent.sty' is not loaded}%
    \renewcommand\transparent[1]{}%
  }%
  \providecommand\rotatebox[2]{#2}%
  \newcommand*\fsize{\dimexpr\f@size pt\relax}%
  \newcommand*\lineheight[1]{\fontsize{\fsize}{#1\fsize}\selectfont}%
  \ifx\svgwidth\undefined%
    \setlength{\unitlength}{87.60628346bp}%
    \ifx\svgscale\undefined%
      \relax%
    \else%
      \setlength{\unitlength}{\unitlength * \real{\svgscale}}%
    \fi%
  \else%
    \setlength{\unitlength}{\svgwidth}%
  \fi%
  \global\let\svgwidth\undefined%
  \global\let\svgscale\undefined%
  \makeatother%
  \begin{picture}(1,0.18303784)%
    \lineheight{1}%
    \setlength\tabcolsep{0pt}%
    \put(0,0){\includegraphics[width=\unitlength,page=1]{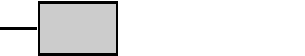}}%
    \put(0.22618066,0.07106314){\color[rgb]{0,0,0}\makebox(0,0)[lt]{\lineheight{1.25}\smash{\begin{tabular}[t]{l}$f$\end{tabular}}}}%
    \put(0,0){\includegraphics[width=\unitlength,page=2]{seqcomp.pdf}}%
    \put(0.71773303,0.07430592){\color[rgb]{0,0,0}\makebox(0,0)[lt]{\lineheight{1.25}\smash{\begin{tabular}[t]{l}$g$\end{tabular}}}}%
    \put(0,0){\includegraphics[width=\unitlength,page=3]{seqcomp.pdf}}%
  \end{picture}%
\endgroup%
}} \mapsto \gint{f} \semic \gint{g} \qquad \raisebox{-.7cm}{\hbox{
\begingroup%
  \makeatletter%
  \providecommand\color[2][]{%
    \errmessage{(Inkscape) Color is used for the text in Inkscape, but the package 'color.sty' is not loaded}%
    \renewcommand\color[2][]{}%
  }%
  \providecommand\transparent[1]{%
    \errmessage{(Inkscape) Transparency is used (non-zero) for the text in Inkscape, but the package 'transparent.sty' is not loaded}%
    \renewcommand\transparent[1]{}%
  }%
  \providecommand\rotatebox[2]{#2}%
  \newcommand*\fsize{\dimexpr\f@size pt\relax}%
  \newcommand*\lineheight[1]{\fontsize{\fsize}{#1\fsize}\selectfont}%
  \ifx\svgwidth\undefined%
    \setlength{\unitlength}{44.5432252bp}%
    \ifx\svgscale\undefined%
      \relax%
    \else%
      \setlength{\unitlength}{\unitlength * \real{\svgscale}}%
    \fi%
  \else%
    \setlength{\unitlength}{\svgwidth}%
  \fi%
  \global\let\svgwidth\undefined%
  \global\let\svgscale\undefined%
  \makeatother%
  \begin{picture}(1,0.86385541)%
    \lineheight{1}%
    \setlength\tabcolsep{0pt}%
    \put(0,0){\includegraphics[width=\unitlength,page=1]{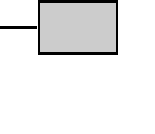}}%
    \put(0.44484559,0.65000471){\color[rgb]{0,0,0}\makebox(0,0)[lt]{\lineheight{1.25}\smash{\begin{tabular}[t]{l}$f$\end{tabular}}}}%
    \put(0,0){\includegraphics[width=\unitlength,page=2]{parcomp.pdf}}%
    \put(0.44484559,0.13976438){\color[rgb]{0,0,0}\makebox(0,0)[lt]{\lineheight{1.25}\smash{\begin{tabular}[t]{l}$g$\end{tabular}}}}%
    \put(0,0){\includegraphics[width=\unitlength,page=3]{parcomp.pdf}}%
  \end{picture}%
\endgroup%
}} \mapsto \gint{f} \otimes \gint{g}
	\end{gather*}
\end{definition} 
Note the interpretation is defined inductively on terms, but it is perfectly valid on string diagrams, because any two terms that are equal modulo the laws of symmetric monoidal categories are mapped to the same open hypergraph. In fact, even though we introduced this interpretation primarily for the purpose of implementing string diagram rewriting, it is of independent interest: in Section~\ref{sec:sdssmc}, we first introduced string diagrams as \emph{syntactic} object, but we now have a formal justification to reason about them \emph{combinatorially}, through the lenses of their hypergraph interpretation.

\begin{example}
	The following string diagram, on the same signature $\Sigma$ considered in Example~\ref{ex:hypergraph}, is mapped via $\gint{ \cdot }$ onto the open hypergraph of equation~\eqref{ex:hyp}.
	\[
\begingroup%
  \makeatletter%
  \providecommand\color[2][]{%
    \errmessage{(Inkscape) Color is used for the text in Inkscape, but the package 'color.sty' is not loaded}%
    \renewcommand\color[2][]{}%
  }%
  \providecommand\transparent[1]{%
    \errmessage{(Inkscape) Transparency is used (non-zero) for the text in Inkscape, but the package 'transparent.sty' is not loaded}%
    \renewcommand\transparent[1]{}%
  }%
  \providecommand\rotatebox[2]{#2}%
  \newcommand*\fsize{\dimexpr\f@size pt\relax}%
  \newcommand*\lineheight[1]{\fontsize{\fsize}{#1\fsize}\selectfont}%
  \ifx\svgwidth\undefined%
    \setlength{\unitlength}{83.33082487bp}%
    \ifx\svgscale\undefined%
      \relax%
    \else%
      \setlength{\unitlength}{\unitlength * \real{\svgscale}}%
    \fi%
  \else%
    \setlength{\unitlength}{\svgwidth}%
  \fi%
  \global\let\svgwidth\undefined%
  \global\let\svgscale\undefined%
  \makeatother%
  \begin{picture}(1,0.28510587)%
    \lineheight{1}%
    \setlength\tabcolsep{0pt}%
    \put(0,0){\includegraphics[width=\unitlength,page=1]{exhyp.pdf}}%
    \put(0.20472892,0.09104651){\color[rgb]{0,0,0}\makebox(0,0)[lt]{\lineheight{1.25}\smash{\begin{tabular}[t]{l}$f$\end{tabular}}}}%
    \put(0.04240441,0.21122844){\color[rgb]{0,0,0}\makebox(0,0)[lt]{\lineheight{1.25}\smash{\begin{tabular}[t]{l}$A$\end{tabular}}}}%
    \put(0,0){\includegraphics[width=\unitlength,page=2]{exhyp.pdf}}%
    \put(0.58335756,0.10401457){\color[rgb]{0,0,0}\makebox(0,0)[lt]{\lineheight{1.25}\smash{\begin{tabular}[t]{l}$g$\end{tabular}}}}%
    \put(0,0){\includegraphics[width=\unitlength,page=3]{exhyp.pdf}}%
    \put(0.03663252,0.07652855){\color[rgb]{0,0,0}\makebox(0,0)[lt]{\lineheight{1.25}\smash{\begin{tabular}[t]{l}$B$\end{tabular}}}}%
    \put(0.3934305,0.14281833){\color[rgb]{0,0,0}\makebox(0,0)[lt]{\lineheight{1.25}\smash{\begin{tabular}[t]{l}$C$\end{tabular}}}}%
    \put(0.79612277,0.1479183){\color[rgb]{0,0,0}\makebox(0,0)[lt]{\lineheight{1.25}\smash{\begin{tabular}[t]{l}$B$\end{tabular}}}}%
  \end{picture}%
\endgroup%

\]
\end{example}

As it turns out, the interpretation $\gint{ \cdot }$ is injective, but not surjective. There are open hypergraphs that are not in the image of any string diagrams. For instance:
\begin{equation}\label{eq:nonmonogamous}
\begin{gathered}
	\scalebox{0.9}{\tikzfig{nonmog1}} \qquad \scalebox{0.9}{\tikzfig{nonmog2}} \\[1em]
	\scalebox{0.9}{\tikzfig{nonmog3}}
\end{gathered}
\end{equation}
In these open hypergraphs, nodes deviate from their standard behaviour in the string diagram interpretation. In the first example, one node appears twice on the left interface. In the second example, one node does not appear on the right interface, even though it has no outgoing link to any hyperedge. In the third example, the same node is linked twice to the same hyperedge.

Inspired by these examples, we may give a general characterisation of precisely which open hypergraphs are in the image of the interpretation.
\begin{definition}[Monogamous Open Hypergraph]
An open hypergraph \\ $(G, L, R, f_L, f_R)$ is monogamous if
	\begin{enumerate}
		\item $f_L$ and $f_R$ are monomorphisms of hypergraphs
		\item for all nodes $n \in N$ of $G = (N,E,v,l_n,l_e)$:
	   			\begin{itemize}
	   				\item If $n$ is in the image of $f_L$, $n$ is not an outgoing node of any hyperedge, i.e. $n \not\in \pi_2(v(e))$ for all $e \in E$
	   				\item If $n$ is in the image of $f_R$, $n$ is not an ingoing node of any hyperedge, i.e. $n \not\in \pi_1(v(e))$ for all $e \in E$
	   				\item Otherwise, $n$ is not in the above images. In this case, $n$ must be an outgoing node of exactly one hyperedge $e$ and an ingoing node of exactly one hyperedge $e'$.
	   			\end{itemize}
	\end{enumerate} 
\end{definition}

Intuitively, monogamicity ensures that nodes on the left/right interface cannot be linked on the same side to a hyperedge, and nodes not on the interface are linked exactly to two hyperedges, one on their left and one on their right. For instance, the first open hypergraph in~\eqref{eq:nonmonogamous} is not monogamous because its left interface morphism $f_L$ is not a monomorphism. The second is not monogamous because a node is not on any interface but it is linked to no hyperedge on its right. Finally, the third is not monogamous because a node is not on any interface but it is linked more than once to hyperedges on its left (the fact that it is linked to the same hyperedge twice or to different ones is immaterial to the criterion).

\begin{theorem}
	An open hypergraph is monogamous if and only if it is in the image of the interpretation $\gint{ \cdot }$.
\end{theorem}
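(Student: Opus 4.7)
The plan is to prove the two implications separately. For the forward direction (image implies monogamous), the natural approach is induction on the structure of the generating symmetric monoidal $\Sigma$-term. For the base cases --- identities, symmetries, and individual generators from $\Sigma_1$ --- one verifies directly from Definition~\ref{def:hypint} that the output hypergraph is monogamous: the interface morphisms are monomorphisms by construction, and every node sits on exactly one interface or else is the unique incoming/outgoing node of exactly one hyperedge. The inductive steps reduce to checking that $\semic$ and $\otimes$ preserve monogamicity. For $\otimes$, the underlying operation is essentially a disjoint union of carriers and interfaces, so the property transfers immediately. For $\semic$, the critical observation is that the pushout gluing along the shared interface $R_1 = L_2$ identifies each pair of matched interface nodes into a single internal node, which inherits exactly one outgoing hyperedge from $\gint{H_1}$ and exactly one incoming hyperedge from $\gint{H_2}$, matching the third clause of the monogamicity definition; nodes remaining on $L_1$ or $R_2$ retain their interface status.

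For the backward direction, I would construct an explicit pre-image of a monogamous open hypergraph $H$, essentially realising a foliation (Theorem~\ref{thm:msf}) at the combinatorial level. The construction proceeds by a topological sort of the hyperedges of the carrier $G$ --- which requires acyclicity, a condition that should be read into the monogamicity definition (or added explicitly, as is standard in the literature that the paper cites). After sorting, one reads the hypergraph as a sequence of layers, each layer being the tensor of a single hyperedge $e_i$ with identity wires for all currently live nodes passing through. Between layers one inserts a tensor of symmetries to move the ingoing nodes of the next hyperedge into the top positions. Nodes on $L$ are placed at the very start as the initial list of live wires, and when the sweep completes, the remaining live wires are exactly the nodes of $R$, by monogamicity.

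The main obstacle is making the second direction fully rigorous: one must show that the term produced has interpretation \emph{equal to} $H$, not merely isomorphic to some rearrangement. A clean way to proceed is to define, for each cut between consecutive layers, two linear orders on the list of currently live nodes --- the order inherited from the previous layer, and the order required as input to the next --- and insert the unique permutation relating them, built from the binary symmetries as in Example~\ref{ex:butterfly}. Monogamicity guarantees that each internal node participates in exactly one ``production'' and one ``consumption'' event during the sweep, so this bookkeeping is well-defined. To verify the equality $\gint{t} = H$, one argues by induction on the number of layers: after processing the first $i$ hyperedges, the resulting partial interpretation coincides, by the pushout definition of $\semic$, with the sub-hypergraph of $H$ on those edges. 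Finally, because $\gint{\cdot}$ identifies terms related by the equations of symmetric strict monoidal categories, the arbitrary choices of topological order and symmetry decomposition are immaterial at the level of the output hypergraph, so any such sweep witnesses surjectivity onto the monogamous image.
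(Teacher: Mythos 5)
Your proposal is correct and takes essentially the same route as the paper: soundness by induction on $\Sigma$-terms (checking that $\semic$ and $\otimes$ preserve monogamicity), and completeness by factorising a monogamous open hypergraph into a sequential composite of layers, each a single hyperedge tensored with identities and interleaved with symmetries --- precisely the ``atomic'' factorisation the paper sketches and defers to the literature, which your topological-sort sweep spells out. Your observation that acyclicity must be read into (or added to) the definition of monogamicity is well taken: as literally stated the definition admits, for instance, a hyperedge whose ingoing and outgoing node coincide, which lies in no term's image, and the cited literature indeed works with monogamous \emph{acyclic} hypergraphs, so your construction applies the right level of care.
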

\begin{proof}
	One direction is straightforward: by induction on $\Sigma$-terms we can verify that, if an open hypergraph is in the image of $\gint{ \cdot }$, then it is monogamous. The converse direction requires more work: it amounts to show that any monogamous open hypergraph can be factorised as the composite of `atomic' open hypergraph, where each is either in the image of a structural $\Sigma$-term (identity or symmetry) or of a generating morphism in $\Sigma_1$. Working out the general case of this lemma, albeit not conceptually deep, is laborious: we refer to the literature in Section~\ref{sec:relatedwork-rewriting} for a full proof. Here we confine ourselves to an example:
	\[\scalebox{0.9}{\tikzfig{decompL}}\]
	decomposes as follows into elementary open hypergraphs in the image of $\gint{\cdot}$.
	\[		\scalebox{0.6}{\tikzfig{decompR}} \]
\end{proof}

Having introduced open hypergraphs and explained their correspondence with string diagrams, we are ready to turn to our initial question: rewriting. From the purpose of rewriting, the fundamental appeal of the interpretation $\gint{\cdot}$ is that, whereas a string diagram represents an equivalence class of terms, we can reason about the corresponding open hypergraph as a single entity. In other words, the  interpretation $\gint{\cdot}$ `absorbs' all the laws of symmetric monoidal categories, allowing us to forego any consideration about equivalence when inspecting the subparts of these objects in search of a redex. We can encapsulate this observation as the following lemma.

\begin{lemma}
    A string diagram $c$ has a redex for a rule $l \rew r$ in the sense of Definition~\ref{def:rewritestepsyn} if and only if the carrier of the open hypergraph $\gint{c}$ contains the carrier of $\gint{l}$ as a convex sub-hypergraph.
\end{lemma}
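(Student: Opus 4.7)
The plan is to exploit the correspondence between string diagrams and monogamous open hypergraphs, and reduce the existence of a syntactic redex to a purely structural condition on $\gint{c}$. The convexity hypothesis is exactly what allows the matched copy of $\gint{l}$ to be ``pulled out'' as the middle factor of a three-part decomposition of $\gint{c}$. Recall informally that a sub-hypergraph is convex when every directed path between two of its nodes stays inside it.

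For the forward direction, suppose $c$ has a redex for $l \rew r$. By Definition~\ref{def:rewritestepsyn}, there exist $\Sigma$-terms $l_C$ and $r_C$ such that $c$ equals, modulo the laws of symmetric monoidal categories, the composite $l_C \semic (l \otimes \id_Q) \semic r_C$. First I would observe that $\gint{\cdot}$ is well defined on string diagrams, i.e.\ equal terms are sent to the same open hypergraph; this is immediate from the inductive clauses of Definition~\ref{def:hypint}, once one checks that the laws of SMCs collapse under the pushout defining $\semic$ (identities become trivial cospans, associativity of $\semic$ follows from associativity of pushout, naturality of $\sigma$ holds because swapping labels on shared interface nodes is a hypergraph isomorphism, and so on). Hence $\gint{c} = \gint{l_C} \semic (\gint{l} \otimes \id) \semic \gint{r_C}$, and the carrier of $\gint{l}$ appears as a sub-hypergraph of the carrier of $\gint{c}$ via the pushout inclusion. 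Convexity follows because any directed path in $\gint{c}$ entering this sub-hypergraph must cross the left interface of $\gint{l}$ (from $\gint{l_C}$) and can only exit through its right interface (into $\gint{r_C}$); monogamicity rules out paths that would re-enter $\gint{l}$ after leaving it.

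For the backward direction, assume the carrier of $\gint{l}$ sits as a convex sub-hypergraph inside the carrier of $\gint{c}$ via an inclusion $i$. The strategy is to construct open hypergraphs $H_L$ and $H_R$ and an object $Q$ so that $\gint{c}$ factors as $H_L \semic (\gint{l} \otimes \id_Q) \semic H_R$. Concretely, I would cut the carrier of $\gint{c}$ along the image of $i$: hyperedges of $\gint{c}$ whose outputs feed into the image form (together with the nodes they produce) the carrier of $H_L$, whose right interface is the image of the left interface of $\gint{l}$ together with a list $Q$ of the nodes of $\gint{c}$ that sit in parallel with the match; symmetrically for $H_R$. Convexity is exactly what guarantees that this cut is well defined: without it some hyperedge could have inputs both before and after the matched region, breaking the split. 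One then verifies that $H_L$ and $H_R$ are monogamous, which follows from monogamicity of $\gint{c}$ and $\gint{l}$ together with the interface bookkeeping. By the preceding theorem, $H_L = \gint{l_C}$ and $H_R = \gint{r_C}$ for some $\Sigma$-terms $l_C, r_C$, and this exhibits the required redex in $c$.

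The main obstacle will be the backward direction, specifically ensuring that the cut around $\gint{l}$ produces monogamous pieces and that the resulting composite recovers $\gint{c}$ up to the laws of SMCs. The delicate point is handling interface nodes of the match: each node in the image of $i$ must be duplicated so as to appear on the appropriate interface of $H_L$, $\gint{l}$, and $H_R$, and the outgoing/ingoing constraints of monogamicity must be redistributed accordingly. Convexity is used precisely to certify that no hyperedge in $\gint{c}$ straddles the match in a way that would force a node to acquire two producers or two consumers after the cut, which would destroy monogamicity and block the appeal to the previous theorem.
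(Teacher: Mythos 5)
The paper itself does not prove this lemma --- it states it, defines convexity, gives an example, and defers all details to the cited literature --- so there is no in-paper argument to compare against; your sketch follows the standard route from that literature, and its skeleton is sound: compositionality of $\gint{\cdot}$ plus the layered structure of $\gint{l_C}\semic(\gint{l}\otimes\id_Q)\semic\gint{r_C}$ and monogamicity for the ``only if'' direction, and, for the ``if'' direction, cutting $\gint{c}$ along the convex image of $\gint{l}$ and then invoking the characterisation of monogamous hypergraphs as exactly those definable by $\Sigma$-terms. You also correctly identify the role of convexity: it rules out a hyperedge that would have to sit simultaneously before and after the match.

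Two points need tightening before this is a proof. First, the cut as you literally describe it --- ``hyperedges whose outputs feed into the image, together with the nodes they produce'' --- does not recompose to $\gint{c}$: it omits hyperedges further upstream (unless ``feed into'' is read transitively), and, more importantly, it gives no home to hyperedges that have no path to or from the match at all (e.g.\ a box running from the left to the right interface of $c$ in parallel with the redex); such hyperedges are not mere identity wires contributing to $Q$, and they must be assigned to one of the two contexts. The correct construction partitions the \emph{entire} complement of the match, e.g.\ $H_R$ consists of all hyperedges reachable from the match's outputs and $H_L$ of everything else, with convexity (together with monogamicity) guaranteeing that this assignment is consistent and that the three pieces glue back to $\gint{c}$. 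Second, after applying the definability theorem to get $H_L=\gint{l_C}$ and $H_R=\gint{r_C}$, you only know $\gint{c}=\gint{l_C\semic(l\otimes\id_Q)\semic r_C}$; to conclude that $c$ itself equals $l_C\semic(l\otimes\id_Q)\semic r_C$ modulo the laws of symmetric monoidal categories --- which is what Definition~\ref{def:rewritestepsyn} demands --- you must additionally invoke the injectivity of $\gint{\cdot}$ on string diagrams, which the paper asserts but you never use. Both repairs are routine, but as written the backward direction has these gaps.
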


Before turning attention to the proof of this lemma, there is one notion that needs clarification: convexity. 

\begin{definition}[Convex sub-hypergraph]
A \emph{path} $p$ from a hyperedge $e$ to a hyperedge $e'$ in a hypergraph $G = (N,E,v,l_n,l_e)$ is a sequence $e_1, \dots, e_n$ of hyperedges such that $e_1 = e$, $e_n = e'$, and $\pi_2(v(e_i)) \cap \pi_1(v(e_{i+1})) \neq \emptyset$ for all $i$ with $1 \leq i \leq n$. In other words, each hyperedge $e_i$ in the path is linked to $e_{i+1}$ via at least one node, which is outgoing of $e_i$ and ingoing to $e_{i+1}$.

	A sub-hypergraph $H$ of $G$ is \emph{convex} if, for any nodes $n, n'$ in $H$ and any path $p$ from $n$ to $n'$, every hyperedge in $p$ is also in $H$.
\end{definition}

Intuitively, convexity ensures that a sub-hypergraph $H$ of $G$ has no `gaps'. 

\begin{example}
	Consider the following open hypergraph, which interprets the string diagram $t$ from~\eqref{eq:rewcontext-stringdiag}, our motivating example. 
	\begin{equation*}
		\gint{t} = \vcenter{\scalebox{0.9}{\tikzfig{graphwithredex}}}
	\end{equation*}
	Following the example, our goal was to find a redex for the rule~\eqref{eq:rewrule-stringdiag} in $t$. This amounts to identifying the following \emph{convex} subgraph (highlighted in green):
	\begin{equation*}
		\scalebox{0.9}{\tikzfig{redexGreen}}
	\end{equation*}
	Note that, unlike on string diagrams, no reasoning modulo equivalence is required to identify the redex. We can reason directly on the carrier of $\gint{t}$. Here is an example of a subgraph (in green) of such carrier that is not convex, because it includes nodes labeled with $A$ and $B$ but not the hyperedge in between:
	\begin{equation*}
				\scalebox{0.9}{\tikzfig{nonconvexsubgraph}}
	\end{equation*}
\end{example}

We have thus established that, in order to find a redex for a rule in string diagram rewriting, it suffices to inspect the corresponding open hypergraphs. How do we rewrite with them though? There is a long tradition in graph rewriting of using ``double-pushout'' (DPO) constructions in the category of graphs in order to formally define  rewriting of these structures. It turns out we can sensibly define a notion of DPO rewriting in the category of open hypergraphs, called `convex DPO rewriting with interfaces' (CDPOI), and make it correspond precisely to string diagram rewriting.

\begin{theorem}
	Let $l \rew r$ be a string diagram rewrite rule, and $t$ a string diagram on the same signature. Then $t$ rewrites into $t'$ with rule $l \rew r$ in the sense of Definition~\ref{def:rewritestepsyn} if and only if $\gint{t}$ rewrites into $\gint{t'}$ with rule $\gint{l} \rew \gint{r}$ via CDPOI rewriting.
\end{theorem}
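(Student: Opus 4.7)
The plan is to lift the preceding lemma—which characterises redexes in a string diagram $t$ as convex sub-hypergraph embeddings of $\gint{l}$ into $\gint{t}$—to a statement about the full rewriting step, in both directions. The key technical fact available to us is that $\gint{\cdot}$ is an injective interpretation whose image is exactly the monogamous open hypergraphs, and that the composition operations $\semic$ and $\otimes$ on open hypergraphs (defined via pushout along shared interfaces and disjoint union, respectively) correspond to the syntactic operations on string diagrams.

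For the forward direction, suppose $t \rew t'$ syntactically with rule $l \rew r$. By Definition~\ref{def:rewritestepsyn} we have string diagrams $l_C, r_C$ such that, modulo the equations of symmetric monoidal categories, $t$ equals the ``context'' diagram obtained by plugging $l$ between $l_C$ and $r_C$, and $t'$ equals the same context with $l$ replaced by $r$. Applying $\gint{\cdot}$ and using compositionality of the interpretation, I would exhibit $\gint{t}$ as a pushout gluing together $\gint{l_C}$, $\gint{l}$, and $\gint{r_C}$ along their shared interfaces; this is precisely the pushout complement data required for a DPO step. Convexity of the match of $\gint{l}$ inside $\gint{t}$ follows directly, since the only hyperedges that could create a path violating convexity would have to sit between $l_C$ and $r_C$, but the syntactic construction already isolates them on either side of the redex. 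Executing the second pushout with $\gint{r}$ in place of $\gint{l}$ then produces precisely $\gint{t'}$.

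For the converse direction, assume $\gint{t}$ CDPOI-rewrites to some $G'$ using $\gint{l} \rew \gint{r}$ with a convex match $m \colon \gint{l} \hookrightarrow \gint{t}$. Two things need to be checked. First, $G'$ is again monogamous, so that it lies in the image of $\gint{\cdot}$ and thus equals $\gint{t'}$ for a (unique, up to symmetric monoidal equations) string diagram $t'$; this follows because $\gint{l}$ and $\gint{r}$ are themselves monogamous with matching interfaces, and replacing one by the other through a DPO step with a monogamous match preserves the monogamy condition node by node. Second, the convex match has to be translated back into a syntactic redex in the sense of Definition~\ref{def:rewritestepsyn}; for this I would invoke (the proof of) the factorisation theorem for monogamous open hypergraphs, applied to the pushout complement, to produce open hypergraphs corresponding to $l_C$ and $r_C$. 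Convexity is exactly what guarantees that such a factorisation exists: it rules out the pathological situation in which a hyperedge sits ``between'' the inputs and the outputs of the match and thus cannot be placed either in the $l_C$-part or the $r_C$-part of the decomposition. Injectivity of $\gint{\cdot}$ then ensures that this hypergraph-level decomposition lifts to syntactic terms $l_C, r_C$ exhibiting $t \rew t'$.

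The main obstacle is the converse direction, and specifically the use of convexity to rebuild the syntactic context. The forward direction is essentially bookkeeping about how $\gint{\cdot}$ sends the syntactic context $l_C, r_C$ to the pushout data of a DPO step. The converse is where the geometry genuinely matters: without convexity, a match of $\gint{l}$ in $\gint{t}$ may be impossible to realise as a subterm even after rewriting modulo symmetric monoidal equations, because a hyperedge interleaved with the match has no consistent position in any sequential factorisation of $t$. Making the factorisation argument precise—reusing the inductive decomposition used to prove monogamicity characterises the image of $\gint{\cdot}$, but now relative to the interface of the match—is the real technical core of the proof.
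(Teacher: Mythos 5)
The paper itself does not prove this theorem: after stating it, it gives only a worked example and explicitly defers both the precise definition of CDPOI rewriting and the proof to the literature cited in Section~\ref{sec:relatedwork-rewriting} (the Bonchi--Gadducci--Kissinger--Soboci\'nski--Zanasi line of work). Your sketch reconstructs essentially the strategy used there: soundness by compositionality of $\gint{\cdot}$, turning the syntactic context $l_C, r_C$ into the pushout data of a DPO step, and completeness by using convexity of the match to factorise the pushout complement back into a sequential context, then invoking injectivity and the characterisation of the image of $\gint{\cdot}$. So in spirit your route is the intended one.

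That said, two steps are asserted more quickly than they can be discharged, and they are exactly the ``subtleties'' the paper flags. First, preservation of monogamy under the DPO step is not a node-by-node triviality: for a given match there can be several pushout complements, and a generic one need not be monogamous (this is precisely why plain DPO is sound only for Frobenius/hypergraph categories); CDPOI works because the convexity and interface conditions single out the \emph{boundary} complement, and that choice is what you must exhibit and verify. Second, rebuilding the syntactic context from the complement needs more than convexity: convexity rules out ``gaps'' inside the match, but to place every remaining hyperedge consistently before or after the redex in a sequential factorisation $l_C \semic (\mathrm{id} \otimes -) \semic r_C$ you also need acyclicity of the carrier (the image of $\gint{\cdot}$ consists of monogamous \emph{acyclic} hypergraphs), since a directed path leaving the match and re-entering it would make such a placement impossible even though no gap occurs. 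You should also track the interface morphisms along the bottom row of the DPO diagram explicitly, so that $t'$ is guaranteed to have the same outer interface as $t$; this is the ``with interfaces'' part of CDPOI and is needed for the rewritten object to be $\gint{t'}$ for a well-typed $t'$. With those points made precise, your outline matches the published proof.
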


\begin{example}
	We give a sense of the theorem via an example. The rewriting step~\eqref{eq:rewstepstringdiag} is interpreted to the following CDPOI rewriting step. 
		\begin{equation*}
				\scalebox{0.7}{\tikzfig{dpoexample}}
	\end{equation*}
	The top row features the left-hand side of the rewrite rule (left), its right-hand side (right), and their interface (center). In the middle row we have $\gint{t}$ (left) and the result of rewriting it via the rule (right). At the centre, the pushout complement: $\gint{t}$ with a `hole' replacing the redex subgraph. The redex itself is identified by the morphism $F$. The bottom row is occupied by the interface: its preservation during rewriting is ensured by commutativity of the diagram --- in which, moreover, the two squares are pushouts.
\end{example}

There are a few subtleties involved in the theorem, namely on how CDPOI ensures soundness of hypergraph rewriting with respect to string diagram rewriting - a property which is not guaranteed by standard DPO rewriting. We do not elaborate on the precise definition of CDPOI rewriting, nor on the proof of the theorem, as that would bring us too far from the focus of this tutorial. The interested reader is referred to Section~\ref{sec:relatedwork-rewriting} below for pointers to the relevant literature.

\subsection{Rewriting of hierarchical string diagrams}

As seen in Section~\ref{sec:hsdcmc}, higher-order computation may be modelled by moving from string diagrams, living in SMCs, to hierarchical string diagrams, living in SCMCs. How does rewriting theory extend to such setting? The key step is to identify a generalisation of hypergraphs suitable for modelling the layered structure of hierarchical string diagrams. This leads to the notion of (open) \emph{hierarchical hypergraph}. Labels for these structures come from closed monoidal signatures, see Definition~\ref{def:freeSCMC}.

\begin{definition}[Open Hierarchical Hypergraph]
	Fix a closed monoidal signature $\Sigma = (\Sigma_0,\Sigma_1)$ of objects $\Sigma_0$ and morphisms $\Sigma_1$.  A \emph{($\Sigma$-labelled) hierarchical hypergraph} is a $\Sigma$-labelled hypergraph $(N,E,v,l'_n,l'_e)$ together with a pair $(p_E, p_V)$ of functions where 
\begin{itemize}
	\item $l'_n \colon N \to obj_{\Sigma_0}$ extends the usual node labelling function with the possibility of assigning arbitrary objects in $obj_{\Sigma_0}$, instead of just the ones in $\Sigma_0$.
	\item $l'_e \colon E \to \Sigma_1 \cup \{\bot \}$ extends the usual hyperedge labelling function with the possibility of assigning no label to a hyperedge --- i.e. the case $l'_e(f) = \bot$.
	\item $v \colon E \to N^{\star} \times N^{\star}$ is a function assigning to each hyperedge $e$ a list $\pi_1(v(e))$ of ingoing nodes and a list $\pi_2(v(e))$ of outogoing nodes
	\item $p_N \colon N \to E + \{\bot\}$ and $p_E \colon E \to E + \{\bot\}$ assign to each node and hyperedge a `parent' hyperedge, or no parent ($\bot$)
	\item $p_N$ and $p_E$ satisfy the following constraints: (i) for each $e \in E$, each ingoing and outgoing node of $e$ must have the same parent as $e$ itself;  (ii) the parent relation must be acyclic, in the sense that $(p_{E,\bot})^k (e) = \bot$ for some $k \geq 1$, where $p_{E,\bot} \colon E+1 \to E+1$ is just the extension of $p_E$ adding $p_{E,\bot} (\bot) = \bot$.
\end{itemize}
A \emph{morphism} between hierarchical hypergraphs $G = (N,E,v,l_n,l_e,p_E,p_V)$ and $G'= (N',E',v',l'_n,l'_e,p'_E,p'_V)$ is a hypergraph morphism $(f_N \colon N \to N', f_E \colon E \to E')$ respecting the hierarchical structure of $G$, in the following sense: 
\begin{eqnarray*}
	(p_{N} \circ f_N)(n) = (f_E \circ p'_N) (n) && \text{ if } p_{N}(n) \neq \bot \\
	(p_{E} \circ f_E)(e) = (f_E \circ p'_E) (e) && \text{ if } p_{E}(e) \neq \bot 
\end{eqnarray*}
    A ($\Sigma$-labelled) \emph{open} hierarchical hypergraph is a tuple $(L, G, R, f_L, f_R)$ where 
    \begin{itemize}
    	\item $G$ is a $\Sigma$-labelled hierarchical hypergraph, called the \emph{carrier}
    	\item $L$ is a discrete $\Sigma$-labelled hierarchical hypergraph, called the \emph{left interface} of $G$
    	\item $R$ is a discrete $\Sigma$-labelled hierarchical hypergraph, called the \emph{right interface} of $G$
    	\item $f_L \colon L \to G$ and $f_R \colon R \to G$ are hierarchical hypergraph morphisms.
    \end{itemize} 
   A \emph{open hierarchical hypergraph morphism} $h \colon (L,G,R, f_L, f_R) \to (L',G',R', f'_L, f'_R)$ consists of hierarchical hypergraph morphisms $h_G \colon G \to G'$, $h_L \colon L \to L'$, $h_R \colon R \to R'$ commuting with the interface morphisms, i.e. such that $h_L  \semic  f'_L = f_L  \semic  h_G$ and $h_R  \semic  f'_R = f_R  \semic  h_G$. 
\end{definition}

In essence, a hierarchical hypergraph is a hypergraph with layers. In the above definition, layers are determined by the parent-child relation. The `outermost' layer $0$ is formed by those nodes and hyperedges that have no parent (i.e., their parent is $\bot$). Nodes and hyperedges in layer $n+1$ are those with parent hyperedge sitting in layer $n$.

Graphically, we may represent an open hierarchical hypergraph using `bubbles' to indicate layers, in a way that echoes the notation of hierarchical string diagrams. Here is an example.

\begin{equation}\label{eq:exhierarchicalhyp}
				\scalebox{0.9}{\tikzfig{exhiearchicalhyp}}
\end{equation}

 Note layer borders cannot `cross' links between nodes and hyperedges, because of constraint (i) in the definition. The open hierarchical hypergraph in~\eqref{eq:exhierarchicalhyp} has three hyperedges; two are labelled with $f$ and $g$ respectively, whereas the third is unlabelled (labelled with $\bot$). Note the unlabelled hyperedge has one ingoing node, labeled with $A$, and one outgoing node, labeled with $B \multimap B$. 
 As for layers, the unlabelled hyperedge and $f$-labeled hyperedge have no parent and thus sit in layer $0$ (they are outermost). The unlabelled hyperedge is parent of the hyperedge labelled $g$ (and its sources and targets), which thus sits in layer $1$. We signal the different layers by depicting the $g$-labelled hyperedge inside the unlabelled one: this notation echoes the bubble operation in hierarchical string diagrams. 
 
 As for the interfaces, in open hierarchical hypergraphs we distinguish an \emph{outer} interface, which consists of the nodes on the left/right interface that sit in layer $0$. However, contrarily to standard hypergraphs, they may also have \emph{inner} interfaces, given by the nodes on the interfaces that sit in layers $n > 0$.    In~\eqref{eq:exhierarchicalhyp} we use a vertical dotted line to separate the nodes of the layer $0$ interface from the nodes of the layer $1$ interface, and red/blue colouring for the interface maps to emphasise the different layers. In general, nodes of the inner interfaces are easily recognisable as they are the `dangling wires' of the hyperedges inside bubbles. In fact, for each unlabelled hyperedge $f$ we may distinguish an \emph{input interface}, given by nodes that have $f$ as parent and are in the left interface of the whole hypergraph, and an \emph{output interface}, given by nodes that have $f$ as parent and are in the right interface of the whole hypergraph. For instance, the input interface of the unlabelled hyperedge in~\eqref{eq:exhierarchicalhyp} is the list consisting of two nodes, labeled $A$ and $B$ respectively, and the output interface is the list consisting of a single node labelled $B$.

\medskip

The definition of hierarchical hypergraphs is agnostic on the set of objects labelling the nodes, as the only requirement is that they match the type of the morphisms labelling the hyperedges. However, in modelling higher-order computation, we want to restrict attention to those hierarchical hypergraphs where layers enforce scoping discipline. Roughly speaking, this means that, for instance, if a `bubble' receives inputs of type $A$, and the hyperedges inside it have ingoing nodes labeled with $A$, $B$, and outgoing nodes labeled with $C$, $E$, then the output of the bubble should be of type $B \multimap (C \otimes E)$. This scoping discipline leads to the notion of \emph{hypernet}.

\begin{definition}
	A hypernet is a open hierarchical hypergraph $(L, G, R, f_L, f_R)$, with $G = (N,E,v,l_n, l'_e)$ such that
	\begin{itemize}
		\item it is a monogamous open hypergraph when forgetting about the hierarchical structure.
		\item if $l'_e(f) \not\eq \bot$, then $f$ has no children (it is not parent of any node nor hyperedge).
		\item if $l_e(f) = \bot$, then the hyperedge $f$ is a well-typed abstraction. That is, there exists some $B \in \Sigma_0$ such that $f$ has $[A,B]$ as list of labels of nodes on the input interface, where $A$ is the label of the ingoing node of $f$\footnote{Note $f$ has exactly one ingoing and one outgoing node, since the hypergraph is assumed monogamous.}, and the outgoing node from $f$ is labeled with $B \multimap C$, where $C$ is the list of labels of nodes on the output interface of $f$.	\end{itemize}
\end{definition}

We are now in position to give a combinatorial interpretation of hierarchical string diagrams in terms of the structures we just introduced. { We take as starting point the SCMC $\category C$ freely generated by a closed monoidal signature $(\Sigma_0,\Sigma_1)$, as described in Definition~\ref{def:freeSCMC}. As discussed therein, objects of a SCMC have a more complex structure than objects in a SMC. Therefore, to properly interpret string diagrams in $\category C$, we label nodes with arbitrary objects of $\category C$, not just the generating ones. As far as morphisms are concerned, we extend the interpretation of Definition~\ref{def:hypint} with clauses following the free construction of $\category C$. That means, we include two extra clauses, for evaluation and for abstraction, as follows.

\begin{eqnarray*}
	eval_{X,A}  \colon ((X \multimap A) \otimes X) \to A & \mapsto & \raisebox{.5cm}{\hbox{\scalebox{0.8}{\tikzfig{evalint}}}}\\ \Lambda_X(h) \colon A \to (X \multimap Y) &\mapsto & \raisebox{.3cm}{\hbox{\scalebox{0.7}{\tikzfig{lambdaint}}}}
\end{eqnarray*}
Note this interpretation is only sound when considering $\category C$ as an SMC. Indeed, it equates any two string diagrams that are equivalent modulo the laws of SMCs, but does not respect the laws of SCMCs: the string diagrams of Figure~\ref{fig:adjclosed}, and those featuring in Definition~\ref{def:freeSCMC}, are mapped onto distinct hypernets. Even though the interpretation does not completely capture equivalence in SCMCs, this is intended as a feature rather a bug: analogously to reductions of $\lambda$-terms, the laws describing the behaviour of evaluation and abstraction have an operational meaning --- as discussed in Section~\ref{sec:asg}, they are akin to the $\eta$ and $\beta$ rules of the $\lambda$-calculus --- and thus their application should be tangible in diagrammatic reasoning. In other words, they should be treated as proper rewrite rules --- as opposed to the structural laws of string diagrammatic representation, which should be absorbed in the graph-theoretic interpretation. }

We conclude this section by sketching some consideration about hierarchical string diagram rewriting. With respect to CDPOI rewriting of hypergraphs (see Section~\ref{sec:rewSMC}), the situation is way more intricate, and not as well-behaved. A first subtlety is with the notion of matching: whereas in CDPOI rewriting we may match the left-hand side of a rewrite rule with a sub-hypergraph with the same interface, for hypernets we just need the \emph{outer} interface to be the same. Indeed, the inner interfaces only important to enforce the well-typedness of abstractions --- they are interfaces to the `bubbles' appearing in the hypernet, but do not play a role in how the hypernet interfaces with the context. 
\begin{figure*}
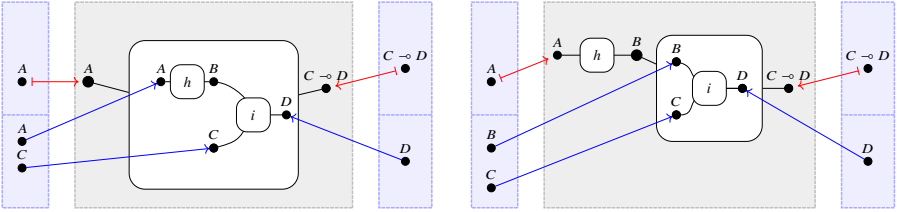

				\scalebox{0.7}{\tikzfig{hierarchicalrule}}
	\caption{Left and right-hand side of the `slide' rewrite rule, which intuitively allows to move an operation outside a bubble. Note the two hypernets have the same outer interface (labelled with $A$ on the left, $C \multimap D$ on the right), but different inner interfaces.}
\end{figure*}

A second, more problematic issue is that, unlike hypergraphs, pushouts of hypernets do not exist in general --- not even along monomorphisms. Intuitively, this is because hierarchical hypergraphs morphisms do not pose restrictions on which layer one may map the outer layer of a given graph. As a result, two embeddings of a hypergraphs into two parts of different hypergraphs may be unmergeable, because we may not be able to discern which should be the parents of outermost nodes and edges of the original hypergraph.  On the other hand, if we guaranteed the existence of pushouts by enforcing a stricter notion of morphism, for instance one sending outermost nodes and edges to outermost nodes and edges, then it would not be particularly useful, as rule matching typically does not satisfy such requirement --- intuitively, redexes should be allowed to appear inside bubbles, not just as sub-hypergraphs on layer $0$, as for example in Figure~\ref{fig:hypernetrew} below. 

Nonetheless, one may define a sensible variation of CDPOI rewriting which works for hypernets. The relevant pushouts exist under certain conditions, which rely on the specific shape of hypernet rewrite rules and properties of matching morphisms. There is a sound and complete correspondence between hypernet rewriting and hierarchical string diagram rewriting, but unlike for CDPOI hypergraph rewriting, it is not exact; instead, one rewrite step of hierarchical string diagrams may be simulated via multiple steps of hypernet rewriting, and vice versa. We refer to the discussion on related work below for pointers on this correspondence.
\begin{figure}
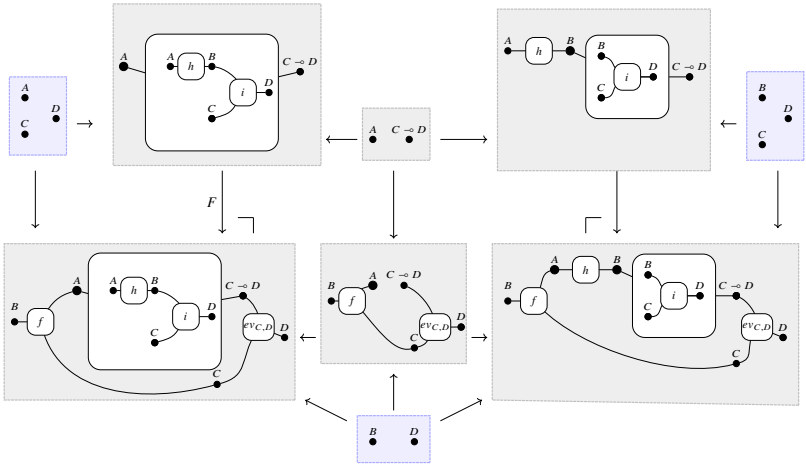

				\scalebox{0.55}{\tikzfig{hiearchdpoexample}}
	\caption{An example of application of the slide rewrite rule. Note matching is given by a morphism $F$ that embeds the left-hand side of the rule into layer $1$ of the given hypernet.} \label{fig:hypernetrew}
\end{figure}

\subsection{Further reading and related work}\label{sec:relatedwork-rewriting}

For rewriting of string diagrams in symmetric monoidal categories, the most comprehensive reference is the series of works~\cite{DBLP:journals/jacm/BonchiGKSZ22,BonchiGKSZ22b,BonchiGKSZ22c}, which details the correspondence with DPO rewriting of open hypergraphs. We refer to this source for all the details which we omitted in this survey chapter. 

We mentioned how the variant that is both sound and complete for string diagram rewriting is \emph{convex} DPO-rewriting. We did not mention that, conversely, the standard notion of DPO-rewriting also has a natural string diagrammatic counterpart: it is sound and complete for so-called hypergraph categories, i.e. SMCs with a chosen Frobenius algebra structure on each object. Rewriting for string diagrams in categories with structure intermediate between symmetric monoidal and Frobenius has also been studied:~\cite{DBLP:journals/corr/abs-2204-04274} gives a sound and complete DPO-rewriting interpretation for SMCs with a commutative monoid structure on each object, and~\cite{GhicaKayeTracedComonoid} does the same for traced comonoid structure. For the case of monoidal closed categories, and rewriting of hierarchical string diagrams, we refer to~\cite{DBLP:conf/fscd/Alvarez-Picallo22}. All these work crucially rest on the fact that pushouts in categories of (hyper)graphs are well-behaved, and adapted to DPO rewriting. These features have been studied on a case-by-case basis for a long time, see e.g.~\cite{Corradinietal-graphtransf}, and at a later stage encapsulated more abstractly in the notion of \emph{adhesive category}, in~\cite{DBLP:conf/fossacs/LackS04}. The various notions of hypergraphs we introduced all form adhesive categories. For hierarchical hypergraphs, proving this is not at all trivial; it has been worked out only recently, in~\cite{DBLP:conf/fossacs/CastelnovoGM22}.

Whereas the literature on rewriting in categories that are \emph{at least} symmetric monoidal is quite rich, far less is known on rewriting in categories without symmetries. The case of monoidal categories has only been investigated with a `native' approach, in which the laws of monoidal categories are considered as additional rewrite rules, rather than structural --- see in particular~\cite{DBLP:conf/ac/Lafont93,lafontpeaks}, and~\cite{DBLP:journals/corr/Mimram14} for an overview on this approach. To the best of our knowledge, a graph interpretation for this class of string diagrams has not been studied yet, and the same is true for braided monoidal categories, which are intermediate between monoidal and symmetric monoidal. 

Together with the theoretical developments, various tools have been proposed to perform diagrammatic reasoning, which exploit the correspondence with DPO rewriting in its implementation. \textsc{Quantomatic}~\cite{kissinger_quantomatic:_2015} is perhaps the earliest example, which assumes that string diagrams may be manipulated as if they were in a compact-closed category --- this assumption is stronger than symmetric monoidal, but (slightly) weaker than hypergraph categories. The more recent \textsc{Cartographer}~\cite{DBLP:conf/calco/SobocinskiWZ19} works instead at the level of generality of string diagrams in symmetric monoidal categories. Last we mention~\textsc{Globular}~\cite{globular}, and the associated project~\textsc{Homotopy.io}, which is intended as a visual aid for reasoning in higher categories; (symmetric) monoidal categories are a special case, which however is treated without absorbing any structural rules --- all the laws of the category are regarded as rewrite rules. The tool landscape for hypernets is relatively less mature, including only one tool of note, \textsc{SD Visualiser}, aimed at rendering and exploring large diagrams but with no rewriting support at the moment\footnote{\url{https://sdvisualiser.github.io/sd-visualiser/}}.

\newpage

\section{Operational Semantics of $\lambda$ Calculi}
\label{sec:oslc}

Repeated applications of the $\beta$ rule may transform a term into another term where further $\beta$ reductions are impossible. 
Such irreducible terms are sometimes called \emph{normal forms} or \emph{values}, depending of context. 
The rewriting theory of the $\lambda$ calculus is \emph{confluent}, in the sense that the ultimate normal form, if any, is not sensitive to the order in which reductions are applied. 
The study of confluence is a key concern of the theory of the $\lambda$ calculus, but we shall not pursue it here. 

In contrast, when the $\lambda$ calculus is used as a \emph{programming language} the equations are not applied arbitrarily wherever possible, but according to a schedule, or \emph{evaluation strategy}. 
Recall that a \emph{redex} is a sub-term that can be rewritten in a way that is consistent with the left-hand-side of an equation. 
The \emph{evaluation strategy} is the way in which a term is repeatedly scanned for a redex, then rewritten according to the equation that matches the redex.
We call this step-by-step transformation of a term a \emph{reduction sequence}, and the system of rules that governs it an \emph{operational semantics}. 
Operational semantics can be specified in several ways, but here we will give what is in some sense the simplest presentation, the so-called \emph{big-step} operational semantics. 

Perhaps the most common evaluation strategy is the \emph{call-by-value} (CBV) strategy in which the argument of a function is evaluated before the function is applied. 
To define it we first define the concept of \emph{value} as a syntactic form:
\[
\mathcal V ::= x \mid \lambda x.u.
\]
We use $w\in\mathcal V$ to range over values. 
Evaluation of a term $u$ will stop if and only if it results in a value $w$,
\newcommand{\vto}{\Downarrow_v}
\newcommand{\nto}{\Downarrow_n}
written as $u\vto w$.
\begin{definition}[CBV $\lambda$ calculus]
\[
\frac{w\in\mathcal V}{w\vto w}\qquad
\frac{u\vto \lambda x.u'\quad v\vto w \quad u'[x/w]\vto w'}{u\,v\vto w'}.
\]
\end{definition}
Contrast this with the \emph{call-by-name} $\lambda$ calculus in which the argument of a function is not evaluated at application:
\begin{definition}[CBN $\lambda$ calculus]
\[
\frac{w\in\mathcal V}{w\nto w}\qquad
\frac{u\nto \lambda x.u'\quad u'[x/v]\vto w}{u\,v\vto w}.
\]
\end{definition}
This small change has vast repercussion over the behaviour of the $\lambda$ calculus used as a programming language, both in terms of time and space efficiency and equational properties. 
These matters are beyond the scope of this tutorial. 
\begin{example}\label{ex:idid3}
\begin{prooftree}
\AxiomC{$\lambda x.x\in\mathcal V$}
\UnaryInfC{$\lambda x.x\vto \lambda x.x$}
\AxiomC{$\lambda y.y\in\mathcal V$}
\UnaryInfC{$\lambda y.y\vto \lambda y.y$}
\AxiomC{$\lambda y.y\in\mathcal V$}
\UnaryInfC{$\lambda y.y\vto\lambda y.y$}
\doubleLine
\UnaryInfC{$x[x/\lambda y.y] \vto \lambda y.y$}
\TrinaryInfC{$(\lambda x.x)(\lambda y.y) \vto \lambda y.y$}
\end{prooftree}
The double line indicates that the terms involved in that relation are syntactically equal (up to $\alpha$ renaming), namely $x[x/\lambda y.y]=\lambda y.y$.
\end{example}

The concept of \emph{confluence} mentioned earlier is reflected by the following property:
\begin{theorem}[Confluence]
If $u$ is a closed term of STLC and $u\vto w$, $u\nto w'$, then $w=w'$. 
\end{theorem}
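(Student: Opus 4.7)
The plan is to exploit the categorical semantics built up in Section~\ref{sec:hsdcmc}, and in particular the soundness result of Theorem~\ref{thm:snd}. To establish that $w$ and $w'$ agree, I would first show that both evaluation strategies are sound with respect to the equational theory $\lambda$: if $u \vto w$ then the judgement $\vdash u = w$ is derivable in $\lambda$, and likewise for $\nto$. Transitivity then gives $\vdash w = w'$ in $\lambda$, and Theorem~\ref{thm:snd} concludes that $\seval{w} = \seval{w'}$ in every Cartesian closed category.

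The soundness of CBV is a straightforward induction on the derivation of $u \vto w$. The base case for values is reflexivity. For the application case, assume $u \vto \lambda x. u'$, $v \vto w$, and $u'[x/w] \vto w'$. The inductive hypothesis supplies derivations of $u = \lambda x. u'$, $v = w$, and $u'[x/w] = w'$ in $\lambda$. Applying the congruence rules for application we obtain $u\, v = (\lambda x. u')\, w$; the $\beta$ axiom gives $(\lambda x. u')\, w = u'[x/w]$; and transitivity chains these with $u'[x/w] = w'$ to yield $u\, v = w'$. The CBN case is analogous, except that the congruence step uses $v$ (unevaluated) in place of $w$, after which $\beta$ rewrites $(\lambda x. u')\, v$ directly to $u'[x/v]$.

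The main obstacle is making sense of what ``$w = w'$'' means in the conclusion. Both $\vto$ and $\nto$ halt at the outermost lambda without reducing inside its body, so a value produced by one strategy may still contain redexes that the other strategy eliminates. For instance, $(\lambda y.\lambda x. y)\bigl((\lambda z. z)(\lambda a. a)\bigr)$ evaluates under CBV to $\lambda x.\lambda a. a$ but under CBN to $\lambda x.\bigl((\lambda z.z)(\lambda a.a)\bigr)$, which are syntactically distinct. They are, however, equal in the theory $\lambda$ and have the same denotation in any CCC, so the theorem is naturally read up to the equivalence induced by the semantics. With this reading, the semantic route above settles the claim; a purely syntactic alternative is to invoke Church--Rosser for $\beta$-reduction (soundness gives $u \to_\beta^\ast w$ and $u \to_\beta^\ast w'$, and confluence yields a common $\beta$-reduct), together with strong normalisation of the STLC if one wants to identify the unique normal form.
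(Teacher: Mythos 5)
The paper does not actually prove this theorem---it states explicitly that the proof is beyond the scope of the tutorial---so there is no official argument to compare yours against. Taken on its own terms, your proposal is essentially sound, and your diagnosis of the statement is the most valuable part: your counterexample $(\lambda y.\lambda x.y)\bigl((\lambda z.z)(\lambda a.a)\bigr)$ is closed and well-typed, and it does evaluate to syntactically distinct values under the two strategies ($\lambda x.\lambda a.a$ versus $\lambda x.(\lambda z.z)(\lambda a.a)$), so the conclusion $w=w'$ cannot be literal syntactic identity and must be read up to the theory $\lambda$ (equivalently, up to a common $\beta$-reduct). Your soundness inductions for $\Downarrow_v$ and $\Downarrow_n$ with respect to the theory $\lambda$ are routine and correct, and transitivity then gives $\vdash w = w'$ in $\lambda$, which is the honest content of the theorem under that reading.

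Two caveats. First, the final appeal to Theorem~\ref{thm:snd} weakens rather than strengthens your conclusion: that theorem is soundness only, so equality of denotations in a CCC is a priori weaker than provability in $\lambda$, and the right place to stop is $\vdash w = w'$. Second, in the purely syntactic alternative, the claim that big-step evaluation is sound for small-step reduction (that $u \Downarrow_v w$ implies $u \to_\beta^* w$, and similarly for $\Downarrow_n$) is itself a lemma needing an induction you only gesture at; and even granting Church--Rosser and strong normalisation you obtain only that $w$ and $w'$ share a $\beta$-normal form, not that they coincide---exactly as your counterexample shows, since both big-step relations stop at weak head normal forms. So under any precise reading the theorem holds only up to $\beta$-equality or common reduct, and your semantic route, stopped at provable equality in the theory $\lambda$, is a perfectly adequate proof of that statement.
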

The proof is beyond the scope of this tutorial.

However, for untyped $\lambda$ calculi the confluence theorem does not hold. 
\begin{example}\label{ex:omega}
Consider the term $\omega:=(\lambda x.xx)(\lambda y.yy)$ and its CBV evaluation:

\begin{prooftree}
\AxiomC{$\lambda x.xx\in\mathcal V$}
\UnaryInfC{$\lambda x.xx\vto \lambda x.xx$}
\AxiomC{$\lambda y.yy\in\mathcal V$}
\UnaryInfC{$\lambda y.yy\vto \lambda y.yy$}
\AxiomC{\vdots}
\UnaryInfC{$(\lambda x.xx)(\lambda y.yy)\vto?$}
\doubleLine
\UnaryInfC{$(\lambda y.yy)(\lambda y.yy)\vto?$}
\doubleLine
\UnaryInfC{$(xx)[x/\lambda y.yy]\vto ?$}
\TrinaryInfC{$(\lambda x.xx)(\lambda y.yy)\vto?$}
\end{prooftree}
Note that the evaluation runs into some kind of a circular reference in which evaluating $\omega$ requires evaluating $\omega$ itself,
so the derivation tree cannot be finite. 
Using the CBN strategy would run into the same problem. 
\end{example}
These kind of terms that cannot be evaluated are said to be \emph{divergent}.

\begin{exercise}
CBV and CBN have distinct divergence properties. 
Check that $(\lambda x.y)\omega\nto y$ whereas there is no $w\in \mathcal V$ such that $(\lambda x.y)\omega\vto w$. 
\end{exercise}

The big-step style of presenting the operational semantics obscures to some extent the computational processes involved in the evaluation. 
This is particularly salient in the case of the divergent term in which we can only exhibit a partially constructed derivation tree. 
This partial construction indicates implicitly that in our attempt to construct the tree we expand certain branches of the tree first, using the results to then expand other branches. 
These computational processes are made more explicit in other styles of operational semantics, namely \emph{small-step} or \emph{abstract machine} operational semantics. 
We will give an operational semantics using string diagrams which as we shall see makes the computational aspects completely explicit. 
This style of semantics would allow a meaningful interpretation of divergent but useful programs, for example servers that endlessly wait for and handle connections. 

\subsection{String diagrams and operational semantics for CBV}
\label{def:cbvsd}
The presentation of operational semantics will be given in the informal but rigorous style we prefer, noting that the formalisation as graph rewriting in the previous section remains valid. 

We start by adding a \emph{decoration}, or a pointer, to a selected wire in a string diagram, which explicates the process of searching for a redex. 
The left-pointing triangle indicates a part of the diagram which is about to be evaluated, whereas the right-pointing triangle points away from a part of the diagram that was just evaluated. 
So the string diagram operational semantics will give rules for moving the pointer around in the diagram, plus the application of rewrites corresponding to the scheduled application of the particular equations:
\begin{definition}[CBV operational semantics in string diagrams]
\label{def:cbvsem}
\[
\begingroup%
  \makeatletter%
  \providecommand\color[2][]{%
    \errmessage{(Inkscape) Color is used for the text in Inkscape, but the package 'color.sty' is not loaded}%
    \renewcommand\color[2][]{}%
  }%
  \providecommand\transparent[1]{%
    \errmessage{(Inkscape) Transparency is used (non-zero) for the text in Inkscape, but the package 'transparent.sty' is not loaded}%
    \renewcommand\transparent[1]{}%
  }%
  \providecommand\rotatebox[2]{#2}%
  \newcommand*\fsize{\dimexpr\f@size pt\relax}%
  \newcommand*\lineheight[1]{\fontsize{\fsize}{#1\fsize}\selectfont}%
  \ifx\svgwidth\undefined%
    \setlength{\unitlength}{187.50001663bp}%
    \ifx\svgscale\undefined%
      \relax%
    \else%
      \setlength{\unitlength}{\unitlength * \real{\svgscale}}%
    \fi%
  \else%
    \setlength{\unitlength}{\svgwidth}%
  \fi%
  \global\let\svgwidth\undefined%
  \global\let\svgscale\undefined%
  \makeatother%
  \begin{picture}(1,1.08558431)%
    \lineheight{1}%
    \setlength\tabcolsep{0pt}%
    \put(0,0){\includegraphics[width=\unitlength,page=1]{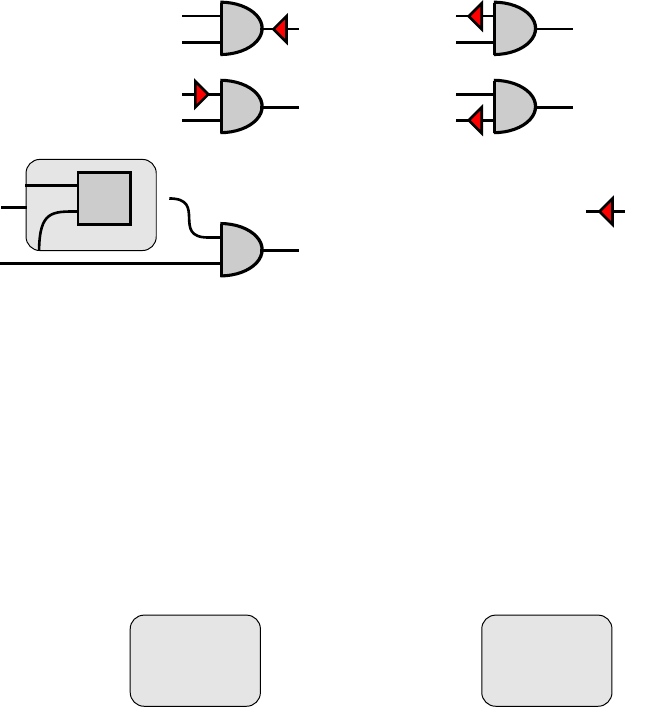}}%
    \put(0.53063679,0.69370658){\makebox(0,0)[lt]{\lineheight{1.25}\smash{\begin{tabular}[t]{l}$\stackrel\beta\rew$\end{tabular}}}}%
    \put(0,0){\includegraphics[width=\unitlength,page=2]{opsem.pdf}}%
    \put(0.53063679,0.51050055){\makebox(0,0)[lt]{\lineheight{1.25}\smash{\begin{tabular}[t]{l}$\stackrel{{C}{1}}\rew$\end{tabular}}}}%
    \put(0.53063679,0.27219287){\makebox(0,0)[lt]{\lineheight{1.25}\smash{\begin{tabular}[t]{l}$\stackrel{{C}{2}}\rew$\end{tabular}}}}%
    \put(0.53063679,0.07104792){\makebox(0,0)[lt]{\lineheight{1.25}\smash{\begin{tabular}[t]{l}$\stackrel{V}\rew$\end{tabular}}}}%
    \put(0.53063679,1.02906326){\makebox(0,0)[lt]{\lineheight{1.25}\smash{\begin{tabular}[t]{l}$\stackrel{{S}{1}}\rew$\end{tabular}}}}%
    \put(0.53063679,0.91036087){\makebox(0,0)[lt]{\lineheight{1.25}\smash{\begin{tabular}[t]{l}$\stackrel{{S}{2}}\rew$\end{tabular}}}}%
    \put(0,0){\includegraphics[width=\unitlength,page=3]{opsem.pdf}}%
  \end{picture}%
\endgroup%

\]
Initially the diagram is annotated with a left-pointing triangle on the rightmost wire. 
The evaluation succeeds if a right-pointing triangle is present on the rightmost-wire.
\end{definition}
The rules can be read informally as follows:
\begin{description}
\item[Structural rule 1 (S1)] To evaluate an application, first evaluate the function.
\item[Structural rule 2 (S2)] After evaluating a function, evaluate the argument. 
\item[Beta rule ($\beta$))] After evaluating the argument apply the $\beta$ and evaluate the result. 
Note the rule here uses a more economical version of $\beta$ in which the argument is not actually involved. 
It is rather similar to the cancellation of the evaluation and co-evaluation in the adjunction. 
Also note the change of direction in the pointer, to re-evaluate the result of this rule. 
\item[Copying 1 (C1)] When encountering a copy node copy the node it connects to, in effect using the naturality of the Cartesian product. 
\item[Copying 2 (C2)] Just like the above, but from the other side of the copying node. 
\item[Value (V)] An abstraction is a value. 
Note the change of direction in the pointer. 
\end{description}

\begin{remark}
Something that is remarkable but easy to overlook is that the transition rules in Definition~\ref{def:cbvsd} are \emph{small}, i.e. they only involve local rewrites of the string diagram. 
This is obviously the case for the $\beta$ rule in which substitution is not really invoked, but rather essentially consists only of a cancellation of the evaluation/co-evaluation pair of adjunctions. 
The actual substitution is effected by the evaluation pointer reaching copy nodes, which trigger the replication of diagrams but in a controlled, stepwise fashion. 
This is a clearer execution model in terms of understanding the time and space costs involved in the evaluation, compared to relying on the meta-syntactic use of substitution. 
Also local are the structural rules, which show the way the evaluation pointer interacts with the evaluation operation independent of its arguments. 
\end{remark}

\begin{example}
In Figure~\ref{fig:exopsem} we reprise Example~\ref{ex:idid3} in the diagrammatic formulation. 
The dummy transitions labelled with $=$ are just re-drawings of the diagram, to emphasise a redex or just to tidy up. 
\begin{figure}
\[
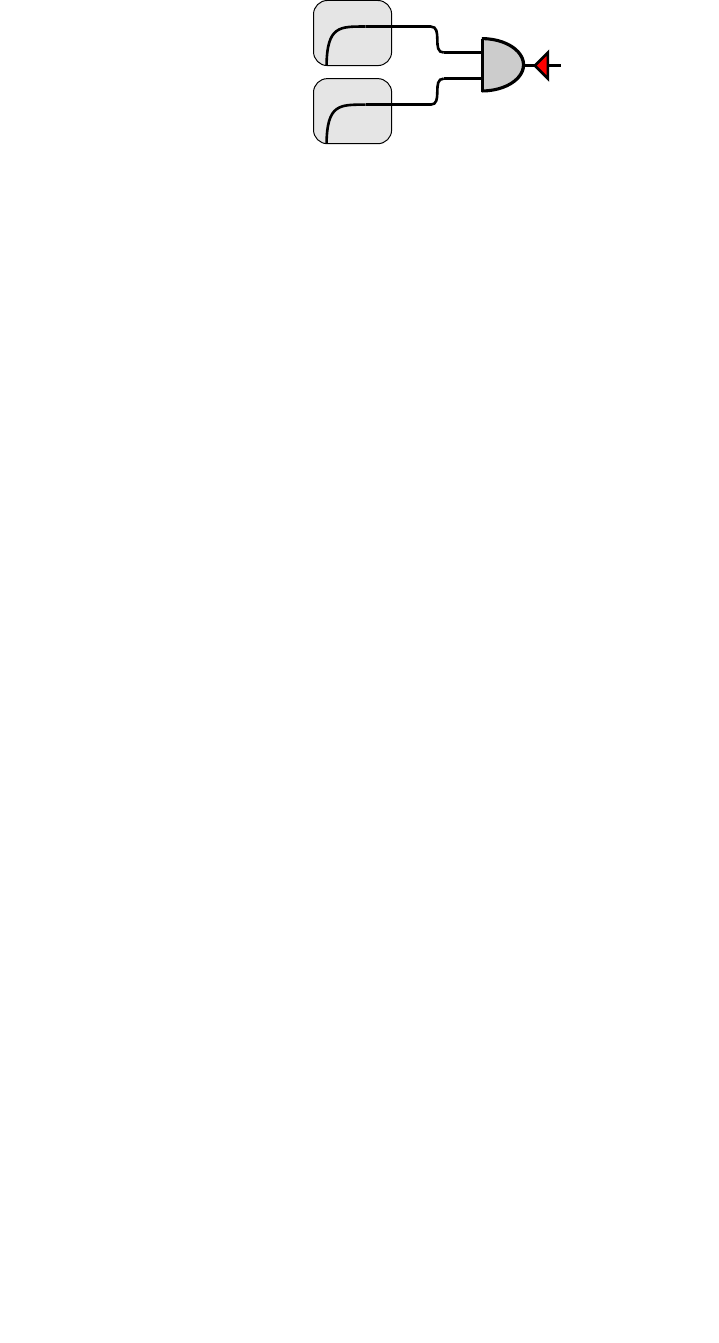
\]
\caption{$(\lambda x.x)(\lambda y.y)$}
\label{fig:exopsem}
\end{figure}
\end{example}
\begin{example}
If Figure~\ref{fig:omega2} we reprise Example~\ref{ex:omega} using string diagrams, in which we can clearly see now that the transition system has a cycle. 
\begin{figure}
\[
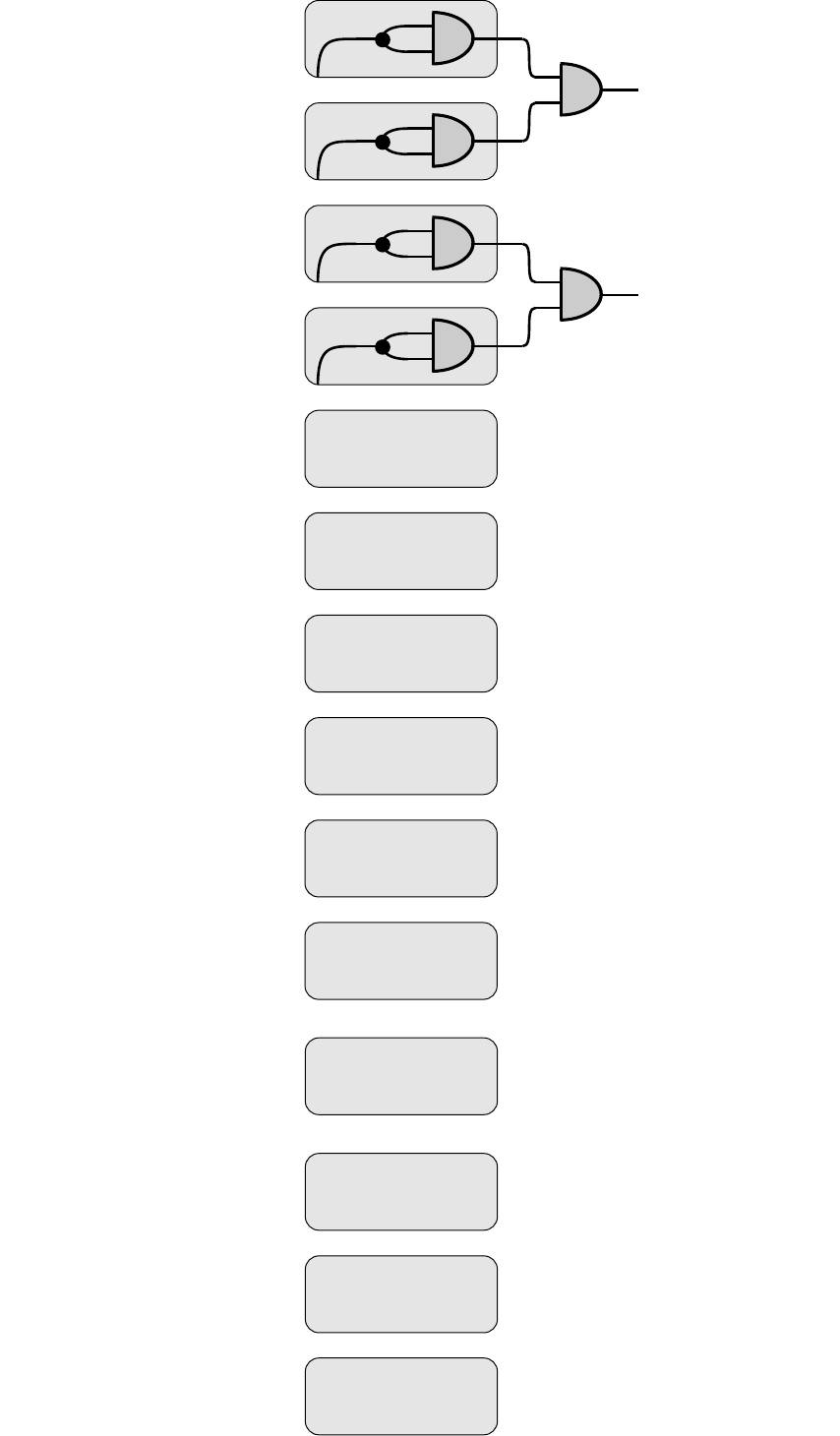
\]
\caption{Divergent evaluation}
\label{fig:omega2}
\end{figure}
\end{example}

\begin{remark}
The token used to enforce a certain evaluation strategy for the graph does not appear in the Section~\ref{sec:graphs}, but it can be handled within the graph-rewriting framework as a special hyperedge. 
However, this hyperedge should not be considered a `box' in the language of string diagrams, i.e. a morphism (or a family of morphisms) in the category, because it is not part of the language syntax. 
Certain diagrammatic equations involving the pointer should hold, for instance 
\[
\begingroup%
  \makeatletter%
  \providecommand\color[2][]{%
    \errmessage{(Inkscape) Color is used for the text in Inkscape, but the package 'color.sty' is not loaded}%
    \renewcommand\color[2][]{}%
  }%
  \providecommand\transparent[1]{%
    \errmessage{(Inkscape) Transparency is used (non-zero) for the text in Inkscape, but the package 'transparent.sty' is not loaded}%
    \renewcommand\transparent[1]{}%
  }%
  \providecommand\rotatebox[2]{#2}%
  \newcommand*\fsize{\dimexpr\f@size pt\relax}%
  \newcommand*\lineheight[1]{\fontsize{\fsize}{#1\fsize}\selectfont}%
  \ifx\svgwidth\undefined%
    \setlength{\unitlength}{72.32670457bp}%
    \ifx\svgscale\undefined%
      \relax%
    \else%
      \setlength{\unitlength}{\unitlength * \real{\svgscale}}%
    \fi%
  \else%
    \setlength{\unitlength}{\svgwidth}%
  \fi%
  \global\let\svgwidth\undefined%
  \global\let\svgscale\undefined%
  \makeatother%
  \begin{picture}(1,0.43743995)%
    \lineheight{1}%
    \setlength\tabcolsep{0pt}%
    \put(0,0){\includegraphics[width=\unitlength,page=1]{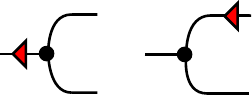}}%
    \put(0.41579886,0.17956065){\color[rgb]{0,0,0}\makebox(0,0)[lt]{\lineheight{1.25}\smash{\begin{tabular}[t]{l}$=$\end{tabular}}}}%
    \put(0,0){\includegraphics[width=\unitlength,page=2]{ptrnotsd.pdf}}%
  \end{picture}%
\endgroup%

\]
However, such diagrams are ill formed, illegal configurations. 
This is not problematic beyond the obvious consequence that the strategic aspects of the operational semantics are not given a categorical interpretation. 
\end{remark}

\subsubsection{Calculi of explicit substitutions}
Substitution plays a critical role in the syntax and equational theory of the $\lambda$ calculus. 
At the same time substitution is a \emph{meta-syntactic} construct, which is makes formalisation more difficult, especially when it comes to understanding the computational cost of substitution.
The substitution rules in particular as a meta-operation can have a widely varying and complex cost involving the identification of variables, $\alpha$-renamings, and copying of terms of various sizes. 

Calculi of explicit substitution (ES) deal with this problem by making substitution a term former in the language, so that
\[
u::=x\mid u\,u\mid \lambda x.u \mid u[x/u].
\]
This formulation may raise some subtle syntactic issue regarding scope, binding, and alpha equivalence which are either swept under the carpet as bureaucratic annoyances or resolved by using DeBruijn indices instead of variable names. 
The latter makes syntax easier to formalise, but the downside of using indices is, of course, that the syntax becomes utterly unreadable. 

The rules of substitution are made into explicit language equations:
\begin{definition}[ES equations]
\label{def:eseqn}
Assuming $x\neq y$, 
\begin{align*}
x[x/u]  &= u \\
u[x/v] &= u &\text{ if }x\not\in\mathcal F(u) \\
(uu')[x/v] &= (u[x/v])(u'[x/v]) \\
(\lambda x.u)[y/v] &= \lambda x.u[y/v] & \text{ if } x\not\in\mathcal F(v)\\
\bigl(u[x/v]\bigr)[x'/v'] &= \bigl(u[x'/v']\bigr)\bigl[x/v[x'/v'] \bigr] & \text{ if } x'\in\mathcal F(v)\land x\in  \mathcal F(v')\\
\bigl(u[x/v]\bigr)[x'/v'] &= u\bigl[x/v[x'/v'] \bigr] & \text{ if } x'\in\mathcal F(v)\land x\not\in \mathcal F(v')\\
\bigl(u[x/v]\bigr)[x'/v'] &= \bigl(u[x'/v']\bigr)[x/v] & \text{ if } x'\not\in\mathcal F(v)\land x\not\in \mathcal F(v') 
\end{align*} 
\end{definition}

With the exception of the last equation, all the above are directed (left-to-right) and can immediately form the basis of a rewrite system. 
The last equation (dubbed the $C$-rule) is problematic computationally because it is undirected, so it has the same status as the $\alpha$ rule. 
Indeed, equivalence of terms is up to these two equations ($\alpha$ and $C$). 

We have seen that one of the key advantages of the string diagram notation is that it absorbs certain equations into isomorphic graphs. 
This is also the case for the $C$-rule;  using the interpretation of substitution (Lemma~\ref{lem:subst}) we can see that the left-hand side, $\bigl(u[x/v]\bigr)[x'/v']$ is, after cancelling out some copy-discard pairs, isomorphic to the graph below, which is in turn isomorphic to the right-hand side. 

\[
\begingroup%
  \makeatletter%
  \providecommand\color[2][]{%
    \errmessage{(Inkscape) Color is used for the text in Inkscape, but the package 'color.sty' is not loaded}%
    \renewcommand\color[2][]{}%
  }%
  \providecommand\transparent[1]{%
    \errmessage{(Inkscape) Transparency is used (non-zero) for the text in Inkscape, but the package 'transparent.sty' is not loaded}%
    \renewcommand\transparent[1]{}%
  }%
  \providecommand\rotatebox[2]{#2}%
  \newcommand*\fsize{\dimexpr\f@size pt\relax}%
  \newcommand*\lineheight[1]{\fontsize{\fsize}{#1\fsize}\selectfont}%
  \ifx\svgwidth\undefined%
    \setlength{\unitlength}{310.02341952bp}%
    \ifx\svgscale\undefined%
      \relax%
    \else%
      \setlength{\unitlength}{\unitlength * \real{\svgscale}}%
    \fi%
  \else%
    \setlength{\unitlength}{\svgwidth}%
  \fi%
  \global\let\svgwidth\undefined%
  \global\let\svgscale\undefined%
  \makeatother%
  \begin{picture}(1,0.19490423)%
    \lineheight{1}%
    \setlength\tabcolsep{0pt}%
    \put(0,0){\includegraphics[width=\unitlength,page=1]{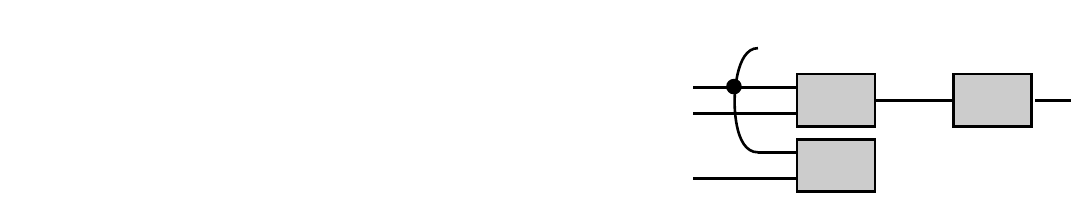}}%
    \put(0.89927346,0.09211324){\makebox(0,0)[lt]{\lineheight{1.25}\smash{\begin{tabular}[t]{l}$\seval u$\end{tabular}}}}%
    \put(0.75251053,0.09534095){\makebox(0,0)[lt]{\lineheight{1.25}\smash{\begin{tabular}[t]{l}$\seval v$\end{tabular}}}}%
    \put(0.74562534,0.03163525){\makebox(0,0)[lt]{\lineheight{1.25}\smash{\begin{tabular}[t]{l}$\seval{v'}$\end{tabular}}}}%
    \put(0,0){\includegraphics[width=\unitlength,page=2]{crule.pdf}}%
    \put(0.82024735,0.10904883){\makebox(0,0)[lt]{\lineheight{1.25}\smash{\begin{tabular}[t]{l}$x$\end{tabular}}}}%
    \put(0.64687333,0.06631002){\makebox(0,0)[lt]{\lineheight{1.25}\smash{\begin{tabular}[t]{l}$x$\end{tabular}}}}%
    \put(0.82266656,0.05421411){\makebox(0,0)[lt]{\lineheight{1.25}\smash{\begin{tabular}[t]{l}$x'$\end{tabular}}}}%
    \put(0.6468734,0.03647352){\makebox(0,0)[lt]{\lineheight{1.25}\smash{\begin{tabular}[t]{l}$x'$\end{tabular}}}}%
    \put(0,0){\includegraphics[width=\unitlength,page=3]{crule.pdf}}%
    \put(0.48317581,0.09211324){\makebox(0,0)[lt]{\lineheight{1.25}\smash{\begin{tabular}[t]{l}$\seval u$\end{tabular}}}}%
    \put(0.33641285,0.09534095){\makebox(0,0)[lt]{\lineheight{1.25}\smash{\begin{tabular}[t]{l}$\seval v$\end{tabular}}}}%
    \put(0.20856894,0.03163525){\makebox(0,0)[lt]{\lineheight{1.25}\smash{\begin{tabular}[t]{l}$\seval{v'}$\end{tabular}}}}%
    \put(0,0){\includegraphics[width=\unitlength,page=4]{crule.pdf}}%
    \put(0.40334326,0.11066163){\makebox(0,0)[lt]{\lineheight{1.25}\smash{\begin{tabular}[t]{l}$x$\end{tabular}}}}%
    \put(0.08562527,0.06227787){\makebox(0,0)[lt]{\lineheight{1.25}\smash{\begin{tabular}[t]{l}$x$\end{tabular}}}}%
    \put(0.64687326,0.12114473){\makebox(0,0)[lt]{\lineheight{1.25}\smash{\begin{tabular}[t]{l}$\Gamma$\end{tabular}}}}%
    \put(0.08239973,0.09695306){\makebox(0,0)[lt]{\lineheight{1.25}\smash{\begin{tabular}[t]{l}$\Gamma$\end{tabular}}}}%
    \put(0.40414959,0.03244179){\makebox(0,0)[lt]{\lineheight{1.25}\smash{\begin{tabular}[t]{l}$x'$\end{tabular}}}}%
    \put(0.08481883,0.00341142){\makebox(0,0)[lt]{\lineheight{1.25}\smash{\begin{tabular}[t]{l}$x'$\end{tabular}}}}%
    \put(0,0){\includegraphics[width=\unitlength,page=5]{crule.pdf}}%
    \put(0.59284523,0.08405075){\makebox(0,0)[lt]{\lineheight{1.25}\smash{\begin{tabular}[t]{l}$=$\end{tabular}}}}%
    \put(-0.00146474,0.18001118){\makebox(0,0)[lt]{\lineheight{1.25}\smash{\begin{tabular}[t]{l}$\seval{\bigl(u[x/v]\bigr)[x'/v']}=$\end{tabular}}}}%
  \end{picture}%
\endgroup%

\]
\begin{exercise}
Prove the correctness of ES equations in the string diagram interpretation. 
\end{exercise}
\begin{remark}
The significant notational simplification introduced by the use of string diagrams in the calculus of explicit substitutions does not come at the expense of informality. 
It is true that the \emph{presentation} here is informal, for pedagogical reasons, but the formally minded reader has already seen how string diagrams can be presented as precisely defined combinatorial structures. 
Eliminating not one, but two quotient rules from the formal representation of the calculus is a significant notational advantage. 
\end{remark}

\subsection{Applied $\lambda$ calculi}

The $\lambda$ calculus provides the infrastructure upon which programming languages are built. 
Function calls represent the ``mortar'' that binds together the software ``bricks'' that constitute a program: the operations. 
Enhancing the $\lambda$ calculus with operations leads to an \emph{applied} $\lambda$ calculus, an idealised programming language. 
Practical programming languages represent the final elaboration, with numerous and complex operations and a more sophisticated syntax. 
Let us review several common such operations, their defining equations, and their operational semantics. 

\subsubsection{Arithmetic and logic}

\newcommand{\Add}{\mathit{add}}

Here and henceforth we shall treat matters of superficial syntax with a certain degree of informality which should not cause confusion. 
Whenever convenient we use human-readable syntax common in most languages (e.g. $1+2$) but when we want to avoid ambiguities we will switch to a de-sugared notation (e.g. $\Add\, (1, 2)$) or a categorical representation (e.g. $(1\otimes 2) \semic \Add$).

\newcommand{\Num}{\mathit{Num}}
\newcommand{\Float}{\mathit{Float}}
\newcommand{\Bool}{\mathit{Bool}}
\newcommand{\Neg}{\mathit{neg}}
\newcommand{\Op}{\mathit{op}}

Adding arithmetic and logic requires:
\begin{itemize}
\item one (or more) numerical type $\Num$ (e.g. $\Float$ or $\Int$ or $\mathit{Bool}$), 
\item numerical constants $m,n,p,\ldots:I\to \Num$ (e.g. $7:I\to \Int$ or $4.5:I\to \Float$), 
\item unary operators $\Op : \Num\to \Num$ (e.g. $\Neg: \Int\to \Int$),
\item arithmetic operators $\Op : \Num\times \Num\to \Num$ (e.g. $\Add:\Int\times \Int\to \Int$),
\item logical operators $\Op:\Num\times\Num\to\mathit{Bool}$ (e.g. $\mathit{less\_than}:\Int\times\Int\to\mathit{Bool}$),
\item conversion operations $\mathit{conv}:\Num\to\Num'$ (e.g. $\mathit{floor}:\Float\to\Int$), etc. 
\end{itemize}

The equations governing these operations are the obvious ones, and they are obviously directed (e.g. $1+1=2$) so they form a natural basis for an operational semantics. 
For binary operations, the structural rules $S1$ and $S2$ in Definition~\ref{def:cbvsd}, which give the order of evaluation (right-to-left) can be reused.
The value rule $V$ in the same Definition can also be reused, as numerical constants are values and require no further evaluation.
Finally, each operation has an associated reduction rule (sometimes called a $\delta$ rule):

\[
\begingroup%
  \makeatletter%
  \providecommand\color[2][]{%
    \errmessage{(Inkscape) Color is used for the text in Inkscape, but the package 'color.sty' is not loaded}%
    \renewcommand\color[2][]{}%
  }%
  \providecommand\transparent[1]{%
    \errmessage{(Inkscape) Transparency is used (non-zero) for the text in Inkscape, but the package 'transparent.sty' is not loaded}%
    \renewcommand\transparent[1]{}%
  }%
  \providecommand\rotatebox[2]{#2}%
  \newcommand*\fsize{\dimexpr\f@size pt\relax}%
  \newcommand*\lineheight[1]{\fontsize{\fsize}{#1\fsize}\selectfont}%
  \ifx\svgwidth\undefined%
    \setlength{\unitlength}{174.58413024bp}%
    \ifx\svgscale\undefined%
      \relax%
    \else%
      \setlength{\unitlength}{\unitlength * \real{\svgscale}}%
    \fi%
  \else%
    \setlength{\unitlength}{\svgwidth}%
  \fi%
  \global\let\svgwidth\undefined%
  \global\let\svgscale\undefined%
  \makeatother%
  \begin{picture}(1,0.617653)%
    \lineheight{1}%
    \setlength\tabcolsep{0pt}%
    \put(0,0){\includegraphics[width=\unitlength,page=1]{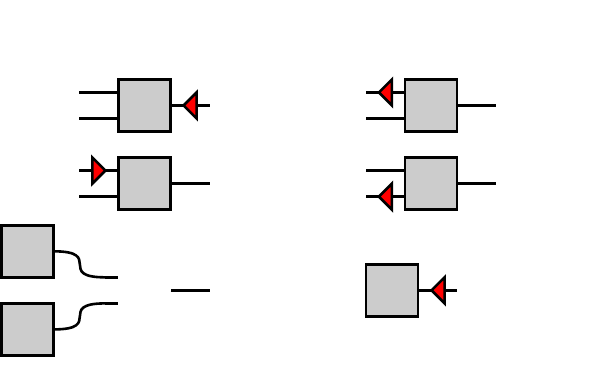}}%
    \put(0.02505978,0.19307007){\makebox(0,0)[lt]{\lineheight{1.25}\smash{\begin{tabular}[t]{l}$m$\end{tabular}}}}%
    \put(0.03078752,0.06419287){\makebox(0,0)[lt]{\lineheight{1.25}\smash{\begin{tabular}[t]{l}$n$\end{tabular}}}}%
    \put(0.20978446,0.43364299){\makebox(0,0)[lt]{\lineheight{1.25}\smash{\begin{tabular}[t]{l}$\mathit{op}$\end{tabular}}}}%
    \put(0.68520008,0.43221099){\makebox(0,0)[lt]{\lineheight{1.25}\smash{\begin{tabular}[t]{l}$\mathit{op}$\end{tabular}}}}%
    \put(0,0){\includegraphics[width=\unitlength,page=2]{delta.pdf}}%
    \put(0.20978446,0.56252056){\makebox(0,0)[lt]{\lineheight{1.25}\smash{\begin{tabular}[t]{l}$\mathit{p}$\end{tabular}}}}%
    \put(0.62076123,0.56108869){\makebox(0,0)[lt]{\lineheight{1.25}\smash{\begin{tabular}[t]{l}$\mathit{p}$\end{tabular}}}}%
    \put(0.20978446,0.30619729){\makebox(0,0)[lt]{\lineheight{1.25}\smash{\begin{tabular}[t]{l}$\mathit{op}$\end{tabular}}}}%
    \put(0.68520008,0.3047653){\makebox(0,0)[lt]{\lineheight{1.25}\smash{\begin{tabular}[t]{l}$\mathit{op}$\end{tabular}}}}%
    \put(0.63221683,0.12576748){\makebox(0,0)[lt]{\lineheight{1.25}\smash{\begin{tabular}[t]{l}$p$\end{tabular}}}}%
    \put(0,0){\includegraphics[width=\unitlength,page=3]{delta.pdf}}%
    \put(0.20119261,0.12863172){\makebox(0,0)[lt]{\lineheight{1.25}\smash{\begin{tabular}[t]{l}$\mathit{op}$\end{tabular}}}}%
    \put(0.59641754,0.00691376){\makebox(0,0)[lt]{\lineheight{1.25}\smash{\begin{tabular}[t]{l}$p=\mathit{op}(m,n)$\end{tabular}}}}%
    \put(0.42008837,0.56261062){\makebox(0,0)[lt]{\lineheight{1.25}\smash{\begin{tabular}[t]{l}$\stackrel{V}\rew$\end{tabular}}}}%
    \put(0.42008837,0.43230018){\makebox(0,0)[lt]{\lineheight{1.25}\smash{\begin{tabular}[t]{l}$\stackrel{{S}{1}}\rew$\end{tabular}}}}%
    \put(0.42008837,0.30771848){\makebox(0,0)[lt]{\lineheight{1.25}\smash{\begin{tabular}[t]{l}$\stackrel{{S}{2}}\rew$\end{tabular}}}}%
    \put(0.42008837,0.1301539){\makebox(0,0)[lt]{\lineheight{1.25}\smash{\begin{tabular}[t]{l}$\stackrel{\delta}\rew$\end{tabular}}}}%
  \end{picture}%
\endgroup%

\]

\begin{remark}
It is obvious that each $\delta$ rule, which always produce a value, can be immediately combined with the value rule $V$ to give the composite rule
\[
\begingroup%
  \makeatletter%
  \providecommand\color[2][]{%
    \errmessage{(Inkscape) Color is used for the text in Inkscape, but the package 'color.sty' is not loaded}%
    \renewcommand\color[2][]{}%
  }%
  \providecommand\transparent[1]{%
    \errmessage{(Inkscape) Transparency is used (non-zero) for the text in Inkscape, but the package 'transparent.sty' is not loaded}%
    \renewcommand\transparent[1]{}%
  }%
  \providecommand\rotatebox[2]{#2}%
  \newcommand*\fsize{\dimexpr\f@size pt\relax}%
  \newcommand*\lineheight[1]{\fontsize{\fsize}{#1\fsize}\selectfont}%
  \ifx\svgwidth\undefined%
    \setlength{\unitlength}{174.58413024bp}%
    \ifx\svgscale\undefined%
      \relax%
    \else%
      \setlength{\unitlength}{\unitlength * \real{\svgscale}}%
    \fi%
  \else%
    \setlength{\unitlength}{\svgwidth}%
  \fi%
  \global\let\svgwidth\undefined%
  \global\let\svgscale\undefined%
  \makeatother%
  \begin{picture}(1,0.2482014)%
    \lineheight{1}%
    \setlength\tabcolsep{0pt}%
    \put(0,0){\includegraphics[width=\unitlength,page=1]{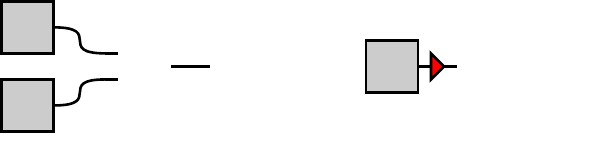}}%
    \put(0.02505978,0.19307007){\makebox(0,0)[lt]{\lineheight{1.25}\smash{\begin{tabular}[t]{l}$m$\end{tabular}}}}%
    \put(0.03078752,0.06419287){\makebox(0,0)[lt]{\lineheight{1.25}\smash{\begin{tabular}[t]{l}$n$\end{tabular}}}}%
    \put(0.63221683,0.12576748){\makebox(0,0)[lt]{\lineheight{1.25}\smash{\begin{tabular}[t]{l}$p$\end{tabular}}}}%
    \put(0,0){\includegraphics[width=\unitlength,page=2]{vdelta.pdf}}%
    \put(0.20119261,0.12863172){\makebox(0,0)[lt]{\lineheight{1.25}\smash{\begin{tabular}[t]{l}$\mathit{op}$\end{tabular}}}}%
    \put(0.59641754,0.00691376){\makebox(0,0)[lt]{\lineheight{1.25}\smash{\begin{tabular}[t]{l}$p=\mathit{op}(m,n)$\end{tabular}}}}%
    \put(0.4186565,0.12728991){\makebox(0,0)[lt]{\lineheight{1.25}\smash{\begin{tabular}[t]{l}$\rew$\end{tabular}}}}%
  \end{picture}%
\endgroup%

\]
We decline to perform this `optimisation' for the sake of keeping uniform the interaction between structural rules, which show how the pointer traverses the graph without changing it, with the rewrite rules which update the graph. 
Namely, a rewrite rule will always produce a pointer aiming to the left, back into the recently produced term, as rewrites do not generally produce values as results (e.g. $\beta$ reduction). 
\end{remark}
The following is left to the reader. 
\begin{exercise}
Define appropriate string-diagram operational semantics for unary and conversion operators. 
\end{exercise}
\subsubsection{If-then-else and other shortcut operators}

\newcommand{\Ite}{\mathit{ite}}

A naive if-then-else operator can be defined as $\Ite:\Bool\times\Num\times\Num\to\Num$, with equations
\begin{gather*}
\Ite(\mathit{true},v,v') =v\\
\Ite(\mathit{false},v,v') =v'.
\end{gather*}
But this does not give enough hints as to how the if-then-else operator is to be evaluated. 
An obvious extension of the structural rules $S1$ and $S2$ to rules for ternary operators is not good enough, because it would give the following undesired equalities, where $\omega$ is any diverging term:
\begin{gather*}
\Ite(\mathit{true},v,\omega) =\omega\\
\Ite(\mathit{false},\omega,v') =\omega,
\end{gather*}
thus violating our intended equations for the special case of $\omega$. 

There is one way the evaluation of the arguments after that of test expression can be prevented, which is the same way the evaluation of bodies of functions is prevented: thunking them. 
The thunks of the branches of an if-then-else statement take no arguments, so they are of the shape $\Lambda_I(u)$. 
Thus, an if-then-else statement written as $\mathit{if\ b\ then\ u\ else\ v}$ in the surface language is interpreted as
\[
\Ite(b,\Lambda_I(t), \Lambda_I(u)),
\]
noting that our problematic evaluations are not so anymore, with $\Lambda_I(\omega)$ being a value rather than a divergent computation. 
So the two equations are:
\begin{gather*}
\Ite(\mathit{true},\Lambda_I(t),\Lambda_I(u)) = t\\
\Ite(\mathit{false},\Lambda_I(t),\Lambda_I(u)) = u.
\end{gather*}
With this interpretation in place, structural rules $S1$ and $S2$, along with an obvious new rule $S3$, can be extended to handle ternary operators. 
The $V$ rule for thunks is still used as before, to prevent the evaluation of $t$ and $u$. 
The sole genuine extension are the two reduction rules for the if-then-else, shown in their graphical form, and involving both a selection of the thunk and its `forcing':
\[
\begingroup%
  \makeatletter%
  \providecommand\color[2][]{%
    \errmessage{(Inkscape) Color is used for the text in Inkscape, but the package 'color.sty' is not loaded}%
    \renewcommand\color[2][]{}%
  }%
  \providecommand\transparent[1]{%
    \errmessage{(Inkscape) Transparency is used (non-zero) for the text in Inkscape, but the package 'transparent.sty' is not loaded}%
    \renewcommand\transparent[1]{}%
  }%
  \providecommand\rotatebox[2]{#2}%
  \newcommand*\fsize{\dimexpr\f@size pt\relax}%
  \newcommand*\lineheight[1]{\fontsize{\fsize}{#1\fsize}\selectfont}%
  \ifx\svgwidth\undefined%
    \setlength{\unitlength}{307.50000378bp}%
    \ifx\svgscale\undefined%
      \relax%
    \else%
      \setlength{\unitlength}{\unitlength * \real{\svgscale}}%
    \fi%
  \else%
    \setlength{\unitlength}{\svgwidth}%
  \fi%
  \global\let\svgwidth\undefined%
  \global\let\svgscale\undefined%
  \makeatother%
  \begin{picture}(1,0.2236321)%
    \lineheight{1}%
    \setlength\tabcolsep{0pt}%
    \put(0,0){\includegraphics[width=\unitlength,page=1]{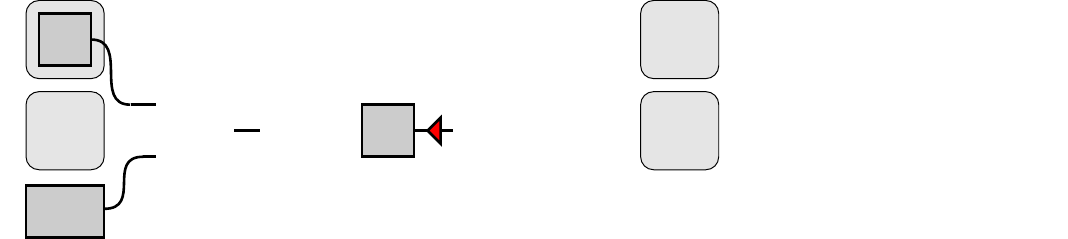}}%
    \put(0.04959354,0.18089454){\makebox(0,0)[lt]{\lineheight{1.25}\smash{\begin{tabular}[t]{l}$u$\end{tabular}}}}%
    \put(0.03333326,0.01991884){\makebox(0,0)[lt]{\lineheight{1.25}\smash{\begin{tabular}[t]{l}$\mathit{true}$\end{tabular}}}}%
    \put(0.35528472,0.09390203){\makebox(0,0)[lt]{\lineheight{1.25}\smash{\begin{tabular}[t]{l}$t$\end{tabular}}}}%
    \put(0,0){\includegraphics[width=\unitlength,page=2]{ite.pdf}}%
    \put(0.16422769,0.09308958){\makebox(0,0)[lt]{\lineheight{1.25}\smash{\begin{tabular}[t]{l}$\mathit{\mathit{ite}}$\end{tabular}}}}%
    \put(0,0){\includegraphics[width=\unitlength,page=3]{ite.pdf}}%
    \put(0.05447159,0.09552864){\makebox(0,0)[lt]{\lineheight{1.25}\smash{\begin{tabular}[t]{l}$t$\end{tabular}}}}%
    \put(0,0){\includegraphics[width=\unitlength,page=4]{ite.pdf}}%
    \put(0.62520271,0.18089454){\makebox(0,0)[lt]{\lineheight{1.25}\smash{\begin{tabular}[t]{l}$u$\end{tabular}}}}%
    \put(0.60617465,0.01991884){\makebox(0,0)[lt]{\lineheight{1.25}\smash{\begin{tabular}[t]{l}$\mathit{false}$\end{tabular}}}}%
    \put(0.93089439,0.09390203){\makebox(0,0)[lt]{\lineheight{1.25}\smash{\begin{tabular}[t]{l}$u$\end{tabular}}}}%
    \put(0,0){\includegraphics[width=\unitlength,page=5]{ite.pdf}}%
    \put(0.73983693,0.09308958){\makebox(0,0)[lt]{\lineheight{1.25}\smash{\begin{tabular}[t]{l}$\mathit{\mathit{ite}}$\end{tabular}}}}%
    \put(0,0){\includegraphics[width=\unitlength,page=6]{ite.pdf}}%
    \put(0.63008084,0.09552864){\makebox(0,0)[lt]{\lineheight{1.25}\smash{\begin{tabular}[t]{l}$t$\end{tabular}}}}%
    \put(0,0){\includegraphics[width=\unitlength,page=7]{ite.pdf}}%
    \put(0.26161356,0.09180407){\makebox(0,0)[lt]{\lineheight{1.25}\smash{\begin{tabular}[t]{l}$\rew$\end{tabular}}}}%
    \put(0.84052138,0.09353997){\makebox(0,0)[lt]{\lineheight{1.25}\smash{\begin{tabular}[t]{l}$\rew$\end{tabular}}}}%
  \end{picture}%
\endgroup%

\]
\begin{remark}
As we move gradually from a \emph{mathematical} to a \emph{computational} model, various details that are mathematically irrelevant become meaningful from the point of view of cost of execution. 
For instance, in the rewrite of the if-then-else operation potentially large sub-graphs `disappear'. 
The reader is left to compare, in terms of possible cost model (time and space) the rules given above with rules which create inaccessible, from the point of view of evaluation, sub-graphs, usually called \emph{garbage} (only one of the rules shown):
\[
\begingroup%
  \makeatletter%
  \providecommand\color[2][]{%
    \errmessage{(Inkscape) Color is used for the text in Inkscape, but the package 'color.sty' is not loaded}%
    \renewcommand\color[2][]{}%
  }%
  \providecommand\transparent[1]{%
    \errmessage{(Inkscape) Transparency is used (non-zero) for the text in Inkscape, but the package 'transparent.sty' is not loaded}%
    \renewcommand\transparent[1]{}%
  }%
  \providecommand\rotatebox[2]{#2}%
  \newcommand*\fsize{\dimexpr\f@size pt\relax}%
  \newcommand*\lineheight[1]{\fontsize{\fsize}{#1\fsize}\selectfont}%
  \ifx\svgwidth\undefined%
    \setlength{\unitlength}{155.62248823bp}%
    \ifx\svgscale\undefined%
      \relax%
    \else%
      \setlength{\unitlength}{\unitlength * \real{\svgscale}}%
    \fi%
  \else%
    \setlength{\unitlength}{\svgwidth}%
  \fi%
  \global\let\svgwidth\undefined%
  \global\let\svgscale\undefined%
  \makeatother%
  \begin{picture}(1,0.44188262)%
    \lineheight{1}%
    \setlength\tabcolsep{0pt}%
    \put(0,0){\includegraphics[width=\unitlength,page=1]{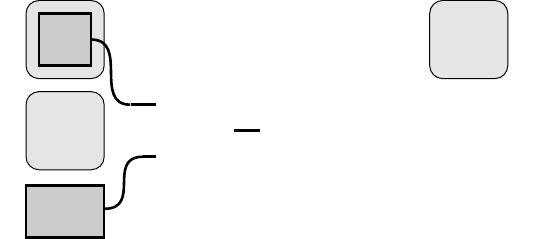}}%
    \put(0.09799363,0.35743594){\makebox(0,0)[lt]{\lineheight{1.25}\smash{\begin{tabular}[t]{l}$u$\end{tabular}}}}%
    \put(0.06586438,0.03935835){\makebox(0,0)[lt]{\lineheight{1.25}\smash{\begin{tabular}[t]{l}$\mathit{true}$\end{tabular}}}}%
    \put(0,0){\includegraphics[width=\unitlength,page=2]{iteg.pdf}}%
    \put(0.32450332,0.183939){\makebox(0,0)[lt]{\lineheight{1.25}\smash{\begin{tabular}[t]{l}$\mathit{\mathit{ite}}$\end{tabular}}}}%
    \put(0,0){\includegraphics[width=\unitlength,page=3]{iteg.pdf}}%
    \put(0.10763235,0.18875843){\makebox(0,0)[lt]{\lineheight{1.25}\smash{\begin{tabular}[t]{l}$t$\end{tabular}}}}%
    \put(0,0){\includegraphics[width=\unitlength,page=4]{iteg.pdf}}%
    \put(0.84499359,0.35743594){\makebox(0,0)[lt]{\lineheight{1.25}\smash{\begin{tabular}[t]{l}$u$\end{tabular}}}}%
    \put(0,0){\includegraphics[width=\unitlength,page=5]{iteg.pdf}}%
    \put(0.85463217,0.18875843){\makebox(0,0)[lt]{\lineheight{1.25}\smash{\begin{tabular}[t]{l}$t$\end{tabular}}}}%
    \put(0,0){\includegraphics[width=\unitlength,page=6]{iteg.pdf}}%
    \put(0.57759352,0.18599629){\makebox(0,0)[lt]{\lineheight{1.25}\smash{\begin{tabular}[t]{l}$\rew$\end{tabular}}}}%
  \end{picture}%
\endgroup%

\]
\end{remark}
\begin{exercise}
Define and compare string-diagram semantics for ``lazy'' (for instance `$\&\&$' in Java) and ``eager'' (for instance `$\&$' in Java) boolean conjunction operators. 
\end{exercise}

\subsubsection{Recursion}

It is straightforward to include a standard recursion operator, with the rule
\[
\mathit{rec} (\lambda f.u) = u\bigr[f/\mathit{rec}(\lambda f.u)\bigl]
\]
noting that this is an ``expansion'' rather than a ``reduction'' rule. 

The recursion operator has a single argument, which is a thunk so it will use the unary version of the structural rule along with the rewrite:
\[
\begingroup%
  \makeatletter%
  \providecommand\color[2][]{%
    \errmessage{(Inkscape) Color is used for the text in Inkscape, but the package 'color.sty' is not loaded}%
    \renewcommand\color[2][]{}%
  }%
  \providecommand\transparent[1]{%
    \errmessage{(Inkscape) Transparency is used (non-zero) for the text in Inkscape, but the package 'transparent.sty' is not loaded}%
    \renewcommand\transparent[1]{}%
  }%
  \providecommand\rotatebox[2]{#2}%
  \newcommand*\fsize{\dimexpr\f@size pt\relax}%
  \newcommand*\lineheight[1]{\fontsize{\fsize}{#1\fsize}\selectfont}%
  \ifx\svgwidth\undefined%
    \setlength{\unitlength}{236.09437795bp}%
    \ifx\svgscale\undefined%
      \relax%
    \else%
      \setlength{\unitlength}{\unitlength * \real{\svgscale}}%
    \fi%
  \else%
    \setlength{\unitlength}{\svgwidth}%
  \fi%
  \global\let\svgwidth\undefined%
  \global\let\svgscale\undefined%
  \makeatother%
  \begin{picture}(1,0.11337309)%
    \lineheight{1}%
    \setlength\tabcolsep{0pt}%
    \put(0,0){\includegraphics[width=\unitlength,page=1]{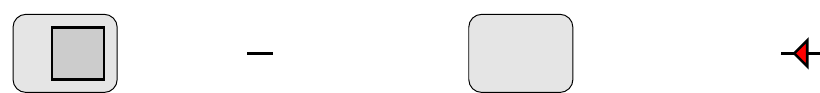}}%
    \put(0.08047645,0.03448486){\makebox(0,0)[lt]{\lineheight{1.25}\smash{\begin{tabular}[t]{l}$u$\end{tabular}}}}%
    \put(0,0){\includegraphics[width=\unitlength,page=2]{rec.pdf}}%
    \put(0.22978106,0.03766142){\makebox(0,0)[lt]{\lineheight{1.25}\smash{\begin{tabular}[t]{l}${\mathit{rec}}$\end{tabular}}}}%
    \put(0,0){\includegraphics[width=\unitlength,page=3]{rec.pdf}}%
    \put(0.63639832,0.03448486){\makebox(0,0)[lt]{\lineheight{1.25}\smash{\begin{tabular}[t]{l}$u$\end{tabular}}}}%
    \put(0,0){\includegraphics[width=\unitlength,page=4]{rec.pdf}}%
    \put(0.90641722,0.0344845){\makebox(0,0)[lt]{\lineheight{1.25}\smash{\begin{tabular}[t]{l}$u$\end{tabular}}}}%
    \put(0,0){\includegraphics[width=\unitlength,page=5]{rec.pdf}}%
    \put(0.73645589,0.03614468){\makebox(0,0)[lt]{\lineheight{1.25}\smash{\begin{tabular}[t]{l}${\mathit{rec}}$\end{tabular}}}}%
    \put(0.39173788,0.03667451){\makebox(0,0)[lt]{\lineheight{1.25}\smash{\begin{tabular}[t]{l}$\rew$\end{tabular}}}}%
  \end{picture}%
\endgroup%

\]

\begin{remark}
The reader acquainted more casually with programming languages with recursion may feel slightly confused by the use of recursion as an operator, rather than as a binding construct. 
But the conventional recursive binding found in common programming languages can be easily \emph{interpreted} with the help of a recursion operator. 
A definition such as 
\[
f(x) = \cdots f(\cdots x\cdots)\cdots
\]
is interpreted as
\[
\mathit{rec}(\lambda f\lambda x.\cdots f(\cdots x\cdots)\cdots)
\]
\end{remark}

\subsubsection{Abortive continuations}

Prior to recursion, all operations we consider had ``reductive'' rules in which the result is smaller, as a graph, than the operands. 
A graph could still grow because of copying, but not because of the application of an operation. 
Recursion, in contrast, expands the graph via the application of an operation. 

However, something that all operations to this point share is that the result of their application does not create new thunks. 
But now we will see just this, a thunk-creating operation called ``\emph{call/cc}''. 
This operation is a very powerful control operator that can be used to implement a wide range of simpler and more common control operations, such as exceptions. 
To be effective it works in tandem with another operation, \emph{abort}. 

Another distinction between the \emph{call/cc} rules and the rules seen so far is that the previous rules are ``small'' in the sense that they only involve a part of the graph that is situated to the left of the pointer, whereas the \emph{call/cc} rules involve not only the operands, but also the \emph{context} in which the  operation is executed. 

The rules are as below:
\[
\begingroup%
  \makeatletter%
  \providecommand\color[2][]{%
    \errmessage{(Inkscape) Color is used for the text in Inkscape, but the package 'color.sty' is not loaded}%
    \renewcommand\color[2][]{}%
  }%
  \providecommand\transparent[1]{%
    \errmessage{(Inkscape) Transparency is used (non-zero) for the text in Inkscape, but the package 'transparent.sty' is not loaded}%
    \renewcommand\transparent[1]{}%
  }%
  \providecommand\rotatebox[2]{#2}%
  \newcommand*\fsize{\dimexpr\f@size pt\relax}%
  \newcommand*\lineheight[1]{\fontsize{\fsize}{#1\fsize}\selectfont}%
  \ifx\svgwidth\undefined%
    \setlength{\unitlength}{306.67281165bp}%
    \ifx\svgscale\undefined%
      \relax%
    \else%
      \setlength{\unitlength}{\unitlength * \real{\svgscale}}%
    \fi%
  \else%
    \setlength{\unitlength}{\svgwidth}%
  \fi%
  \global\let\svgwidth\undefined%
  \global\let\svgscale\undefined%
  \makeatother%
  \begin{picture}(1,0.23401706)%
    \lineheight{1}%
    \setlength\tabcolsep{0pt}%
    \put(0,0){\includegraphics[width=\unitlength,page=1]{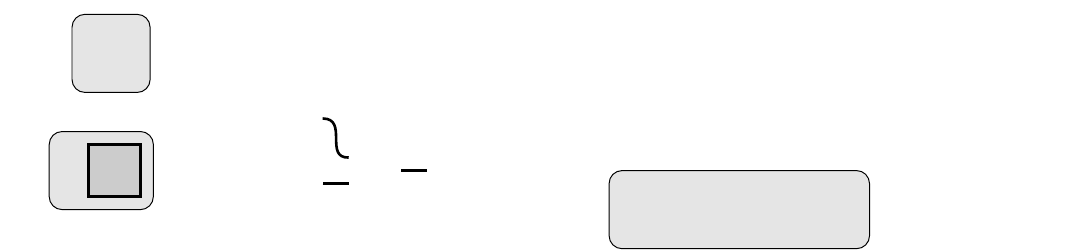}}%
    \put(0.09594199,0.06323234){\makebox(0,0)[lt]{\lineheight{1.25}\smash{\begin{tabular}[t]{l}$u$\end{tabular}}}}%
    \put(0,0){\includegraphics[width=\unitlength,page=2]{callcc.pdf}}%
    \put(0.19865728,0.0656777){\makebox(0,0)[lt]{\lineheight{1.25}\smash{\begin{tabular}[t]{l}${\mathit{call{/}cc}}$\end{tabular}}}}%
    \put(0,0){\includegraphics[width=\unitlength,page=3]{callcc.pdf}}%
    \put(0.34697467,0.06633325){\makebox(0,0)[lt]{\lineheight{1.25}\smash{\begin{tabular}[t]{l}$t$\end{tabular}}}}%
    \put(0,0){\includegraphics[width=\unitlength,page=4]{callcc.pdf}}%
    \put(0.0928848,0.17328434){\makebox(0,0)[lt]{\lineheight{1.25}\smash{\begin{tabular}[t]{l}$u$\end{tabular}}}}%
    \put(0,0){\includegraphics[width=\unitlength,page=5]{callcc.pdf}}%
    \put(0.20538223,0.1757297){\makebox(0,0)[lt]{\lineheight{1.25}\smash{\begin{tabular}[t]{l}${\mathit{abort}}$\end{tabular}}}}%
    \put(0,0){\includegraphics[width=\unitlength,page=6]{callcc.pdf}}%
    \put(0.5330933,0.17328434){\makebox(0,0)[lt]{\lineheight{1.25}\smash{\begin{tabular}[t]{l}$u$\end{tabular}}}}%
    \put(0,0){\includegraphics[width=\unitlength,page=7]{callcc.pdf}}%
    \put(0.34391748,0.17638525){\makebox(0,0)[lt]{\lineheight{1.25}\smash{\begin{tabular}[t]{l}$t$\end{tabular}}}}%
    \put(0,0){\includegraphics[width=\unitlength,page=8]{callcc.pdf}}%
    \put(0.70978845,0.02899379){\makebox(0,0)[lt]{\lineheight{1.25}\smash{\begin{tabular}[t]{l}${\mathit{abort}}$\end{tabular}}}}%
    \put(0,0){\includegraphics[width=\unitlength,page=9]{callcc.pdf}}%
    \put(0.854079,0.03877621){\makebox(0,0)[lt]{\lineheight{1.25}\smash{\begin{tabular}[t]{l}$u$\end{tabular}}}}%
    \put(0,0){\includegraphics[width=\unitlength,page=10]{callcc.pdf}}%
    \put(0.62821913,0.02964935){\makebox(0,0)[lt]{\lineheight{1.25}\smash{\begin{tabular}[t]{l}$t$\end{tabular}}}}%
    \put(0,0){\includegraphics[width=\unitlength,page=11]{callcc.pdf}}%
    \put(0.94614736,0.05410519){\makebox(0,0)[lt]{\lineheight{1.25}\smash{\begin{tabular}[t]{l}$t$\end{tabular}}}}%
    \put(0,0){\includegraphics[width=\unitlength,page=12]{callcc.pdf}}%
    \put(0.43020939,0.17539741){\makebox(0,0)[lt]{\lineheight{1.25}\smash{\begin{tabular}[t]{l}$\rew$\end{tabular}}}}%
    \put(0.43020939,0.06733591){\makebox(0,0)[lt]{\lineheight{1.25}\smash{\begin{tabular}[t]{l}$\rew$\end{tabular}}}}%
  \end{picture}%
\endgroup%

\]

The rules are global, in the sense that the sub-graph $t$, in both rules, is such that its root is the overall root of the graph. 
Such global rules are sometimes called \emph{program reductions}, in contrast with the more common, local, \emph{term reductions} seen earlier. 
The sub-graph term is usually called \emph{the context} of the evaluation and both rules manipulate it. 
The \emph{abort} rule is so called because it ``aborts'' its execution context $t$ and performs is argument instead. 
The \emph{call/cc} rule makes the context into a so-called ``abortive continuation'' creating a thunk in which the context $t$ is ``guarded'' by an \emph{abort} operation. 
Whenever this thunk is executed, the abort will cause the context at the time to be discarded and perform the continuation instead. 

\begin{remark}
Other control operations such as exceptions, delimited continuations, or effect handlers can be implemented in a similar style. 
These are left as an exercise. 
\end{remark}

\subsubsection{Store}

A common feature of programming languages is \emph{store}, also known as \emph{mutable variables} (variables which actually vary). 
One of the earliest systematic attempts at a language with higher order functions and store is \textsc{Algol60}. 
The extensions consist of a new type $Var$ for mutable variables along with the following operations
\newcommand{\newvar}{\mathit{newvar}}
\newcommand{\deref}{\mathit{deref}}
\newcommand{\assign}{\mathit{assign}}
\newcommand{\Var}{\mathit{Var}}
\begin{align*}
\newvar & : (\Var\to \Int)\to \Int & (\text{new variable declaration})\\
\deref & : \Var \to \Int & (\text{dereferencing})\\
\assign & : \Var\to \Int\to \Int & (\text{assignment})
\end{align*}
\begin{example}\label{ex:incx}
A term that in some conventional concrete syntax would be written as 
\begin{verbatim}
  var x ;  x = x + 1
\end{verbatim}
would be desugared as
\[
\newvar(\lambda x.\assign\ x\ (add\ (\deref\ x)\ 1))
\]
\end{example}
The configuration of the term reduction semantics, which has shape $t\to t'$, is expanded with an additional component $s$, a partial function $\mathcal L\rightharpoonup\mathcal V$ where $\mathcal L$ is a set of \emph{locations} (also known as \emph{references} or \emph{atoms}) and $\mathcal V$ is the set of values. 

Describing the behaviour of store equationally is possible but too complicated to delve into here and also rather unusual. 
The standard approach is to use \emph{operational semantics}, a reduction semantics in which configurations consists of terms and \emph{stores}, i.e. partial maps from a new data type of \emph{locations} (or \emph{references}) and values, $s:\mathcal L\rightharpoonup \mathcal V$.  

We will first give a string diagram semantics which is an instance of what we have seen so far, then enhance it for readability with additional graphical conventions.

The rule for new variable declaration is
\[
\begingroup%
  \makeatletter%
  \providecommand\color[2][]{%
    \errmessage{(Inkscape) Color is used for the text in Inkscape, but the package 'color.sty' is not loaded}%
    \renewcommand\color[2][]{}%
  }%
  \providecommand\transparent[1]{%
    \errmessage{(Inkscape) Transparency is used (non-zero) for the text in Inkscape, but the package 'transparent.sty' is not loaded}%
    \renewcommand\transparent[1]{}%
  }%
  \providecommand\rotatebox[2]{#2}%
  \newcommand*\fsize{\dimexpr\f@size pt\relax}%
  \newcommand*\lineheight[1]{\fontsize{\fsize}{#1\fsize}\selectfont}%
  \ifx\svgwidth\undefined%
    \setlength{\unitlength}{212.41774488bp}%
    \ifx\svgscale\undefined%
      \relax%
    \else%
      \setlength{\unitlength}{\unitlength * \real{\svgscale}}%
    \fi%
  \else%
    \setlength{\unitlength}{\svgwidth}%
  \fi%
  \global\let\svgwidth\undefined%
  \global\let\svgscale\undefined%
  \makeatother%
  \begin{picture}(1,0.21537759)%
    \lineheight{1}%
    \setlength\tabcolsep{0pt}%
    \put(0,0){\includegraphics[width=\unitlength,page=1]{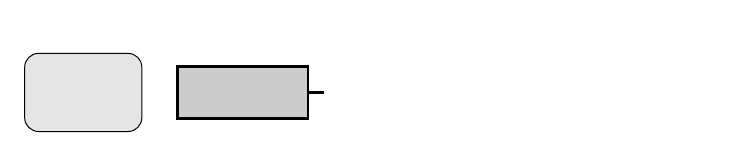}}%
    \put(0.2632556,0.07709219){\makebox(0,0)[lt]{\lineheight{1.25}\smash{\begin{tabular}[t]{l}${\mathit{newvar}}$\end{tabular}}}}%
    \put(0,0){\includegraphics[width=\unitlength,page=2]{store.pdf}}%
    \put(0.12289573,0.07473505){\makebox(0,0)[lt]{\lineheight{1.25}\smash{\begin{tabular}[t]{l}$u$\end{tabular}}}}%
    \put(0,0){\includegraphics[width=\unitlength,page=3]{store.pdf}}%
    \put(0.88201286,0.0923884){\makebox(0,0)[lt]{\lineheight{1.25}\smash{\begin{tabular}[t]{l}$u$\end{tabular}}}}%
    \put(0,0){\includegraphics[width=\unitlength,page=4]{store.pdf}}%
    \put(0.63485864,0.02177276){\makebox(0,0)[lt]{\lineheight{1.25}\smash{\begin{tabular}[t]{l}$a$\end{tabular}}}}%
    \put(0,0){\includegraphics[width=\unitlength,page=5]{store.pdf}}%
    \put(0.75843616,0.07473444){\makebox(0,0)[lt]{\lineheight{1.25}\smash{\begin{tabular}[t]{l}$\mapsto$\end{tabular}}}}%
    \put(0,0){\includegraphics[width=\unitlength,page=6]{store.pdf}}%
    \put(0.63485864,0.12769591){\makebox(0,0)[lt]{\lineheight{1.25}\smash{\begin{tabular}[t]{l}$v$\end{tabular}}}}%
    \put(0,0){\includegraphics[width=\unitlength,page=7]{store.pdf}}%
    \put(0.48927084,0.07516042){\makebox(0,0)[lt]{\lineheight{1.25}\smash{\begin{tabular}[t]{l}$\rew$\end{tabular}}}}%
  \end{picture}%
\endgroup%

\]
where $a$ is \emph{fresh}, i.e. it is not used anywhere in the diagram and $v_0$ is a default initial value. 
\begin{remark}
The choice of $a$ is irrelevant, so long as it is some fresh value. 
No program construct can distinguish between two distinct choices of atoms $a\neq a'$, a property known as \emph{equivariance}. 
The data structure of locations is extremely simple, for which reason its elements are called \emph{atoms}. 
In this conceptual \emph{API} there are only two operations: creating fresh atoms and testing two atoms for equality. 
This data structure has been extensively studied within the framework of \emph{nominal sets}.
Equivariance is an essential property of more complex structures containing such atoms. 
\end{remark}
The rule for dereferencing is obvious, extracting the value $v$ associated with the dereferencing location $a$:
\[
\begingroup%
  \makeatletter%
  \providecommand\color[2][]{%
    \errmessage{(Inkscape) Color is used for the text in Inkscape, but the package 'color.sty' is not loaded}%
    \renewcommand\color[2][]{}%
  }%
  \providecommand\transparent[1]{%
    \errmessage{(Inkscape) Transparency is used (non-zero) for the text in Inkscape, but the package 'transparent.sty' is not loaded}%
    \renewcommand\transparent[1]{}%
  }%
  \providecommand\rotatebox[2]{#2}%
  \newcommand*\fsize{\dimexpr\f@size pt\relax}%
  \newcommand*\lineheight[1]{\fontsize{\fsize}{#1\fsize}\selectfont}%
  \ifx\svgwidth\undefined%
    \setlength{\unitlength}{161.62495402bp}%
    \ifx\svgscale\undefined%
      \relax%
    \else%
      \setlength{\unitlength}{\unitlength * \real{\svgscale}}%
    \fi%
  \else%
    \setlength{\unitlength}{\svgwidth}%
  \fi%
  \global\let\svgwidth\undefined%
  \global\let\svgscale\undefined%
  \makeatother%
  \begin{picture}(1,0.23665872)%
    \lineheight{1}%
    \setlength\tabcolsep{0pt}%
    \put(0,0){\includegraphics[width=\unitlength,page=1]{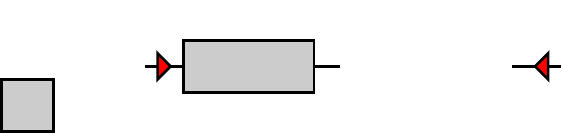}}%
    \put(0.02706868,0.02861514){\makebox(0,0)[lt]{\lineheight{1.25}\smash{\begin{tabular}[t]{l}$a$\end{tabular}}}}%
    \put(0,0){\includegraphics[width=\unitlength,page=2]{deref.pdf}}%
    \put(0.18948181,0.09822073){\makebox(0,0)[lt]{\lineheight{1.25}\smash{\begin{tabular}[t]{l}$\mapsto$\end{tabular}}}}%
    \put(0,0){\includegraphics[width=\unitlength,page=3]{deref.pdf}}%
    \put(0.02706868,0.16782605){\makebox(0,0)[lt]{\lineheight{1.25}\smash{\begin{tabular}[t]{l}$v$\end{tabular}}}}%
    \put(0,0){\includegraphics[width=\unitlength,page=4]{deref.pdf}}%
    \put(0.86233561,0.09822073){\makebox(0,0)[lt]{\lineheight{1.25}\smash{\begin{tabular}[t]{l}$v$\end{tabular}}}}%
    \put(0,0){\includegraphics[width=\unitlength,page=5]{deref.pdf}}%
    \put(0.3739511,0.1013189){\makebox(0,0)[lt]{\lineheight{1.25}\smash{\begin{tabular}[t]{l}${\mathit{deref}}$\end{tabular}}}}%
    \put(0.66486732,0.10403359){\makebox(0,0)[lt]{\lineheight{1.25}\smash{\begin{tabular}[t]{l}$\rew$\end{tabular}}}}%
  \end{picture}%
\endgroup%

\]
The rule for assignment is more unusual because, as in the case of control, it is \emph{non-local} relative to the structure of the overall diagram:
\[
\begingroup%
  \makeatletter%
  \providecommand\color[2][]{%
    \errmessage{(Inkscape) Color is used for the text in Inkscape, but the package 'color.sty' is not loaded}%
    \renewcommand\color[2][]{}%
  }%
  \providecommand\transparent[1]{%
    \errmessage{(Inkscape) Transparency is used (non-zero) for the text in Inkscape, but the package 'transparent.sty' is not loaded}%
    \renewcommand\transparent[1]{}%
  }%
  \providecommand\rotatebox[2]{#2}%
  \newcommand*\fsize{\dimexpr\f@size pt\relax}%
  \newcommand*\lineheight[1]{\fontsize{\fsize}{#1\fsize}\selectfont}%
  \ifx\svgwidth\undefined%
    \setlength{\unitlength}{209.12215058bp}%
    \ifx\svgscale\undefined%
      \relax%
    \else%
      \setlength{\unitlength}{\unitlength * \real{\svgscale}}%
    \fi%
  \else%
    \setlength{\unitlength}{\svgwidth}%
  \fi%
  \global\let\svgwidth\undefined%
  \global\let\svgscale\undefined%
  \makeatother%
  \begin{picture}(1,0.52361727)%
    \lineheight{1}%
    \setlength\tabcolsep{0pt}%
    \put(0,0){\includegraphics[width=\unitlength,page=1]{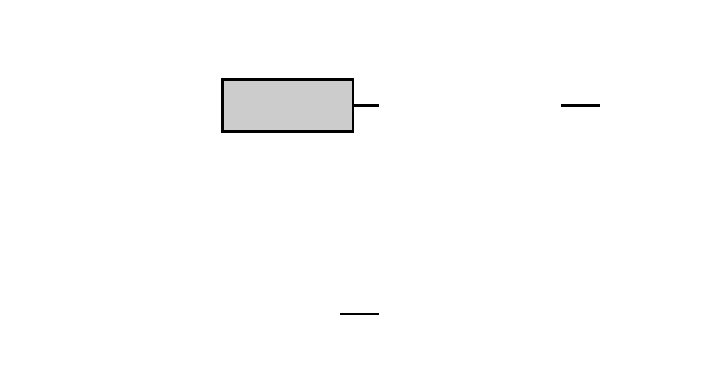}}%
    \put(0.3392508,0.36402864){\makebox(0,0)[lt]{\lineheight{1.25}\smash{\begin{tabular}[t]{l}${\mathit{assign}}$\end{tabular}}}}%
    \put(0,0){\includegraphics[width=\unitlength,page=2]{asg.pdf}}%
    \put(0.02092087,0.25523329){\makebox(0,0)[lt]{\lineheight{1.25}\smash{\begin{tabular}[t]{l}$a$\end{tabular}}}}%
    \put(0,0){\includegraphics[width=\unitlength,page=3]{asg.pdf}}%
    \put(0.14644566,0.3090296){\makebox(0,0)[lt]{\lineheight{1.25}\smash{\begin{tabular}[t]{l}$\mapsto$\end{tabular}}}}%
    \put(0,0){\includegraphics[width=\unitlength,page=4]{asg.pdf}}%
    \put(0.02092087,0.3628257){\makebox(0,0)[lt]{\lineheight{1.25}\smash{\begin{tabular}[t]{l}$v$\end{tabular}}}}%
    \put(0,0){\includegraphics[width=\unitlength,page=5]{asg.pdf}}%
    \put(0.02092046,0.47041832){\makebox(0,0)[lt]{\lineheight{1.25}\smash{\begin{tabular}[t]{l}$v'$\end{tabular}}}}%
    \put(0,0){\includegraphics[width=\unitlength,page=6]{asg.pdf}}%
    \put(0.72027269,0.36282591){\makebox(0,0)[lt]{\lineheight{1.25}\smash{\begin{tabular}[t]{l}$v_0$\end{tabular}}}}%
    \put(0,0){\includegraphics[width=\unitlength,page=7]{asg.pdf}}%
    \put(0.28990232,0.02211588){\makebox(0,0)[lt]{\lineheight{1.25}\smash{\begin{tabular}[t]{l}$a$\end{tabular}}}}%
    \put(0,0){\includegraphics[width=\unitlength,page=8]{asg.pdf}}%
    \put(0.41542742,0.07591219){\makebox(0,0)[lt]{\lineheight{1.25}\smash{\begin{tabular}[t]{l}$\mapsto$\end{tabular}}}}%
    \put(0,0){\includegraphics[width=\unitlength,page=9]{asg.pdf}}%
    \put(0.28990232,0.12970829){\makebox(0,0)[lt]{\lineheight{1.25}\smash{\begin{tabular}[t]{l}$v$\end{tabular}}}}%
    \put(0,0){\includegraphics[width=\unitlength,page=10]{asg.pdf}}%
    \put(0.72027269,0.02211588){\makebox(0,0)[lt]{\lineheight{1.25}\smash{\begin{tabular}[t]{l}$a$\end{tabular}}}}%
    \put(0,0){\includegraphics[width=\unitlength,page=11]{asg.pdf}}%
    \put(0.84579769,0.07591219){\makebox(0,0)[lt]{\lineheight{1.25}\smash{\begin{tabular}[t]{l}$\mapsto$\end{tabular}}}}%
    \put(0,0){\includegraphics[width=\unitlength,page=12]{asg.pdf}}%
    \put(0.72027269,0.12970829){\makebox(0,0)[lt]{\lineheight{1.25}\smash{\begin{tabular}[t]{l}$v'$\end{tabular}}}}%
    \put(0,0){\includegraphics[width=\unitlength,page=13]{asg.pdf}}%
    \put(0.57399288,0.083777){\makebox(0,0)[lt]{\lineheight{1.25}\smash{\begin{tabular}[t]{l}$\rew$\end{tabular}}}}%
    \put(0.56565236,0.36854607){\makebox(0,0)[lt]{\lineheight{1.25}\smash{\begin{tabular}[t]{l}$\rew$\end{tabular}}}}%
  \end{picture}%
\endgroup%

\]
The first rule states that evaluating the assignment operation returns a default value, not necessarily the same as the default initialisation value (in most languages it is the unique inhabitant of the singleton type). 

The second rule states that \emph{every subgraph} in which $v$ is associated with $a$ is rewritten so that $v$ is replaced by $v'$. 
As given, the rule is correct in terms of input-output behaviour of a programming language with store, but extremely inefficient. 
Every assignment requires searching for the relevant location $a$ in the diagram so that its corresponding value can be updated. 

We can improve our graphical notation with the following convention: identify all nodes labelled by the same atom $a$ in the diagram, i.e keep all the occurrences of the subgraph mapping the same $a$ to $v$ shared via contraction. 
This is possible because the category is still Cartesian and because it is an invariant of the diagram that the same $a$ is always mapped to the same $v$. 
\[
\begingroup%
  \makeatletter%
  \providecommand\color[2][]{%
    \errmessage{(Inkscape) Color is used for the text in Inkscape, but the package 'color.sty' is not loaded}%
    \renewcommand\color[2][]{}%
  }%
  \providecommand\transparent[1]{%
    \errmessage{(Inkscape) Transparency is used (non-zero) for the text in Inkscape, but the package 'transparent.sty' is not loaded}%
    \renewcommand\transparent[1]{}%
  }%
  \providecommand\rotatebox[2]{#2}%
  \newcommand*\fsize{\dimexpr\f@size pt\relax}%
  \newcommand*\lineheight[1]{\fontsize{\fsize}{#1\fsize}\selectfont}%
  \ifx\svgwidth\undefined%
    \setlength{\unitlength}{184.12498062bp}%
    \ifx\svgscale\undefined%
      \relax%
    \else%
      \setlength{\unitlength}{\unitlength * \real{\svgscale}}%
    \fi%
  \else%
    \setlength{\unitlength}{\svgwidth}%
  \fi%
  \global\let\svgwidth\undefined%
  \global\let\svgscale\undefined%
  \makeatother%
  \begin{picture}(1,0.45213806)%
    \lineheight{1}%
    \setlength\tabcolsep{0pt}%
    \put(0,0){\includegraphics[width=\unitlength,page=1]{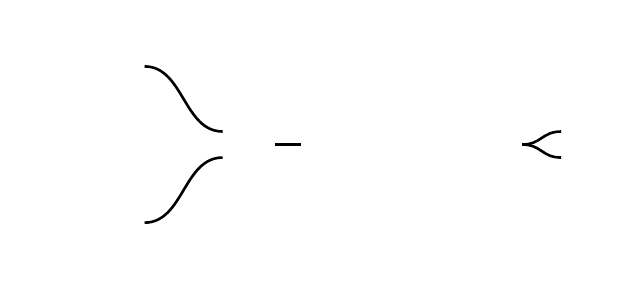}}%
    \put(0.4779163,0.20970322){\makebox(0,0)[lt]{\lineheight{1.25}\smash{\begin{tabular}[t]{l}$=$\end{tabular}}}}%
    \put(0,0){\includegraphics[width=\unitlength,page=2]{atom1.pdf}}%
    \put(0.36999303,0.20841755){\makebox(0,0)[lt]{\lineheight{1.25}\smash{\begin{tabular}[t]{l}$t$\end{tabular}}}}%
    \put(0,0){\includegraphics[width=\unitlength,page=3]{atom1.pdf}}%
    \put(0.02376101,0.26951735){\makebox(0,0)[lt]{\lineheight{1.25}\smash{\begin{tabular}[t]{l}$a$\end{tabular}}}}%
    \put(0,0){\includegraphics[width=\unitlength,page=4]{atom1.pdf}}%
    \put(0.16632765,0.33061715){\makebox(0,0)[lt]{\lineheight{1.25}\smash{\begin{tabular}[t]{l}$\mapsto$\end{tabular}}}}%
    \put(0,0){\includegraphics[width=\unitlength,page=5]{atom1.pdf}}%
    \put(0.02376101,0.39171672){\makebox(0,0)[lt]{\lineheight{1.25}\smash{\begin{tabular}[t]{l}$v$\end{tabular}}}}%
    \put(0,0){\includegraphics[width=\unitlength,page=6]{atom1.pdf}}%
    \put(0.0237609,0.02511837){\makebox(0,0)[lt]{\lineheight{1.25}\smash{\begin{tabular}[t]{l}$a$\end{tabular}}}}%
    \put(0,0){\includegraphics[width=\unitlength,page=7]{atom1.pdf}}%
    \put(0.16632765,0.08621818){\makebox(0,0)[lt]{\lineheight{1.25}\smash{\begin{tabular}[t]{l}$\mapsto$\end{tabular}}}}%
    \put(0,0){\includegraphics[width=\unitlength,page=8]{atom1.pdf}}%
    \put(0.0237609,0.14731775){\makebox(0,0)[lt]{\lineheight{1.25}\smash{\begin{tabular}[t]{l}$v$\end{tabular}}}}%
    \put(0,0){\includegraphics[width=\unitlength,page=9]{atom1.pdf}}%
    \put(0.89952481,0.20841778){\makebox(0,0)[lt]{\lineheight{1.25}\smash{\begin{tabular}[t]{l}$t$\end{tabular}}}}%
    \put(0,0){\includegraphics[width=\unitlength,page=10]{atom1.pdf}}%
    \put(0.57365912,0.14731798){\makebox(0,0)[lt]{\lineheight{1.25}\smash{\begin{tabular}[t]{l}$a$\end{tabular}}}}%
    \put(0,0){\includegraphics[width=\unitlength,page=11]{atom1.pdf}}%
    \put(0.71622588,0.20841778){\makebox(0,0)[lt]{\lineheight{1.25}\smash{\begin{tabular}[t]{l}$\mapsto$\end{tabular}}}}%
    \put(0,0){\includegraphics[width=\unitlength,page=12]{atom1.pdf}}%
    \put(0.57365912,0.26951735){\makebox(0,0)[lt]{\lineheight{1.25}\smash{\begin{tabular}[t]{l}$v$\end{tabular}}}}%
    \put(0,0){\includegraphics[width=\unitlength,page=13]{atom1.pdf}}%
  \end{picture}%
\endgroup%

\]
Since the identity of the atom $a$ is irrelevant due to equivariance we can elide it, as we can elide the $\mapsto$ operation. 
What we are left with is:
\[
\begingroup%
  \makeatletter%
  \providecommand\color[2][]{%
    \errmessage{(Inkscape) Color is used for the text in Inkscape, but the package 'color.sty' is not loaded}%
    \renewcommand\color[2][]{}%
  }%
  \providecommand\transparent[1]{%
    \errmessage{(Inkscape) Transparency is used (non-zero) for the text in Inkscape, but the package 'transparent.sty' is not loaded}%
    \renewcommand\transparent[1]{}%
  }%
  \providecommand\rotatebox[2]{#2}%
  \newcommand*\fsize{\dimexpr\f@size pt\relax}%
  \newcommand*\lineheight[1]{\fontsize{\fsize}{#1\fsize}\selectfont}%
  \ifx\svgwidth\undefined%
    \setlength{\unitlength}{56.62498111bp}%
    \ifx\svgscale\undefined%
      \relax%
    \else%
      \setlength{\unitlength}{\unitlength * \real{\svgscale}}%
    \fi%
  \else%
    \setlength{\unitlength}{\svgwidth}%
  \fi%
  \global\let\svgwidth\undefined%
  \global\let\svgscale\undefined%
  \makeatother%
  \begin{picture}(1,0.27814653)%
    \lineheight{1}%
    \setlength\tabcolsep{0pt}%
    \put(0,0){\includegraphics[width=\unitlength,page=1]{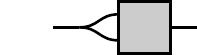}}%
    \put(0.67328896,0.08167709){\makebox(0,0)[lt]{\lineheight{1.25}\smash{\begin{tabular}[t]{l}$t$\end{tabular}}}}%
    \put(0,0){\includegraphics[width=\unitlength,page=2]{atom2.pdf}}%
    \put(0.07726265,0.08167632){\makebox(0,0)[lt]{\lineheight{1.25}\smash{\begin{tabular}[t]{l}$v$\end{tabular}}}}%
    \put(0,0){\includegraphics[width=\unitlength,page=3]{atom2.pdf}}%
  \end{picture}%
\endgroup%

\]
With this streamlined notation the rewrite rules are:
\[
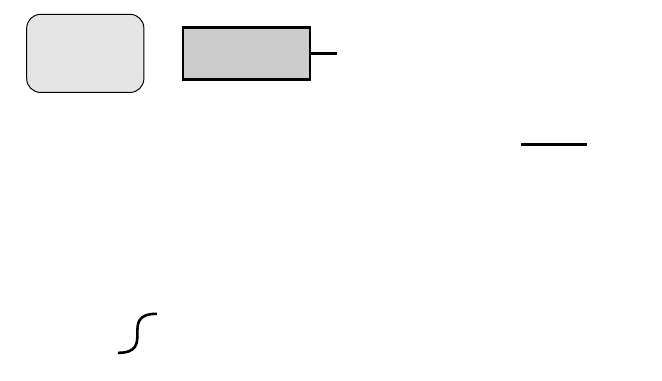
\]
\begin{exercise}
Prove the invariant that all occurrences of $a$ are associated with the same $v$ in a graph. 
\end{exercise}
\begin{example}
The key stages in the evaluation of the term in Example~\ref{ex:incx} are:
\[
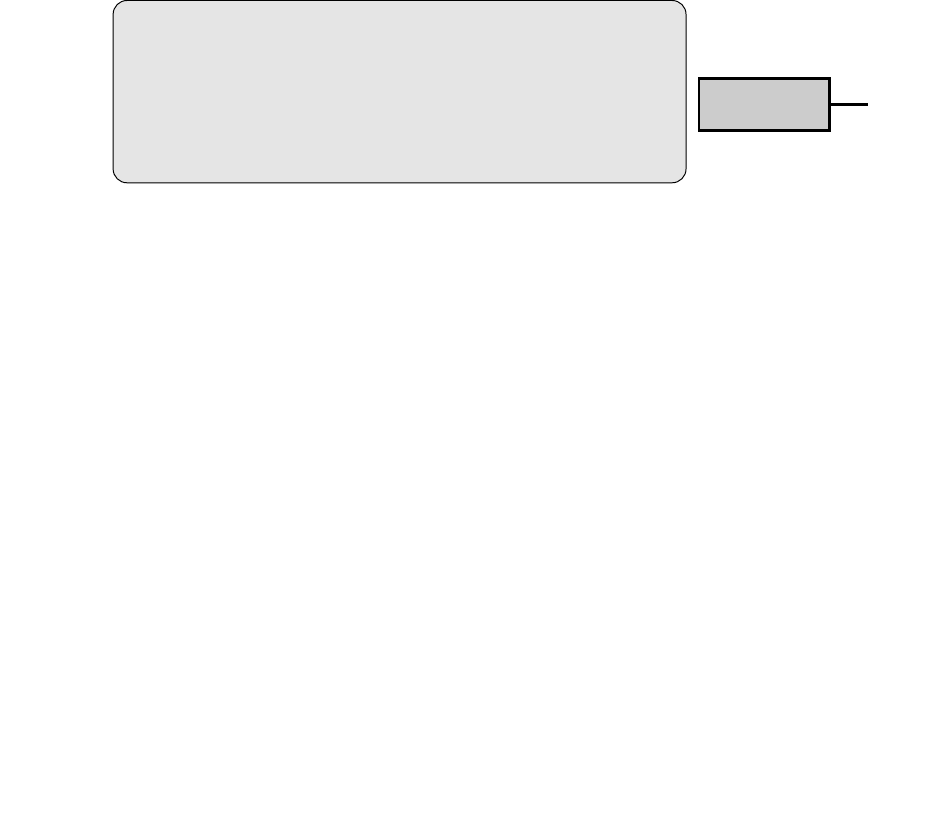
\]
For the purpose of this example we have assumed a default initialisation value of 0 and an assignment statement that returns the assigned value, in the style of the programming language C. 
\end{example}

\begin{remark}
The streamlined rule for assignment and dereferencing may seem not to apply, and consequently require a special case, when the atom is only reachable via a single edge in the diagram. 
In this situation the graph can be rewritten, using the properties of copy and discard to an equal graph which fits the required format, so it can be reduced:
\[
\begingroup%
  \makeatletter%
  \providecommand\color[2][]{%
    \errmessage{(Inkscape) Color is used for the text in Inkscape, but the package 'color.sty' is not loaded}%
    \renewcommand\color[2][]{}%
  }%
  \providecommand\transparent[1]{%
    \errmessage{(Inkscape) Transparency is used (non-zero) for the text in Inkscape, but the package 'transparent.sty' is not loaded}%
    \renewcommand\transparent[1]{}%
  }%
  \providecommand\rotatebox[2]{#2}%
  \newcommand*\fsize{\dimexpr\f@size pt\relax}%
  \newcommand*\lineheight[1]{\fontsize{\fsize}{#1\fsize}\selectfont}%
  \ifx\svgwidth\undefined%
    \setlength{\unitlength}{162.75003554bp}%
    \ifx\svgscale\undefined%
      \relax%
    \else%
      \setlength{\unitlength}{\unitlength * \real{\svgscale}}%
    \fi%
  \else%
    \setlength{\unitlength}{\svgwidth}%
  \fi%
  \global\let\svgwidth\undefined%
  \global\let\svgscale\undefined%
  \makeatother%
  \begin{picture}(1,0.41935492)%
    \lineheight{1}%
    \setlength\tabcolsep{0pt}%
    \put(0,0){\includegraphics[width=\unitlength,page=1]{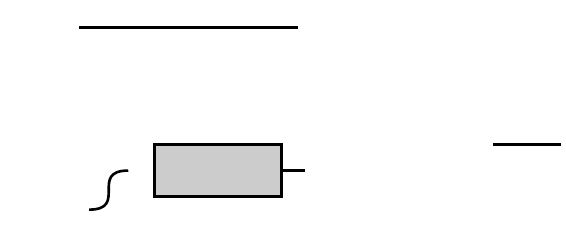}}%
    \put(0.32758746,0.10061876){\makebox(0,0)[lt]{\lineheight{1.25}\smash{\begin{tabular}[t]{l}${\mathit{deref}}$\end{tabular}}}}%
    \put(0,0){\includegraphics[width=\unitlength,page=2]{garbage.pdf}}%
    \put(0.04416266,0.02841732){\makebox(0,0)[lt]{\lineheight{1.25}\smash{\begin{tabular}[t]{l}$v$\end{tabular}}}}%
    \put(0,0){\includegraphics[width=\unitlength,page=3]{garbage.pdf}}%
    \put(0.80453156,0.02841759){\makebox(0,0)[lt]{\lineheight{1.25}\smash{\begin{tabular}[t]{l}$v$\end{tabular}}}}%
    \put(0,0){\includegraphics[width=\unitlength,page=4]{garbage.pdf}}%
    \put(0.80453156,0.14362485){\makebox(0,0)[lt]{\lineheight{1.25}\smash{\begin{tabular}[t]{l}$v$\end{tabular}}}}%
    \put(0,0){\includegraphics[width=\unitlength,page=5]{garbage.pdf}}%
    \put(0.30454603,0.3540751){\makebox(0,0)[lt]{\lineheight{1.25}\smash{\begin{tabular}[t]{l}${\mathit{deref}}$\end{tabular}}}}%
    \put(0,0){\includegraphics[width=\unitlength,page=6]{garbage.pdf}}%
    \put(0.02688169,0.35099807){\makebox(0,0)[lt]{\lineheight{1.25}\smash{\begin{tabular}[t]{l}$v$\end{tabular}}}}%
    \put(0,0){\includegraphics[width=\unitlength,page=7]{garbage.pdf}}%
    \put(0.77834626,0.36025068){\makebox(0,0)[lt]{\lineheight{1.25}\smash{\begin{tabular}[t]{l}$???$\end{tabular}}}}%
    \put(0.29608275,0.23271884){\makebox(0,0)[lt]{\lineheight{1.25}\smash{\begin{tabular}[t]{l}$=$\end{tabular}}}}%
    \put(0.61955753,0.36209881){\makebox(0,0)[lt]{\lineheight{1.25}\smash{\begin{tabular}[t]{l}$\rew$\end{tabular}}}}%
    \put(0.61649566,0.08345778){\makebox(0,0)[lt]{\lineheight{1.25}\smash{\begin{tabular}[t]{l}$\rew$\end{tabular}}}}%
  \end{picture}%
\endgroup%

\]
Also using the properties of copy and discard the subgraph linked to the discard node, consisting of the atom and an occurrence of $v$ can be entirely removed. 
From an operational perspective this quiet automatic removal of a reference and its contents whenever it is unreachable from the main graph is an abstract counterpart of the so-called \emph{garbage collection} performed by runtime systems of programming languages with managed memory. 
\end{remark}

\subsection{Further reading and related work}
The material presented here is mostly based on the graph rewriting abstract machines of \cite{DBLP:journals/lmcs/MuroyaG19} and \cite{DBLP:journals/corr/abs-1907-01257}. 
An online interactive tool can help explore the step-by-step reduction of programs with many of the features discussed in this section\footnote{\url{https://tnttodda.github.io/Spartan-Visualiser/}}.
For the wider context of operational semantics there are many introductory tutorials, from the concise \cite{DBLP:conf/ac/Pitts00} to the extensive \cite{harper16}. 
The world of abstract machines in programming languages is also extensive and diverse, with an extensive annotated bibliography given in \cite{DBLP:journals/fgcs/DiehlHS00a}. 
The connection between operational semantics and abstract machines is quite technical, especially when the latter is meant to be extracted from the former, rather than proven \emph{post facto} as a correct implementation; a collection of such extractions is given in \cite{DBLP:conf/icfp/AccattoliBM14}.

We have only touched onto the issue of strategic graph rewriting, and in a naive way. More sophisticated approaches to strategic graph rewriting have been studied in the context of graph-oriented programming languages~\cite{DBLP:conf/lopstr/FernandezKN11}.

The calculus of explicit substitutions was introduced by \cite{DBLP:journals/jfp/AbadiCCL91} and it spawned a large literature. 
Of particular relevance are the connections with graph representations via proof nets \cite{DBLP:journals/tcs/Accattoli15} and interaction nets \cite{DBLP:conf/tlca/Sinot05}.

\newpage

\section{Case Studies}
\label{sec:apps}

String diagrams are a data structure for the representation of syntax with higher-order bindings. 
The best motivation for a new data structure are the algorithms it can enable and support. 
Case in point, we will consider three well-known such classes of algorithms: type inference, closure conversion, and reverse automatic differentiation. 
We choose these algorithms because in their original, term-based formulation they can be rather mysterious to the average compiler developer, who may struggle to tease out the intuition from what is usually a complex syntactic analysis or transformation. 

\subsection{Type inference}

Type inference is the process of determining whether a term in which the variables have not been assigned a type can be assigned a type. 
To give a trivial example, the term $x+1$ can be successfully assigned an integer type, which makes the term correctly typed presuming that 1 is itself an integer and addition requires two integer parameters. A type inference algorithm in this case will determine first that it is possible to assign a type to the variable and, if needed, produce that type. 
As a consequence, the variable $x$ occurring in the term will also be required to have an integer type. 

On the contrary, in the term $\mathit{if}\ x\ \mathit{then}\ x+1\ \mathit{else}\ 0$ it is not possible to assign a type to $x$ since the if statement demands it be a Boolean, whereas addition requires it to be an integer. 
Both these constraints cannot be satisfied simultaneously in a simple type system. 

As we can glean from these very simple examples, a type inference algorithms must perform three kinds of computations:
\begin{enumerate}
\item build a system of constraints determined by the syntactic structure of the term and the known types of operations or constants occurring in the term;  
\item check whether the system is consistent, i.e. there are no constraints that are not solvable; 
\item determine the missing types of variables if required. 
\end{enumerate}
The first step is usually computationally cheap, requiring a single pass over the term. 
The second step can be computationally expensive, asymptotically exponential time for a typical algorithm, although this worst case scenario is highly contrived. 
However, the potentially high complexity of the algorithm indicates that there is plenty of room for optimising heuristics that can make a difference in practice. 
And indeed this is where various type inference algorithms can take different approaches. 
Another source of significant optimisation is only computing the precise types of variables that need to be computed, avoiding potentially expensive type determinations if they are not strictly required. 

In this expository presentation of type inference using string diagrams we will not commit to any particular algorithm but we shall expose the essence of type inference so that it should be rather obvious how various heuristic choices can lead to various concrete algorithms. 

The first important idea is that when representing syntax as string diagrams type inference becomes an \emph{edge-labelling problem}. 
To stay focussed we will illustrate this algorithm for the concrete case of (unrecursive) PCF, i.e. $\lambda$ calculus plus if-then-else, and arithmetic-logic operators. 
The language of types is given by the grammar
\define{\bools}{\mathit{Bool}}
\define{\ints}{\mathit{Int}}
\[
T ::= \bools  \mid \ints \mid T\rightarrow T \mid \alpha
\]
i.e. Booleans, natural numbers, function types, and unknown or indeterminate types. 
Concrete variables for indeterminate types are written as (decorated) Greek alphabet letters. 


Before talking about type inference we need to briefly introduce the concept of \emph{unification}, which is the system of syntactic constraints that needs to be solved during inference. 
Generally speaking, unification is an algorithm that determines whether the variables of two expressions can be instantiated so that the expressions are equal. 
The theory of unification is out of the scope of this tutorial so we will rely on naive intuitions, which hold for this simple language of types. 

Let $\Gamma$ be an \emph{environment} mapping type variables to types. 
\begin{enumerate}
\item $\ints$ and $\ints$ unify in any environment; 
\item $\bools $ and $\bools $ unify in any environment; 
\item $\alpha$ and $T$ unify if $\alpha$ does not occur in $T$, in any environment in which $\alpha$ is instantiated to $T$; 
\item $T_1\to T_2$ and $T_1'\to T_2'$ unify in an environment if $T_i$ and $T_i'$ unify in that environment; 
\item if none of the rules apply the types cannot be unified. 
\end{enumerate}
For instance, in this given grammar of types:
\begin{itemize}
\item $\alpha$ and $\bools $ can be trivially unified whenever $\alpha$ is instantiated as $\bools $;  
\item $\alpha\to \bools $ and $\ints\to\beta$ can be unified whenever $\alpha$ and $\ints$ unify and so do $\beta$ and $\bools $; 
\item $\ints$ and $\bools$ cannot be unified; 
\item $\ints\to\alpha$ and $\bools \to\beta$ cannot be unified; 
\item $\alpha\to\beta$ and $\alpha$ cannot be unified. 
\end{itemize}

The syntax of terms is
\[
t ::= n \mid \mathit{tt} \mid \mathit{ff} \mid x \mid op \mid \mathit{if}\ t\ \mathit{then}\ t\ \mathit{else}\ t \mid t(t) \mid \lambda x.t
\]
that is, integer constants, boolean constants, variables, arithmetic-logic operators, if-then-else, function application and function abstraction. 

The abstract algorithm, i.e. without committing to any particular heuristics, can be expressed as the construction of an \emph{unification graph} out of the term graph. 
The emphasis will be on simplicity and clarity: a concrete, practical algorithm will require further optimisations and heuristics. 

The initial unification graph has vertices $T$, the types, and edges $E=\emptyset$. 
Even though the set of vertices is infinite, the set of edges will always remain finite, and we can safely ignore vertices which do not participate in any edge in $E$. 

The algorithm consists of adding new edges to the unification graph, corresponding to each edge of the diagram representing the term. 
We say that the outermost boxes are at \emph{level 0}, and the boxes contained by level-$n$ boxes are at level $n+1$. 
Intuitively, the algorithm adds a \emph{decoration} to each wire, depending on the type, which can be either known (e.g. $\ints$ for arithmetic operations) or unknown (case in which a fresh type variable is used) or partially known (the case of functions).

\begin{algorithmic}[1]
\Procedure{unify}{$t$} 
\State $n\gets$ highest level of $t$
\For{$i\gets n$ down to $0$}
	\For{$b\gets$ boxes at level $i$}
		\If{$b$ is constant $k$}
			\State $e\gets$ unique wire of $b$
			\State \Call{decorate}{$e,\ints$}
		\ElsIf{$b$ is constant $\mathit{tt,ff}$}
			\State $e\gets$ unique wire of $b$
			\State \Call{decorate}{$e,\bools $}
		\ElsIf{$b$ is $\mathit{add}$} 
			\State $(e_0,e_1,e_2)\gets$ wires of $b$
			\State \textbf{for }{$j\gets 0,2$}\textbf{ do } \Call{decorate}{$e_j,\ints$} 
		\ElsIf{$b$ is \textit{application}}
			\State $e_0\gets$ argument wire of $b$
			\State $\alpha_0\gets$ new fresh variable
			\State \Call{decorate}{$e_0,\alpha_0$}
			\State $e_1\gets$ result wire of $b$
			\State $\alpha_1\gets$ new fresh variable
			\State \Call{decorate}{$e_1,\alpha_1$}
			\State $e_2\gets$ function wire of $b$
			\State \Call{decorate}{$e_2,\alpha_0\to\alpha_1$}
		\ElsIf{$b$ is \textit{abstraction box}} 
			\State $e\gets$ result wire of $b$
			\State $\alpha_0\gets$ type of bound wire inside $b$
			\State $\alpha_1\gets$ type of root of diagram inside $b$
			\State \Call{decorate}{$e,\alpha_0\to\alpha_1$}
		\ElsIf{$b$ is \textit{contraction}} 
			\State $(e_0,e_1,e_2)\gets$ wires of $b$
			\State $\alpha\gets$ new fresh variable
			\State \textbf{for }{$j\gets 0,2$}\textbf{ do } \Call{decorate}{$e_j,\alpha$} 
		\EndIf
	\EndFor
\EndFor
\EndProcedure
\end{algorithmic}

Type information is extracted from the constants (Lines 5 and 8) and from arithmetic and logical operations (only addition is given on Line 11, all other operations are similar). 
The operation of abstraction however is \emph{polymorphic} in the sense that it may involve arguments of different types at different points in the program, so it will only introduce a constraint between the type of the two arguments and that of its result. 
Proceeding from the inner boxes outwards ensures that when we encounter an abstraction (Line 23) it has already been annotated at the interface with all type information, so the type information extracted from an abstraction may contain unknown types. 
Contraction, which corresponds to variables in the syntax, introduces only unknown type variables, which must be the same along all wires.

New edges are added to the unification graph using the procedure below. 
Decorations are a partial function $D$ from edges to types. 
\begin{algorithmic}[1]
\Procedure{decorate}{$e,T$}
\If{$D(e)$ is not defined}
	\State $D\gets(D\mid e\mapsto T)$ 
\Else
	\State $T'\gets D(e)$
	\State $E\gets E\cup(T,T')$
\EndIf
\EndProcedure
\end{algorithmic}
If an edge $e$ is not already annotated we add the type $T$ as an annotation. 
If an edge is already annotated we retrieve the existing annotation $T'$ and we add a new edge $(T,T')$ to the edge set $E$ of the unification graph. 
Note that because of how string diagrams are constructed it is impossible for a wire to receive more than two labels. 

This concludes the first part of the algorithm, the construction of the unification graph, in time linear in the size of the original diagram. 

\begin{example}
Consider the unification graph produced by applying the algorithm above to the diagram of the term $\mathit{if}\ x\ \mathit{then}\ x+1\ \mathit{else}\ 0$. 
The algorithm visits each interface wire once and each other wire exactly twice, so we decorate each wire with the corresponding types. 
In Figure~\ref{fig:unifgraph} we show the edge-decorated diagram and the resulting unification graph. 
\end{example}

\begin{figure}
\[
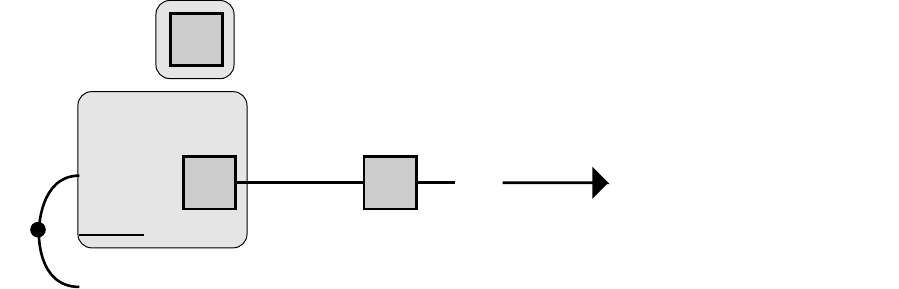
\]
\caption{Example term and its unification graph}
\label{fig:unifgraph}
\end{figure}

The second part of the algorithm is the computationally intensive part: checking whether an unification graph is valid. 
The example above already discloses the validation condition: \emph{there should be no path from $\ints$ to $\bools $ in the unification graph}. 
Note that the failure of this condition can be determined without solving all the unknown type assignments of the term. 
However, the solution is not always so straightforward when function types are involved. 
\begin{example}\label{ex:unifgraph2}
Consider the unification graph produced by applying the algorithm above to the diagram of the term $f(\mathit{tt})+f(1)$. 
Figure~\ref{fig:unifgraph2} shows the edge-decorated diagram and the resulting unification graph. 
\end{example}

\begin{figure}
\[
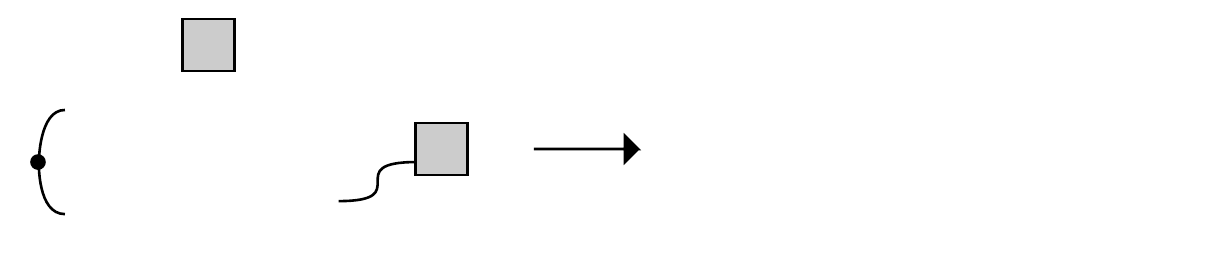
\]
\caption{Example term and its unification graph}
\label{fig:unifgraph2}
\end{figure}

Note that in Example~\ref{ex:unifgraph2} (Figure \ref{fig:unifgraph2}) the graph is disconnected, which means that there is not enough information to relate $\alpha_1, \alpha_2, \alpha_4, \alpha_5$. 
This is dealt with by the following algorithm.

\begin{algorithmic}[1]
\Procedure{saturate}{$E$}\Comment{infer new edges in the unification graph}
\State find new path $(v_1, v_2, \ldots, v_k)$ in $E$ such that 
\State $v_1 = t_1\to t_2$
\State $v_k = t_3\to t_4$
\If{fail}
	\State result $\gets E$
\Else
	\State \Call{saturate}{$E\cup(t_1, t_3)\cup(t_2, t_4)$}
\EndIf	
\EndProcedure
\end{algorithmic}

As applied to Example~\ref{ex:unifgraph2} the graph, the resulting graph is:

\[
\begingroup%
  \makeatletter%
  \providecommand\color[2][]{%
    \errmessage{(Inkscape) Color is used for the text in Inkscape, but the package 'color.sty' is not loaded}%
    \renewcommand\color[2][]{}%
  }%
  \providecommand\transparent[1]{%
    \errmessage{(Inkscape) Transparency is used (non-zero) for the text in Inkscape, but the package 'transparent.sty' is not loaded}%
    \renewcommand\transparent[1]{}%
  }%
  \providecommand\rotatebox[2]{#2}%
  \newcommand*\fsize{\dimexpr\f@size pt\relax}%
  \newcommand*\lineheight[1]{\fontsize{\fsize}{#1\fsize}\selectfont}%
  \ifx\svgwidth\undefined%
    \setlength{\unitlength}{171.70559709bp}%
    \ifx\svgscale\undefined%
      \relax%
    \else%
      \setlength{\unitlength}{\unitlength * \real{\svgscale}}%
    \fi%
  \else%
    \setlength{\unitlength}{\svgwidth}%
  \fi%
  \global\let\svgwidth\undefined%
  \global\let\svgscale\undefined%
  \makeatother%
  \begin{picture}(1,0.46839944)%
    \lineheight{1}%
    \setlength\tabcolsep{0pt}%
    \put(0,0){\includegraphics[width=\unitlength,page=1]{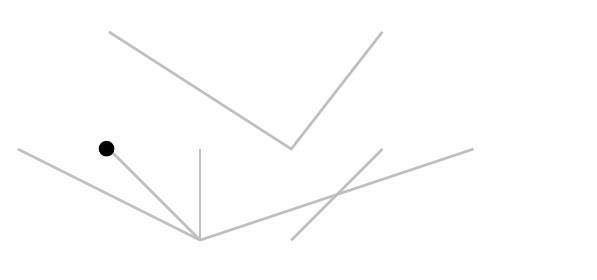}}%
    \put(0.13212579,0.24059585){\makebox(0,0)[lt]{\lineheight{1.25}\smash{\begin{tabular}[t]{l}$\alpha_1$\end{tabular}}}}%
    \put(0,0){\includegraphics[width=\unitlength,page=2]{unifgraph3.pdf}}%
    \put(-0.00264466,0.24059585){\makebox(0,0)[lt]{\lineheight{1.25}\smash{\begin{tabular}[t]{l}$\alpha_0$\end{tabular}}}}%
    \put(0,0){\includegraphics[width=\unitlength,page=3]{unifgraph3.pdf}}%
    \put(0.45408342,0.24059585){\makebox(0,0)[lt]{\lineheight{1.25}\smash{\begin{tabular}[t]{l}$\alpha_3$\end{tabular}}}}%
    \put(0,0){\includegraphics[width=\unitlength,page=4]{unifgraph3.pdf}}%
    \put(0.27563213,0.24059585){\makebox(0,0)[lt]{\lineheight{1.25}\smash{\begin{tabular}[t]{l}$\alpha_2$\end{tabular}}}}%
    \put(0,0){\includegraphics[width=\unitlength,page=5]{unifgraph3.pdf}}%
    \put(0.80224797,0.24059585){\makebox(0,0)[lt]{\lineheight{1.25}\smash{\begin{tabular}[t]{l}$\alpha_5$\end{tabular}}}}%
    \put(0,0){\includegraphics[width=\unitlength,page=6]{unifgraph3.pdf}}%
    \put(0.65000498,0.24059585){\makebox(0,0)[lt]{\lineheight{1.25}\smash{\begin{tabular}[t]{l}$\alpha_4$\end{tabular}}}}%
    \put(0,0){\includegraphics[width=\unitlength,page=7]{unifgraph3.pdf}}%
    \put(0.32062703,0.00615949){\makebox(0,0)[lt]{\lineheight{1.25}\smash{\begin{tabular}[t]{l}$N$\end{tabular}}}}%
    \put(0,0){\includegraphics[width=\unitlength,page=8]{unifgraph3.pdf}}%
    \put(0.47302028,0.00615949){\makebox(0,0)[lt]{\lineheight{1.25}\smash{\begin{tabular}[t]{l}$B$\end{tabular}}}}%
    \put(0,0){\includegraphics[width=\unitlength,page=9]{unifgraph3.pdf}}%
    \put(0.08041451,0.44040542){\makebox(0,0)[lt]{\lineheight{1.25}\smash{\begin{tabular}[t]{l}$\alpha_1\to\alpha_2$\end{tabular}}}}%
    \put(0,0){\includegraphics[width=\unitlength,page=10]{unifgraph3.pdf}}%
    \put(0.53851597,0.44150926){\makebox(0,0)[lt]{\lineheight{1.25}\smash{\begin{tabular}[t]{l}$\alpha_4\to\alpha_5$\end{tabular}}}}%
    \put(0,0){\includegraphics[width=\unitlength,page=11]{unifgraph3.pdf}}%
  \end{picture}%
\endgroup%

\]

Old edges are grey, new saturation edges are solid black. 
We can now detect an invalid path $\ints, \alpha_1, \alpha_4, \bools $ corresponding to a type error. 
Besides this one, there is another validation condition that must be checked:

\begin{algorithmic}[1]
\Procedure{validate}{$E$}\Comment{find type errors}
\State find path $(\ints, v_1, v_2, \ldots, \bools )$ in $E$ 
\State find path $(\alpha, v_1, v_2, \ldots, \alpha\to t)$ in $E$ or
\State find path $(\alpha, v_1, v_2, \ldots, t\to\alpha)$ in $E$ or
\If{fail}
	\State result $\gets$ type ok
\Else
	\State result $\gets$ type error
\EndIf	
\EndProcedure
\end{algorithmic}

The paths from a type variable to a type containing that variable arise from terms such as illegal self-application. 
It is a simple exercise to check that $f\ f$, an impossible term in PCF, leads to a violation of this condition. 

The \textsc{saturate} procedure is applied until it produces no new edges in the saturation graph. 
Finding saturating paths is computationally intensive and this is where heuristics come into play. 
In particular, it is not always necessary to add all saturating paths. 
Also, saturation and validity checking can be cleverly interleaved to lead to earlier error detection. 

\begin{exercise}
Determine whether the following term type checks: 
\[
f(0) + g(f) + g(\lambda x.x\ \&\ y)
\]
where $\&$ is the Boolean conjunction. 
\end{exercise}

\begin{exercise}
Extend the language of type with \emph{products} and that of terms with \emph{tuples} and \emph{projections} and extend the algorithms accordingly. 
Check that the term $f(f(1), f(2))$ cannot be typed. 
\end{exercise}

\paragraph{Further reading and related work}
Type inference is a central problem in theoretical computer science, starting with the pioneering work of~\cite{DBLP:conf/popl/DamasM82} in programming languages and~\cite{10.2307/1995158} in combinatory logic. 
The graph-oriented perspective is not common, but has been employed before, for example int the work of \cite{DBLP:conf/oopsla/PalsbergS91}. 
It is unclear whether pursuing this angle can lead to algorithmic improvements and therefore faster times in type inference, but the separation of concerns between collecting and solving constraints seems promising. 
At least for pedagogical reasons, we hope it is useful. 

\subsection{Closure conversion}

One challenge posed by the compilation of functional languages is the fact that certain operations, for instance partial application of curried functions, seem to lead to the creation of new functions during execution. 
For instance, if we write in a generic functional language $\mathit{let}\ f=(\lambda x.\lambda y. x+y) 1$ we effectively bound the new function $\lambda y.1+y$ to the variable $f$. 
For technical and pragmatic reasons which we shall not delve into here, dealing with the runtime creation of new functions is best avoided. 
The ideal situation is that of the programming language C (or its modernised version C++) in which a program consists of a list of functions, all in global scope. 
These functions can take other functions as arguments and can return other functions as result, so long as these functions are chosen from the already defined functions. 

Closure conversion is a transformation that allows the transformation of any program into this desirable form, bringing all functions into global scope. 
The 4th ES equation (Definition~\ref{def:eseqn}) tells us that in certain condition we can indeed `yank' terms outside an abstraction box:

\[
\begingroup%
  \makeatletter%
  \providecommand\color[2][]{%
    \errmessage{(Inkscape) Color is used for the text in Inkscape, but the package 'color.sty' is not loaded}%
    \renewcommand\color[2][]{}%
  }%
  \providecommand\transparent[1]{%
    \errmessage{(Inkscape) Transparency is used (non-zero) for the text in Inkscape, but the package 'transparent.sty' is not loaded}%
    \renewcommand\transparent[1]{}%
  }%
  \providecommand\rotatebox[2]{#2}%
  \newcommand*\fsize{\dimexpr\f@size pt\relax}%
  \newcommand*\lineheight[1]{\fontsize{\fsize}{#1\fsize}\selectfont}%
  \ifx\svgwidth\undefined%
    \setlength{\unitlength}{245.2748315bp}%
    \ifx\svgscale\undefined%
      \relax%
    \else%
      \setlength{\unitlength}{\unitlength * \real{\svgscale}}%
    \fi%
  \else%
    \setlength{\unitlength}{\svgwidth}%
  \fi%
  \global\let\svgwidth\undefined%
  \global\let\svgscale\undefined%
  \makeatother%
  \begin{picture}(1,0.20639529)%
    \lineheight{1}%
    \setlength\tabcolsep{0pt}%
    \put(0,0){\includegraphics[width=\unitlength,page=1]{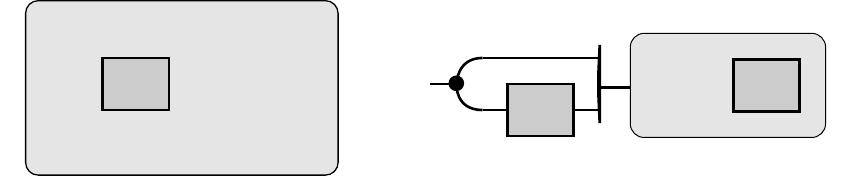}}%
    \put(0.90138827,0.10109402){\color[rgb]{0,0,0}\makebox(0,0)[t]{\lineheight{1.25}\smash{\begin{tabular}[t]{c}$u$\end{tabular}}}}%
    \put(0,0){\includegraphics[width=\unitlength,page=2]{sbstlam2.pdf}}%
    \put(0.33021321,0.10024113){\color[rgb]{0,0,0}\makebox(0,0)[t]{\lineheight{1.25}\smash{\begin{tabular}[t]{c}$u$\end{tabular}}}}%
    \put(0,0){\includegraphics[width=\unitlength,page=3]{sbstlam2.pdf}}%
    \put(0.16149972,0.0989903){\color[rgb]{0,0,0}\makebox(0,0)[t]{\lineheight{1.25}\smash{\begin{tabular}[t]{c}$v$\end{tabular}}}}%
    \put(0,0){\includegraphics[width=\unitlength,page=4]{sbstlam2.pdf}}%
    \put(0.44969158,0.0988569){\makebox(0,0)[lt]{\lineheight{1.25}\smash{\begin{tabular}[t]{l}$=$\end{tabular}}}}%
    \put(0,0){\includegraphics[width=\unitlength,page=5]{sbstlam2.pdf}}%
    \put(0.6367589,0.0684215){\color[rgb]{0,0,0}\makebox(0,0)[t]{\lineheight{1.25}\smash{\begin{tabular}[t]{c}$v$\end{tabular}}}}%
    \put(0,0){\includegraphics[width=\unitlength,page=6]{sbstlam2.pdf}}%
  \end{picture}%
\endgroup%

\]
However, this law requires the yanked term ($v$) to have no variables bound within the abstraction box it is extracted from. 
But how do we extract not $v$ but $u$? 
In other words, do we have `adaptor' terms $w$ and $w'$ as below so we can do this?
\[
\begingroup%
  \makeatletter%
  \providecommand\color[2][]{%
    \errmessage{(Inkscape) Color is used for the text in Inkscape, but the package 'color.sty' is not loaded}%
    \renewcommand\color[2][]{}%
  }%
  \providecommand\transparent[1]{%
    \errmessage{(Inkscape) Transparency is used (non-zero) for the text in Inkscape, but the package 'transparent.sty' is not loaded}%
    \renewcommand\transparent[1]{}%
  }%
  \providecommand\rotatebox[2]{#2}%
  \newcommand*\fsize{\dimexpr\f@size pt\relax}%
  \newcommand*\lineheight[1]{\fontsize{\fsize}{#1\fsize}\selectfont}%
  \ifx\svgwidth\undefined%
    \setlength{\unitlength}{284.74239685bp}%
    \ifx\svgscale\undefined%
      \relax%
    \else%
      \setlength{\unitlength}{\unitlength * \real{\svgscale}}%
    \fi%
  \else%
    \setlength{\unitlength}{\svgwidth}%
  \fi%
  \global\let\svgwidth\undefined%
  \global\let\svgscale\undefined%
  \makeatother%
  \begin{picture}(1,0.22520354)%
    \lineheight{1}%
    \setlength\tabcolsep{0pt}%
    \put(0,0){\includegraphics[width=\unitlength,page=1]{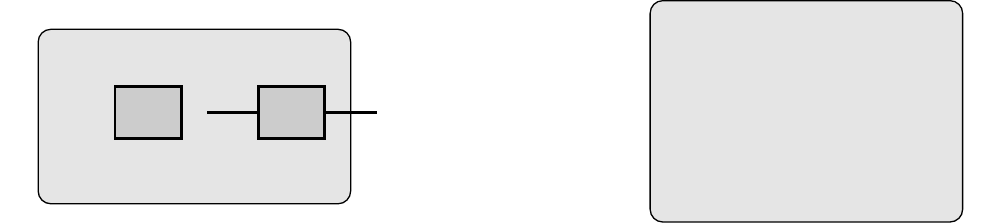}}%
    \put(0.29715181,0.10478447){\color[rgb]{0,0,0}\makebox(0,0)[t]{\lineheight{1.25}\smash{\begin{tabular}[t]{c}$u$\end{tabular}}}}%
    \put(0,0){\includegraphics[width=\unitlength,page=2]{sbstlam3.pdf}}%
    \put(0.1518232,0.10370702){\color[rgb]{0,0,0}\makebox(0,0)[t]{\lineheight{1.25}\smash{\begin{tabular}[t]{c}$v$\end{tabular}}}}%
    \put(0,0){\includegraphics[width=\unitlength,page=3]{sbstlam3.pdf}}%
    \put(0.39860722,0.1033761){\makebox(0,0)[lt]{\lineheight{1.25}\smash{\begin{tabular}[t]{l}$=$\end{tabular}}}}%
    \put(0,0){\includegraphics[width=\unitlength,page=4]{sbstlam3.pdf}}%
    \put(0.6000572,0.13112432){\color[rgb]{0,0,0}\makebox(0,0)[t]{\lineheight{1.25}\smash{\begin{tabular}[t]{c}$u$\end{tabular}}}}%
    \put(0,0){\includegraphics[width=\unitlength,page=5]{sbstlam3.pdf}}%
    \put(0.7708036,0.08526895){\color[rgb]{0,0,0}\makebox(0,0)[t]{\lineheight{1.25}\smash{\begin{tabular}[t]{c}$v$\end{tabular}}}}%
    \put(0,0){\includegraphics[width=\unitlength,page=6]{sbstlam3.pdf}}%
    \put(0.48659376,0.10264887){\makebox(0,0)[lt]{\lineheight{1.25}\smash{\begin{tabular}[t]{l}$w$\end{tabular}}}}%
    \put(0,0){\includegraphics[width=\unitlength,page=7]{sbstlam3.pdf}}%
    \put(0.89161899,0.13235641){\makebox(0,0)[lt]{\lineheight{1.25}\smash{\begin{tabular}[t]{l}$w'$\end{tabular}}}}%
    \put(0,0){\includegraphics[width=\unitlength,page=8]{sbstlam3.pdf}}%
  \end{picture}%
\endgroup%

\]
The answer is `not quite, but close enough'. 
The first step is to `close' the term that is to be yanked out by treating its free variables as if they were bound variables, then to change applications to take into account these extra arguments. 
Once the term is closed, yanking it out of the enclosing abstraction is no longer problematic. 

Although not strictly necessary, we will use a $\mathit{let}\ x = u\ \mathit{in} v$ binder to define closure conversion. 
This is semantically equal to $(\lambda x.v)u$, but we will use this construction to distinguished between thunks created by closure conversion and thunks transformed by closure conversion. 
The diagrammatic syntax and operational semantics of the $\mathit{let}$ binder are:
\[
\begingroup%
  \makeatletter%
  \providecommand\color[2][]{%
    \errmessage{(Inkscape) Color is used for the text in Inkscape, but the package 'color.sty' is not loaded}%
    \renewcommand\color[2][]{}%
  }%
  \providecommand\transparent[1]{%
    \errmessage{(Inkscape) Transparency is used (non-zero) for the text in Inkscape, but the package 'transparent.sty' is not loaded}%
    \renewcommand\transparent[1]{}%
  }%
  \providecommand\rotatebox[2]{#2}%
  \newcommand*\fsize{\dimexpr\f@size pt\relax}%
  \newcommand*\lineheight[1]{\fontsize{\fsize}{#1\fsize}\selectfont}%
  \ifx\svgwidth\undefined%
    \setlength{\unitlength}{298.50002929bp}%
    \ifx\svgscale\undefined%
      \relax%
    \else%
      \setlength{\unitlength}{\unitlength * \real{\svgscale}}%
    \fi%
  \else%
    \setlength{\unitlength}{\svgwidth}%
  \fi%
  \global\let\svgwidth\undefined%
  \global\let\svgscale\undefined%
  \makeatother%
  \begin{picture}(1,0.12735924)%
    \lineheight{1}%
    \setlength\tabcolsep{0pt}%
    \put(0,0){\includegraphics[width=\unitlength,page=1]{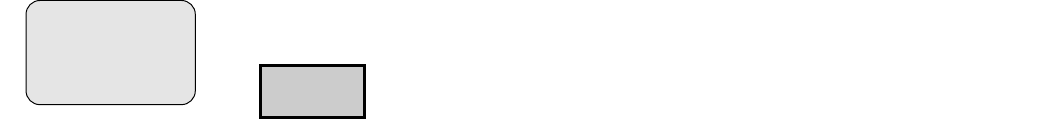}}%
    \put(0.30376388,0.03065542){\color[rgb]{0,0,0}\makebox(0,0)[t]{\lineheight{1.25}\smash{\begin{tabular}[t]{c}$let$\end{tabular}}}}%
    \put(0,0){\includegraphics[width=\unitlength,page=2]{let.pdf}}%
    \put(0.10895708,0.07046334){\color[rgb]{0,0,0}\makebox(0,0)[t]{\lineheight{1.25}\smash{\begin{tabular}[t]{c}$v$\end{tabular}}}}%
    \put(0,0){\includegraphics[width=\unitlength,page=3]{let.pdf}}%
    \put(0.76222347,0.07046334){\color[rgb]{0,0,0}\makebox(0,0)[t]{\lineheight{1.25}\smash{\begin{tabular}[t]{c}$v$\end{tabular}}}}%
    \put(0,0){\includegraphics[width=\unitlength,page=4]{let.pdf}}%
    \put(0.46216614,0.03127154){\makebox(0,0)[lt]{\lineheight{1.25}\smash{\begin{tabular}[t]{l}$\rew$\end{tabular}}}}%
    \put(0,0){\includegraphics[width=\unitlength,page=5]{let.pdf}}%
  \end{picture}%
\endgroup%

\]

The transformations, for the terms and for the applications, are given in Figure~\ref{fig:closureconv}.
Note that the transformation is not type-preserving as each function type $A\to C$ is replaced by a pair of a closure $B$ and a function with an additional closure argument $A\times B\to C$ .
\begin{figure}
\[
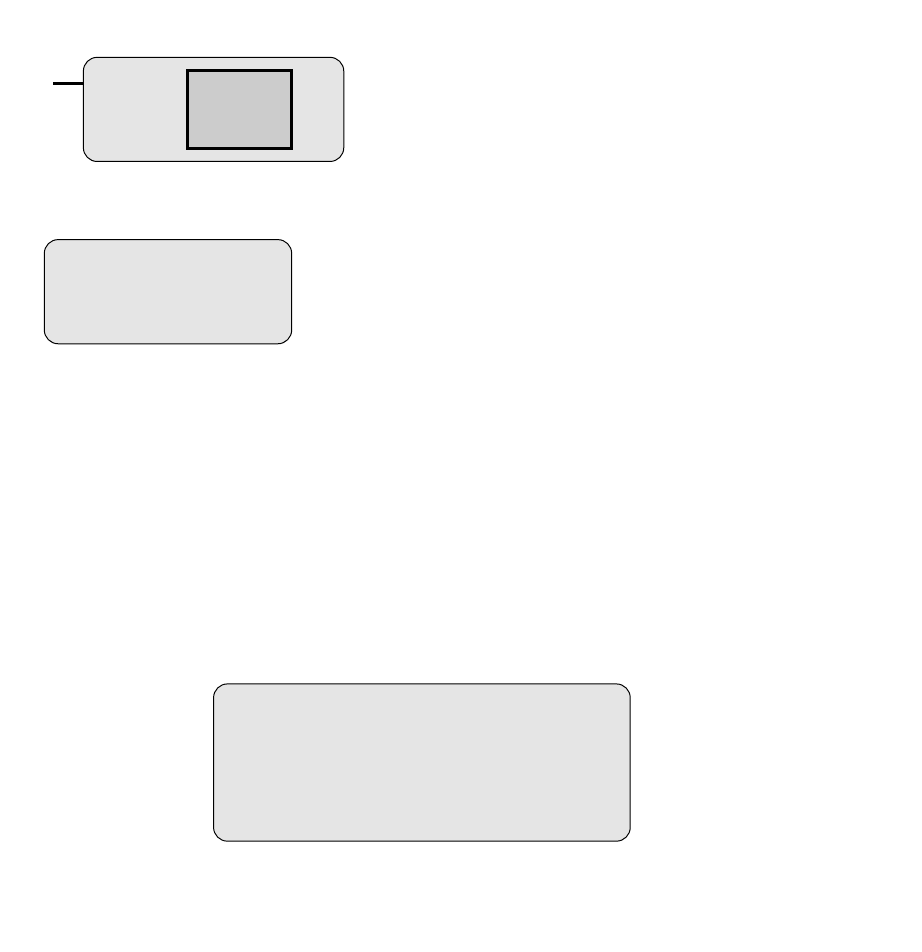
\]
\caption{Closure conversion rules for abstraction and application}
\label{fig:closureconv}
\end{figure}
We can easily check that any closure-converted beta redex reduces to the same result as the original beta redex (Figure~\ref{fig:closureprf}). 
At the top of the diagram we have the original beta law, and below it the closure converted beta redex. 
The equations used to derive the reduced term, at the bottom of the diagram are:
\begin{enumerate}
\item naturality of symmetry; 
\item beta-reduction for $\mathit{let}$; 
\item cancelation of strictification and de-strictification; 
\item beta law; 
\item graph isomorphism (which can be expressed as a series of tedious equational steps, omitted).
\end{enumerate}
\begin{figure}
\[
\scalebox{.75}{
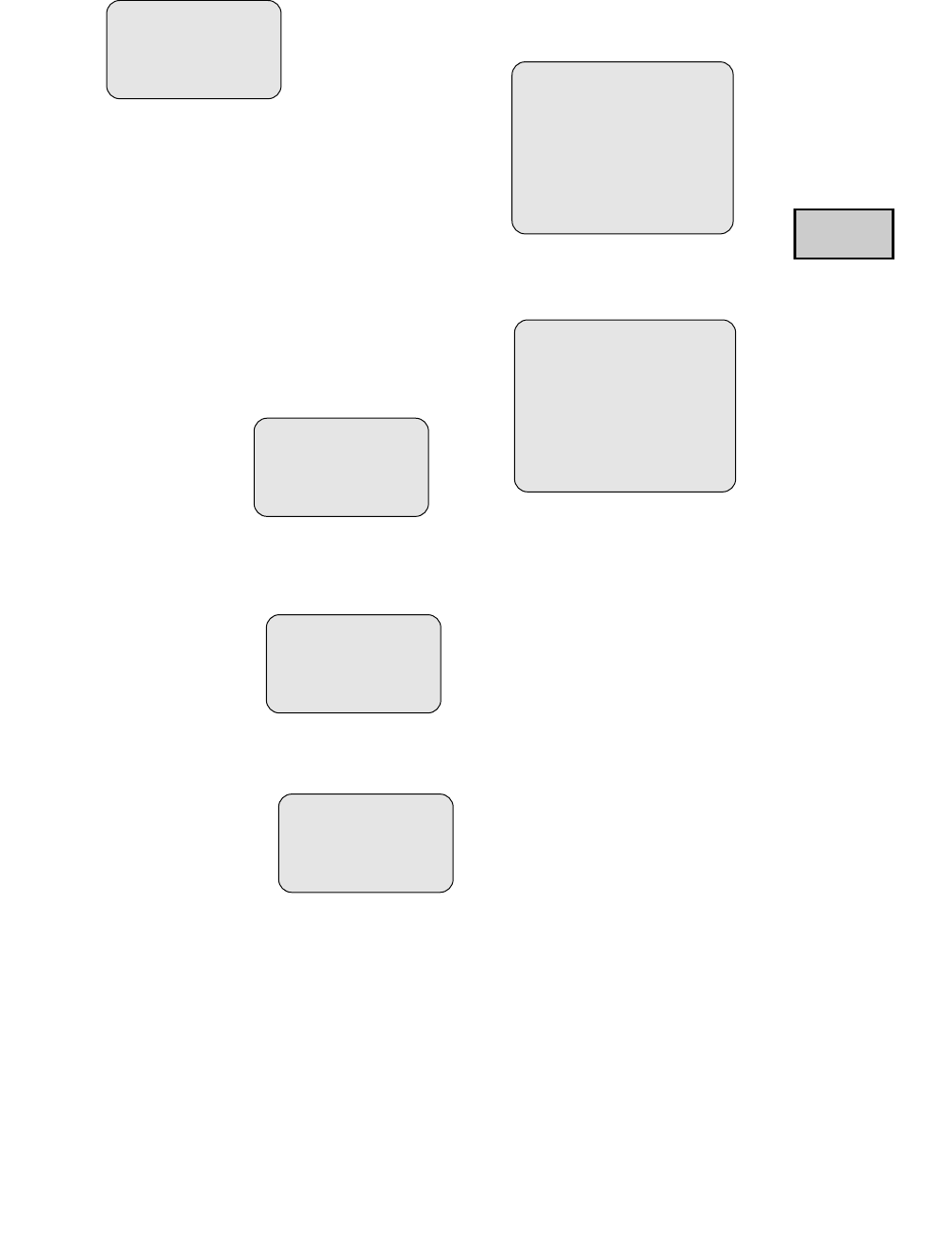}
\]
\caption{Correctness of beta-reduction for closure-converted functions}
\label{fig:closureprf}
\end{figure}
\begin{example}\label{ex:cc}
The transformation is illustrated with the partial application of a curried function, $(\lambda x\, y. x+y)\, 1\, 2$ so that the first closure is created statically. 
The diagrams in Figure~\ref{fig:cc} represent:
\begin{enumerate}
\item the representation of the original term; 
\item the closure-converted function, noting that one of the applications does not require closure conversion, only the one corresponding to the application of the inner function; 
\item yanking out the (closed) function into global scope, applying the 4th ES equation (Definition~\ref{def:eseqn}) and reorganising for clarity. 
\end{enumerate} 
\end{example}
\begin{exercise}
Define the closure conversion operation at term level and apply it to Example~\ref{ex:cc}.
Written back in term form in a language with let-bindings, tuples, and pattern matching, the closure converted term should be:
\begin{verbatim}
   let x0 = 2
   let x4(x5, x6) = x5 + x6
   let x2(x3) = (x3, x4)
   let (x1, x1') = x2(1)
   x1'(x0, x1)
\end{verbatim}

\end{exercise}
\begin{figure}
\[
\scalebox{.75}{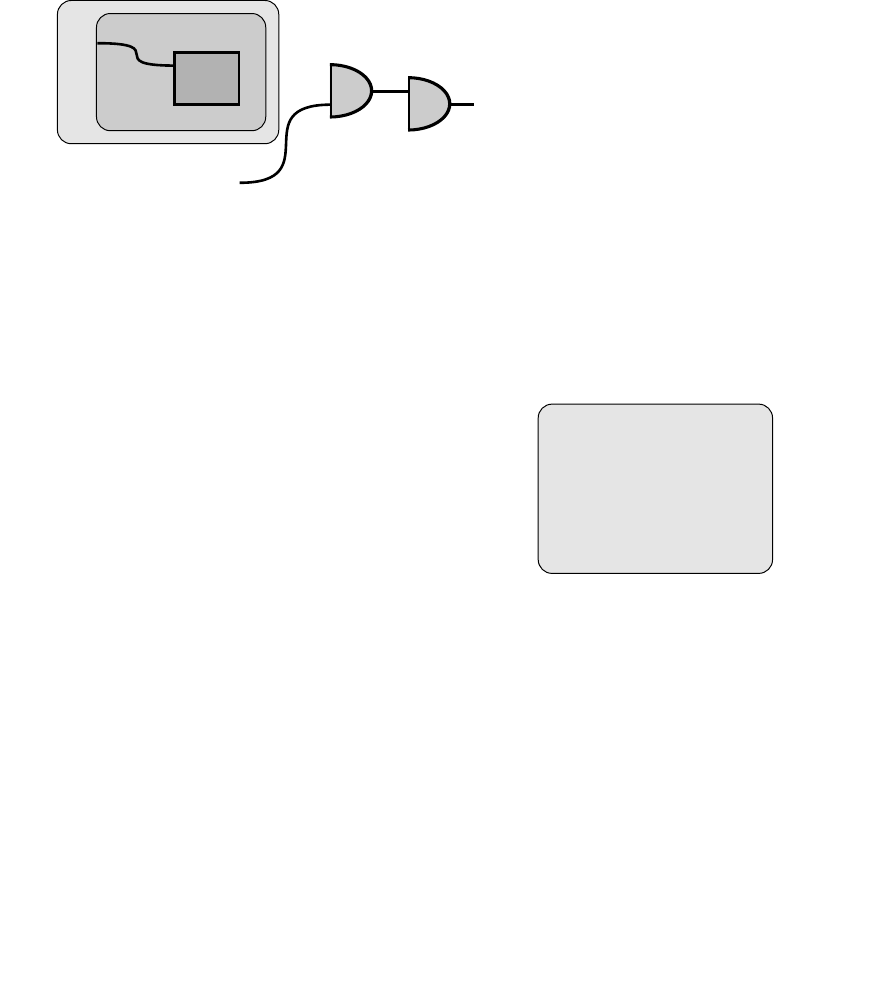}
\]
\caption{Example of closure-converted partial application}
\label{fig:cc}
\end{figure}

From the point of view of the abstract machine required to evaluate closure-converted code, it is possible to adapt the abstract machine derived from the operational semantics (see Definition~\ref{def:cbvsem}) so that no rewriting is required, at the expense of equipping the roving reduction token with extra data to allow to navigate contractions in reverse and to jump from bound variable edges to the arguments in function application.
If we think of the roving pointer as a `program counter', the associated information needed to evaluate a term without rewrite is a `call stack'.  

\paragraph{Related work and further reading}
Closure conversion is an old technique allowing the efficient compilation of functional languages \cite{DBLP:conf/acm/Reynolds72}. 
A related technique is the operation of replacing the free variables of an inner function with arguments in order to `yank' it into global scope; it is called ``$\lambda$ lifting'' and has also been used in early compilers of functional languages \cite{DBLP:conf/fpca/Johnsson85}. 
The difference between the two is that the former introduces an extra argument and changes the call sites uniformly, whereas the latter introduces an additional argument for each free variables, and so it changes all call sites in different ways, depending on the function. 
The simplistic closure conversion that we give here uses the tuple as the extra additional argument, so one could pedantically argue that it really is an uncurried form of $\lambda$ lifting and that a genuine realistic closure conversion operation would use a more performant data structure, for example some form of imperative state (e.g an object) \cite{DBLP:conf/popl/Leroy92}. 
These considerations are beyond the scope of this entry-level tutorial. 
For more in depth information on closure conversion and the difficult issues they raise in terms of typing in realistic languages we refer the reader to the rich available literature, e.g. \cite{DBLP:conf/popl/MinamideMH96}.  

\subsection{Reverse automatic differentiation}
Reverse automatic differentiation (RAD) is the workhorse of gradient-descent optimisation which is in turn the workhorse of the most successful and prevalent AI algorithms. 
The idea of RAD is that given a function $f:A\to B$ we want to compute $\delta a$ such that for some given $a,b,\delta b$ we have that $f(a+\delta a)=b+\delta b$. 
In other words, given a (small) perturbation $\delta b$ of the output $b$, what is the perturbation $\delta a$ of the output $a$ that would achieve it? 
Numerical approaches are not feasible because of the intrinsic errors introduced by operations on real-number approximations in conventional computers (e.g. `float') while symbolic approaches are not feasible because of fast increases in the sizes of the symbolic expressions required. 
What is left is the so-called `automatic' approach, an algorithmic approach to computing partial derivatives. 

RAD is an old technique, originally developed in the 1960s, and it was simple enough that the original algorithm could be described in a couple of pages. 
However, the idea become more difficult to translate into concrete algorithms when it involves programming languages with (higher order) functions. 
One of the most widely used such algorithms, dubbed by the authors `\emph{Lambda the Ultimate Backpropagator}' is notoriously difficult to explain, formalise, or reason about. 
In fact it was only by using string diagrams that a (slightly restricted) version of the algorithm was shown to be correct. 
The original, term based version is too complex to describe here, but the string diagram version of the algorithm is succinct enough. 
It is also an excellent illustration of the usefulness of foliations as a way of decomposing a diagram into a list of simple elements rather than syntactically. 

The RAD transformation involves three maps, represented as is usual using boxes. 
The target category must have a `reverse differential structure' which, without elaborating the technical details too much, is simply the requirement for all objects to have a monoid structures, i.e. for all objects $A$ there exists a zero morphism $0_A:I\to A$ and an addition morphism ${+}_A:A\otimes A \to A$, subject to the expected axioms. 
For simplicity we choose the unitype $\lambda$ calculus, as presented in Section \ref{sec:unityped}, so that we can omit types. 
This is consistent with the term-based presentation of the algorithm, but differs from the original string-diagram formulation. 

In addition to labels on the boxes we use colour to identify them: purple for the `adjoint' map ($\leftrightarrow$), blue for the `reverse' map ($\leftarrow$) and yellow for the `forward' map ($\rightarrow$). 
Given an arbitrary term, in its foliated form, the RAD transformation consists of applying the joint map, resulting in a forward and reverse components, elaborated by mapping the respective maps to each element in the foliation:

\[
\scalebox{.75}{
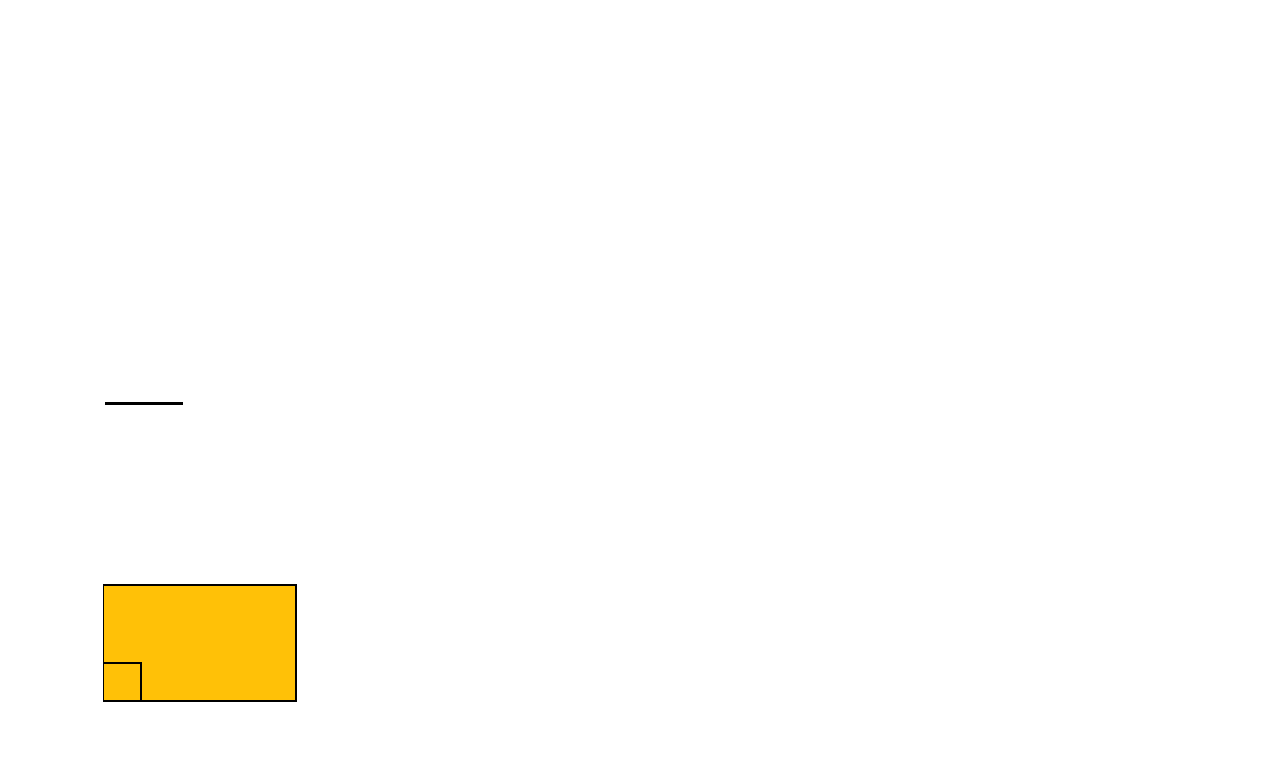
}
\]
We have not discussed strictification in a unitype setting, but unityped strictifiers/de-strictifiers with the same equational rules as those presented in Section \ref{sec:strictification} can be formulated. 
We leave this as an exercise. 

Because the original term is presented as a foliation, each $f_i$ consists of a box which is cannot be decomposed further, tensored with some identity wires. 
These wires are first factored out as follows:

\[
\scalebox{.75}{
\begingroup%
  \makeatletter%
  \providecommand\color[2][]{%
    \errmessage{(Inkscape) Color is used for the text in Inkscape, but the package 'color.sty' is not loaded}%
    \renewcommand\color[2][]{}%
  }%
  \providecommand\transparent[1]{%
    \errmessage{(Inkscape) Transparency is used (non-zero) for the text in Inkscape, but the package 'transparent.sty' is not loaded}%
    \renewcommand\transparent[1]{}%
  }%
  \providecommand\rotatebox[2]{#2}%
  \newcommand*\fsize{\dimexpr\f@size pt\relax}%
  \newcommand*\lineheight[1]{\fontsize{\fsize}{#1\fsize}\selectfont}%
  \ifx\svgwidth\undefined%
    \setlength{\unitlength}{434.16217411bp}%
    \ifx\svgscale\undefined%
      \relax%
    \else%
      \setlength{\unitlength}{\unitlength * \real{\svgscale}}%
    \fi%
  \else%
    \setlength{\unitlength}{\svgwidth}%
  \fi%
  \global\let\svgwidth\undefined%
  \global\let\svgscale\undefined%
  \makeatother%
  \begin{picture}(1,0.17204762)%
    \lineheight{1}%
    \setlength\tabcolsep{0pt}%
    \put(0.23110256,0.07972226){\color[rgb]{0,0,0}\makebox(0,0)[lt]{\lineheight{1.25}\smash{\begin{tabular}[t]{l}$=$\end{tabular}}}}%
    \put(0,0){\includegraphics[width=\unitlength,page=1]{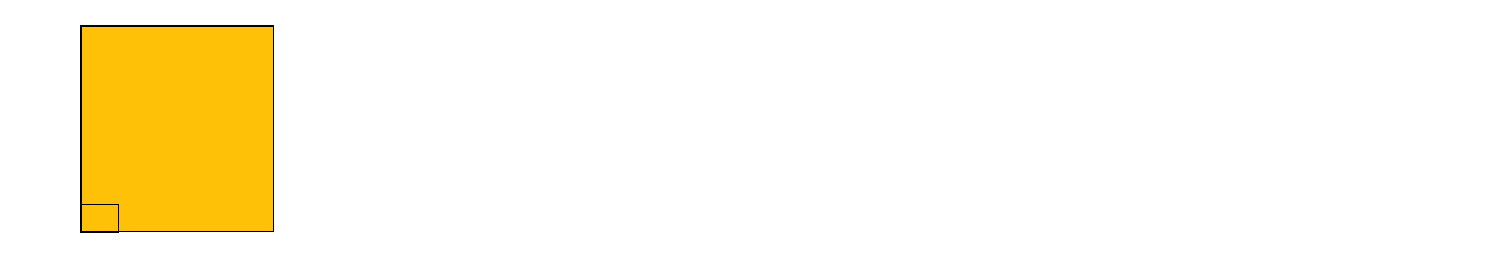}}%
    \put(0.05326722,0.02271046){\color[rgb]{0,0,0}\makebox(0,0)[lt]{\lineheight{1.25}\smash{\begin{tabular}[t]{l}$\leftarrow$\end{tabular}}}}%
    \put(0,0){\includegraphics[width=\unitlength,page=2]{rad2ut.pdf}}%
    \put(0.12351647,0.07937238){\color[rgb]{0,0,0}\makebox(0,0)[t]{\lineheight{1.25}\smash{\begin{tabular}[t]{c}$f$\end{tabular}}}}%
    \put(0,0){\includegraphics[width=\unitlength,page=3]{rad2ut.pdf}}%
    \put(0.29945248,0.06318965){\color[rgb]{0,0,0}\makebox(0,0)[lt]{\lineheight{1.25}\smash{\begin{tabular}[t]{l}$\leftarrow$\end{tabular}}}}%
    \put(0,0){\includegraphics[width=\unitlength,page=4]{rad2ut.pdf}}%
    \put(0.36970171,0.07839297){\color[rgb]{0,0,0}\makebox(0,0)[t]{\lineheight{1.25}\smash{\begin{tabular}[t]{c}$f$\end{tabular}}}}%
    \put(0,0){\includegraphics[width=\unitlength,page=5]{rad2ut.pdf}}%
    \put(0.74164775,0.10013307){\color[rgb]{0,0,0}\makebox(0,0)[lt]{\lineheight{1.25}\smash{\begin{tabular}[t]{l}$=$\end{tabular}}}}%
    \put(0,0){\includegraphics[width=\unitlength,page=6]{rad2ut.pdf}}%
    \put(0.56381245,0.02930172){\color[rgb]{0,0,0}\makebox(0,0)[lt]{\lineheight{1.25}\smash{\begin{tabular}[t]{l}$\rightarrow$\end{tabular}}}}%
    \put(0,0){\includegraphics[width=\unitlength,page=7]{rad2ut.pdf}}%
    \put(0.63406173,0.09978318){\color[rgb]{0,0,0}\makebox(0,0)[t]{\lineheight{1.25}\smash{\begin{tabular}[t]{c}$f$\end{tabular}}}}%
    \put(0,0){\includegraphics[width=\unitlength,page=8]{rad2ut.pdf}}%
    \put(0.83763711,0.08360045){\color[rgb]{0,0,0}\makebox(0,0)[lt]{\lineheight{1.25}\smash{\begin{tabular}[t]{l}$\rightarrow$\end{tabular}}}}%
    \put(0,0){\includegraphics[width=\unitlength,page=9]{rad2ut.pdf}}%
    \put(0.90788635,0.09880377){\color[rgb]{0,0,0}\makebox(0,0)[t]{\lineheight{1.25}\smash{\begin{tabular}[t]{c}$f$\end{tabular}}}}%
    \put(0,0){\includegraphics[width=\unitlength,page=10]{rad2ut.pdf}}%
  \end{picture}%
\endgroup%

}
\] 

What is left to define is the action of the reverse and forward map on the signature and on the structural morphisms. 
For the latter, i.e. symmetry, copying and deletion, the definitions of the two maps are:

\[
\scalebox{.75}{
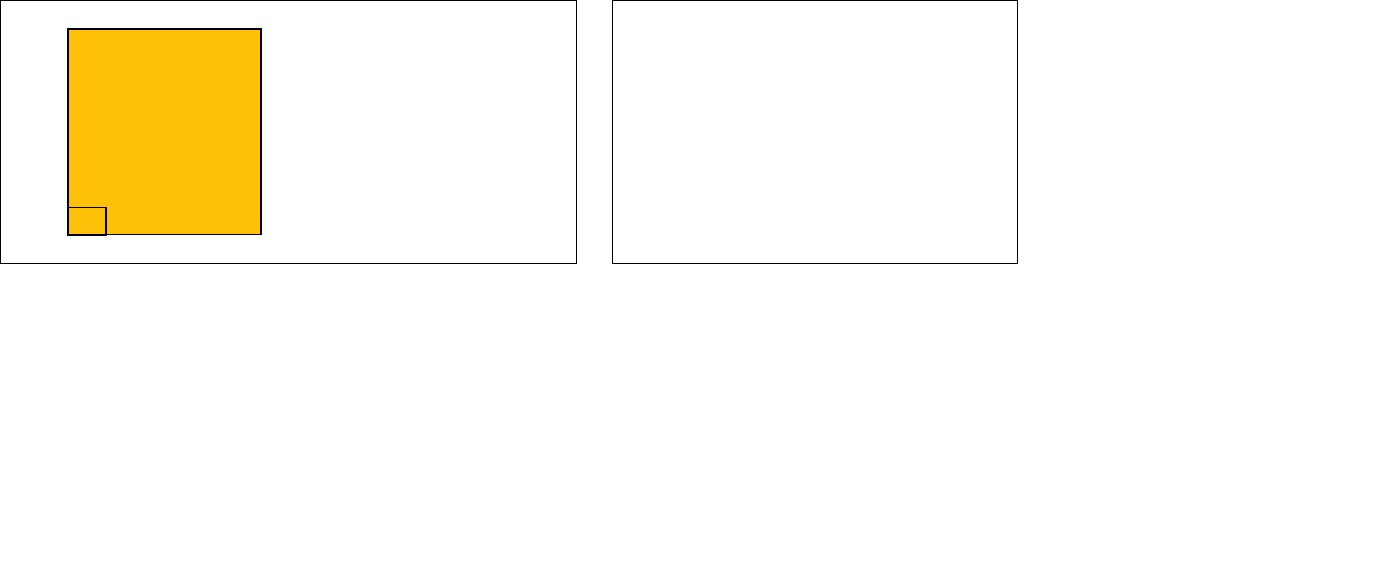
}
\]

The constants and the operations of the language, with a special provision for function application are transformed as below, where $\mathcal R[\mathit{op}]$ is a reverse-differential operator associated with any particular operation of the language. 
These need to be provided. 

\[
\scalebox{.75}{
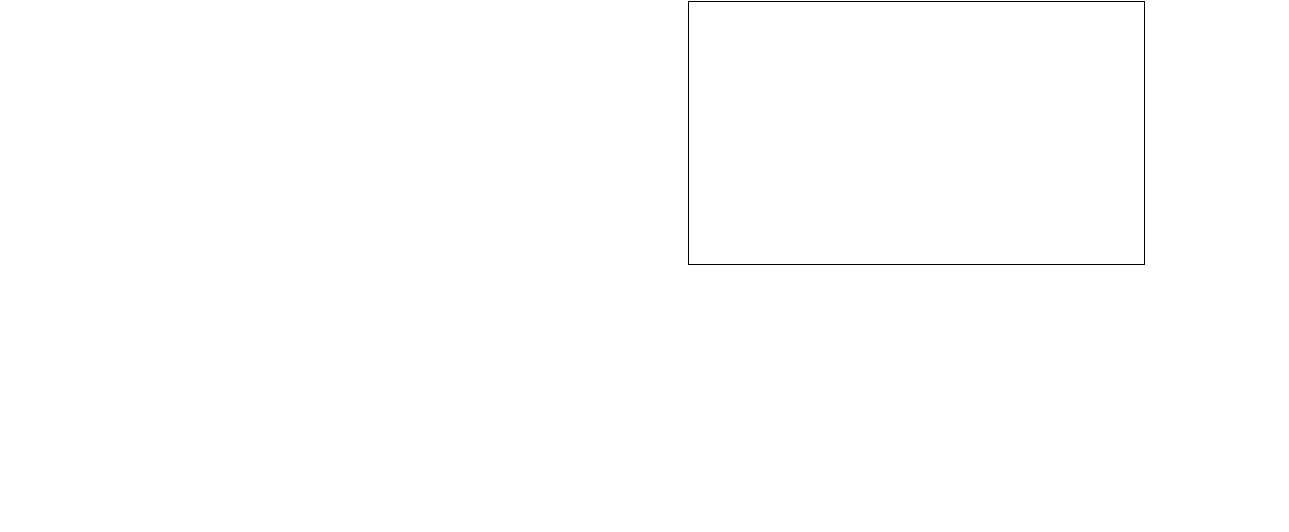
}
\]
\begin{exercise}
Using the algorithm above calculate the symbolic reverse differential of the term $(\lambda x.x\times y)(y)+y$. 
Use the equational properties of string diagrams to simplify the result. 
\end{exercise}
\emph{Hint:} The result can be found in \cite[Figure 9]{DBLP:conf/csl/Alvarez-Picallo23}.

\paragraph{Related work and further reading}

The literature on automatic differentiation is wide, particulary in the context of machine learning \cite{DBLP:journals/jmlr/BaydinPRS17}. 
The earliest algorithms are due to 
\cite{DBLP:journals/cacm/Wengert64},
and the particular algorithm we present is a simplification of
\cite{DBLP:journals/toplas/PearlmutterS08}.
The original string-diagram presentation, including a proof of correctness for the algorithm is in 
\cite{DBLP:conf/csl/Alvarez-Picallo23}.
Finally, the important technical details required to understand the categorical axiomatisation can be found in
\cite{DBLP:conf/csl/CockettCGLMPP20}

\newpage
\bibliography{main}

\end{document}